\let\iint\relax
\let\iiint\relax
\newtheorem{proposition}{Proposition}
\newtheorem{remark}{Remark}
\newtheorem{corollary}[proposition]{Corollary}
\newtheorem{theorem}{Theorem}
\newtheorem{maintheorem}{Main Theorem}
\newtheorem{definition}{Definition}
\newcommand{\nabb}{{\nabla} \mkern-13mu /\,}
\newcommand{\lapp}{{\Delta} \mkern-13mu /\,}
\newcommand{\lessflat}{\reflectbox{\mbox{$\flat$}}}
\title{A Vector Field Method Approach to Improved Decay for Solutions to the Wave Equation on a Slowly Rotating Kerr Black Hole}
\author{Jonathan Luk\thanks{Mathematics Department, Princeton University, Princeton, NJ 08540. E-mail: jluk@math.princeton.edu}}
\begin{document}

\maketitle

\begin{abstract}
We prove that sufficiently regular solutions to the wave equation $\Box_{g_K}\Phi=0$ on the exterior of a sufficiently slowly rotating Kerr black hole obey the estimates $|\Phi|\leq C (t^*)^{-\frac{3}{2}+\eta}$ on a compact region of $r$. This is proved with the help of a new vector field commutator that is analogous to the scaling vector field on Minkowski and Schwarzschild spacetime. This result improves the known robust decay rates that are proved using the vector field method in the region of finite $r$ and along the event horizon.
\end{abstract}

\tableofcontents

\section{Introduction}
A major open problem in general relativity is that of the nonlinear stability of Kerr spacetimes. These spacetimes are stationary axisymmetric asymptotically flat black hole solutions to the vacuum Einstein equations
$$R_{\mu\nu}=0$$
in $3+1$ dimensions. They are parametrized by two parameters $\left(M, a\right)$, representing respectively the mass and the specific angular momentum of a black hole. (See Section \ref{geometry}). It is conjectured that Kerr spacetimes are asymptotically stable. In the framework of the initial value problem, the stability of Kerr would mean that for any solution to the vacuum Einstein equations with initial data close to the initial data of a Kerr spacetime, its maximal Cauchy development has an exterior region that approaches a nearby, but possibly different, Kerr spacetime.

In order to study the stability of Kerr spacetimes, it is important to first understand the corresponding linear problem. One way to approach this is to study the linear scalar wave equation $\Box_{g_K}\Phi=0$, where $g_K$ is the metric on a fixed Kerr background and $\Box_{g_K}$ is the Laplace-Beltrami operator. This can be compared with the proofs of the nonlinear stability of the Minkowski spacetime in which a robust understanding of the quantitative decay properties of solutions to the linear wave equation plays a fundamental role \cite{CK}, \cite{LR}.

The Kerr family of spacetimes contains a one-parameter subfamily known as the Schwarzschild spacetimes for which $a=0$. It is natural when studying the wave equation on Kerr spacetimes to begin by focusing on the wave equation on Schwarzschild spacetimes. Pointwise boundedness and decay of the solutions to the wave equation on Schwarzschild spacetimes has been proved in \cite{W}, \cite{KW}, \cite{MS}, \cite{BSt}, \cite{DRS}, \cite{Kr}, \cite{BSo} \cite{DSS}, \cite{Ta}. In particular Dafermos and Rodnianski used the vector field method to show that on the exterior region of the Schwarzschild spacetimes, including along the event horizon, solutions to the linear wave equation satisfy $|\Phi|\leq C (t^*)^{-1}$, where $t^*$ is a regular coordinate (up to the event horizon) that approaches infinity towards null infinity. In an earlier work \cite{L}, we improved this decay rate. More precisely, we showed that sufficiently regular solutions to the wave equation $\Box_g\Phi=0$ on the Schwarzschild black hole obey the estimates $|\Phi|\leq C_\eta (t^*)^{-\frac{3}{2}+\eta}$ for any $\eta>0$ on a compact region of $r$, including along the event horizon and inside the black hole. 

This paper generalizes the above result to Kerr spacetimes where $a\ll M$. For Kerr spacetimes satisfying this condition, Dafermos and Rodnianski \cite{DRK}, and subsequently Andersson and Blue \cite{AB}, have proved a decay rate in the exterior region of the Kerr spacetime, including along the event horizon, of $|\Phi|\leq C(t^*)^{-1+\eta}$, where $t^*$ is a regular coordinate to be defined later, and with $t^*$ we will define a foliation of the exterior region of Kerr spacetime by the spacelike hypersurfaces $\Sigma_{t^*}$. Extending the methods in \cite{L}, we are able to improve this decay rate using the vector field method. In particular, we have

\begin{theorem}\label{maintheorem}
Suppose $\Box_{g_K}\Phi=0$. Then for all $\eta>0$ and all $M>0$ there exists $a_0$ such that the following estimates hold on Kerr spacetimes with $(M,a)$ for which $a\leq a_0$:
\begin{enumerate}
\item Improved Decay of Non-degenerate Energy
\begin{equation*}
\begin{split}
&\sum_{j=0}^M\int_{\Sigma_{t^*}\cap\{r\leq R\}} \left(D^j\Phi\right)^2\leq C_RE_M(t^*)^{-3+\eta}.
\end{split}
\end{equation*}
\item Improved Pointwise Decay\\
\begin{equation*}
\begin{split}
&\sum_{j=0}^M|D^j\Phi|\leq C_RE'_M(t^*)^{-\frac{3}{2}+\eta}\quad\mbox{for }r\leq R.
\end{split}
\end{equation*}
\end{enumerate}
Here, $D$ denotes derivatives in a regular coordinate system (See Section \ref{geometry}). $E_M$ and $E'_M$ depend only on $M$ and some weighted Sobolev norm of the initial data.
\end{theorem}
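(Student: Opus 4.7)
The plan is to adapt the method of the author's earlier work on Schwarzschild, treating the rotation as a small perturbation controlled by the smallness of $a/M$. The overall architecture combines the $r^p$-hierarchy of Dafermos--Rodnianski (which gives integrated decay) with commutation by a new vector field that plays the role of the conformal scaling in Minkowski.

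First I would take as input the baseline $(t^*)^{-1+\eta}$ decay of Dafermos--Rodnianski and Andersson--Blue, together with the associated (degenerate) integrated local energy decay, a boundary flux estimate on the horizon, and an $r^p$-estimate for $p\in[0,2]$ at null infinity. Commuting $\Box_{g_K}\Phi=0$ with the Killing fields $\partial_t$ and $\partial_\phi$, together with regular derivatives $D$ near the horizon, extends these estimates to higher-order derivatives; the trapping loss is absorbed in the usual way by commutation with angular momentum operators.

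Second, and this is the key new ingredient, I would construct a vector field $S$ on the exterior of slowly rotating Kerr analogous to the modified scaling field used on Schwarzschild, roughly of the form $t^*\partial_{t^*}+f(r)\partial_r$ plus an $O(a)$ correction. The correction is chosen so that $[\Box_{g_K},S]\Phi$ decomposes into (i) a multiple of $\Box_{g_K}\Phi$, (ii) lower-order terms controlled by the estimates from the previous step, and (iii) bulk error terms supported away from the trapped set, which can be absorbed by the Morawetz estimate. The non-Killing nature of $\partial_t$ in the ergoregion generates new error terms absent in Schwarzschild; these should be handled perturbatively in $a$ using the baseline local energy decay. With $S$ in hand, I would extend the $r^p$-hierarchy to exponents $p$ approaching $3$ applied to the equation satisfied by $S\Phi$. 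Pigeonholing on dyadic $t^*$-intervals and combining with the baseline yields $(t^*)^{-3+\eta}$ decay for the non-degenerate energy of $\Phi$ (and of $D^j\Phi$) restricted to $\{r\leq R\}$. Sobolev embedding on $\Sigma_{t^*}\cap\{r\leq R\}$, after enough regular commutations, then upgrades this to the pointwise bound $|D^j\Phi|\leq C_R E'_M(t^*)^{-3/2+\eta}$.

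The principal obstacle will be the construction of $S$ and the control of the commutator errors $[\Box_{g_K},S]\Phi$. In Schwarzschild the spherical symmetry and the global timelike Killing field make the analogous construction essentially explicit; on Kerr, the photon sphere is replaced by a non-spherical trapped region, $\partial_t$ fails to be timelike in the ergoregion, and we wish to work entirely in physical space rather than exploit separation of variables, so error terms must be closed up using vector-field-type commutators alone. The smallness of $a$ is essential: it ensures that all the new obstructions relative to Schwarzschild produce at worst $O(a)$ terms which can be absorbed after choosing $a_0$ sufficiently small depending on $\eta$ and $M$.
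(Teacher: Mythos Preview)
Your proposal has the right headline ingredient---commutation with a scaling-type vector field $S$---but it diverges from the paper's actual mechanism in two substantive ways, and one of those divergences leaves a real gap.

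First, a difference of route that is not a gap: you take the $r^p$-hierarchy as your baseline decay machinery, whereas the paper uses the conformal multiplier $Z=u^2\underline{L}+v^2L$ together with the red-shift multiplier $N$ and an integrated-decay (``$X$'') estimate. Both frameworks yield the $(t^*)^{-2+\eta}$ energy decay needed as input, so this substitution is defensible, though you should be aware that in this paper the integrated decay estimate near trapping is obtained by frequency decomposition (Carter separation into oblate spheroidal harmonics), not ``entirely in physical space'' as you assert.

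Second, and this is where your proposal has a genuine gap: you do not explain how the estimates on $S\Phi$ actually produce the extra power of $\tau^{-1}$. You write ``extend the $r^p$-hierarchy to exponents $p$ approaching $3$ applied to the equation satisfied by $S\Phi$,'' followed by ``pigeonholing on dyadic $t^*$-intervals,'' but neither of these is the mechanism. The standard $r^p$ method stops at $p=2$; pushing to $p\to 3$ requires further commutation (e.g.\ with $r^2L$) and is a separate circle of ideas not carried out here. And pigeonholing by itself only selects a good slice---it does not propagate the improvement to all slices. The paper's actual device is to integrate along the integral curves of $S$: writing $\partial_\sigma = \sigma^{-1}S$ in adapted coordinates, one obtains
\[
\int_{\{r<R\}}\frac{\Phi^2(\tau)}{\tau}\;\le\; C\Bigl(\int_{\{r<R\}}\frac{\Phi^2(\tau')}{\tau'}+\iint_{\mathcal R(\tau',\tau)\cap\{r<R\}}\frac{2}{(t^*)^2}\,|\Phi\,S\Phi|\Bigr),
\]
so that the local integrated decay estimates for both $\Phi$ and $S\Phi$ (each of size $\tau^{-2+\eta}$) combine to give $\int_{\{r<R\}}\Phi^2\le C\tau^{-3+\eta}$. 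This is the step that converts ``$S\Phi$ decays like $\Phi$'' into ``$\Phi$ decays one power faster,'' and it is absent from your outline.

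On the commutator side, your description is broadly correct but omits a structural point: the worst term in $[\Box_{g_K},S]\Phi$ is $O(r^{-1})\lapp\Phi$, which has one less power of $r$-decay than the rest. The paper handles this not with an $O(a)$ correction to $S$ but by commuting with cut-off rotation fields $\tilde{\Omega}$ (supported in $\{r\ge R_\Omega\}$), whose energy controls $r\,|\nabb\Phi|$ and thus absorbs the extra $r$. Near the horizon the commutator is merely bounded, and higher derivatives there are recovered by commuting with the red-shift field $\hat{Y}$.
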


A more precise version of this theorem will be given in Section \ref{sectionmaintheorem}. Our proof relies on an analogue of the scaling vector field on Minkowski spacetime. Recall that in Minkowski spacetime the vector field $S=t\partial_t+r\partial_r$ is conformally Killing and satisfies $[\Box_{m},S]=2\Box_m$. Hence any estimates that hold for $\Phi$ a solution to $\Box_m\Phi=0$ would also hold for $S\Phi$. However, $S$ has a weight that is increasing with $t$. Hence one can hope to prove a better estimate for $\Phi$ using the estimates for $S\Phi$. (See, for example, \cite{KS}).

In \cite{L}, we introduced an analogue of the scaling vector field on Schwarzschild spacetimes. We defined, in the Regge-Wheeler tortoise coordinate (see Section \ref{geometry}), the vector field $S=t\partial_t+r^*\partial_{r^*}$. In \cite{L}, we studied the commutator $[\Box_{g_S},S]$ and showed that all the error terms can be controlled. Thus, up to a loss of $t^\eta$ (for $\eta$ arbitrarily small), $S\Phi$ obeys all the estimates of $\Phi$ that were proved in \cite{DRS}. In particular, we showed that $S\Phi$, like $\Phi$ itself, obeys a local integrated decay estimate
$$\int_{t'}^{t}\int_{r_1}^{r_2}(D^k\Phi)^2drdt \leq CE_k(t')^{-2}\quad\mbox{for }t'\leq t\leq (1.1)t'.,$$
$$\int_{t'}^{t}\int_{r_1}^{r_2}\left(SD^k\Phi\right)^2drdt \leq CE_k(t')^{-2+\eta}\quad\mbox{for }t'\leq t\leq (1.1)t'.$$
From this we proved an improved decay of the $L^2$ norm of $D^k\Phi$. We will explain the main idea in the case $k=0$. Firstly, the local integrated decay for $\Phi$ would already imply that on a sequence of $t_i$ slices, with $t_i\leq t_{i+1}\leq (1.1)^2 t_i$, $\Phi$ obeys a better decay rate, namely $\Phi(t_i) \leq Ct_i^{-\frac{3}{2}}$. We then introduced a new method to use the estimates for $S\Phi$, which can be explained heuristically as follows. Given any time $t$, we find the largest $t_i\leq t$ that has a better decay rate. Then we integrated from $t_i$ to $t$ using the vector field $S$. Notice at this point that $S$ has a weight that grows like $t$. Hence we have, at least schematically,
$$\int_{r_1}^{r_2}\Phi(t)^2dr\leq C\left(\int_{r_1}^{r_2}\Phi(t_i)^2dr+t^{-1}|\int_{t_i}^{t}\int_{r_1}^{r_2}S\left(\Phi^2\right)drdt|\right).$$ 
We then notice that the last term can be estimated by the local integrated decay estimates
$$|\int_{t_i}^{t}\int_{r_1}^{r_2}S\left(\Phi^2\right)drdt|\leq C\left(\int_{t_i}^{t}\int_{r_1}^{r_2}\Phi^2drdt+\int_{t_i}^{t}\int_{r_1}^{r_2}\left(S\Phi\right)^2drdt\right)\leq Ct^{-2+\eta}.$$
Putting these together, we would get 
$$\int_{r_1}^{r_2}\Phi(t)^2dr\leq Ct^{-3+\eta}.$$
Using this method, we also showed the improved decay for the $L^2$ norm of higher derivatives. Pointwise decay estimate thus followed from standard Sobolev embedding.

In this paper we would like to carry out a similar argument. We introduce a scaling vector field (which we again call $S$) which is the same as in \cite{L} at the asymptotically flat end, but is smooth up to and across the event horizon. We will prove a local integrated decay estimate for $S\Phi$ and use the argument in \cite{L} as outlined above to prove an improved decay rate. The most difficult part of the argument is to control the error terms coming from the commutation of $\Box_{g_K}$ and (the modified) $S$, i.e., the term $[\Box_{g_K},S]\Phi$. To control this, we need to use estimates for derivatives of $\Phi$, which in turn is provided by the energy estimates for the homogeneous equation $\Box_{g_K}\Phi=0$ proved in \cite{DRK} and \cite{DRL}. This term schematically looks like

\begin{equation}\label{commsch}
[\Box_{g_K},S]\Phi=O(1)\Box_{g_K}\Phi+O(r^{-2+\delta})(D^2\Phi+D\Phi+rD\nabb\Phi),
\end{equation}
where $\nabb$ is an angular derivative on the 2-sphere. The term $O(1)\Box_{g_K}\Phi$ vanishes since we are considering $\Box_{g_K}\Phi=0$. We note that the other terms have the following two desirable features. Firstly, although $S$ has a weight in $t^*$, the commutator is independent of $t^*$, which is a result of $\partial_{t^*}$ being a Killing vector field. Secondly, these terms decay as $r\to\infty$, which is a result of the asymptotic flatness of Kerr spacetimes. The last term would appear to have less decay in $r$, which is also the case in Schwarzschild spacetimes. In that case, we controlled this term in \cite{L} by commuting the equation with $\Omega$, the generators of the spherical symmetry of Schwarzschild spacetimes. The quantity $\Omega\Phi$ would then give us control over an extra power of $r$. One difficulty that arises in the case of Kerr spacetimes is that they are not spherically symmetric. Nevertheless, following \cite{DRL}, we can construct an analog of $\Omega$, call it $\tilde{\Omega}$, which is an asymptotic symmetry, i.e., the commutator $[\Box_{g_K},\tilde{\Omega}]$ would decay in $r$. The non-degenerate energy of $\tilde{\Omega}\Phi$ would then control the last term in the above expression. Moreover, it is sufficient to define $\tilde{\Omega}$ only when $r$ is very large since otherwise the factor in $r$ can be absorbed by constants. Notice however that in a finite region of $r$, the commutator $[\Box_{g_K},S]$ would in general be large. 

In order to understand what quantities of $S\Phi$ have to be controlled, we re-derive the energy estimates in \cite{DRL} in the slightly more general case of the inhomogeneous equation $\Box_{g_K}\Phi=G$. This would also immediately imply that for the linear inhomogeneous equation $\Box_{g_K}\Phi=G$ with sufficiently regular and sufficiently decaying (both in space and time) $G$, the solution $\Phi$, assuming that the initial data is sufficiently regular, would decay with a rate of $(t^*)^{-1+\eta}$, precisely as that in \cite{DRL}. We will then apply this to the equations for $\tilde{\Omega}\Phi$ and $S\Phi$. In order to derive these energy estimates, we will use the (non-Killing) vector field multipliers $N$ and $Z$. $N$ is a modification of $\partial_{t^*}$ so that it is timelike everywhere, including near the event horizon. The use of $N$ tackles the issue of superradiance, a difficulty that arises from the spacelike nature of $\partial_{t^*}$ near the event horizon. $Z$ is an analogue of the conformal vector field $u^2\partial_u+v^2\partial_v$ in Minkowski spacetime and is used to prove decay.

Since we will use vector field multipliers that has weights in $t^*$ and $r$, in order to prove the energy estimates at $t^*=\tau$ for the inhomogeneous equation would have to control the term (as well as other similar or more easily controlled terms):
$$\iint_{\mathcal R(\tau_0,\tau)} (t^*)^2r^{1+\delta}G^2,$$
where the integration over space and the $t^*$ interval $[\tau_0,\tau]$. In order to prove the energy estimates for $S\Phi$, we need to show that for $G$ as in (\ref{commsch}), this is bounded by $C(\tau)^\eta$. We split this into two parts: $r\leq \frac{t^*}{2}$ and $r\geq\frac{t^*}{2}$. For $r\leq\frac{t^*}{2}$, first we notice that since $G$ decays in $r$, we can replace $r^{1+\delta}$ by $r^{-3+2\delta}$. Then, we use the fact that 
$$\sum_{k=1}^N\iint_{\mathcal R((1.1)^{-1}\tau,\tau)\cap\{r\leq\frac{t^*}{2}\}} r^{-1+\delta}(D^k\Phi)^2\leq C\tau^{-2+\eta}.$$
Hence if we sum up the whole integral by integrating in $[\tau_0,(1.1)\tau_0]$, $[(1.1)\tau_0, (1.1)^2\tau_0]$ etc., we will get a bound of $$\sum_{i=0}^{\lfloor \log\tau\rfloor +1} (1.1)^i\tau_0 \sim_\eta \tau^\eta.$$
For $r\geq\frac{t^*}{2}$, we do not have a decay estimate for the integrated in time estimate. However, we would still have an almost boundedness estimate:
$$\sum_{k=1}^N\iint_{\mathcal R((1.1)^{-1}\tau,\tau)\cap\{r\geq\frac{t^*}{2}\}} r^{-1+\delta}(D^k\Phi)^2\leq C\tau^{\eta}.$$
Notice, moreover, that $G^2\sim r^{-3+\delta}(D^k\Phi)^2$ and this region we have $r^{-3+\delta}\leq (t^*)^{-2}r^{-1+\delta}$. Hence we again have
$$\sum_{k=1}^N\iint_{\mathcal R((1.1)^{-1}\tau,\tau)\cap\{r\geq\frac{t^*}{2}\}} G^2\leq C\tau^{-2+\eta},$$
and the required estimate followed in the same manner as in the case $r\leq\frac{t^*}{2}$.

With the modified $S$, which is smooth up to the event horizon (contrary to \cite{L}), we can prove the improved decay estimates for the $L^2$ norm of $\Phi$ and $D\Phi$ once these error terms are controlled. Using the commutation with the Killing vector $\partial_{t^*}$, we would also have control for $L^2$ norm of $D\partial_{t^*}^k\Phi$. Away from the event horizon, this is sufficient to control all other derivatives by elliptic estimates. However, since near the event horizon, $\partial_{t^*}$ is not Killing, we would not have control over other derivatives. Here, we follow \cite{DRK} and \cite{DRL} and commute the equation with a version of the red-shift vector field - $\hat{Y}$. Once we control $D\hat{Y}^k\partial_{t^*}^j\Phi$ we can use the wave equation to control (any derivatives of) $\lapp\Phi$, where $\lapp$ is the Laplace-Beltrami operator on the sphere, which is elliptic. We can thus control derivatives of $\Phi$ in any directions. We will show, moreover, that the commutator $[\Box_{g_K},\hat{Y}]$ has the property that the inhomogeneous terms can be controlled once we have controlled the $L^2$ norm of $D\partial_{t^*}^k\Phi$. This implies that $\hat{Y}\Phi$ would decay in the same rate as $\partial_{t^*}\Phi$ for which we have already derived an improved decay rate. 

We now turn to some history of this problem. We mention some results on Kerr spacetimes with $a>0$ here and refer the readers to \cite{DRL}, \cite{L} for references on the corresponding problem on Schwarzschild spacetimes. There has been a large literature on the mode stability and nonquantitative decay of azimuthal modes (See for example \cite{PT}, \cite{HW}, \cite{Wh}, \cite{FKSY}, \cite{FKSY2} and references in \cite{DRL}). The first global result for the Cauchy problem was obtained by Dafermos-Rodnianski in \cite{DRK}, in which they proved that for a class of small, axisymmetric, stationary perturbations of Schwarzschild spacetime, which include Kerr spacetimes that rotate sufficiently slowly, solutions to the wave equation are uniformly bounded. Similar results were obtained later using an integrated decay estimate on slowly rotating Kerr spacetimes by Tataru-Tohaneanu \cite{TT}. Using the integrated decay estimate, Tohaneanu also proved Strichartz estimates \cite{To}.

Decay for general solutions to the wave equation on sufficiently slowly rotating Kerr spacetimes was first proved by Dafermos-Rodnianski \cite{DRL} with a quantitative rate of $|\Phi|\leq C(t^*)^{-1+Ca}$. A similar result was later obtained by \cite{AB} using a physical space construction to obtain an integrated decay estimate. In all of \cite{TT}, \cite{DRL} and \cite{AB}, the integrated decay estimate is proved and plays an important role. All proofs of such estimates rely heavily on the separability of the wave equation, or equivalently, the existence of a Killing tensor on Kerr spacetime. In a recent work \cite{DRNPS}, Dafermos-Rodnianski shows that assuming the integrated decay estimate (non-degenerate up to the event horizon if it exists) and boundedness for the wave equation on an asymptotically flat spacetime, the decay rate $|\Phi|\leq C(t^*)^{-1}$ holds. This in particular improves the rates in \cite{DRL} and \cite{AB}. In a similar framework, but assuming in addition exact stationarity, Tataru \cite{Ta} proved a local decay rate of $(t^*)^{-3}$ using Fourier-analytic methods. This applies in particular to sufficiently slowly rotating Kerr spacetimes. Dafermos and Rodnianski have recently announced a proof for the decay of solutions to the wave equation on the full range of sub-extremal Kerr spacetimes $a< M$. 

In view of the nonlinear problem, it is important to understand decay in a robust manner. In particular, past experience shows that refined decay estimates might not be stable in nonlinear problems. The vector field method is known to be robust and culminated in the proof of the stability of the Minkowski spacetime \cite{CK}, \cite{LR}. We prove our decay result using the vector field method with the expectation that the method will be useful for studying nonlinear problems. As a model problem, we will study the semilinear equation with a null condition on a fixed slowly rotating Kerr background. In a forthcoming paper that we will show the global existence of solutions with small initial data for this class of equations. We will also study the asymptotic behavior of these solutions. The null condition, which is a special structure of the nonlinearity, has served as an important model for the proofs of the nonlinear stability of Minkowski spacetime and we hope that it will find relevance to the problem of the nonlinear stability of Kerr spacetime.

We end the introduction with an overview of the paper. In Section \ref{geometry}, we will introduce the Kerr geometry, including a few different coordinate systems that we will find useful in the rest of the paper. In Section \ref{commutators}, we introduce the (non-Killing) vector field commutators that will be used in this paper. These include the scaling vector field $S$, which is the main tool for obtaining improved decay rates in this paper. In Section \ref{currents}, we introduce the formalism for vector field multipliers. We then have all the notations necessary to state the precise form of our main theorem in Section \ref{sectionmaintheorem}. In Section \ref{sectionN}, \ref{sectionX} and \ref{sectionZ}, we prove the main energy estimates using the vector field multipliers $N, X$ and $Z$. We write down the energy estimates in the most general form, allowing for the possibility to control the inhomogeneous terms in different energy norms. Such generality is unnecessary for the result in this paper, but will be useful in studying the null condition. Starting from Section \ref{sectionhomo}, we return to the homogeneous equation. In Section \ref{sectionhomo}, we write down the energy estimates proved in \cite{DRL}. We then derive the energy estimates after commuting with $\hat{Y}$, $\tilde{\Omega}$ and $S$ in Sections \ref{sectioncommutatorY}, \ref{sectioncommutatoromega} and \ref{sectioncommutatorS} respectively. Finally, using the estimates for $S\Phi$, we prove the main theorem in Section \ref{sectionproof}.

\section{Geometry of Kerr Spacetime}\label{geometry}
\subsection{Kerr Coordinates}
The Kerr metric in the Boyer-Lindquist coordinates takes the following form:
\begin{equation}\label{kerrmetric}
\begin{split}
g_K=&-\left(1-\frac{2M}{r\left(1+\frac{a^2 \cos ^2{\theta}}{r^2}\right)}\right)dt^2+\frac{1+\frac{a^2\cos ^2 \theta}{r^2}}{1-\frac{2M}{r}+\frac{a^2}{r^2}}dr^2+r^2\left(1+\frac{a^2\cos ^2 \theta}{r^2}\right)d\theta^2 \\
&+r^2\left(1+\frac{a^2}{r^2}+\left(\frac{2M}{r}\right)\frac{a^2\sin ^2\theta}{r^2\left(1+\frac{a^2\cos ^2\theta}{r^2}\right)}\right)\sin ^2\theta d\phi^2
-4M \frac{a \sin ^2\theta}{r\left(1+\frac{a^2\cos ^2\theta}{r^2}\right)}dtd\phi.
\end{split}
\end{equation}
In this paper, we will consider Kerr spacetimes with $a$ small. It can then be thought of as a small perturbation of Schwarzschild spacetimes because by setting $a=0$, we retrieve the metric for Schwarzschild spacetimes:
\begin{equation*}
g_S=-\left(1-\frac{2M}{r_S}\right)dt_S^2+\left(1-\frac{2M}{r_S}\right)^{-1}dr_S^2+r_S^2 d\theta^2+r_S^2\sin^2 \theta d\phi^2.
\end{equation*}
The Cauchy development of Kerr spacetimes can be described schematically by taking a two dimensional slice as in the following diagram:

\begin{figure}[htbp]
\begin{center}
 
\input{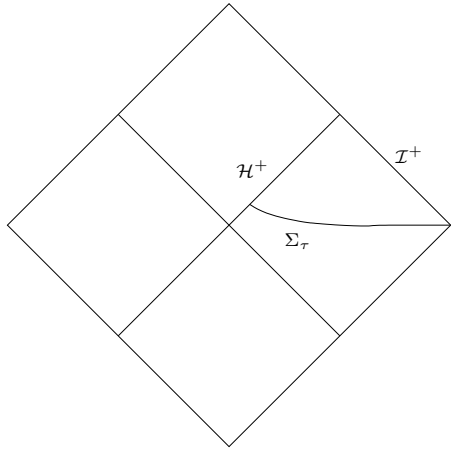}
 
\caption{Kerr spacetime}
\end{center}
\end{figure}

Notice that (\ref{kerrmetric}) represents the metric on the exterior region (the right hand side in the diagram). In the coordinate system above, this is the region $\{r\geq r_+\}$, where $r_+$ is the larger root of $\Delta=r^2-2Mr+a^2$. This is the region that we will study. We foliate the exterior region of the Kerr spacetime by hypersurfaces $\Sigma_\tau$ as depicted in the diagram. A precise definition of the hypersurface $\Sigma_\tau$ will be given in in Section \ref{integration}. The coordinates in (\ref{kerrmetric}) are not regular at the event horizon $\mathcal H^+=\{r=r_+\}$. It will be helpful in the sequel to use different coordinate systems on Kerr spacetimes. From now on we will call the coordinate system on which the metric (\ref{kerrmetric}) is defined the Kerr $(t,r,\theta,\phi)$ coordinates.
We define a new coordinate system - the Kerr $(t, r^*, \theta,\phi)$ coordinates by
$$\frac{dr^*}{dr}=\frac{r^2+a^2}{\Delta},$$
where $\Delta=r^2-2Mr+a^2$ is zero at the event horizon.
In this coordinate system, the metric looks like
\begin{equation*}
\begin{split}
g_K=&-\left(1-\frac{2M}{r(1+\frac{a^2 \cos ^2{\theta}}{r^2})}\right)dt^2+\Delta\left(r^2+a^2\right)^2\left(r^2+a^2\cos^2\theta\right){dr^*}^2+r^2\left(1+\frac{a^2\cos ^2 \theta}{r^2}\right)d\theta^2\\
&+r^2\left(1+\frac{a^2}{r^2}+(\frac{2M}{r})\frac{a^2\sin ^2\theta}{r^2(1+\frac{a^2\cos ^2\theta}{r^2})}\right)\sin ^2\theta d\phi^2-4M \frac{a \sin ^2\theta}{r(1+\frac{a^2\cos ^2\theta}{r^2})} dt d\phi.
\end{split}
\end{equation*}
Notice that since the definition of $r^*$ depends only on $r$, it would be unambiguous to talk about the vector $\partial_t$.\\
\subsection{Schwarzschild Coordinates}
In order to compare calculations on Kerr spacetimes to calculations on Schwarzschild spacetimes, it is helpful to exhibit a diffeomorphism between the two. We do so by defining an explicit map between the coordinate functions $\left(t, r, \theta,\phi\right)$ on a Kerr spacetime and the coordinate functions $\left(t_S, r_S, \theta_S,\phi_S\right)$ on a Schwarzschild spacetime with the same mass. These will be defined differently near and away from the event horizon. Take
$$\chi(r)=\left\{\begin{array}{clcr}1&r\le r^-_Y-\frac{r^-_Y-r_+}{2}\\0&r\ge r^-_Y-\frac{r^-_Y-r_+}{4}\end{array}\right.,$$
where $r_+$, as above, is the larger root of $\Delta=r^2-2Mr+a^2$ and $r^-_Y>r_+$ is a constant to be determined later.
With this $\chi(r)$, we can then define
$$r_S^2-2Mr_S=r^2-2Mr+a^2,$$
$$t_S+\chi(r_S)2M\log\left(r_S-2M\right)=t+\chi(r)h(r),\quad \mbox{where  }\frac{dh(r)}{dr}=\frac{2Mr}{r^2-2Mr+a^2},$$
$$\theta_S=\theta,$$
$$\phi_S=\phi+\chi(r)P(r),\quad \mbox{where  }\frac{dP(r)}{dr}=\frac{a}{r^2-2Mr+a^2}.$$
Then, by identifying $\left(t_S, r_S, \theta_S,\phi_S\right)$ with the corresponding coordinate functions on Schwarzschild spacetimes, we have a diffeomorphism between Kerr spacetimes and Schwarzschild spacetimes. This coordinate system will be used and will be called the Schwarzschild $\left(t_S, r_S, \theta_S,\phi_S\right)$ coordinates on Kerr spacetimes. Once we have this diffeomorphism, we can put any system of Schwarzschild coordinates on Kerr spacetimes. These include the Schwarzschild $(t^*_S, r_S, \theta_S, \phi_S)$ coordinates, where $t^*_S=t_S+\chi(r_S)2M\log\left(r_S-2M\right)$ and $r_S, \theta_S, \phi_S$ are defined as above. We also define $t^*=t^*_S=t_S+\chi(r_S)2M\log\left(r_S-2M\right)$  and use the Kerr $\left(t^*, r, \theta, \phi^*\right)$ coordinates. Notice that $\partial_{t^*}= \partial_{t^*_S}$.

It is common to denote on Schwarzschild spacetimes $\mu=\frac{2M}{r_S}$. We will take the same notation on Kerr spacetimes, with the understand that it is always defined with respect to the Schwarzschild $r_S$ coordinates. In particular $(1-\mu )$ approaches $0$ as $r\to r_+$ (the event horizon). 

Another system of Schwarzschild coordinates can be defined by considering two coordinate charts on the standard unit 2-sphere and introducing a system of coordinates $(x^A_S,x^B_S)$ on each of them. We then define the Schwarzschild $(t^*_S, r_S, x^A_S, x^B_S)$ coordinates in the obvious manner. Using this coordinate system and the diffeomorphism as above, we have, for small $a$: 
\begin{equation}\label{smallness}
|(g_K)_{\alpha\beta}-(g_S)_{\alpha\beta}|\leq \epsilon r^{-2}.
\end{equation}
This smallness assumption will be used throughout this paper.
\subsection{Null Frame near Event Horizon}
Some extra cancellations for the estimates near the event horizon are best captured using a null frame. Hence we define a null frame $\{\hat{V},\hat{Y},E_1, E_2\}$ in the region $r\leq r^-_Y$, where $r^-_Y$ is to be determined later. On the event horizon, $$V=\partial_{t^*}+\frac{a}{2Mr_+}\partial_{\phi^*}$$ is the Killing null generator. A direct computation shows that it satisfies
$$\nabla_V V=\kappa V,$$
where $\kappa$ is a strictly positive number on the event horizon. We want to extend $V$ to a null frame. On the event horizon, define $\hat{Y}$ first on a 2-sphere given by a fixed $t^*$ to be null, orthogonal to the 2-sphere and require that $g_K(V,\hat{Y})=-2$. Define also locally an orthonormal frame $\{E_1,E_2\}$ tangent to the fixed 2-sphere. Notice that in the sequel, we will only need to work with a local null frame. We then extend this null frame off the fixed 2-sphere on the event horizon (with $\hat{V}|_{\mathcal H}=V$) by requiring 
\begin{equation}\label{nulldef}
\nabla_{\hat{Y}}\hat{Y}=\nabla_{\hat{Y}}\hat{V}=\nabla_{\hat{Y}} E_A=0,
\end{equation}
where $A\in\{1,2\}$.
Then extend this null frame using the isomorphisms generated by $V$. We notice that the above equations hold everywhere. If we choose $r^-_Y$ close enough to $r_+$, we still have, by Taylor's Theorem, \begin{equation}\label{definekappa}\nabla_{\hat{V}} \hat{V}=\kappa \hat{V}+b^Y\hat{Y}+b^1E_1+b^2E_2,\end{equation}
where, in $r_+\leq r\leq r^-_Y$, $\kappa$ is a strictly positive function bounded away from $0$ and $|b^\alpha|\leq C\left(1-\mu \right)$. 

In Schwarzschild spacetime, consider the frame on $\mathcal S^2$ given by $\{\frac{1}{r_S^2\sin\theta_S}\partial_{\phi_S}, \frac{1}{r_S}\partial_{\theta_S}\}$. Then we get
$$\hat{V}=(1+\mu )\partial_{t^*_S}+(1-\mu )\partial_{r_S}, \hat{Y}=\partial_{t^*_S}-\partial_{r_S}, E_1=\frac{1}{r_S}\partial_{\theta_S}, E_2=\frac{1}{r_S\sin\theta_S}\partial_{\phi_S}$$

Since we consider Kerr spacetimes on which the metric is close to that on a Schwarzschild spacetime, we have that, in $(t^*,r,\theta,\phi^*)$ coordinates, the null frame can be expressed as
$$\hat{V}=(1+\mu )\partial_{t^*}+(1-\mu )\partial_r+O_1(\epsilon)\partial,$$
$$\hat{Y}=\partial_{t^*}-\partial_r+O_1(\epsilon)\partial,$$
$$E_1=\frac{1}{r}\partial_\theta+O_1(\epsilon)\partial,$$
$$E_2=\frac{1}{r\sin\theta}\partial_{\phi^*}+O_1(\epsilon)\partial.$$
Alternatively, if we write $E_\alpha$, where $\alpha=1,2,3,4$, for the null frame, we have

$$(1+\mu )\partial_{t^*}+(1-\mu )\partial_r=\hat{V}+O_1(\epsilon)E_\alpha,$$
$$\partial_{t^*}-\partial_r=\hat{Y}+O_1(\epsilon)E_\alpha,$$
$$\partial_\theta=rE_1+O_1(\epsilon)E_\alpha,$$
$$\partial_{\phi^*}=r\sin\theta E_2+O_1(\epsilon)E_\alpha.$$
We also define the vector fields $\hat{V}, \hat{Y}, E_1, E_2$ outside $\{r\leq r^-_Y\}$ by requiring them to be compactly supported in $\{r\leq r^+_Y\}$ (for some $r^+_Y$ to be determined) and invariant under the one parameter families of isometries generated by $\partial_{t^*}$ and $\partial_{\phi^*}$. Notice that in particular there is no requirement that the vector fields form a null frame in the region $\{r^-_Y\leq r \leq r^+_Y\}$. 

\section{Notations}
\subsection{Constants}
Throughout this paper, we will use $C$ to denote a large constant and $c$ to denote a small constant. They can be different from line to line. We will also use $A$ to denote bootstrap constants and we think of $A$ to be large, i.e., $A \gg C$. We also use the notation $O_i(1)$ and $O_i(\epsilon)$ to denote terms that are bounded up to a constant by $1$ and $\epsilon$, with bounds that improve by $r^{-1}$ for each derivative up to the $i$-th derivative. We will also use the notation $f\sim g$ to denote $cf\leq g\leq Cf$.

There are some constants that we will choose in the proof. The following are values of $r$ in the Kerr coordinates:
$$r_+<r^-_Y<r^+_Y<\frac{11M}{4}<R_\Omega.$$
$r^-_Y$ and $r^+_Y$ will be fixed in Remarks \ref{r-fix} and \ref{fixr+e} respectively.

There are also smallness parameters which can be thought of as obeying
$$0<\delta<\epsilon\ll\eta\ll e.$$
$\epsilon$ will always be used to denote the smallness of the specific angular momentum $a$ of the spacetimes that we are working on. $\eta\sim C\epsilon$ will denote the loss in the decay rate of the solutions to the wave equation as compared to that on Schwarzschild spacetimes. $e$ will be used to construct the non-degenerate energy. $\delta$ and $\delta'$ will simply be used as a small parameter whenever they are needed in the analysis. $\delta$ and $\delta'$ need not be fixed from line to line.

\subsection{Values of $t^*$}
We will adopt the following as much as possible:
$t^*$ will denote a general value of $t^*$. In particular, it will be used as integration variables. $\tau$ will denote the $t^*$ value for which we want an estimate. $\tau_0$ will denote the $t^*$ value where the initial data is posed. We will always assume $\tau_0 \geq 1$ and the reader can think of $\tau_0=1$. When integrating, we will often denote the endpoints by $\tau'$ and $\tau$. Finally, at a few places we will need to choose a particular value of $t^*$ in an interval. This is usually done to achieve the max or min of the energy quantities. Such choices will often be denoted as $\tilde{\tau}$.

\subsection{Integration}\label{integration}

\begin{definition}Define the following sets:
\begin{enumerate}
 \item $\Sigma_{\tau}=\{t^*=\tau\}$
 \item $\mathcal R(\tau',\tau)=\{\tau'\leq t^*\leq\tau\}$
 \item $\mathcal H(\tau',\tau)=\{r=r_+,\tau'\leq t^*\leq\tau\}$
\end{enumerate}
\end{definition}
When integrating on these sets, we will normally integrate with respect to the volume form which we suppress. On $\Sigma_{\tau}$ the volume form is $\sqrt{\det g_K|_{\Sigma_\tau}}.$ On $\mathcal R(\tau',\tau)$, the volume form is $\sqrt{\-det g_K}.$ However, on the event horizon $\mathcal H(\tau',\tau)$, the surface is null and the metric is degenerate. Nevertheless, on $\mathcal H(\tau',\tau)$, the integrand will always be of the form $J_\mu n^\mu_{\Sigma_{\mathcal H^+}}$, where $n^\mu_{\Sigma_{\mathcal H^+}}$ is the normal to $\mathcal H(\tau',\tau)$. We will hence take the volume form corresponding to the (arbitrarily) chosen normal. Occasionally, we will also integrate over the topological 2-spheres given by fixing $t$ and $r$. We will denote the volume form by $dA=\sqrt{\det g_K|_{\mathbb S^2}}.$

For some computations, however, it is more convenient to write down the volume form explicitly in coordinates. In our notations, the following two expressions denote the same integral:

$$\int_{\Sigma_\tau} f = \int_{\Sigma_\tau} f \sqrt{\det g_K|_{\Sigma_\tau}} dr d\theta d\phi.$$

When we write the integrals, we will often use $\iint$ to denote an integral over a spacetime region and use $\int$ denote an integral over a spacelike or null hypersurface. 

Notice that the volume form on $\Sigma_{t^*}$ can be compared with that on $\mathcal R(\tau',\tau)$. In particular, we have
$$\iint_{\mathcal R(\tau',\tau)} f \sim \int_{\tau'}^{\tau}\left(\int_{\Sigma_{t^*}} f\right)dt^*.$$

\section{Vector Field Commutators}\label{commutators}
In this section, we discuss the vector field commutators that we will use in this article. One obvious such vector field is the Killing vector field $\partial_{t^*}$ which satisfies
$$[\Box_{g_K},\partial_{t^*}]=0.$$
As we will see, this is not sufficient to control all the higher derivatives of the solution $\Phi$ near the event horizon. We will follow \cite{DRK} and \cite{DRL} to define the commutator $\hat Y$. The main innovation in this article is to define the vector field $S$ that would be used as a commutator to prove the improved decay rate for the solution $\Phi$. We would also need the vector fields $\Omega_i$ to control the error terms coming from the commutator $[\Box_{g_K},S]$.
\subsection{Vector Field Commutators under Metric Perturbations}
Some computations are easier in Schwarzschild spacetime than in Kerr spacetime. In the sequel, we will often consider consider a fixed vector field on the differentiable structure of the Schwarzschild exterior. We now show that for such vector fields, the commutators with $\Box_{g_S}$ and $\Box_{g_K}$ are close to each other as long as $a$ is chosen to be sufficiently small:
\begin{proposition}\label{commsta}
Consider either the Schwarzschild $(t^*_S,r_S,x^A_S,x^B_S)$ coordinates or $(t_S, r_S \geq r^-_Y, x^A_S,x^B_S)$ coordinates. Suppose $V$ is a vector field defined on either of these coordinates. Then $$|[\Box_{g_K}-\Box_{g_S},V]\Phi|\leq C\epsilon r^{-2}(\displaystyle\sum_{m=1}^{2} \sum_{k=1}^2\max_\alpha|\partial^m V^\alpha||\partial^k\Phi|),$$ where $\partial$ is the coordinate derivative for the coordinate system on which $V$ is defined.
\end{proposition}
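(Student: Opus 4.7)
The key input is the $C^0$ smallness estimate (\ref{smallness}), $|(g_K)_{\alpha\beta}-(g_S)_{\alpha\beta}|\leq \epsilon r^{-2}$. My first step would be to upgrade this to derivative estimates: from the explicit form (\ref{kerrmetric}) of the Kerr metric in the Schwarzschild-coordinate system, each coordinate derivative of $(g_K - g_S)_{\alpha\beta}$ lowers the $r$-decay by one power, giving $|\partial^j(g_K - g_S)_{\alpha\beta}|\leq C\epsilon r^{-2-j}$ for $j=0,1,2$. Inverting (the coordinate matrix of) $g_K$ via Cramer's rule and doing the same for $\sqrt{|g_K|}$ yields the same bounds for $g_K^{\mu\nu}-g_S^{\mu\nu}$ and for $\sqrt{|g_K|}-\sqrt{|g_S|}$.

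Next, in the given coordinates the Laplace--Beltrami operator reads
$$\Box_g\Phi = g^{\mu\nu}\partial_\mu\partial_\nu\Phi + A^\mu\partial_\mu\Phi, \qquad A^\mu = \tfrac{1}{\sqrt{|g|}}\partial_\nu(\sqrt{|g|}\,g^{\mu\nu}),$$
so $\Box_{g_K}-\Box_{g_S}$ is a linear second-order operator $B^{\mu\nu}\partial_\mu\partial_\nu + C^\mu\partial_\mu$ with
$$|\partial^j B^{\mu\nu}|\leq C\epsilon r^{-2-j},\qquad |\partial^j C^\mu|\leq C\epsilon r^{-3-j}, \qquad j=0,1,2.$$
Expanding the commutator $[B^{\mu\nu}\partial_\mu\partial_\nu + C^\mu\partial_\mu, V^\alpha \partial_\alpha]\Phi$ via the Leibniz rule, the triple-derivative terms on $\Phi$ cancel between the two pieces of the commutator, leaving five schematic contributions
$$B\cdot\partial^2 V\cdot \partial\Phi,\quad B\cdot\partial V\cdot\partial^2\Phi,\quad C\cdot\partial V\cdot\partial\Phi,\quad V\cdot\partial B\cdot\partial^2\Phi,\quad V\cdot\partial C\cdot\partial\Phi.$$
The first three slot directly into the desired bound $C\epsilon r^{-2}|\partial^m V^\alpha||\partial^k\Phi|$ for the appropriate $(m,k)\in\{1,2\}^2$; the $C$-term even comes with an extra factor $r^{-1}$ to spare.

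The real bookkeeping obstacle is the last two terms, which a priori feature $V^\alpha$ undifferentiated. Here one exploits the extra decay gained upon differentiating the metric perturbation: $|\partial B|\leq C\epsilon r^{-3}$ and $|\partial C|\leq C\epsilon r^{-4}$ improve the $r$-decay by one power, which is used to absorb the undifferentiated factor of $V$ into the higher-derivative quantities $|\partial^m V^\alpha|$ appearing on the right-hand side (the asserted bound implicitly treating the $\partial^m V^\alpha$ sum as controlling the $C^m$-norm of the components of $V$). Assembling the five estimates produces the pointwise inequality stated in the proposition. Beyond the derivative-upgrade of (\ref{smallness}), which is the only place the asymptotic flatness and specific structure of the Kerr perturbation really enter, the argument is pure Leibniz-rule bookkeeping.
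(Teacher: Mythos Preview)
Your approach --- write $\Box_g = g^{\mu\nu}\partial_\mu\partial_\nu + A^\mu\partial_\mu$ in coordinates and expand the commutator by the Leibniz rule --- is exactly the paper's. You are in fact more careful than the paper: you correctly list five terms, whereas the paper's proof writes only the three terms carrying a factor $\partial V$ and silently drops precisely the two you flag as the obstacle, $V^\gamma(\partial_\gamma B^{\mu\nu})\partial_\mu\partial_\nu\Phi$ and $V^\gamma(\partial_\gamma C^\mu)\partial_\mu\Phi$.

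Your handling of those two terms does not close, however. The claim $|\partial^j B|\leq C\epsilon r^{-2-j}$ fails for angular derivatives (the Kerr perturbation carries $\cos^2\theta$ factors, and $\partial_\theta$ gains no $r$-decay); and even where improved decay does hold, there is no general mechanism to trade $r^{-1}|V^\alpha|$ for $|\partial V^\alpha|$ --- your parenthetical about ``implicitly treating the $\partial^m V^\alpha$ sum as controlling the $C^m$-norm'' is really a concession that the bound with $m\geq 1$ only does not follow. So your argument and the paper's share the same gap at the same place: as literally stated, the proposition does not hold for arbitrary $V$. In the paper's actual applications this is harmless --- for $V=\tilde\Omega$ the components themselves are bounded, and for $V=S$ the large component $S^{t^*}=t^*$ is annihilated because both metrics are $t^*$-independent, leaving only $h(r)\partial_r$ acting on $B,C$ --- but the clean repair of the statement is simply to allow $m=0$ in the sum on the right-hand side.
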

\begin{proof}
We rewrite
$$\Box_{g_S}=g_S^{\alpha \beta}\partial_\alpha\partial_\beta+\eta_S^\alpha\partial_\alpha, $$
$$\Box_{g_K}=g_K^{\alpha \beta}\partial_\alpha\partial_\beta+\eta_K^\alpha\partial_\alpha. $$
Using $|(g_K)_{\alpha\beta}-(g_S)_{\alpha\beta}|\leq \epsilon r^{-2}$ and $|\partial_\gamma((g_K)_{\alpha\beta}-(g_S)_{\alpha\beta})|\leq \epsilon r^{-2}$, we have $|\sqrt{-\det g_K}-\sqrt{-\det g_S}|\leq \epsilon r^{-2}$ and $|\partial_\alpha(\sqrt{-\det g_K}-\sqrt{-\det g_S})|\leq \epsilon r^{-2}$.
Therefore,
$$\sup_{\alpha,\beta}|g_S^{\alpha \beta}-g_K^{\alpha \beta}|+\sup_\alpha|\eta_S^\alpha-\eta_K^\alpha|\leq C\epsilon r^{-2}.$$
Therefore,
\begin{equation*}
\begin{split}
|[\Box_{g_K}-\Box_{g_S},V]\Phi|\leq &|(g_K^{\alpha \beta}-g_S^{\alpha \beta})(\partial_\alpha\partial_\beta V^\gamma)\partial_\gamma\Phi|+2|(g_K^{\alpha \beta}-g_S^{\alpha \beta})\partial_\alpha V^\gamma \partial_\beta\partial_\gamma\Phi|+|(\eta_S^\alpha-\eta_K^\alpha)(\partial_\alpha V^\gamma)\partial_\gamma\Phi|\\
\leq& C\epsilon r^{-2}(\displaystyle\sum_{m=1}^{2} \sum_{k=1}^2\sup_\alpha|\partial^m V^\alpha||\partial^k\Phi|).
\end{split}
\end{equation*}
\end{proof}

\subsection{Commutator S}\label{commS}
We construct a commuting vector field $S$ on Schwarzschild that is different from \cite{L} and is stable under perturbation.\\
Define $S=t^*_S\partial_{t^*_S}+h(r_S)\partial_{r_S}$, where $h(r)= \left\{\begin{array}{clcr}r-2M &r\sim 2M\\(r+2M\log(r-2M)-3M-2M \log M )(1-\mu )&r\ge R\end{array}\right.$, for some large $R$ and is interpolated so that it is smooth and non-negative. Notice that for $r\geq R$, since $t^*=t$, this agrees with the definition in \cite{L}. Therefore we have 
\begin{equation}\label{commSinf}
[\Box_{g_S},S]=\left(2+\frac{r^*\mu}{r}\right)\Box_{g_S} +\frac{2}{r}\left(\frac{r^*}{r}-1-\frac{2r^*\mu}{r}\right)\partial_{r^*}+2\left(\left(\frac{r^*}{r}-1\right)-\frac{3r^*\mu}{2r}\right)\lapp,
\end{equation}
where $\lapp$ is the Laplace-Beltrami operator on the standard sphere.
In the coordinates $(t^*, r,\theta,\phi)$, 
$$\Box_{g_S}=-\alpha_1\left(r\right) \partial_t^2+\alpha_2\left(r\right)\partial_r^2+\alpha_3\left(r\right)\partial_r\partial_t+\alpha_4\left(r\right)\partial_t+\alpha_5\left(r\right)\partial_r+\lapp.$$
The crucial observation is that all $\alpha_i$ are smooth and bounded and depend only on $r$. Noting that $\alpha_i$ does not depend on $t$, we have
$$[\Box_{g_S},S]=\beta_1\left(r\right)\partial_t^2+\beta_2\left(r\right)\partial_r^2+\beta_3\left(r\right)\partial_r\partial_t+\beta_4\left(r\right)\partial_t+\beta_5\left(r\right)\partial_r+\beta_6\left(r\right)\lapp.$$
Again, it is important to note that all $\beta_i$ are smooth, bounded and depend only on $r$.
The form of $\beta_i$ for $r\geq R$ is given by (\ref{commSinf}).

We consider the same vector field $S$ on Kerr. Using Proposition \ref{commsta}, and noting that $\partial^m S^\alpha$ is bounded for $m\geq 1$, we have that for $r>R$,
$$|[\Box_{g_K},S]\Phi-\left(2+\frac{r^*\mu}{r}\right)\Box_{g_K} \Phi -\frac{2}{r}\left(\frac{r^*}{r}-1-\frac{2r^*\mu}{r}\right)\partial_{r^*}\Phi-2\left(\left(\frac{r^*}{r}-1\right)-\frac{3r^*\mu}{2r}\right)\lapp\Phi|\leq C\epsilon r^{-2}(\sum_{k=1}^2|\partial^k\Phi|),$$
and that for $r\leq R$,
$$|[\Box_{g_K},S]\Phi|\leq C\sum_{k=1}^2|D^k\Phi|.$$

\subsection{Commutator $\tilde{\Omega}_i$}\label{commOmega}
Let $\Omega_i$ be a basis of vector fields of rotations in Schwarzschild spacetimes. An explicit realization can be $\Omega=\partial_\phi, \sin\phi\partial_\theta+ \frac{\cos\phi\cos\theta}{\sin\theta}\partial_\phi, \cos\phi\partial_\theta-\frac{\sin\phi\cos\theta}{\sin\theta}\partial_{\phi}$. Define $\tilde{\Omega}_i=\chi(r)\Omega_i$ to be cutoff so that it is supported in $\{r>R_\Omega\}$ and equals $\Omega_i$ for $r>R_\Omega+1$ for some large $R$. On Schwarzschild spacetimes, $\Omega_i$ is Killing and therefore $\tilde{\Omega}_i$ is Killing for $r>R+1$. Therefore,
$$[\Box_{g_S}, \tilde{\Omega}_i]= \tilde{\chi}\left(r\right)\left(\partial^2+\partial\right),$$
where $\tilde\chi$ is some function depending only on $r$ and is supported in $\{R_\Omega<r<R_\Omega+1\}$.\\
Using Proposition \ref{commsta}, we have
$$|[\Box_{g_K}, \tilde{\Omega}_i]\Phi|\leq Cr^{-2}\left(|\partial^2\Phi|+|\partial\Phi|\right).$$
Moreover, since $\tilde{\Omega}_i$ vanishes for $r<R_\Omega$, we have trivially,
$$[\Box_{g_K}, \tilde{\Omega}_i]\Phi=0, \quad \mbox{for } r<R_\Omega.$$ 
From now on, we will write $\tilde{\Omega}$ to denote any one of the $\Omega_i$, while taking the norm to be $|\tilde{\Omega}\Phi|=\displaystyle\sum_i |\Omega_i\Phi|$. We would like to point out that this commutator is useful to gain powers of $r$ near spatial infinity. In particular we have
$$|\nabb\Phi|\leq \frac{C}{r}|\tilde{\Omega}\Phi|.$$
This extra power of $r$ is essential to control the error terms arising from the commutation of $\Box_{g_K}$ with $S$.

\subsection{Commutator $\hat{Y}$}
Let $\hat{Y}$, as in Section \ref{geometry}.3, be a vector field that is null near the event horizon, normalized with respect to another null vector $\hat{V}$ and is cut off to be compactly supported in $\{r\leq r^+_Y\}$.
\begin{proposition}
On Kerr spacetimes such that $\epsilon$ is small, we have for $r\leq r^-_Y$, $$|[\Box_{g_K},\hat{Y}]\Phi-\kappa\hat{Y}^2\Phi|\leq C\left(|D\partial_{t^*}\Phi|+\epsilon|D^2\Phi|+|D\Phi|\right),$$
where $\kappa >c >0$ is as in (\ref{definekappa}).
\end{proposition}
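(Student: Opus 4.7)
The plan is a direct computation of $[\Box_{g_K}, \hat{Y}]$ performed in the null frame $\{\hat{V}, \hat{Y}, E_1, E_2\}$. Using $g_K(\hat{V}, \hat{Y}) = -2$, $g_K(E_A, E_B) = \delta_{AB}$ and the parallel-transport conditions (\ref{nulldef}), the wave operator decomposes as
\begin{equation*}
\Box_{g_K}\Phi = -\tfrac{1}{2}(\hat{V}\hat{Y} + \hat{Y}\hat{V})\Phi + \sum_A E_A^2 \Phi + \tfrac{1}{2}(\nabla_{\hat{V}}\hat{Y})\Phi - \sum_A(\nabla_{E_A}E_A)\Phi,
\end{equation*}
where the last two terms act as first-order derivations on $\Phi$. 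I would then expand the commutator $[\Box_{g_K}, \hat{Y}]$ term by term.

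The $\kappa\hat{Y}^2\Phi$ contribution comes from the principal part. Since $\nabla_{\hat{Y}}\hat{V} = 0$ gives $[\hat{V}, \hat{Y}] = \nabla_{\hat{V}}\hat{Y}$, and since the null-frame identities $g(\nabla_{\hat{V}}\hat{Y}, \hat{Y}) = 0$ (from $\hat{Y}$ being null) together with $g(\nabla_{\hat{V}}\hat{Y}, \hat{V}) = -g(\nabla_{\hat{V}}\hat{V}, \hat{Y}) = 2\kappa$ (from (\ref{definekappa})) pin down $\nabla_{\hat{V}}\hat{Y} = -\kappa\hat{Y} + c^A E_A$, a short expansion of $-\tfrac{1}{2}([\hat{V},\hat{Y}]\hat{Y} + \hat{Y}[\hat{V},\hat{Y}])\Phi$ yields $\kappa\hat{Y}^2\Phi$ plus first-order remainders and $c^A$-weighted second derivatives. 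For the $c^A$ piece I appeal to Proposition \ref{commsta}: in Schwarzschild, a Koszul computation with the explicit frame $\hat{V}^S = (1+\mu)\partial_{t^*} + (1-\mu)\partial_r$, $\hat{Y}^S = \partial_{t^*} - \partial_r$ shows that $\nabla^{g_S}_{\hat{V}^S}\hat{Y}^S$ is pure $\hat{Y}$-direction, so $c^A$ vanishes identically; transferring to Kerr via (\ref{smallness}) gives $|c^A| \leq C\epsilon$, producing the advertised $\epsilon|D^2\Phi|$ error from the corresponding second derivatives.

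The remaining commutator contributions---$[\sum_A E_A^2, \hat{Y}]\Phi$ together with the first-order commutators $[\nabla_{\hat{V}}\hat{Y}, \hat{Y}]\Phi$ and $[\sum_A \nabla_{E_A}E_A, \hat{Y}]\Phi$---are handled by the same split: the first-order commutators produce $|D\Phi|$ directly, and for the angular Laplacian commutator I use that $\nabla_{E_A}\hat{Y}$ has a purely tangential Schwarzschild value with $O(\epsilon)$ Kerr correction. The main obstacle is the residual Schwarzschild-level second-order angular piece of size $r^{-1}\sum_A E_A^2 \Phi$, which is neither $|D\partial_{t^*}\Phi|$ nor $\epsilon|D^2\Phi|$ on its own. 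I would resolve this by exploiting the region-specific expansion $\hat{V} = 2\partial_{t^*} + (1-\mu)(\partial_r - \partial_{t^*}) + O_1(\epsilon)\partial$ valid in $r \leq r^-_Y$, which lets me trade any $\hat{V}$-derivative for a $\partial_{t^*}$-derivative modulo $(1-\mu)$-weighted radial and $O(\epsilon)$ terms, together with the null-frame identity rewriting $\sum_A E_A^2 = \Box_{g_K} + \tfrac{1}{2}(\hat{V}\hat{Y}+\hat{Y}\hat{V}) + \text{first-order}$, so that the leftover angular second derivative is re-expressed in terms of $\hat{V}$-involving derivatives (controlled by $|D\partial_{t^*}\Phi|$ after the near-horizon expansion) plus acceptable errors; this matches the stated bound $C(|D\partial_{t^*}\Phi|+\epsilon|D^2\Phi|+|D\Phi|)$.
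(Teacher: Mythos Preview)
Your approach differs from the paper's. The paper invokes the deformation-tensor identity, writing the principal part of $[\Box_{g_K},\hat Y]\Phi$ as $2\,{}^{(\hat Y)}\pi^{\mu\nu}D_\mu D_\nu\Phi$, then computes ${}^{(\hat Y)}\pi_{\hat V\hat V}=2\kappa$ and asserts ${}^{(\hat Y)}\pi_{\hat VE_A},\,{}^{(\hat Y)}\pi_{E_AE_B}=O(\epsilon)$ by comparison with Schwarzschild, concluding that every second-order term other than $\kappa\hat Y^2\Phi$ is either $\partial_{t^*}$-involving or small. You instead expand $\Box_{g_K}$ explicitly in the null frame and commute term by term. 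Your extraction of $\nabla_{\hat V}\hat Y=-\kappa\hat Y+c^AE_A$ with $c^A=O(\epsilon)$ is correct and is the frame-level counterpart of the paper's $\pi_{\hat V\hat V}$ and $\pi_{\hat VE_A}$ computations. You are actually more careful than the paper on the angular piece: in Schwarzschild one checks directly that $\nabla_{E_A}\hat Y=-r^{-1}E_A$, so ${}^{(\hat Y)}\pi_{E_AE_B}=-r^{-1}\delta_{AB}$ rather than $O(\epsilon)$; the commutator genuinely carries a term $\sim r^{-1}\sum_AE_A^2\Phi$ with $O(1)$ coefficient, and the paper's assertion that this component is $O(\epsilon)$ is too quick.

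The gap is in your resolution of that angular term. Rewriting $\sum_AE_A^2\Phi=\Box_{g_K}\Phi+\tfrac12(\hat V\hat Y+\hat Y\hat V)\Phi+(\text{first order})$ leaves a bare $\Box_{g_K}\Phi$ on the right, and the proposition as stated is a pointwise inequality for \emph{arbitrary} $\Phi$, with no hypothesis on $\Box_{g_K}\Phi$; that term is not majorized by $|D\partial_{t^*}\Phi|+\epsilon|D^2\Phi|+|D\Phi|$. Concretely, take $\Phi=Y_2^0(\theta)$ at a point with $\theta=\pi/2$: there $D\Phi=0$ and $D\partial_{t^*}\Phi=0$, so in Schwarzschild the stated right-hand side vanishes, while $[\Box_{g_S},\hat Y]\Phi-\kappa\hat Y^2\Phi=-2r^{-3}\lapp_{S^2}\Phi\ne0$. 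Your $(1-\mu)$-weighted remainders do not help either, since $(1-\mu)$ is a fixed number on $\{r\le r^-_Y\}$ (with $r^-_Y$ chosen independently of $\epsilon$), not $O(\epsilon)$, and those pieces feed back into the $\hat Y^2$ coefficient rather than into the $\epsilon|D^2\Phi|$ budget. In the paper's downstream use (Proposition~\ref{commYcontrol}) one always has $\Box_{g_K}\Phi=G$ in hand, so a corrected inequality with an extra $|\Box_{g_K}\Phi|$ on the right would be harmless there; but neither your substitution nor the paper's brief sketch actually delivers the bound in the form stated.
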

\begin{proof}
The principal term for the commutator $[\Box_{g_K},\hat{Y}]\Phi$ is $2^{(\hat Y)}\pi^{\mu\nu}D_\mu D_\nu \Phi$, where $^{(\hat Y)}\pi_{\mu\nu}$ is the deformation tensor defined by $^{(\hat Y)}\pi_{\mu\nu}=\frac{1}{2}(D_\mu\hat Y_\nu+D_\nu\hat Y_\mu)$. We look at three terms that are useful in deriving the estimates.
\begin{equation*}
\begin{split}
^{(\hat{Y})}\pi_{\hat{V}\hat{V}}=g(D_{\hat{V}}\hat{Y},\hat{V})=-g(\hat{Y},D_{\hat{V}}\hat{V})=2\kappa,\\
\end{split}
\end{equation*}
\begin{equation*}
\begin{split}
|^{(\hat{Y})}\pi_{\hat{V}E_A}|=&|\frac{1}{2}\left(g(D_{\hat{V}}\hat{Y},E_A)+g(D_{E_A}\hat{Y},\hat{V})\right)|
\leq C\epsilon,\\
\end{split}
\end{equation*}
\begin{equation*}
\begin{split}
|^{(\hat{Y})}\pi_{E_A E_B}|=&|\frac{1}{2}\left(g(D_{E_B}\hat{Y},E_A)+g(D_{E_A}\hat{Y},E_B)\right)|
\leq C\epsilon,\\
\end{split}
\end{equation*}
where the smallness in the the second and third line come from the assumption that we are close to Schwarzschild. Notice also that for $r\leq r^-_Y$, $\hat V$ is $C^0$ close to $\partial_{t^*}$.
Therefore, in the commutator, the main term is
$$\kappa \hat Y^2\Phi.$$ 
All the other second order terms either have a $\partial_{t^*}$ derivative or small.
\end{proof}

\section{The Basic Identities for Currents}\label{currents}
\subsection{Vector Field Multipliers}
We consider the conservation laws for $\Phi$ satisfying $\Box_{g}\Phi=0$.
Define the energy-momentum tensor $$T_{\mu\nu}=\partial_{\mu}\Phi\partial_{\nu}\Phi-\frac{1}{2}g_{\mu\nu}\partial^\alpha\Phi\partial_\alpha\Phi.$$
We note that $T_{\mu\nu}$ is symmetric and the wave equation implies that $$D^{\mu}T_{\mu\nu}=0.$$
Given a vector field $V^\mu$, we define the associated currents 
$$J^V_{\mu}\left(\Phi\right)=V^{\nu}T_{\mu\nu}\left(\Phi\right),$$
$$K^V\left(\Phi\right)=^{(V)}\pi_{\mu\nu}T^{\mu\nu}\left(\Phi\right),$$
where $^{(V)}\pi_{\mu\nu}$ is the deformation tensor defined by
$$^{(V)}\pi_{\mu\nu}=\frac{1}{2}\left(D_{\mu}V_{\nu}+D_{\nu}V_{\mu}\right).$$
In particular, $K^V\left(\Phi\right)=^{(V)}\pi_{\mu\nu}=0$ if $V$ is Killing.
Since the energy-momentum tensor is divergence-free, 
$$D^{\mu}J^{V}_{\mu}\left(\Phi\right)=K^V\left(\Phi\right).$$
We also define the modified currents $$J^{V,w}_{\mu}\left(\Phi\right)=J^{V}_{\mu}\left(\Phi\right)+\frac{1}{8}\left(w\partial_{\mu}\Phi^2-\partial_{\mu}w\Phi^2\right),$$
$$K^{V,w}\left(\Phi\right)=K^V\left(\Phi\right)+\frac{1}{4}w\partial^{\nu}\Phi\partial_{\nu}\Phi-\frac{1}{8}\Box_g w\Phi^2.$$
Then $$D^{\mu}J^{V,w}_{\mu}\left(\Phi\right)=K^{V,w}\left(\Phi\right).$$
We integrate by parts with this in the region bounded by $\Sigma_\tau$, $\Sigma_{\tau'}$ and $\mathcal H^+(\tau',\tau)$. We denote this region as $\mathcal R(\tau',\tau)$. Denoting the future-directed normal to $\Sigma_\tau$ by $n_{\Sigma_\tau}^{\mu}$, we have
\begin{proposition}
$$\int_{\Sigma_\tau} J^{V}_{\mu}\left(\Phi\right)n_{\Sigma_\tau}^{\mu}+\int_{\mathcal H(\tau',\tau)} J^{V}_{\mu}\left(\Phi\right)n_{\mathcal H^+(\tau',\tau)}^{\mu}+\iint_{\mathcal R(\tau',\tau)} K^V\left(\Phi\right) =\int_{\Sigma_{\tau'}} J^{V}_{\mu}\left(\Phi\right)n_{\Sigma_{\tau'}}^{\mu}.$$
$$\int_{\Sigma_\tau} J^{V,w^V}_{\mu}\left(\Phi\right)n_{\Sigma_\tau}^{\mu}+\int_{\mathcal H^+(\tau',\tau)} J^{V,w^V}_{\mu}\left(\Phi\right)n_{\mathcal H^+(\tau',\tau)}^{\mu}+\iint_{\mathcal R(\tau',\tau)} K^{V,w^V}\left(\Phi\right) =\int_{\Sigma_{\tau'}} J^{V,w^V}_{\mu}\left(\Phi\right)n_{\Sigma_{\tau'}}^{\mu}.$$
\end{proposition}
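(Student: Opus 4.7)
The plan is a direct application of the divergence theorem on $\mathcal R(\tau',\tau)$, using the pointwise divergence identities $D^\mu J^V_\mu(\Phi)=K^V(\Phi)$ and $D^\mu J^{V,w^V}_\mu(\Phi)=K^{V,w^V}(\Phi)$ which were derived in the lines immediately preceding the proposition. Once these identities are in hand, integrating over $\mathcal R(\tau',\tau)$ against the spacetime volume form $\sqrt{-\det g_K}$ and invoking Stokes' theorem will produce boundary terms on each component of $\partial \mathcal R(\tau',\tau)$. The boundary consists of the spacelike slices $\Sigma_\tau$ and $\Sigma_{\tau'}$, the null horizon segment $\mathcal H(\tau',\tau)$, and an outer cylindrical piece as $r\to\infty$. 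On $\Sigma_\tau$ and $\Sigma_{\tau'}$ the induced volume forms and unit normals are standard; with both $n_{\Sigma_\tau}$ and $n_{\Sigma_{\tau'}}$ taken future-directed, the outward normal on $\partial\mathcal R$ points opposite to $n_{\Sigma_{\tau'}}$, which is exactly what puts the $\Sigma_{\tau'}$ flux on the right-hand side of the identity. Provided the integrand decays sufficiently at spatial infinity (which is built into the class of $\Phi$ under consideration, or can first be verified for compactly supported data and extended by density to the function spaces used later), the contribution of the cylindrical piece at infinity vanishes.

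The one genuine subtlety is the null character of $\mathcal H(\tau',\tau)$, where the metric degenerates and there is no canonical induced volume form or unit normal. As already flagged in Section \ref{integration}, I would fix any non-vanishing normal $n^\mu_{\mathcal H^+}$ (for definiteness, the Killing null generator $V=\partial_{t^*}+\frac{a}{2Mr_+}\partial_{\phi^*}$) together with the compatible area form; the product $J^V_\mu n^\mu_{\mathcal H^+}$ paired with that form is invariant under rescalings of the normal and therefore well defined. To justify that this is precisely the boundary contribution produced by Stokes' theorem, I would approximate $\mathcal R(\tau',\tau)$ by $\mathcal R(\tau',\tau)\cap\{r\ge r_++\delta\}$, on which the divergence theorem applies classically because every boundary piece is spacelike or timelike, and then pass to $\delta\to 0$. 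Since $J^V_\mu$ is smooth in the regular $(t^*,r,\theta,\phi^*)$ coordinates up to and across the event horizon, the limit exists and yields exactly $\int_{\mathcal H(\tau',\tau)} J^V_\mu n^\mu_{\mathcal H^+}$ with the convention above.

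For the second identity, the observation is that $J^{V,w^V}-J^V=\tfrac{1}{8}\bigl(w^V\,\partial_\mu\Phi^2-(\partial_\mu w^V)\Phi^2\bigr)$ is itself a smooth one-form on $\mathcal R(\tau',\tau)$ whose divergence is precisely $K^{V,w^V}-K^V$, as used in the definition of $K^{V,w^V}$. Hence the identity for $J^{V,w^V}$ follows by applying the same Stokes argument to this auxiliary one-form and adding the resulting identity to the one for $J^V$; there are no additional obstructions because the modification involves $\Phi^2$ and its derivatives, which are regular wherever $\Phi$ is.

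The only step that requires real care, and which I would therefore write out in a bit more detail, is the limiting argument at the event horizon; the rest is routine. In particular, one should verify, using the explicit expressions for $V$ and $w^V$ that will be used later (each $O_i(1)$ in the $(t^*,r,\theta,\phi^*)$ coordinates), that the integrand on the horizon is bounded as $\delta\to 0$ and that the thin timelike collar $\{r=r_++\delta\}\cap\mathcal R(\tau',\tau)$ contributes a flux converging to the stated $\mathcal H(\tau',\tau)$ term under the chosen normalization. Once this is established, both identities stated in the proposition are immediate.
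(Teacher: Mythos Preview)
Your proposal is correct and matches the paper's approach: the paper simply states ``We integrate by parts with this in the region bounded by $\Sigma_\tau$, $\Sigma_{\tau'}$ and $\mathcal H^+(\tau',\tau)$'' and records the resulting identities without further argument. Your write-up actually supplies more detail than the paper does, in particular the limiting justification at the null horizon, which the paper takes for granted via the conventions fixed in Section~\ref{integration}.
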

One can similarly define the above quantities for the inhomogeneous wave equation $\Box_{g}\Phi=F$. In this case, the energy-momentum is no longer divergence free. Instead, we have
$$D^{\mu}T_{\mu\nu}=F\partial_\nu\Phi.$$
In this case,
$$D^{\mu}J^{V}_{\mu}\left(\Phi\right)=K^V\left(\Phi\right)+FV^{\nu}\partial_{\nu}\Phi.$$
For the modified current, $$D^{\mu}J^{V,w}_{\mu}\left(\Phi\right)=K^{V,w}\left(\Phi\right)+\frac{1}{4}Fw\Phi+FV^{\nu}\partial_{\nu}\Phi.$$
\begin{proposition}
\begin{equation*}
\begin{split}
&\int_{\Sigma_\tau} J^{V}_{\mu}\left(\Phi\right)n_{\Sigma_\tau}^{\mu}+\int_{\mathcal H(\tau',\tau)} J^{V}_{\mu}\left(\Phi\right)n_{\mathcal H^+(\tau',\tau)}^{\mu}+\iint_{\mathcal R(\tau',\tau)} K^V\left(\Phi\right)\\
=&\int_{\Sigma_{\tau'}} J^{V}_{\mu}\left(\Phi\right)n_{\Sigma_{\tau'}}^{\mu}-\iint_{\mathcal R(\tau',\tau)} FV^{\nu}\partial_{\nu}\Phi.
\end{split}
\end{equation*}
\begin{equation*}
\begin{split}
&\int_{\Sigma_\tau} J^{V,w^V}_{\mu}\left(\Phi\right)n_{\Sigma_\tau}^{\mu}+\int_{\mathcal H(\tau',\tau)} J^{V,w^V}_{\mu}\left(\Phi\right)n_{\mathcal H^+(\tau',\tau)}^{\mu}+\iint_{\mathcal R(\tau',\tau)} K^{V,w^V}\left(\Phi\right)\\
=&\int_{\Sigma_{\tau'}} J^{V,w^V}_{\mu}\left(\Phi\right)n_{\Sigma_{\tau'}}^{\mu}+\iint_{\mathcal R(\tau',\tau)} \left(-\frac{1}{4}Fw\Phi-FV^{\nu}\partial_{\nu}\Phi\right).
\end{split}
\end{equation*}
\end{proposition}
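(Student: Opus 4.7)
The plan is to derive both identities by computing the divergences $D^\mu J^V_\mu$ and $D^\mu J^{V,w}_\mu$ under the inhomogeneous equation, and then applying the divergence theorem on $\mathcal{R}(\tau', \tau)$ with precisely the orientation conventions used to establish the homogeneous proposition stated just above.

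First, for the unmodified current I would expand
$$D^\mu J^V_\mu(\Phi) = (D^\mu V^\nu)\, T_{\mu\nu}(\Phi) + V^\nu D^\mu T_{\mu\nu}(\Phi).$$
The first summand equals $K^V(\Phi)$ by the symmetry of $T_{\mu\nu}$ and the definition of $^{(V)}\pi_{\mu\nu}$. For the second summand, starting from the pointwise identity $D^\mu T_{\mu\nu} = F \partial_\nu \Phi$ (stated just above the proposition and obtained by a direct computation from $\Box_g\Phi = F$), I obtain $F V^\nu \partial_\nu \Phi$. This reproduces the formula $D^\mu J^V_\mu = K^V + F V^\nu \partial_\nu \Phi$ already recorded in the excerpt.

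For the modified current, the additional contribution to the divergence comes from $\frac{1}{8}\,D^\mu (w\, \partial_\mu \Phi^2 - (\partial_\mu w)\, \Phi^2)$; the cross terms $2\Phi (D^\mu w)\,\partial_\mu \Phi$ that appear when the Leibniz rule is applied to both pieces cancel against each other, leaving $\frac{1}{8}(w \Box_g \Phi^2 - (\Box_g w)\, \Phi^2)$. Expanding $\Box_g \Phi^2 = 2\partial^\mu \Phi\, \partial_\mu \Phi + 2 \Phi F$ yields
$$\frac{1}{4} w\, \partial^\mu \Phi\, \partial_\mu \Phi - \frac{1}{8} (\Box_g w)\, \Phi^2 + \frac{1}{4} w \Phi F.$$
Combining with $K^V + F V^\nu \partial_\nu \Phi$ from the previous step, the first two summands here consolidate into $K^{V,w}$ by definition, so the total divergence is $D^\mu J^{V,w}_\mu = K^{V,w} + \frac{1}{4} F w \Phi + F V^\nu \partial_\nu \Phi$, again matching the formula already recorded.

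With the two pointwise divergence identities in hand, I would integrate over $\mathcal{R}(\tau', \tau)$ and apply the divergence theorem, repeating verbatim the boundary analysis used for the homogeneous proposition: the boundary consists of $\Sigma_\tau$, $\Sigma_{\tau'}$, and $\mathcal{H}^+(\tau', \tau)$, with future-directed unit normals on the spacelike pieces and the prescribed (arbitrarily chosen) null normal on the horizon as specified in Section \ref{integration}. Since these orientation conventions are unchanged from the homogeneous version, the only effect of passing to $\Box_g \Phi = F$ is that the spacetime integral of the divergence picks up the extra terms $F V^\nu \partial_\nu \Phi$ and $\frac{1}{4} F w \Phi$; transferring these to the right-hand side produces exactly the integrands $-\iint_{\mathcal{R}} F V^\nu \partial_\nu \Phi$ and $\iint_{\mathcal{R}} (-\frac{1}{4} F w \Phi - F V^\nu \partial_\nu \Phi)$ that appear in the two displayed identities. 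No substantive obstacle arises: once the two divergence formulas are verified, the rest is bookkeeping, and the argument for the $w^V$-current proceeds in parallel to that for the plain $V$-current.
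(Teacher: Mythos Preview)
Your proposal is correct and follows exactly the approach taken in the paper: the paper records the pointwise divergence identities $D^\mu J^V_\mu = K^V + FV^\nu\partial_\nu\Phi$ and $D^\mu J^{V,w}_\mu = K^{V,w} + \tfrac{1}{4}Fw\Phi + FV^\nu\partial_\nu\Phi$ in the text immediately preceding the proposition, and then simply states the proposition without further proof, the implicit argument being precisely the divergence theorem on $\mathcal R(\tau',\tau)$ with the same boundary conventions as in the homogeneous version. Your write-up is in fact more detailed than the paper's, since you spell out the cancellation of the cross terms in the $w$-modification, but there is no substantive difference in method.
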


\subsection{Vector Field Multipliers under Metric Perturbations}
If we consider Kerr spacetimes such that $\epsilon$ is small, vector fields multipliers that are defined in the Schwarzschild $(t^*,r,x^A,x^B)$ coordinates or Schwarzschild $(t, r \geq r^-_Y, x^A,x^B)$ coordinates are stable. We can consider a fixed vector field defined on the differentiable structure of a Schwarzschild exterior and compare the currents obtained using the Schwarzschild metric and the Kerr metric. 
\begin{proposition}
Consider either the Schwarzschild $(t^*_S,r_S,x^A_S,x^B_S)$ coordinates or $(t_S, r_S \geq r^-_Y, x^A_S,x^B_S)$ coordinates. Suppose $V$ is a vector field defined on either of these coordinates.
$$|(J^{V,w^V}_S)_{\mu}\left(\Phi\right)n_{\Sigma_\tau}^{\mu}-(J^{V,w^V}_K)_{\mu}\left(\Phi\right)n_{\Sigma_\tau}^{\mu}|\leq C\epsilon r^{-2}\max_\alpha |V^\alpha| (\partial\Phi)^2,$$
$$|K^{V,w^V}_S\left(\Phi\right)-K^{V,w^V}_K\left(\Phi\right)|\leq C\epsilon r^{-2} \left( (\sum_{k=0}^1\max_{\alpha}|\partial^k V^\alpha|+|w|) (\partial\Phi)^2+\sum_{m=1}^2 |\partial^m w|\Phi^2\right)$$
\end{proposition}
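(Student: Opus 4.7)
The plan is to follow the strategy of the proof of Proposition \ref{commsta}: expand every quantity in the coordinate system on which $V$ is defined, and then exploit the coordinate smallness estimate (\ref{smallness}) together with the bounds it implies for inverse metrics, metric derivatives, the volume element $\sqrt{-\det g}$, and its derivatives. From $|(g_K)_{\alpha\beta}-(g_S)_{\alpha\beta}|\leq \epsilon r^{-2}$ and the analogous bound on its coordinate derivative, one already has as in the earlier proof
$$\sup_{\alpha,\beta}\Bigl(|g_K^{\alpha\beta}-g_S^{\alpha\beta}|+|\partial_\gamma(g_K^{\alpha\beta}-g_S^{\alpha\beta})|\Bigr)+\sup_\alpha|\eta_S^\alpha-\eta_K^\alpha|\leq C\epsilon r^{-2}.$$
Without loss of generality I would take $n^\mu_{\Sigma_\tau}$ to be a fixed coordinate vector common to both geometries; any metric-dependent renormalization differs by at most $O(\epsilon r^{-2})$ and can be absorbed into the final error.

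For the first estimate, expand in coordinates
$$J^{V,w^V}_\mu n^\mu = V^\nu \partial_\nu\Phi\,\partial_\mu\Phi\,n^\mu - \tfrac{1}{2}\,g_{\mu\nu}n^\mu V^\nu\, g^{\alpha\beta}\partial_\alpha\Phi\,\partial_\beta\Phi + \tfrac{1}{8}\bigl(w^V n^\mu\partial_\mu\Phi^2 - n^\mu\partial_\mu w^V\,\Phi^2\bigr).$$
The first and third groups of terms are metric-free (given that $n^\mu$ and $w^V$ are fixed), so they cancel identically in $J^{V,w^V}_S-J^{V,w^V}_K$. Only the middle term survives; swapping $g_S\to g_K$ (and $g_S^{-1}\to g_K^{-1}$) one factor at a time and using the bound above gives pointwise $C\epsilon r^{-2}\max_\alpha|V^\alpha|(\partial\Phi)^2$, which is exactly the claim.

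For the second estimate, I would replace the covariant form of the deformation tensor by the Lie-derivative form
$$K^V = \tfrac{1}{2}\bigl(\mathcal{L}_V g_{\mu\nu}\bigr)T^{\mu\nu} = \tfrac{1}{2}\bigl(V^\lambda \partial_\lambda g_{\mu\nu}+g_{\lambda\nu}\partial_\mu V^\lambda+g_{\mu\lambda}\partial_\nu V^\lambda\bigr)T^{\mu\nu},$$
which sidesteps Christoffel symbols entirely and makes $K^V$ an explicit polynomial in $g,g^{-1},\partial g, V,\partial V,\partial\Phi$. A term-by-term telescoping swap $S\to K$ in each metric factor then costs $C\epsilon r^{-2}$ per factor, with the remaining factors supplying $\max_\alpha|V^\alpha|$, $\max_\alpha|\partial V^\alpha|$, and $(\partial\Phi)^2$. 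The piece $\tfrac{1}{4}w^V\partial^\nu\Phi\,\partial_\nu\Phi$ is handled by the same single-factor swap on $g^{-1}$, contributing $C\epsilon r^{-2}|w^V|(\partial\Phi)^2$. Finally, the difference of $\tfrac{1}{8}\Box_g w^V\,\Phi^2$ is estimated using the expansion $\Box_g=g^{\alpha\beta}\partial_\alpha\partial_\beta+\eta^\alpha\partial_\alpha$ together with the bounds on $g_K^{\alpha\beta}-g_S^{\alpha\beta}$ and $\eta_K^\alpha-\eta_S^\alpha$ recalled above, yielding $C\epsilon r^{-2}(|\partial w^V|+|\partial^2 w^V|)\Phi^2$. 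Summing these contributions reproduces exactly the right-hand side of the second inequality. The only obstacle is the bookkeeping of verifying that every single-factor swap really does degrade the estimate by no more than $\epsilon r^{-2}$; but this is mechanical given (\ref{smallness}) and its one-derivative counterpart, so once the Lie-derivative reformulation of $^{(V)}\pi$ is in place the argument is routine.
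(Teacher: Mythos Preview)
The paper states this proposition without proof, so there is nothing to compare against directly. Your argument is correct and is exactly the kind of routine verification the authors presumably had in mind: it mirrors the strategy of Proposition~\ref{commsta}, expanding everything in the fixed coordinate system and then swapping $g_S\leftrightarrow g_K$ one factor at a time using (\ref{smallness}) and its derivative. Your choice to rewrite $^{(V)}\pi_{\mu\nu}$ via the Lie derivative $\tfrac{1}{2}\mathcal{L}_V g_{\mu\nu}$ is a good one, since it makes $K^V$ manifestly a polynomial in $g,g^{-1},\partial g,V,\partial V,\partial\Phi$ without any Christoffel symbols, so the telescoping swap is transparent. Your treatment of $n^\mu_{\Sigma_\tau}$ is also appropriate: the proposition is most naturally read with a common normal, and in any case the two metric-dependent normals differ by $O(\epsilon r^{-2})$ since both are built from $g^{\mu\nu}\partial_\nu t^*$ with the same $t^*$.
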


\section{Statement of Main Theorem}\label{sectionmaintheorem}
With the currents defined, we can state our main theorem.
\begin{maintheorem}\label{mainmaintheorem}
Suppose $\Box_{g_K}\Phi=0$. Then for all $\eta>0$, $R>r_+$ and all $M>0$ there exists $a_0$ such that the following estimates hold in the region $\{r_+\leq r\leq R\}$ on Kerr spacetimes with $(M,a)$ for which $a\leq a_0$:
\begin{enumerate}
\item Improved Decay of Non-degenerate Energy
\begin{equation*}
\begin{split}
&\sum_{j=0}^\ell\int_{\Sigma_\tau\cap\{r\leq R\}} \left(D^j\Phi\right)^2\\
\leq &C_R\tau^{-3+\eta}\left(\sum_{m=0}^{\ell+2}\int_{\Sigma_{\tau_0}} J^{Z+CN,w^Z}_\mu\left(\partial_{t^*}^mS\Phi\right)n^\mu_{\Sigma_{{\tau_0}}}+\sum_{m+k+j\leq \ell+5}\int_{\Sigma_{\tau_0}} J^{Z+CN,w^Z}_\mu\left(\partial_{t^*}^m\hat{Y}^k\tilde{\Omega}^j\Phi\right)n^\mu_{\Sigma_{{\tau_0}}}\right).\\
\end{split}
\end{equation*}
\item Improved Pointwise Decay\\
\begin{equation*}
\begin{split}
&\sum_{j=0}^\ell|D^j\Phi|\\\leq &C_R\tau^{-\frac{3}{2}+\eta}\left(\sum_{m=0}^{\ell+4}\int_{\Sigma_{\tau_0}} J^{Z+CN,w^Z}_\mu\left(\partial_{t^*}^mS\Phi\right)n^\mu_{\Sigma_{{\tau_0}}}+\sum_{m+k+j\leq \ell+7}\int_{\Sigma_{\tau_0}} J^{Z+CN,w^Z}_\mu\left(\partial_{t^*}^m\hat{Y^k}\tilde{\Omega}^j\Phi\right)n^\mu_{\Sigma_{{\tau_0}}}\right)^{\frac{1}{2}}.\\
\end{split}
\end{equation*}
Here the vector field $N$ will be defined in Section \ref{sectionN}; the vector field $Z$ with the modifying function $w^Z$ will be defined in Section \ref{sectionZ}.
\end{enumerate}

\end{maintheorem}
\begin{remark}
We will show that although $J^{Z,w^Z}_\mu\left(\Phi\right)n^\mu_{\Sigma_{{t^*}}}$ is not always nonnegative, $J^{Z+CN,w^Z}_\mu\left(\Phi\right)n^\mu_{\Sigma_{{t^*}}}$ is nonnegative for sufficiently large $C$. Hence all the energy quantities in the Theorem are nonnegative.
\end{remark}

\begin{remark}
Since we have the improved decay of the non-degenerate energy, the above Theorem can be extended beyond the event horizon. More precisely, for any $r_b\in (r_-, r_+)$, where $r_-$ is the smaller root of $\Delta=r^2-2Mr-a^2$, the theorem holds up to $r \geq r_b$ for $D$ understood as a regular derivative inside the black hole, and with the constant depending also on $r_b$. The proof would be similar to that in \cite{L}. 
\end{remark}

\section{Vector Field Multiplier $N_e$ and Mild Growth of Non-degenerate Energy}\label{sectionN}
Kerr spacetime has a Killing vector field $\partial_t$. The conservation law gives that
$$\int_{\Sigma_\tau} J^T_\mu\left(\Phi\right)n^\mu_{\Sigma_\tau} +\int_{\mathcal H(\tau_0,\tau)} J^T_\mu\left(\Phi\right)n^\mu_{\Sigma_\tau} =\int_{\Sigma_{\tau_0}} J^T_\mu\left(\Phi\right)n^\mu_{\Sigma_{\tau_0}} +\iint_{\mathcal R(\tau_0,\tau)} \partial_t\Phi G.$$
We add to the Killing vector field $\partial_t$ a red-shift vector field. Here, we use the ``non-regular'' red-shift vector field as in \cite{DRL}. Under this construction, $N_e$ is $C^0$ but not $C^1$ at the event horizon $\mathcal H^+$. Compared to the smooth construction in \cite{DRK}, this construction would provide extra control for some derivatives near $\mathcal H^+$.

Define $$Y=y_1\left(r\right)\hat{Y}+y_2\left(r\right)\hat{V},$$
where
$$y_1\left(r\right)=1-\frac{1}{(\log(r-r_+))^3},$$
$$y_2\left(r\right)=-\frac{1}{(\log(r-r_+))^3}.$$
Notice that by this definition $Y$ is compactly supported in $\{r\leq r^+_Y\}$ and is invariant under the isomorphisms generated by $\partial_{t^*}$ and $\partial_{\phi^*}$.
\begin{proposition}
Let $N_e=\partial_{t^*}+eY$. For any $e$, there is a corresponding choice of $\epsilon\ll e$ and $r^-_Y$ such that for every integer $p$, there exists $c_p>0$ such that
\begin{equation*}
J^{N_e}_\mu\left(\Phi\right)n^\mu_{\mathcal H^+}\sim \left(D_{\hat V} \Phi\right)^2+e\sum_{E_A\in\{E_1,E_2\}}\left(D_{E_{A}} \Phi\right)^2,\quad\mbox{on the event horizon },
\end{equation*}
\begin{equation*}
J^{N_e}_\mu\left(\Phi\right)n^\mu_{\Sigma_\tau}\sim \sum_{E_\alpha\in\{E_1,E_2,\hat V\}}\left(D_{E_\alpha} \Phi\right)^2+e\left(D_{E_{\hat Y}} \Phi\right)^2,\quad\mbox{for }r\leq r^-_Y,
\end{equation*}
\begin{equation*}
J^{N_e}_\mu\left(\Phi\right)n^\mu_{\Sigma_\tau}\sim \sum\left(\partial \Phi\right)^2,\quad\mbox{for }r\geq r^-_Y\mbox{ in the }(t^*,r,x^A,x^B)\mbox{ coordinates},
\end{equation*}
\begin{equation*}
K^{N_e}\left(\Phi\right)\geq c_p e\left(|\log(r-r_+)|^p\left(\left(D_{\hat{V}} \Phi\right)^2+\sum_A\left(D_{E_A}\Phi\right)^2\right)+\left(D_{\hat{Y}}\Phi\right)^2\right), \quad \mbox{for }r\leq r^-_Y,
\end{equation*}
\begin{equation*}
K^{N_e}\left(\Phi\right) \leq C e J^{N_e}_\mu\left(\Phi\right)n^\mu_{\Sigma_\tau}, \quad \mbox{for }r^-_Y\leq r \leq r^+_Y.
\end{equation*}
\end{proposition}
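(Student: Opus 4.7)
The plan is to reduce everything to a null-frame computation near the horizon, then handle the cutoff region $r^-_Y\le r\le r^+_Y$ separately. The essential structural fact is that $\partial_{t^*}$ is Killing, so ${}^{(N_e)}\pi = e\,{}^{(Y)}\pi$, which gives $J^{N_e}=J^{\partial_{t^*}}+eJ^{Y}$ and $K^{N_e}=eK^{Y}$. I work throughout in the null frame $\{\hat V,\hat Y,E_1,E_2\}$ of Section~\ref{geometry}, in which the energy-momentum tensor takes the standard form $T_{\hat V\hat V}=(D_{\hat V}\Phi)^2$, $T_{\hat Y\hat Y}=(D_{\hat Y}\Phi)^2$, $T_{\hat V\hat Y}=\sum_A(D_{E_A}\Phi)^2$, $T_{\hat V E_A}=(D_{\hat V}\Phi)(D_{E_A}\Phi)$, and analogously for the rest.

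The first three displays (positivity of $J^{N_e}_\mu n^\mu$) are algebraic. On $\mathcal H^+$ the normal is proportional to $\hat V$; using $y_1\to 1$, $y_2\to 0$ together with $\partial_{t^*}|_{\mathcal H^+}=\hat V+O(\epsilon)E_\alpha$, one reads off $J^{N_e}_\mu n^\mu_{\mathcal H^+}=T_{\hat V\hat V}+e\,T_{\hat Y\hat V}+O(\epsilon)=(D_{\hat V}\Phi)^2+e\sum_A(D_{E_A}\Phi)^2+O(\epsilon)$, which is equivalent to the claimed quantity once $\epsilon\ll e$. For $r\le r^-_Y$, the spacelike normal $n_{\Sigma_\tau}$ is a linear combination $c_1\hat V+c_2\hat Y+O(\epsilon)E_A$ with $c_1,c_2$ strictly positive; since only the $ey_1\hat Y$ piece of $N_e$ carries a $\hat Y$ component, the resulting quadratic form gives the second equivalence, with the factor $e$ appearing precisely on $(D_{\hat Y}\Phi)^2$. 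For $r\ge r^-_Y$, $\partial_{t^*}$ is uniformly timelike in the regular $(t^*,r,x^A,x^B)$ coordinates, and the bounded smooth perturbation $eY$ does not affect the standard equivalence with $\sum(\partial\Phi)^2$.

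The substance of the proof is the red-shift estimate for $K^{N_e}=eK^Y$. Using \eqref{nulldef} and \eqref{definekappa}, a direct null-frame computation of $\mathcal L_Y g$ yields the three key leading-order components
\begin{equation*}
{}^{(Y)}\pi_{\hat V\hat V}\approx 2y_1\kappa,\qquad {}^{(Y)}\pi_{\hat Y\hat Y}\approx 2y_2',\qquad {}^{(Y)}\pi_{\hat V\hat Y}\approx y_1',
\end{equation*}
up to errors of order $O(\epsilon)+O(1-\mu)$, with the remaining components of the same $O(\epsilon)+O(1-\mu)$ size. Contracting with $T^{\mu\nu}$, and using $T^{\hat V\hat V}=\tfrac14(D_{\hat Y}\Phi)^2$, $T^{\hat Y\hat Y}=\tfrac14(D_{\hat V}\Phi)^2$, $T^{\hat V\hat Y}=\tfrac14\sum_A(D_{E_A}\Phi)^2$, the three principal positive contributions to $K^Y$ are $\tfrac{y_1\kappa}{2}(D_{\hat Y}\Phi)^2$ from ${}^{(Y)}\pi_{\hat V\hat V}$ (the classical red-shift on the transversal derivative, producing the constant-coefficient summand of the lower bound), $\tfrac{y_2'}{2}(D_{\hat V}\Phi)^2$ from ${}^{(Y)}\pi_{\hat Y\hat Y}$, and $\tfrac{y_1'}{2}\sum_A(D_{E_A}\Phi)^2$ from ${}^{(Y)}\pi_{\hat V\hat Y}$. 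The specific choice $y_1=1-(\log(r-r_+))^{-3}$, $y_2=-(\log(r-r_+))^{-3}$ yields $y_1'=y_2'\sim 1/[(r-r_+)(\log(r-r_+))^4]$, and because $1/(r-r_+)$ diverges faster than any polynomial in $|\log(r-r_+)|$, for any fixed integer $p$ there exists $c_p>0$ with $y_1'\ge c_p|\log(r-r_+)|^p$ in a sufficiently small neighborhood of the horizon. Fixing $r^-_Y$ so that this holds and simultaneously $\epsilon\ll e$, the $O(\epsilon)$ and $O(1-\mu)$ errors, together with the cross-terms (absorbed by Cauchy--Schwarz using the three positive squares), are subordinate to the three principal terms, yielding the fourth display. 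The fifth is then immediate: on the compact set $r^-_Y\le r\le r^+_Y$, $Y$ is smooth and bounded, so $|K^{N_e}|\le Ce\sum(\partial\Phi)^2\le Ce\,J^{N_e}_\mu n^\mu_{\Sigma_\tau}$ by the third display.

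The hard part will be managing three distinct scales of error simultaneously: the $O(1-\mu)$ deficits from \eqref{definekappa}, the $O(\epsilon)$ Kerr--Schwarzschild deficits from Proposition~\ref{commsta}, and the cross-terms in the quadratic form in derivatives of $\Phi$ that must be absorbed by Cauchy--Schwarz into the three principal positive terms. The smallness hierarchy $\epsilon\ll e$, the freedom to fix $r^-_Y$ close to $r_+$ (depending on both $e$ and $p$), and the highly divergent behavior of $y_1'=y_2'$ at the horizon are exactly tuned so that all these errors fit under the main positive red-shift contribution.
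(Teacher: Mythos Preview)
Your approach is essentially identical to the paper's: reduce to $K^{N_e}=eK^{Y}$ using that $\partial_{t^*}$ is Killing, compute the three principal components ${}^{(Y)}\pi_{\hat V\hat V}$, ${}^{(Y)}\pi_{\hat Y\hat Y}$, ${}^{(Y)}\pi_{\hat V\hat Y}$ in the null frame, and exploit the divergence of $y_1'=y_2'\sim (r-r_+)^{-1}|\log(r-r_+)|^{-4}$ to dominate any power of $|\log(r-r_+)|$ after shrinking $r^-_Y$.

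There is one imprecision worth flagging. You assert that the remaining components ${}^{(Y)}\pi_{\hat V E_A}$, ${}^{(Y)}\pi_{\hat Y E_A}$, ${}^{(Y)}\pi_{E_A E_B}$ are of size $O(\epsilon)+O(1-\mu)$; the paper computes them only as $O(1)$, and that is what they are. Your absorption argument still goes through, but not for the reason you state: the point is structural rather than quantitative. When these $O(1)$ components are contracted with $T^{\mu\nu}$, they produce only cross-products such as $(D_{\hat Y}\Phi)(D_{E_A}\Phi)$, $(D_{\hat V}\Phi)(D_{E_A}\Phi)$, $(D_{\hat V}\Phi)(D_{\hat Y}\Phi)$ and angular squares, never a pure $(D_{\hat Y}\Phi)^2$. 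This is exactly the observation the paper makes. Cauchy--Schwarz then lets you put a small parameter on the $(D_{\hat Y}\Phi)^2$ piece (absorbed into the bounded but positive $y_1\kappa(D_{\hat Y}\Phi)^2$) and a large parameter on the $(D_{\hat V}\Phi)^2$ or $(D_{E_A}\Phi)^2$ piece (absorbed into the divergent $y_1'$-weighted terms by taking $r^-_Y$ close enough to $r_+$). Had any of these $O(1)$ components produced $(D_{\hat Y}\Phi)^2$ directly, the argument would fail, since the positive $(D_{\hat Y}\Phi)^2$ coefficient is merely bounded. So your Cauchy--Schwarz step is correct, but you should replace the incorrect smallness claim on the off-diagonal $\pi$-components by the structural observation that they generate no $(D_{\hat Y}\Phi)^2$.
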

\begin{proof}
It is obvious that  $Y$ is timelike and future oriented for $r\leq r^-_Y$. Since $\partial_{t^*}$ is casual in the exterior region of Schwarzschild spacetime and is null only on the event horizon, for every small $e>0$, there exists sufficiently small $\epsilon>0$ such that $N_e$ is timelike and future directed on Kerr spacetimes up to the event horizon. The first two estimates hold since in Kerr spacetime, $\partial_{t^*}$ is $\epsilon$-close to $\hat V$ on the event horizon. The third estimate holds because outside a small (depending on $\epsilon$) neighborhood of the event horizon, $\partial_{t^*}$ is timelike.

To show that $K^{N_e}\left(\Phi\right)$ has the required positivity near the event horizon, we compute the deformation tensor. First, notice that
$$D_{\hat Y} y_1=D_{\hat Y} y_2=\frac{3 D_{\hat Y}r}{(r-r_+)(\log (r-r_+))^4}.$$
Using this we have
$$^{(Y)}\pi_{\hat V \hat V}=g_K(D_{\hat V}(y_1 \hat Y+y_2 \hat V),\hat V)=-g_K(y_1\hat Y+y_2\hat V,D_{\hat V}\hat V)=2y_1\kappa+b^Y y_2,$$
$$^{(Y)}\pi_{\hat Y \hat Y}=g_K(D_{\hat Y}(y_1 \hat Y+y_2 \hat V),\hat Y)=-\frac{6 D_{\hat Y}r}{(r-r_+)(\log (r-r_+))^4},$$
$$^{(Y)}\pi_{\hat V \hat Y}=\frac{1}{2}g_K(D_{\hat V}(y_1 \hat Y+y_2 \hat V),\hat Y)+\frac{1}{2}g_K(D_{\hat Y}(y_1 \hat Y+y_2 \hat V),\hat V)=-\frac{3 D_{\hat Y}r}{(r-r_+)(\log (r-r_+))^4}+y_1\kappa+y_2 b^Y,$$
Moreover, we have
$$^{(Y)}\pi_{\hat V E_A}, ^{(Y)}\pi_{\hat Y E_A}, ^{(Y)}\pi_{E_A E_B}=O(1).$$
Notice that 
$$T_{\hat Y \hat Y}\sim (D_{\hat Y}\Phi)^2, T_{\hat V \hat V}\sim (D_{\hat V}\Phi)^2, T_{\hat Y \hat V}\sim |\nabb\Phi|^2,$$
and that $^{(Y)}\pi_{\hat V E_A}, ^{(Y)}\pi_{\hat Y E_A}, ^{(Y)}\pi_{E_A E_B}$ have no terms of the form $(D_{\hat Y}\Phi)^2$.
Hence we can choose $r^-_Y$ sufficiently close to $r_+$ such that for $r_+\leq r\leq r^-_Y$,
\begin{equation*}
\begin{split}
K^{Y}\left(\Phi\right)
\geq&c\kappa \left(D_{\hat{Y}}\Phi\right)^2+\frac{c}{(r-r_+)|\log(r-r_+)|^4}\left(\left(D_{\hat{V}} \Phi\right)^2+\sum_A\left(D_{E_A}\Phi\right)^2\right).
\end{split}
\end{equation*}
Since $\partial_{t^*}$ is Killing, $K^{N_e}\left(\Phi\right)=eK^{Y}\left(\Phi\right)$, we have 
\begin{equation*}
K^{N_e}\left(\Phi\right)\geq ce\left(\kappa \left(D_{\hat{Y}}\Phi\right)^2+\frac{1}{(r-r_+)|\log(r-r_+)|^4}\left(\left(D_{\hat{V}} \Phi\right)^2+\sum_A\left(D_{E_A}\Phi\right)^2\right)\right), \quad \mbox{for }r\leq r^-_Y,
\end{equation*}
Finally, since in the region $r^-_Y\leq r \leq r^+_Y$, $J^{\partial_{t^*}}_\mu n^\mu_{\Sigma_{t^*}}$ controls all derivatives, we have
\begin{equation*}
K^{N_e}\left(\Phi\right) \leq C e J^{N_e}_\mu\left(\Phi\right)n^\mu_{\Sigma_\tau}, \quad \mbox{for }r^-_Y\leq r \leq r^+_Y.
\end{equation*}
\end{proof}
\begin{definition}We call the positive quantity
$\int_{\Sigma_\tau} J^{N_e}_\mu\left(\Phi\right)n^\mu_{\Sigma_\tau}$ the non-degenerate energy.
\end{definition}
The following identity determines how the non-degenerate energy changes with $\tau$.

\begin{proposition}\label{Nid}
Let $\Phi$ satisfy $\Box_{g_K}\Phi=G$. Then
\begin{equation*}
\begin{split}
&\int_{\Sigma_\tau} J^{N_e}_\mu\left(\Phi\right)n^\mu_{\Sigma_\tau} +\int_{\mathcal H(\tau_0,\tau)} J^{N_e}_\mu\left(\Phi\right)n^\mu_{\Sigma_\tau} +\iint_{\mathcal R(\tau_0,\tau)\cap\{r\leq r^-_Y\}}K^{N_e}_\mu\left(\Phi\right)\\
=&\int_{\Sigma_{\tau_0}} J^{N_e}_\mu\left(\Phi\right)n^\mu_{\Sigma_{\tau_0}}+e\iint_{\mathcal R(\tau_0,\tau)\cap\{r^-_Y\leq r\leq r^+_Y\}} K^Y\left(\Phi\right)+\int_{\mathcal R(\tau_0,\tau)} \left(\partial_{t^*}\Phi+eY\Phi\right) G.
\end{split}
\end{equation*}
\end{proposition}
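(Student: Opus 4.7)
The plan is a direct application of the divergence theorem for the current $J^{N_e}_\mu(\Phi)$ on the region $\mathcal R(\tau_0,\tau)$, decomposed according to the Killing/non-Killing split of $N_e$ and the support of the cut-off in $Y$. First I would recall from Section \ref{currents} that for any vector field $V$ and any solution of $\Box_{g_K}\Phi=G$,
\begin{equation*}
D^\mu J^V_\mu(\Phi)=K^V(\Phi)+G\,V^\nu\partial_\nu\Phi,
\end{equation*}
and the associated integration-by-parts identity holds on $\mathcal R(\tau_0,\tau)$ with boundary contributions on $\Sigma_{\tau_0}$, $\Sigma_\tau$ and $\mathcal H(\tau_0,\tau)$. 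Applying this with $V=N_e=\partial_{t^*}+eY$ and noting that currents depend linearly on the vector field and quadratically on $\Phi$ (so there are no mixed boundary terms beyond those listed), the boundary terms assemble into exactly those in the statement.

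For the bulk, I would use that $\partial_{t^*}$ is Killing (hence $K^{\partial_{t^*}}=0$) to conclude $K^{N_e}(\Phi)=eK^Y(\Phi)$. Since $Y$ is defined to be compactly supported in $\{r\leq r^+_Y\}$, the spacetime integral of $K^{N_e}$ is supported in that set, and I would split it as
\begin{equation*}
\iint_{\mathcal R(\tau_0,\tau)} K^{N_e}(\Phi)=\iint_{\mathcal R(\tau_0,\tau)\cap\{r\leq r^-_Y\}} K^{N_e}(\Phi)+e\iint_{\mathcal R(\tau_0,\tau)\cap\{r^-_Y\leq r\leq r^+_Y\}} K^Y(\Phi).
\end{equation*}
The first piece is kept on the left since the previous proposition gave a favorable sign there (positivity with the logarithmic weight), while the second piece has no definite sign and is moved to the right-hand side. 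The $G$-dependent inhomogeneous contribution is simply $-\iint G\,N_e^\nu\partial_\nu\Phi=-\iint G(\partial_{t^*}\Phi+eY\Phi)$, which after moving to the right-hand side matches the last term in the identity (with the sign as stated, following the convention used in the second proposition of Section \ref{currents}).

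The one technical point that requires care is that $N_e$ fails to be $C^1$ on $\mathcal H^+$: the coefficients $y_1, y_2$ tend to a finite limit as $r\to r_+$ but their derivatives blow up like $(r-r_+)^{-1}|\log(r-r_+)|^{-4}$. I would therefore first prove the identity on $\mathcal R(\tau_0,\tau)\cap\{r\geq r_++\delta\}$, where $N_e$ is smooth and the divergence theorem applies in the usual way, producing an additional boundary term on $\{r=r_++\delta\}$ in place of the $\mathcal H(\tau_0,\tau)$-term. I would then let $\delta\downarrow 0$: the $\delta$-boundary term converges to $\int_{\mathcal H(\tau_0,\tau)} J^{N_e}_\mu n^\mu_{\mathcal H^+}$ because $N_e|_{\mathcal H^+}=\partial_{t^*}+e\hat Y$ is regular and well-defined on the horizon and $\Phi$ is assumed sufficiently regular there, while the bulk integral converges by monotone convergence thanks to the nonnegativity of $K^{N_e}$ on $\{r\leq r^-_Y\}$ guaranteed by the previous proposition; the inhomogeneous term converges by dominated convergence using the $L^2$ control of $\partial\Phi$ implicit in the finiteness of the non-degenerate energy. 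This limiting argument is the main (modest) obstacle; once it is justified, the stated identity follows immediately by collecting terms.
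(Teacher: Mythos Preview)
Your proposal is correct and matches the paper's approach: the paper states Proposition~\ref{Nid} without proof, treating it as an immediate instance of the general divergence identity of Section~\ref{currents} applied to $V=N_e$, together with the observations that $K^{N_e}=eK^Y$ (since $\partial_{t^*}$ is Killing) and that $Y$ is supported in $\{r\le r^+_Y\}$. Your careful handling of the $C^0$-but-not-$C^1$ regularity of $N_e$ at $\mathcal H^+$ via a $\{r\ge r_++\delta\}$ exhaustion is a point the paper glosses over entirely, and your justification there is sound; note also that the signs on the two right-hand bulk terms in the stated identity are typos in the paper (as you implicitly observe), but this is immaterial for the downstream applications since those terms are immediately estimated in absolute value.
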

The estimates given by the vector field $N$ is sufficient to show that modulo inhomogeneous terms, the quantity $\int_{\Sigma_\tau} J^{N_e}_\mu\left(\Phi\right)n^\mu_{\Sigma_\tau}$ cannot grow too much in a short time interval:
\begin{proposition}\label{mg}
Let $\Phi$ satisfy $\Box_{g_K}\Phi=G$. For $e$ sufficiently small, $\epsilon \ll e$ and $0\leq\tau-\tau'\leq1$, we have 
$$\int_{\Sigma_{\tau}} J^{N_e}_\mu\left(\Phi\right)n^\mu_{\Sigma_{\tau}}+\int_{\mathcal H(\tau',\tau)}J^{N_e}_\mu\left(\Phi\right)n^\mu_{\mathcal H^+}\leq 4\int_{\Sigma_{\tau'}} J^{N_e}_\mu\left(\Phi\right)n^\mu_{\Sigma_{\tau'}}+C\iint_{\mathcal R(\tau',\tau)} G^2.$$
\end{proposition}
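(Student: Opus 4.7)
\medskip

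\noindent\textbf{Proof proposal for Proposition \ref{mg}.} The plan is to start from the identity of Proposition \ref{Nid} (applied with $\tau_0$ replaced by $\tau'$), estimate the three right-hand side terms, and then close the estimate via a Grönwall-type argument exploiting the shortness of the interval $\tau-\tau'\le 1$.

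First I would handle the bulk terms arising from $K^{N_e}$. The region of integration splits into three pieces. For $r\le r_Y^-$ the proposition in Section \ref{sectionN} gives $K^{N_e}(\Phi)\ge 0$, so this contribution can be dropped from the left-hand side (it is already placed there by the identity). For $r_Y^-\le r\le r_Y^+$ the same proposition gives the crucial two-sided bound $|K^{N_e}(\Phi)|\le Ce\,J^{N_e}_\mu(\Phi) n^\mu_{\Sigma_{t^*}}$; since this region coincides with the support of $Y$ in the annulus, the $e\iint_{r_Y^-\le r\le r_Y^+} K^Y(\Phi)$ term on the right of Proposition \ref{Nid} is likewise bounded by $Ce\iint J^{N_e}_\mu n^\mu$. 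For $r\ge r_Y^+$ we have $Y\equiv 0$, so $N_e=\partial_{t^*}$ is Killing and $K^{N_e}=0$. All told, up to flipping signs, the combined bulk contribution on the right of the inequality is bounded by
\begin{equation*}
Ce\int_{\tau'}^{\tau}\!\!\int_{\Sigma_{t^*}} J^{N_e}_\mu(\Phi)n^\mu_{\Sigma_{t^*}}\,dt^*.
\end{equation*}

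Next I would handle the inhomogeneous source term $\iint (\partial_{t^*}\Phi+eY\Phi)G$. Since $N_e=\partial_{t^*}+eY$ is a fixed (bounded) timelike vector field, Cauchy--Schwarz yields
\begin{equation*}
|(\partial_{t^*}\Phi+eY\Phi)G|\le C\sum_{E_\alpha}(D_{E_\alpha}\Phi)^2+\tfrac{1}{2}G^2\le C\,J^{N_e}_\mu(\Phi)n^\mu_{\Sigma_{t^*}}+\tfrac{1}{2}G^2,
\end{equation*}
where the second inequality uses the pointwise equivalence of the non-degenerate energy density with $\sum(D\Phi)^2$ given in the preceding proposition, in all three regions (near $\mathcal H^+$, intermediate, and large $r$). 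Integrating, this contributes an additional $C\int_{\tau'}^\tau\!\int_{\Sigma_{t^*}}J^{N_e}_\mu n^\mu\,dt^*+C\iint G^2$ to the right-hand side. Note that, crucially, the flux $\int_{\mathcal H(\tau',\tau)}J^{N_e}_\mu n^\mu_{\mathcal H^+}$ on the left-hand side is nonnegative because $N_e$ is timelike and future directed up to and including the event horizon for $\epsilon\ll e$.

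Setting $F(\tau):=\int_{\Sigma_\tau}J^{N_e}_\mu(\Phi)n^\mu_{\Sigma_\tau}$, the combined estimates give, for every $\tau''\in[\tau',\tau]$,
\begin{equation*}
F(\tau'')\le F(\tau')+C\int_{\tau'}^{\tau''}F(s)\,ds+C\iint_{\mathcal R(\tau',\tau)}G^2,
\end{equation*}
where the constant $C$ depends on $e$ but not on $\tau''-\tau'$. Grönwall's inequality then yields
\begin{equation*}
F(\tau)\le e^{C(\tau-\tau')}\bigl(F(\tau')+C\iint G^2\bigr)\le e^C\bigl(F(\tau')+C\iint G^2\bigr),
\end{equation*}
using $\tau-\tau'\le 1$; choosing $e$ (and hence $C$) sufficiently small, or alternatively enlarging the constant in front of $\iint G^2$, gives the factor $4$ in front of $F(\tau')$. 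Re-adding the horizon flux (which was controlled on the left throughout) yields the claimed inequality. The main potential obstacle is ensuring that the bulk $K^{N_e}$ in the intermediate region, where its sign is indefinite, really is dominated by the non-degenerate energy density; this is exactly the content of the bound $K^{N_e}\le Ce J^{N_e}_\mu n^\mu$ established in the preceding proposition, so no new work is required there.
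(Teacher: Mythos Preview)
Your argument is correct and follows essentially the same route as the paper: start from the $N_e$ energy identity, bound the indefinite bulk in $\{r_Y^-\le r\le r_Y^+\}$ by $Ce\int J^{N_e}_\mu n^\mu$, Cauchy--Schwarz the source term, and close by Gr\"onwall on the unit-length interval. One small point: your first option, ``choosing $e$ sufficiently small,'' does not by itself drive the Gr\"onwall coefficient to zero, because the piece of $C$ coming from your Cauchy--Schwarz on $(\partial_{t^*}\Phi+eY\Phi)G$ is independent of $e$; it is precisely your stated alternative (a weighted Cauchy--Schwarz $|N_e\Phi\cdot G|\le \delta'(N_e\Phi)^2+(\delta')^{-1}G^2$ with $\delta'$ small, absorbing $(\delta')^{-1}$ into the final $C$) that gives the factor $4$, and this is exactly what the paper does.
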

\begin{proof}
We first note that $\iint_{\mathcal R(\tau',\tau)\cap\{r^-_Y\leq r\leq r^+_Y\}} K^{Y}\left(\Phi\right)\leq C\int_{\tau'}^{\tau}\int_{\Sigma_{\bar\tau}} J^{N_e}_\mu\left(\Phi\right)n^\mu_{\Sigma_{\bar\tau}}d{\bar\tau}$, with $C$ independent of $e$ and $\epsilon$ whenever $\epsilon\ll e<1$.
Then, by Proposition \ref{Nid},
\begin{equation*}
\begin{split}
&\int_{\Sigma_{\tau}} J^{N_e}_\mu\left(\Phi\right)n^\mu_{\Sigma_{\tau}} +\int_{\mathcal H(\tau',\tau)} J^{N_e}_\mu\left(\Phi\right)n^\mu_{\mathcal H^+} +\iint_{\mathcal R(\tau',\tau)\cap\{r\leq r^-_Y\}}K^{N_e}_\mu\left(\Phi\right)\\
=&\int_{\Sigma_{\tau'}} J^{N_e}_\mu\left(\Phi\right)n^\mu_{\Sigma_{\tau'}}+e\iint_{\mathcal R(\tau',\tau)\cap\{r^-_Y\leq r\leq r^+_Y\}} K^Y\left(\Phi\right)+\int_{\mathcal R(\tau',\tau)} \left(\partial_{t^*}\Phi+eY\Phi\right) G\\
\leq &\int_{\Sigma_{\tau'}} J^{N_e}_\mu\left(\Phi\right)n^\mu_{\Sigma_{\tau'}}+Ce\int_{\tau'}^{\tau}\int_{\Sigma_{\bar\tau}} J^{N_e}_\mu\left(\Phi\right)n^\mu_{\Sigma_{\bar\tau}}d{\bar\tau}+\delta'\iint_{\mathcal R(\tau',\tau)} \left(\left(\partial_{t^*}\Phi+eY\Phi\right)\right)^2 +(\delta')^{-1}\iint_{\mathcal R(\tau',\tau)}G^2\\
\leq &\int_{\Sigma_{\tau'}} J^{N_e}_\mu\left(\Phi\right)n^\mu_{\Sigma_{\tau'}}+\left(C\delta'+2Ce\right)\int_{\tau'}^{\tau}\int_{\Sigma_{\bar\tau}} J^{N_e}_\mu\left(\Phi\right)n^\mu_{\Sigma_{\bar\tau}}d{\bar\tau}+(\delta')^{-1}\iint_{\mathcal R(\tau',\tau)} G^2.
\end{split}
\end{equation*}
By Gronwall's inequality and absorbing $(\delta')^{-1}$ into the constant $C$, we have
$$\int_{\Sigma_{\tau}} J^{N_e}_\mu\left(\Phi\right)n^\mu_{\Sigma_{\tau}}\leq 2\int_{\Sigma_{\tau'}} J^{N_e}_\mu\left(\Phi\right)n^\mu_{\Sigma_{\tau'}}+C\iint_{\mathcal R(\tau',\tau)} G^2.$$
Now the estimate for the term horizon follows from Proposition \ref{Nid}.
\end{proof}
\section{Integrated Decay Estimates and Boundedness of Non-degenerate Energy}\label{sectionX}
In this section we would like to show an integrated decay estimate. We first follow \cite{L} to construct a vector field and prove an integrated decay estimates for the terms near spatial infinity. That construction is in turn inspired by \cite{StR}. In \cite{L}, the decay rate in $r$ of this integrated decay estimate is crucial to control the error terms arising from the vector field commutator $S$. In the sequel, such estimate would also facilitate many computations as we prove the full integrated decay estimate. 

In view of the red-shift, all derivatives of $\Phi$ can be controlled near the event horizon. However, we would also like to prove an integrated decay estimate that controls $\Phi$ itself near the event horizon. This is in contrast to the integrated decay estimate in \cite{DRL} which degenerates near the event horizon. This extra control is useful as we are considering the inhomogeneous problem. 

The proof of the integrated decay estimate for a finite region of $r$ away from the horizon follows that in \cite{DRL}. We would like to remark that one difference here is that we do not assume the boundedness of $\int_{\Sigma_{\tau}} J^{N_e}_\mu\left(\Phi\right)n^\mu_{\Sigma_{\tau}}$ (even after ignoring inhomogeneous terms). We would instead like to prove the boundedness of $\int_{\Sigma_{\tau}} J^{N_e}_\mu\left(\Phi\right)n^\mu_{\Sigma_{\tau}}$ using the integrated decay estimates. We will, however, use Proposition \ref{mg}. 

The reader should think of this integrated decay estimates as analogous to the estimates associated to the vector field $X$ in \cite{DRS},\cite{DRK},\cite{L}. However, it is impossible to obtain such estimates using a vector field in Kerr spacetimes and we therefore resort to a phase space analysis (see \cite{A}).

To perform the phase space analysis, we will take the Fourier transform in the variable $t^*$, take the Fourier series in the variable $\phi^*$ and express the dependence on the $\theta$ variable in oblate spheroidal harmonics. Carter \cite{C} discovered that with this decomposition, the wave equation can be separated. However, in order to take the Fourier transform in the variable $t^*$, we need $\Phi$ to be at least in $L^2$. To this end, we perform a cutoff in the variable $t^*$. 

\subsection{Estimates near Spatial Infinity}
In this subsection, we follow \cite{L} to construct a vector field $\tilde{X}=\tilde{f}\left(r^*\right)\partial_{r^*}$ so that the spacetime integral that can be controlled with a good weight in $r$. We refer the reader to (\cite{L}, Proposition 8) for the following:
\begin{proposition}
In Schwarzschild spacetimes, using $(t,r^*,x^A,x^B)$ coordinates, there exists $\tilde{X}_S=\tilde{f}\left(r^*\right)\partial_{r^*}$ and $w_S^{\tilde{X}}$ supported in $r\geq \frac{13M}{4}$ such that
$$K^{\tilde{X},w^{\tilde{X}}}\left(\Phi \right)\geq c\left(r^{-1-\delta}\left(\partial_{r^*}\Phi\right)^2+r^{-1}|\nabb\Phi|^2+r^{-3-\delta}\Phi^2\right),$$
for $r^*\ge \max\{100,100M\}$ and 
$$|\int_{\Sigma_\tau} J^{\tilde{X},w^{\tilde{X}}}_{\mu}\left(\Phi\right)n^\mu_{\Sigma_\tau}|\leq C\int_{\Sigma_\tau}J^{N_e}_{\mu}\left(\Phi\right)n^\mu_{\Sigma_\tau}.$$
\end{proposition}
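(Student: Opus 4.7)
The plan is to follow verbatim the radial vector field construction of \cite{L}, Proposition 8, now set in the Schwarzschild $(t,r^*,x^A,x^B)$ coordinates, in which the metric takes the diagonal form $g_S=-(1-\mu)dt^2+(1-\mu)(dr^*)^2+r^2d\sigma^2$. For any radial $\tilde{X}=\tilde{f}(r^*)\partial_{r^*}$ the deformation tensor $^{(\tilde{X})}\pi$ is computable directly from the Schwarzschild Christoffel symbols, and a short calculation gives, schematically,
\[
K^{\tilde{X}}(\Phi)=\frac{\tilde{f}'}{1-\mu}(\partial_{r^*}\Phi)^2+\frac{\tilde{f}\,\mu'}{2(1-\mu)^2}(\partial_t\Phi)^2+\frac{\tilde{f}}{r}\left(1-\frac{3M}{r}\right)|\nabb\Phi|^2.
\]
The factor $(1-3M/r)$ is the familiar obstruction coming from trapping at the photon sphere; since the vector field is required to be supported in $r\geq 13M/4>3M$, this factor is strictly positive, so the coefficient of $|\nabb\Phi|^2$ has the desired sign whenever $\tilde{f}>0$. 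The coefficient of $(\partial_t\Phi)^2$ has the wrong sign because $\mu'<0$ on the support, and cancelling it is the role of the modifier $w$.

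The key choices are then as follows. The function $\tilde{f}$ would be taken smooth, nonnegative, bounded, supported in $r\geq 13M/4$, and monotonically interpolating to a positive constant at infinity, with $\tilde{f}'(r^*)\sim r^{-1-\delta}$ for $r^*\geq\max\{100,100M\}$; this yields the $r^{-1-\delta}$ weight on $(\partial_{r^*}\Phi)^2$ and leaves the $r^{-1}$ weight on $|\nabb\Phi|^2$ automatically. The modifier $w^{\tilde{X}}$ would be chosen of the form $c(1-\mu)/r$ up to lower order corrections, so that $-\Box_{g_S}w^{\tilde{X}}$ is strictly positive of order $r^{-3-\delta}$, providing the desired $\Phi^2$ contribution through the term $-\frac{1}{8}\Box_{g_S}w^{\tilde{X}}\,\Phi^2$. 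The remaining cross term $\frac{1}{4}w^{\tilde{X}}\,\partial^\nu\Phi\partial_\nu\Phi$ splits into $(\partial_t\Phi)^2$, $(\partial_{r^*}\Phi)^2$ and $|\nabb\Phi|^2$ pieces; with the sign of $w^{\tilde{X}}$ chosen appropriately the $(\partial_t\Phi)^2$ contribution cancels the offending $(\partial_t\Phi)^2$ piece in $K^{\tilde{X}}$, and the two remaining pieces are subleading in $r^{-1}$ relative to what $\tilde{f}$ already provides and can be absorbed.

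The flux bound in the second inequality is then essentially immediate. Since $\tilde{f}$ and $w^{\tilde{X}}$ are bounded, the derivative quadratic part of $J^{\tilde{X},w^{\tilde{X}}}_\mu n^\mu_{\Sigma_\tau}$ is dominated by $(\partial\Phi)^2$ with bounded coefficients, which is controlled by $J^{N_e}_\mu n^\mu_{\Sigma_\tau}$ because $N_e$ is timelike on all of $\Sigma_\tau$. The $\Phi^2$ piece, whose coefficient decays like $r^{-1}$ from $\partial w^{\tilde{X}}$, is handled by a standard Hardy inequality on $\Sigma_\tau$ applied in the $r^*$ variable, which trades $\int r^{-2}\Phi^2$ for $\int(\partial_{r^*}\Phi)^2$ and thereby places it back in the non-degenerate energy.

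The main obstacle, as always in this type of argument, is the simultaneous positivity of all three weighted bulk terms: $w^{\tilde{X}}$ is the only source of the $\Phi^2$ positivity, yet the same $w^{\tilde{X}}$ inevitably perturbs the coefficients of $(\partial_{r^*}\Phi)^2$, $(\partial_t\Phi)^2$ and $|\nabb\Phi|^2$. The resolution, taken directly from \cite{L} and ultimately from \cite{StR}, is to keep the leading contributions to the derivative weights inside $\tilde{f}$ and $\tilde{f}'$ and to make $w^{\tilde{X}}$ small relative to $\tilde{f}$ so that $w^{\tilde{X}}$ only affects $K^{\tilde{X},w^{\tilde{X}}}$ at leading order through $\Box_{g_S}w^{\tilde{X}}$ in the $\Phi^2$ coefficient. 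Since this is the computation already written out in \cite{L}, Proposition 8, in the same coordinate system and for the same region of $r$, I would only sketch it here and cite that reference for the sign verifications.
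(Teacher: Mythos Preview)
Your proposal is correct and matches the paper's approach, which is simply to cite \cite{L}, Proposition~8, with no further argument; you in fact supply more detail than the paper does. One small sharpening: with the standard choice $w^{\tilde{X}}=2\tilde{f}'+\tfrac{4(1-\mu)}{r}\tilde{f}$ (as used in the paper's horizon proposition), the $(\partial_t\Phi)^2$ coefficient in $K^{\tilde{X},w^{\tilde{X}}}$ vanishes identically on Schwarzschild rather than merely being dominated, which is why the stated lower bound has no $(\partial_t\Phi)^2$ term at all.
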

This implies via stability (since the vector field is supported away from the event horizon) the following:
\begin{proposition}
In Kerr spacetimes, using $(t^*,r,x^A,x^B)$ coordinates, there exists $\tilde{X}$ and $w^{\tilde{X}}$ supported in $r\geq \frac{25M}{8}$ such that for some large $R$,
$$K^{\tilde{X},w^{\tilde{X}}}\left(\Phi \right)\geq c_{\tilde{X}}\left(r^{-1-\delta}\left(\partial_{r^*}\Phi\right)^2+r^{-1}|\nabb\Phi|^2+r^{-3-\delta}\Phi^2\right)-C_{\tilde{X}}\epsilon r^{-2}\left(\partial_{t^*}\Phi\right)^2,$$
for $r^*\ge R$ and 
$$|\int_{\Sigma_\tau}J^{\tilde{X},w^{\tilde{X}}}_{\mu}\left(\Phi\right)n^\mu_{\Sigma_\tau}|\leq C\int_{\Sigma_\tau}J^{N_e}_{\mu}\left(\Phi\right)n^\mu_{\Sigma_\tau}.$$
\end{proposition}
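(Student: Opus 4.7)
The plan is to import the Schwarzschild construction of the preceding proposition to Kerr via the diffeomorphism of Section \ref{geometry}, and then absorb the resulting perturbation errors into the positive bulk terms. First I would define $\tilde X$ and $w^{\tilde X}$ on Kerr by taking the Schwarzschild objects $\tilde X_S = \tilde f(r^*_S)\partial_{r^*_S}$ and $w_S^{\tilde X}$ and reinterpreting them via the identification of the Schwarzschild $(t^*_S, r_S, x^A_S, x^B_S)$ coordinates with the Kerr $(t^*, r, x^A, x^B)$ coordinates. Since $r_S^2 - 2Mr_S = r^2 - 2Mr + a^2$ yields $r_S - r = O(a^2/r)$, and $a$ is controlled by $\epsilon$, the Schwarzschild support $\{r_S \geq 13M/4\}$ is contained in $\{r \geq 25M/8\}$ for $\epsilon$ sufficiently small; this also keeps $\tilde X$ well inside the region where the stability estimates of Section 4.2 apply.

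For the bulk estimate I would decompose
$$K^{\tilde X, w^{\tilde X}}_K(\Phi) = K^{\tilde X, w^{\tilde X}}_S(\Phi) + \bigl(K^{\tilde X, w^{\tilde X}}_K(\Phi) - K^{\tilde X, w^{\tilde X}}_S(\Phi)\bigr).$$
The preceding Schwarzschild proposition bounds the first summand below by $c\bigl(r^{-1-\delta}(\partial_{r^*}\Phi)^2 + r^{-1}|\nabb\Phi|^2 + r^{-3-\delta}\Phi^2\bigr)$ for $r^*$ above a constant of size $M$. The vector-field-multiplier analogue of Proposition \ref{commsta} (stated just before Section \ref{sectionmaintheorem}) bounds the second summand by $C\epsilon r^{-2}\bigl((\partial\Phi)^2 + \Phi^2\bigr)$, where $\partial$ ranges over coordinate derivatives and the implicit constant depends on derivatives of $\tilde X^\alpha$ and $w^{\tilde X}$ (both bounded, by construction). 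Choosing $R$ large enough that $C\epsilon r^{-2}$ is at most half of each of $cr^{-1-\delta}$, $cr^{-1}$, and $cr^{-3-\delta}$ on $r^* \geq R$, the error contributions involving $\partial_r \sim \partial_{r^*}$, angular derivatives, and $\Phi^2$ are absorbed into the positive Schwarzschild terms. The one piece that cannot be absorbed is $C\epsilon r^{-2}(\partial_{t^*}\Phi)^2$: since $\tilde X$ is purely radial, the Schwarzschild bulk produces no positive $(\partial_{t^*}\Phi)^2$ term to absorb it, yielding exactly the allowed error $-C_{\tilde X}\epsilon r^{-2}(\partial_{t^*}\Phi)^2$ in the statement.

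The boundary estimate follows by the same splitting: the Schwarzschild proposition gives $|\int_{\Sigma_\tau} J^{\tilde X, w^{\tilde X}}_{S,\mu}(\Phi)n^\mu_{\Sigma_\tau}| \leq C\int_{\Sigma_\tau} J^{N_e}_\mu(\Phi)n^\mu_{\Sigma_\tau}$, and the Kerr--Schwarzschild difference of the boundary integrand is, by the multiplier stability estimate, bounded by $C\epsilon r^{-2}\max_\alpha|\tilde X^\alpha|(\partial\Phi)^2$ (plus an analogous $\Phi^2$ piece controlled by $\Phi^2$-weights of $w_S^{\tilde X}$). Both are absorbed into the non-degenerate energy density after integration over $\Sigma_\tau$. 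The main obstacle I anticipate is not conceptual but bookkeeping: verifying that the weights appearing from $\partial^k w_S^{\tilde X}$ and $\partial^k \tilde X^\alpha$ in the multiplier stability estimate are bounded (so that the absorption thresholds on $r$ depend only on $M$ and $c_{\tilde X}$), and checking that the Schwarzschild $\partial_{r^*_S}$ and the Kerr $\partial_{r^*}$ differ by an $O(\epsilon)$-small multiplicative factor so that the Schwarzschild control of $(\partial_{r^*_S}\Phi)^2$ translates to control of the Kerr $(\partial_{r^*}\Phi)^2$ on the right-hand side. Both are straightforward consequences of the explicit formulas defining $r_S$ and the bounded smooth profile of $\tilde f$ and $w_S^{\tilde X}$ from \cite{L}.
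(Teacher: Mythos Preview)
Your approach is exactly what the paper means by ``via stability'': the paper's entire proof is the single phrase ``This implies via stability (since the vector field is supported away from the event horizon),'' and your proposal correctly unpacks that into the decomposition $K_K = K_S + (K_K - K_S)$, invoking the Schwarzschild proposition for the first piece and the multiplier stability estimate of Section~5.2 for the second.

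One small correction to your absorption step: you cannot make $C\epsilon r^{-2} \leq \tfrac{c}{2} r^{-3-\delta}$ by choosing $R$ large, since $r^{-2}/r^{-3-\delta} = r^{1+\delta} \to \infty$. The point is rather that in the stability bound the $\Phi^2$ error carries the factor $\sum_{m=1}^2 |\partial^m w^{\tilde X}|$, and the construction of $w^{\tilde X}$ from \cite{L} gives these derivatives decay like $r^{-1-\delta}$ or better, so the actual $\Phi^2$ error is $O(\epsilon r^{-3-\delta})\Phi^2$ and is absorbed by $\epsilon$-smallness alone. Your closing remark about checking the weights is therefore where the real work is, and boundedness of $\partial^m w^{\tilde X}$ is not quite enough---you need its decay.
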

Now it is easy to construct the following vector field on Schwarzschild spacetimes:
\begin{proposition}
In Schwarzschild spacetimes, using $(t,r^*,x^A,x^B)$ coordinates, there exists $\tilde{\tilde{X}}_S=\tilde{\tilde{f}}\left(r^*\right)\partial_{r^*}$ supported in $r\geq \frac{13M}{4}$ such that
$$K^{\tilde{\tilde{X}}}\left(\Phi \right)\geq cr^{-1-\delta}\left(\partial_{t^*}\Phi\right)^2- C\left(r^{-1-\delta}\left(\partial_{r^*}\Phi\right)^2+r^{-1}|\nabb\Phi|^2+r^{-3-\delta}\Phi^2\right),$$
for $r^*\ge \max\{100,100M\}$ and 
$$|\int_{\Sigma_\tau}J^{\tilde{\tilde{X}}}_{\mu}\left(\Phi\right)n^\mu_{\Sigma_\tau}|\leq C\int_{\Sigma_\tau}J^{N_e}_{\mu}\left(\Phi\right)n^\mu_{\Sigma_\tau}.$$
\end{proposition}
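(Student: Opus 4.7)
The plan is to take $\tilde{\tilde{X}}_S = \tilde{\tilde{f}}(r^*)\partial_{r^*}$ with $\tilde{\tilde{f}}(r^*) = \chi(r)r^{-\delta}$, where $\chi$ is a smooth nonnegative cutoff supported in $\{r \geq \tfrac{13M}{4}\}$ and identically one on $\{r^* \geq \max\{100,100M\}\}$. The motivation is that a purely radial multiplier $f\partial_{r^*}$ naturally produces a positive $(\partial_t\Phi)^2$ weight coming from the angular Christoffel symbols (contributing $\tfrac{f}{r}(\partial_t\Phi)^2$), and in order for this weight to be of the desired order $r^{-1-\delta}$ rather than $r^{-1}$, the profile $f$ itself must decay like $r^{-\delta}$.

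To verify the pointwise bound, compute $K^{f\partial_{r^*}}$ using the identity $\pi^{fV}_{\mu\nu} = f\pi^V_{\mu\nu} + V_{(\mu}\partial_{\nu)}f$ with $V=\partial_{r^*}$, which gives
\begin{equation*}
K^{f\partial_{r^*}} = f\,K^{\partial_{r^*}} + (1-\mu)f'\,T^{r^*r^*}.
\end{equation*}
A direct computation of $\pi^{\partial_{r^*}}$ from the Schwarzschild Christoffel symbols in $(t,r^*,\theta,\phi)$ coordinates produces $K^{\partial_{r^*}} = \tfrac{1}{r}(\partial_t\Phi)^2 - \tfrac{1}{r}(\partial_{r^*}\Phi)^2 - \tfrac{M}{r^2}|\nabb\Phi|^2$, and expanding $T^{r^*r^*}$ in $\partial_t\Phi, \partial_{r^*}\Phi, \nabb\Phi$ yields
\begin{equation*}
K^{f\partial_{r^*}} = \Bigl[\tfrac{f}{r}+\tfrac{f'}{2(1-\mu)}\Bigr](\partial_t\Phi)^2 + \Bigl[-\tfrac{f}{r}+\tfrac{f'}{2(1-\mu)}\Bigr](\partial_{r^*}\Phi)^2 + \Bigl[-\tfrac{fM}{r^2}-\tfrac{f'}{2}\Bigr]|\nabb\Phi|^2.
\end{equation*}
On $\{r^*\geq \max\{100,100M\}\}$ the cutoff is trivial, so $f=r^{-\delta}$ and $f' = -\delta(1-\mu)r^{-\delta-1}$. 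Substituting, the $(\partial_t\Phi)^2$ coefficient becomes $(1-\tfrac{\delta}{2})r^{-1-\delta}\geq c r^{-1-\delta}$ for $\delta$ small; the $(\partial_{r^*}\Phi)^2$ coefficient becomes $-(1+\tfrac{\delta}{2})r^{-1-\delta}$, which is exactly of the form $-Cr^{-1-\delta}(\partial_{r^*}\Phi)^2$ allowed in the claim; and the $|\nabb\Phi|^2$ coefficient becomes $r^{-1-\delta}\bigl(\tfrac{\delta(1-\mu)}{2}-\tfrac{M}{r}\bigr)$, of magnitude at most $Cr^{-1-\delta}\leq Cr^{-1}$, hence absorbable by $-Cr^{-1}|\nabb\Phi|^2$. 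Since $\tilde{\tilde{X}}_S$ carries no modifying function $w$, no $\Phi^2$ term appears, consistent with the absence of a positive $\Phi^2$ contribution in the lower bound.

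For the boundary term, $\tilde{\tilde{X}}_S$ is supported in $\{r\geq \tfrac{13M}{4}\}$, which is uniformly away from the event horizon; there $\partial_{t^*}$ is strictly timelike in Schwarzschild and consequently $J^{N_e}_\mu n^\mu_{\Sigma_\tau} \sim \sum_\alpha (\partial_\alpha\Phi)^2$ by the analogue (on Schwarzschild) of the $N_e$-proposition already proved. Combined with $|\tilde{\tilde{f}}|\leq C$, this yields pointwise $|J^{\tilde{\tilde{X}}_S}_\mu n^\mu_{\Sigma_\tau}|\leq C|\tilde{\tilde{f}}|\sum_\alpha(\partial_\alpha\Phi)^2 \leq C J^{N_e}_\mu n^\mu_{\Sigma_\tau}$, which integrates to the boundary bound. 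No genuine obstacle arises: the argument is a textbook Morawetz-type multiplier construction in the style of \cite{DRS}, and the only point deserving attention is coordinating $f/r$ with $f'/(2(1-\mu))$ so that the positive $(\partial_t\Phi)^2$ weight and the absorbable $(\partial_{r^*}\Phi)^2$ weight both emerge at the same order $r^{-1-\delta}$; this essentially forces the ansatz $f\sim r^{-\delta}$.
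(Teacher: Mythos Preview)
Your proposal is correct and follows essentially the same approach as the paper: the paper's entire proof is the single line ``Let $\tilde{\tilde{f}}$ be supported appropriately and $\tilde{\tilde{f}}(r^*)=\frac{1}{(1+r^*)^\delta}$ whenever $r^*$ is large,'' which is asymptotically the same profile as your $\chi(r)r^{-\delta}$. Your explicit computation of $K^{f\partial_{r^*}}$ (which checks out against the modified formula used later in the paper for $K^{X_h,w^{X_h}}$) simply fills in the details the paper omits.
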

\begin{proof}
Let $\tilde{\tilde{f}}$ be supported appropriately and $\tilde{\tilde{f}}\left(r^*\right)=\frac{1}{\left(1+r^*\right)^\delta}$ whenever $r^*$ is large.
\end{proof}
As before, a stability argument would give:
\begin{proposition}
In Kerr spacetimes, using $(t^*,r,x^A,x^B)$ coordinates, there exists $\tilde{\tilde{X}}$ supported in $r\geq \frac{25M}{8}$ such that for some large $R$,
$$K^{\tilde{\tilde{X}}}\left(\Phi \right)\geq cr^{-1-\delta}\left(\partial_{t^*}\Phi\right)^2- C_{\tilde{\tilde{X}}}\left(r^{-1-\delta}\left(\partial_{r^*}\Phi\right)^2+r^{-1}|\nabb\Phi|^2+r^{-3-\delta}\Phi^2\right),$$
for $r^*\ge R$ and
$$|\int_{\Sigma_\tau}J^{\tilde{\tilde{X}}}_{\mu}\left(\Phi\right)n^\mu_{\Sigma_\tau}|\leq C\int_{\Sigma_\tau}J^{N_e}_{\mu}\left(\Phi\right)n^\mu_{\Sigma_\tau}.$$ 
\end{proposition}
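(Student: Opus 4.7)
The plan is to transport the Schwarzschild vector field $\tilde{\tilde{X}}_S=\tilde{\tilde{f}}(r^*)\partial_{r^*}$ from the previous proposition to Kerr via the identification of Schwarzschild $(t^*_S,r_S,x^A_S,x^B_S)$ coordinates with Kerr coordinates as set up in Section \ref{geometry}. Because the cutoff built into $\tilde{\tilde{f}}$ places its support in $\{r\geq 13M/4\}$ on Schwarzschild, and the coordinate identification is the identity outside a small neighborhood of the horizon, the resulting $\tilde{\tilde{X}}$ on Kerr is supported in $\{r\geq 25M/8\}$ for $\epsilon$ small. In particular, $\tilde{\tilde{X}}$ is entirely inside the regime where the Schwarzschild $(t^*_S,r_S,x^A_S,x^B_S)$ chart is valid, so the vector-field stability machinery of Proposition \ref{commsta} and its multiplier analog apply.

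The bulk estimate is then produced in two steps. First, the Schwarzschild proposition already gives, pointwise for $r^*\geq \max\{100,100M\}$,
$$K^{\tilde{\tilde{X}}}_S(\Phi)\geq c r^{-1-\delta}(\partial_{t^*}\Phi)^2-C\bigl(r^{-1-\delta}(\partial_{r^*}\Phi)^2+r^{-1}|\nabb\Phi|^2+r^{-3-\delta}\Phi^2\bigr).$$
Second, apply the multiplier stability proposition (with $w=0$) to bound
$$|K^{\tilde{\tilde{X}}}_K(\Phi)-K^{\tilde{\tilde{X}}}_S(\Phi)|\leq C\epsilon r^{-2}(\partial\Phi)^2,$$
where I use that $\tilde{\tilde{X}}$ has smooth bounded components and coordinate derivatives depending only on $r$. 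Choose $R$ so large that $r^*\geq R$ implies $r^{-2}\leq \tfrac{c}{2C\epsilon}r^{-1-\delta}$, so that the $\epsilon r^{-2}(\partial_{t^*}\Phi)^2$ piece of the error is absorbed into half of the Schwarzschild good term; the $\epsilon r^{-2}(\partial_{r^*}\Phi)^2$ and $\epsilon r^{-2}|\nabb\Phi|^2$ pieces are absorbed into the already-present negative terms (increasing the constant $C$ to $C_{\tilde{\tilde{X}}}$). The $\Phi^2$ term receives no new contribution, since $w^{\tilde{\tilde{X}}}=0$.

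The boundary estimate follows by the same scheme: the Schwarzschild proposition provides $|\int_{\Sigma_\tau}J^{\tilde{\tilde{X}}}_{S,\mu}(\Phi)n^\mu_{\Sigma_\tau}|\leq C\int_{\Sigma_\tau}J^{N_e}_\mu(\Phi)n^\mu_{\Sigma_\tau}$ (using the previous proposition for $\tilde{X}$ as the model, with $w=0$), and the multiplier stability proposition bounds the difference between Schwarzschild and Kerr currents by $C\epsilon r^{-2}(\partial\Phi)^2$, which in turn is dominated by $J^{N_e}_\mu(\Phi)n^\mu_{\Sigma_\tau}$ since $N_e$ is timelike. The only real point to watch is that $\partial_{t^*}$ in Kerr agrees with $\partial_{t^*_S}$ under the identification, so the Schwarzschild positivity of $(\partial_{t^*}\Phi)^2$ transfers cleanly; the choice $r^*\geq R$ with $R\gg \max\{100,100M\}$, together with the smallness of $\epsilon$, is the mechanism that quantitatively preserves this positivity.

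The main obstacle is the one already highlighted: ensuring that the metric-perturbation error $C\epsilon r^{-2}(\partial_{t^*}\Phi)^2$ does not overwhelm the main Schwarzschild positive contribution $cr^{-1-\delta}(\partial_{t^*}\Phi)^2$. This is exactly why the statement only asserts positivity for $r^*\geq R$ with a potentially larger $R$ than in the Schwarzschild original, and why the smallness hierarchy $\epsilon\ll 1$ of Section \ref{geometry} is essential.
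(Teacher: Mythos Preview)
Your proposal is correct and follows exactly the approach the paper intends: the paper simply writes ``As before, a stability argument would give'' and states the proposition, and your sketch supplies precisely that stability argument via the multiplier-perturbation proposition with $w=0$. The only minor remark is that the absorption of the $C\epsilon r^{-2}(\partial_{t^*}\Phi)^2$ error into the good term $cr^{-1-\delta}(\partial_{t^*}\Phi)^2$ can equivalently be phrased as a smallness condition on $\epsilon$ rather than a largeness condition on $R$ (since $r^{-2}\ll r^{-1-\delta}$ for large $r$ regardless), which is closer to how the paper uses the smallness hierarchy; but either formulation is valid and the statement allows ``some large $R$''.
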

Now using the vector field $\tilde{X}+\frac{1}{2}\frac{c_{\tilde{X}}}{C_{\tilde{\tilde{X}}}}\tilde{\tilde{X}}$ and modifying function $w^{\tilde{X}}$, we get the following estimate for $\epsilon$ sufficiently small:
\begin{proposition}\label{Xinf}
\begin{equation*}
\begin{split}
&\iint_{\mathcal R(\tau_0,\tau)\cap\{r\geq R\}}\left(r^{-1-\delta}J^{N_e}_\mu\left(\Phi\right)n^\mu_{\Sigma_{t^*}}+r^{-3-\delta}\Phi^2\right)\\
\leq&C\left(\int_{\Sigma_\tau} J^{N_e}_\mu\left(\Phi\right)n^\mu_{\Sigma_\tau}+\int_{\Sigma_{\tau_0}} J^N_\mu\left(\Phi\right)n^\mu_{\Sigma_{\tau_0}}+\iint_{\mathcal R(\tau_0,\tau)\cap\{\frac{25M}{8}\leq r\leq R\}}\left(J^{N_e}_\mu\left(\Phi\right)n^\mu_{\Sigma_{t^*}}+\Phi^2\right)\right.\\
&\left.+\iint_{\mathcal R(\tau_0,\tau)}\left(|\partial_r\Phi|+r^{-1}|\Phi|\right)|G|\right).
\end{split}
\end{equation*}
\end{proposition}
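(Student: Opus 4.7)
The strategy is to run the energy identity of the previous section with the combined multiplier $V = \tilde{X} + \frac{1}{2}\frac{c_{\tilde X}}{C_{\tilde{\tilde X}}}\tilde{\tilde X}$ and the modifying function $w^{\tilde X}$, and then bound each of the boundary, bulk, and inhomogeneous pieces separately. The point of the combination is that $\tilde X$ provides positive bulk for the spatial derivatives, angular derivatives and $\Phi^2$ but fails on $(\partial_{t^*}\Phi)^2$, while $\tilde{\tilde X}$ provides positive bulk for $(\partial_{t^*}\Phi)^2$ at the cost of a controlled deficit in the other terms; choosing the relative coefficient as above ensures the negative contribution from $\tilde{\tilde X}$ is absorbed by half of the positive contribution from $\tilde X$. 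The additional error $-C_{\tilde X}\epsilon r^{-2}(\partial_{t^*}\Phi)^2$ coming from the Kerr perturbation is then absorbed into the positive $cr^{-1-\delta}(\partial_{t^*}\Phi)^2$ by taking $\epsilon$ sufficiently small (note $r^{-2}\leq r^{-1-\delta}$ for $r\geq R$).

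I would first apply the inhomogeneous version of the divergence identity on $\mathcal R(\tau_0,\tau)$ to get
\begin{equation*}
\iint_{\mathcal R(\tau_0,\tau)} K^{V,w^{\tilde X}}(\Phi) = \int_{\Sigma_{\tau_0}} J^{V,w^{\tilde X}}_\mu n^\mu - \int_{\Sigma_{\tau}} J^{V,w^{\tilde X}}_\mu n^\mu + \iint_{\mathcal R(\tau_0,\tau)} \left(-\tfrac{1}{4}Gw^{\tilde X}\Phi - G V^\nu\partial_\nu\Phi\right),
\end{equation*}
where we have used the fact that $V$ is supported away from $\mathcal H^+$ so there is no horizon flux. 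For the boundary terms, the previous two propositions give $|\int_{\Sigma_\tau} J^{V,w^{\tilde X}}_\mu n^\mu|\leq C\int_{\Sigma_\tau} J^{N_e}_\mu n^\mu$ (and similarly at $\tau_0$). For the inhomogeneous terms, since $V = f(r^*)\partial_{r^*}$ with $f$ bounded and $w^{\tilde X}$ decays in $r$ (so $|w^{\tilde X}|\lesssim r^{-1}$ on its support), Cauchy--Schwarz yields exactly the stated bound $\iint (|\partial_r\Phi|+r^{-1}|\Phi|)|G|$.

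For the bulk, I would split $\mathcal R(\tau_0,\tau)$ into $\{r\geq R\}$ and $\{\frac{25M}{8}\leq r\leq R\}$ (the multiplier vanishes for smaller $r$). In the outer region, combining the two positivity statements and absorbing $C_{\tilde X}\epsilon r^{-2}(\partial_{t^*}\Phi)^2$ gives
\begin{equation*}
K^{V,w^{\tilde X}}(\Phi) \geq c\left(r^{-1-\delta}(\partial_{t^*}\Phi)^2 + r^{-1-\delta}(\partial_{r^*}\Phi)^2 + r^{-1}|\nabb\Phi|^2 + r^{-3-\delta}\Phi^2\right),
\end{equation*}
which controls $r^{-1-\delta}J^{N_e}_\mu n^\mu + r^{-3-\delta}\Phi^2$ pointwise. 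In the intermediate region $\frac{25M}{8}\leq r\leq R$, we use the crude bound $|K^{V,w^{\tilde X}}(\Phi)|\leq C(J^{N_e}_\mu n^\mu + \Phi^2)$, which contributes to the right-hand side as the spacetime integral of $J^{N_e}_\mu n^\mu+\Phi^2$ over that slab.

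The main obstacle in executing this plan is verifying the exact cancellation that lets $\tilde X$ dominate $\tilde{\tilde X}$ uniformly in $r\geq R$: one must check that the constant $C$ in the lower bound from $\tilde X$ and the constant $C_{\tilde{\tilde X}}$ in the negative contribution from $\tilde{\tilde X}$ pair up so that the choice $\frac{1}{2}\frac{c_{\tilde X}}{C_{\tilde{\tilde X}}}$ leaves a definite positive sign on every component of $J^{N_e}_\mu n^\mu$. Once this is secured, the smallness of $\epsilon$ handles the Kerr perturbation term, and everything else reduces to routine Cauchy--Schwarz together with the identities already established.
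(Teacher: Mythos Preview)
Your proposal is correct and follows exactly the approach the paper indicates: apply the modified energy identity with the multiplier $V=\tilde X+\tfrac{1}{2}\tfrac{c_{\tilde X}}{C_{\tilde{\tilde X}}}\tilde{\tilde X}$ and weight $w^{\tilde X}$, use the positivity of $K^{V,w^{\tilde X}}$ for $r\geq R$ (after absorbing the $\epsilon$-small Kerr error into the positive $(\partial_{t^*}\Phi)^2$ term), treat the region $\tfrac{25M}{8}\leq r\leq R$ as error, and bound the boundary and inhomogeneous terms as you describe. The ``main obstacle'' you flag is not really an obstacle: the coefficient $\tfrac{1}{2}\tfrac{c_{\tilde X}}{C_{\tilde{\tilde X}}}$ is chosen precisely so that the negative contribution $-\tfrac{1}{2}\tfrac{c_{\tilde X}}{C_{\tilde{\tilde X}}}C_{\tilde{\tilde X}}(\ldots)=-\tfrac{1}{2}c_{\tilde X}(\ldots)$ eats only half of the positive $c_{\tilde X}(\ldots)$ from $\tilde X$, and this holds term by term.
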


\subsection{Estimates near the Event Horizon}
The integrated decay estimates shown in \cite{DRL} is degenerate around the event horizon. Here we will prove the corresponding estimates near the event horizon. In view of the availability of the red-shift estimate $K^{N_e}$, we will focus on the zeroth order term $\Phi^2$. It turns out that we can use a construction in \cite{L}.
\begin{proposition}
In Schwarzschild spacetimes, using $(t,r^*,x^A,x^B)$ coordinates, there exists $X_h=f_h\left(r^*\right)\partial_{r^*}$ and $w^{X_h}$ supported in $r\leq \frac{23M}{8}$ such that
$$K^{X_h,w^{X_h}}\left(\Phi \right)\geq c\left((\partial_{r^*}\Phi)^2+|\nabb\Phi|^2+\Phi^2\right),$$
for $r\le r^-_Y$ and 
$$|\int_{\Sigma_\tau}J^{X_h,w^{X_h}}_{\mu}\left(\Phi\right)n^\mu_{\Sigma_\tau}|\leq C\int_{\Sigma_\tau}J^{N_e}_{\mu}\left(\Phi\right)n^\mu_{\Sigma_\tau}$$
and 
$$|J^{X_h,w^{X_h}}_{\mu}\left(\Phi\right)n^\mu_{\mathcal H^+}|\leq CJ^{N_e}_{\mu}\left(\Phi\right)n^\mu_{\mathcal H^+}.$$

\end{proposition}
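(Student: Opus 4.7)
The plan is to reuse the construction from \cite{L} which produced exactly such a multiplier on Schwarzschild near the event horizon. The decisive structural observation is that the target region $\{r \le r^-_Y\}$ lies strictly inside the photon sphere, since $r^-_Y < \tfrac{11M}{4} < 3M$; consequently the sign of the angular term in a standard Morawetz-type computation (governed by $r-3M$) is favorable on the region where we seek positivity, rather than being an obstruction as it would be near the photon sphere.

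Concretely, I would take $f_h(r^*)$ smooth, nonnegative, supported in $\{r \leq \tfrac{23M}{8}\}$, with $f_h'$ bounded below by a positive constant on $\{r \leq r^-_Y\}$ and with $f_h$ vanishing sufficiently fast as $r^* \to -\infty$ (so that all boundary currents remain uniformly bounded). Computing the deformation tensor of $f_h(r^*)\partial_{r^*}$ in the $(t,r^*,x^A,x^B)$ coordinates on Schwarzschild yields in $K^{X_h}$ a positive multiple of $(\partial_{r^*}\Phi)^2$ coming from $f_h'$, together with an angular contribution whose coefficient is strictly positive on $\{r \leq r^-_Y\}$ precisely because we are below the photon sphere. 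To produce the zeroth-order term $\Phi^2$ I would pick the modifier $w^{X_h}(r^*)$, again supported in $\{r \leq \tfrac{23M}{8}\}$, so that $-\Box_{g_S}w^{X_h} \geq c > 0$ on $\{r \leq r^-_Y\}$; since $w^{X_h}$ depends only on $r^*$ this reduces to a one-dimensional concavity condition. The correction $\tfrac{1}{4}w^{X_h}\partial^\alpha\Phi\partial_\alpha\Phi$ entering $K^{X_h,w^{X_h}}$ is then harmless provided $w^{X_h}$ is chosen sufficiently small relative to $f_h'$, so that the derivative positivity obtained from $f_h'$ dominates.

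For the boundary estimates, the integrands $J^{X_h,w^{X_h}}_\mu n^\mu_{\Sigma_\tau}$ and $J^{X_h,w^{X_h}}_\mu n^\mu_{\mathcal H^+}$ are quadratic forms in $\partial \Phi$ plus $\Phi^2$-terms with bounded coefficients, the boundedness up to the horizon being exactly what the prescribed decay of $f_h$ and $w^{X_h}$ as $r^* \to -\infty$ is designed to ensure. A Hardy inequality absorbs the $\Phi^2$ contribution into $\int_{\Sigma_\tau} J^{N_e}_\mu n^\mu_{\Sigma_\tau}$. On $\mathcal H^+$, only $D_{\hat V}\Phi$ and angular derivatives of $\Phi$ appear in $J^{X_h,w^{X_h}}_\mu n^\mu_{\mathcal H^+}$ (the $\partial_{r^*}$ direction being tangent to and degenerate along the horizon for this multiplier), and these are exactly the quantities controlled by $J^{N_e}_\mu n^\mu_{\mathcal H^+}$ from the proposition on $N_e$.

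The hard part, which is where the construction in \cite{L} does the real work, is matching the rates at which $f_h$, $w^{X_h}$, and their derivatives behave as $r^* \to -\infty$: one needs simultaneously that $f_h'/(1-\mu)$ and $-\Box_{g_S}w^{X_h}$ remain bounded below on $\{r \leq r^-_Y\}$, that the boundary currents stay controlled by $J^{N_e}_\mu n^\mu$, and that the $w^{X_h}$-correction to $K^{X_h}$ can be absorbed. With those rates fixed as in \cite{L}, the three claimed inequalities are immediate.
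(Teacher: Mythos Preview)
Your proposal has the right high-level picture---a radial Morawetz multiplier plus a zeroth-order modifier, with positivity coming from being away from the photon sphere---but the concrete choices you make all have the wrong behavior, and the argument does not go through as written.

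The central error is the sign of $f_h$. With the paper's standard choice $w^{X_h}=2f_h'+\tfrac{4(1-\mu)}{r}f_h$ one has
\[
K^{X_h,w^{X_h}}(\Phi)=\frac{f_h'}{1-\mu}(\partial_{r^*}\Phi)^2+\frac{(2-3\mu)f_h}{2r}|\nabb\Phi|^2+(\text{zeroth order}).
\]
On $\{r\le r^-_Y\}$ one has $\mu>\tfrac{2}{3}$, hence $2-3\mu<0$; so the angular coefficient is positive if and only if $f_h<0$. Your ``$f_h$ nonnegative'' gives a \emph{negative} angular term precisely on the region where you want positivity---your photon-sphere intuition is correct, but it forces $f_h<0$ below $r=3M$, not $f_h\ge0$. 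Relatedly, your requirement that $f_h\to 0$ as $r^*\to-\infty$ would make the angular coefficient degenerate at the horizon, so you could not get a uniform lower bound $c|\nabb\Phi|^2$ there. The paper instead takes $f_h=-\chi(r)\,M^3/(1+4\mu^{-2})$, which is strictly negative and bounded away from zero at the horizon.

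The second gap is your decoupling of $w^{X_h}$ from $f_h$. The specific choice $w^{X_h}=2f_h'+\tfrac{4(1-\mu)}{r}f_h$ is not incidental: it is exactly what cancels the $(\partial_t\Phi)^2$ contribution, leaving the clean formula above. If you instead pick $w^{X_h}$ independently (and small), the unmodified $K^{X_h}$ still contains a $(\partial_t\Phi)^2$ term with coefficient of order $f_h/(1-\mu)$, which is neither small nor controlled by the $(\partial_{r^*}\Phi)^2$ and $|\nabb\Phi|^2$ terms; the claimed inequality $K^{X_h,w^{X_h}}\ge c((\partial_{r^*}\Phi)^2+|\nabb\Phi|^2+\Phi^2)$ then fails by choosing $\Phi$ with large $\partial_t\Phi$. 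With $w$ tied to $f$ as above, the zeroth-order coefficient becomes a specific rational function of $\mu$, and the paper verifies its sign by an explicit polynomial computation; there is no separate ``concavity'' freedom left in $w$.

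Finally, ``$f_h'$ bounded below by a positive constant on $\{r\le r^-_Y\}$'' is incompatible with $f_h$ bounded, since this region is the infinite interval $r^*\in(-\infty,r^*(r^-_Y)]$. What you actually need (and what the paper's $f_h$ achieves) is $f_h'/(1-\mu)\ge c>0$, with $f_h'\sim(1-\mu)$ near the horizon; then $f_h$ stays bounded and the boundary currents are controlled directly by boundedness of $f_h,w^{X_h}$ together with the fact that $\partial_t=\partial_{r^*}$ on the Schwarzschild horizon.
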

\begin{proof}
Let $$X_h=f_h\left(r^*_S\right)\partial_{r^*_S}=-\chi(r)\frac{M^3}{\left(1+4\mu^{-2}\right)}\partial_{r^*_S}=-\chi(r)\frac{\mu^3r^3}{8\left(1+4\mu^{-2}\right)}\partial_{r^*_S},$$ where $\chi(r)$ is a cutoff function that is compactly supported in $r\leq \frac{23M}{8}$ and is identically $1$ for $r\leq r^-_Y$. Also, let
$$w^{X_h}=2f_h'\left(r^*\right)+\frac{4\left(1-\mu \right)}{r}f_h\left(r^*\right).$$
From now on, we will focus on the behavior when $ r\leq r^-_Y$ and treat the terms in $\{r^-_Y\leq r\leq\frac{23M}{8}\}$ as errors. 
Recall that on Schwarzschild spacetime:
\begin{equation*}
\begin{split}
K^{X_h,w^{X_h}}\left(\Phi\right)=&\frac{f'\left(r^*\right)}{1-\mu}\left(\partial_{r^*}\Phi \right)^2+\frac{\left(2-3\mu\right)f\left(r^*\right)}{2r} |\nabb\Phi|^2\\
&-\frac{1}{4}\left(\frac{1}{1-\mu}f'''\left(r^*\right)+\frac{4}{r}f''\left(r^*\right)+\frac{\mu}{r^2}f'\left(r^*\right)-\frac{2\mu}{r^3}\left(3-4\mu \right) f\left(r^*\right)\right)\Phi^2\\
\end{split}
\end{equation*}
We now look at the sign of this expression for $r\leq r^-_Y$. It is easy to see that the coefficient for $\left(\partial_{r^*}\Phi\right)^2$ is positive:
\begin{equation*}
\begin{split}
f'\left(r^*\right)=&\left(1-\mu \right)\partial_{r}f_0\left(r\right)\\
=&\frac{\mu r^2\left(1-\mu \right)}{\left(1+4\mu^{-2}\right)^2}\geq \frac{c(1-\mu )}{r^3},\\
\end{split}
\end{equation*}
The coefficient of $|\nabb\Phi|^2$ is also clearly positive.
A computation shows that
\begin{equation*}
\begin{split}
&\frac{1}{1-\mu}f'''+\frac{4}{r}f''+\frac{\mu}{r^2}f'-\frac{2\mu}{r^3}\left(3-4\mu\right)f\\
=&-\frac{\mu^6\left(192+\mu \left(128+\mu \left(-784+\mu \left(464+\mu \left(-28+\mu \left(52+\mu \left(-3+4\mu \right)\right)\right)\right)\right)\right)\right)}{4\left(4+\mu^{2}\right)^4}
\end{split}
\end{equation*}
We want to show that $P(\mu)=192+\mu \left(128+\mu \left(-784+\mu \left(464+\mu \left(-28+\mu \left(52+\mu \left(-3+4\mu \right)\right)\right)\right)\right)\right)\geq \frac{1}{7}$ for $\frac{16}{23}\leq \mu \leq 1$.\\
$192+128 \mu - 784 \mu^2 +464\mu^3 =16\left(-12-20 \mu +29 \mu^2 \right)\left(\mu -1\right)\geq 0.$\\
$52-3\mu +4 \mu^2$ reaches its minimum at $\frac{3}{8}$. Hence, $52-3\mu +4 \mu^2 \geq \frac{823}{16}.$\\
$-28+\mu \left(52-3\mu +4 \mu^2\right) \geq-28+\frac{11}{20}\frac{823}{16}\geq \frac{93}{320}.$\\
Therefore, $P(\mu)\geq \frac{1023}{6400}\geq\frac{1}{7}$ for $\frac{16}{23} \leq \mu \leq 1$.
Therefore, for $r\leq r^-_Y$,
$$K^{X_h,w^{X_h}}\left(\Phi\right)\geq c\left((\partial_{r^*}\Phi)^2+|\nabb\Phi|^2+\Phi^2\right).$$
The second statement
$$|\int_{\Sigma_\tau}J^{X_h,w^{X_h}}_{\mu}\left(\Phi\right)n^\mu_{\Sigma_\tau}|\leq C\int_{\Sigma_\tau}J^{N_e}_{\mu}\left(\Phi\right)n^\mu_{\Sigma_\tau}$$
and the third statement
$$|J^{X_h,w^{X_h}}_{\mu}\left(\Phi\right)n^\mu_{\mathcal H^+}|\leq CJ^{N_e}_{\mu}\left(\Phi\right)n^\mu_{\mathcal H^+}$$
follow from the boundedness of $f_h$ and $w^{X_h}$ and that on the Schwarzschild horizon $\partial_t=\partial_{r^*}$. Hence in both estimates, the constants are independent of $e$ for $e$ small.
\end{proof}
Noticing the $X_h$ and $w^{X_h}$ are actually smooth up to the event horizon, this implies via a stability argument:
\begin{proposition}
In Kerr spacetimes, using $(t_S,r_S,x^1_S,x^2_S)$ coordinates, there exists $X_h$ and $w^{X_h}$ supported in $r\leq \frac{23M}{8}$ such that
$$K^{X_h,w^{X_h}}\left(\Phi \right)\geq c\Phi^2-C\epsilon(\partial_{t^*}\Phi)^2-C\epsilon(\partial_r\Phi)^2,$$
for $r\le r^-_Y$ and 
$$|\int_{\Sigma_\tau}J^{X_h,w^{X_h}}_{\mu}\left(\Phi\right)n^\mu_{\Sigma_\tau}|\leq C\int_{\Sigma_\tau}J^N_{\mu}\left(\Phi\right)n^\mu_{\Sigma_\tau}$$
and
$$|J^{X_h,w^{X_h}}_{\mu}\left(\Phi\right)n^\mu_{\mathcal H^+}|\leq CJ^{N_e}_{\mu}\left(\Phi\right)n^\mu_{\mathcal H^+}.$$
\end{proposition}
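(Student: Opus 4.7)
The plan is to deduce the result directly from the preceding Schwarzschild proposition by invoking the perturbation formalism of Section~\ref{currents} (the proposition on vector field multipliers under metric perturbations). Because $X_h$ and $w^{X_h}$ are supported in the compact region $\{r \leq \tfrac{23M}{8}\}$, bounded below away from $r=0$, with $X_h$ and $w^{X_h}$ smooth and all their derivatives bounded there, every factor involving $\max_\alpha |\partial^k X_h^\alpha|$ or $|\partial^m w^{X_h}|$ and every occurrence of $r^{-2}$ in the perturbation bounds is uniformly bounded. Thus the difference between the Schwarzschild and Kerr currents/densities is controlled by $O(\epsilon)$ times standard derivative and zeroth-order quantities in $\Phi$.

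For the bulk estimate, the Schwarzschild proposition gives
$$K^{X_h,w^{X_h}}_S(\Phi) \geq c\bigl((\partial_{r^*}\Phi)^2 + |\nabb\Phi|^2 + \Phi^2\bigr)$$
for $r \leq r^-_Y$, and the perturbation estimate yields
$$\bigl| K^{X_h,w^{X_h}}_K(\Phi) - K^{X_h,w^{X_h}}_S(\Phi) \bigr| \leq C\epsilon \sum_\alpha (\partial_\alpha \Phi)^2 + C\epsilon\, \Phi^2,$$
where $\partial_\alpha$ ranges over the Schwarzschild coordinate derivatives, which in the compact region are equivalent to the regular Kerr coordinate derivatives. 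The $C\epsilon\,\Phi^2$ error is absorbed into the positive $c\Phi^2$, and the Schwarzschild positivity of $(\partial_{r^*}\Phi)^2$ and $|\nabb\Phi|^2$ absorbs the $O(\epsilon)$ errors in $\partial_{r_S}\Phi$, $\partial_\theta\Phi$, $\partial_{\phi}\Phi$. The only derivative direction not controlled by the Schwarzschild positive terms is $\partial_{t_S}$ (equivalently $\partial_{t^*}$ modulo bounded perturbations of the coordinate change in Section 2.2), which yields the $-C\epsilon(\partial_{t^*}\Phi)^2$ term. Being pessimistic about the coordinate identification $\partial_{r^*} \leftrightarrow \partial_r$, we also allow a $-C\epsilon(\partial_r\Phi)^2$ term, giving the stated inequality.

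For the two boundary-type bounds, we apply the perturbation estimate for $(J^{X_h,w^{X_h}})_\mu n^\mu$. The Schwarzschild proposition provides
$$\Bigl|\int_{\Sigma_\tau}(J^{X_h,w^{X_h}}_S)_\mu(\Phi)\, n^\mu_{\Sigma_\tau}\Bigr| \leq C \int_{\Sigma_\tau}J^{N_e}_\mu(\Phi)\, n^\mu_{\Sigma_\tau}, \qquad \bigl|(J^{X_h,w^{X_h}}_S)_\mu n^\mu_{\mathcal H^+}\bigr| \leq C J^{N_e}_\mu n^\mu_{\mathcal H^+},$$
and the difference with the Kerr current is $\leq C\epsilon r^{-2}\max_\alpha|X_h^\alpha|\,(\partial\Phi)^2 \leq C\epsilon(\partial\Phi)^2$ in the compact support region. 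Since $J^{N_e}_\mu n^\mu_{\Sigma_\tau} \sim \sum_\alpha (\partial_\alpha\Phi)^2$ uniformly in this region (by the first proposition of Section~\ref{sectionN}), the error terms are dominated by a constant multiple of $\int J^{N_e}_\mu n^\mu$. For the horizon bound, the subtlety is that on the Kerr event horizon the null generator is $V = \partial_{t^*} + \tfrac{a}{2Mr_+}\partial_{\phi^*}$ rather than $\partial_{t_S}$, so the pointwise Schwarzschild identity $\partial_t = \partial_{r^*}$ on $\mathcal H^+$ holds only up to $O(\epsilon)$; this discrepancy is again absorbed since $J^{N_e}_\mu n^\mu_{\mathcal H^+}$ controls all tangential and transverse derivatives appearing in the null-frame decomposition of Section~2.3.

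The main obstacle is essentially bookkeeping rather than a genuine difficulty: one must verify carefully that the coordinate identifications of Section 2.2 allow the Schwarzschild-defined vector field $X_h = f_h(r^*_S)\partial_{r^*_S}$ and weight $w^{X_h}$ to be transported to Kerr so that the perturbation proposition of Section~\ref{currents} applies, and that the uniform bounds on $X_h^\alpha$ and their first two coordinate derivatives (together with the two derivatives of $w^{X_h}$) required by that proposition genuinely hold on the compact support region. Smoothness of $f_h$ and $w^{X_h}$ up to the event horizon, together with boundedness of $r$ away from zero and infinity, makes this straightforward.
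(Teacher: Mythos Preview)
Your proposal is correct and follows exactly the approach the paper takes: the paper simply remarks that since $X_h$ and $w^{X_h}$ are smooth up to the event horizon, the Kerr statement follows from the Schwarzschild proposition ``via a stability argument,'' and your write-up fleshes out precisely that stability argument using the perturbation proposition of Section~\ref{currents}. One small sharpening: the reason the $-C\epsilon(\partial_r\Phi)^2$ term must be retained is not merely pessimism about coordinates but the concrete fact that the Schwarzschild positive term controls only $(\partial_{r^*}\Phi)^2 \sim (1-\mu)^2(\partial_{r_S}\Phi)^2$, which degenerates at the horizon and therefore cannot absorb the non-degenerate $C\epsilon(\partial_{r_S}\Phi)^2$ error coming from the perturbation estimate.
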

Together with the red-shift, we therefore have the following integrated decay estimate near the event horizon. 
\begin{proposition}\label{Xhor}
\begin{equation*}
\begin{split}
&\iint_{\mathcal R(\tau_0,\tau)\cap\{r\leq r^-_Y\}}(\Phi^2+K^{N_e}\left(\Phi\right))\\
\leq&C\left(\int_{\Sigma_\tau} J^{N_e}_\mu\left(\Phi\right)n^\mu_{\Sigma_\tau}+\int_{\Sigma_{\tau_0}} J^{N_e}_\mu\left(\Phi\right)n^\mu_{\Sigma_{\tau_0}}+\iint_{\mathcal R(\tau_0,\tau)\cap\{r^-_Y\leq r\leq \frac{23M}{8}\}}(\Phi^2+J^{N_e}_\mu\left(\Phi\right)n^\mu_{\Sigma_{t^*}})\right.\\
&\left.+\iint_{\mathcal R(\tau_0,\tau)\cap\{r\leq \frac{23M}{8}\}}\left(|\partial_{r^*}\Phi|+r^{-1}|\Phi|\right)|G|+|\iint_{\mathcal R(\tau_0,\tau)\cap\{r\leq \frac{23M}{8}\}}(\partial_{t^*}\Phi+eY\Phi)G|\right).
\end{split}
\end{equation*}
\end{proposition}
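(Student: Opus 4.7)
The plan is to combine the divergence identity for the multiplier $X_h$ with modifying function $w^{X_h}$ together with a large multiple $\lambda$ of the $N_e$ identity from Proposition \ref{Nid}. The $X_h$ piece yields the desired $\Phi^2$-term on the left while introducing $O(\epsilon)$ derivative errors coming from the Kerr--Schwarzschild perturbation; the red-shift positivity of $K^{N_e}$ in $r \leq r^-_Y$ will absorb those errors and simultaneously supply the $K^{N_e}$-term on the left-hand side.

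First I would apply the divergence identities on $\mathcal R(\tau_0,\tau)$ and sum, producing on the left
$$\iint_{\mathcal R} K^{X_h,w^{X_h}}(\Phi) + \lambda \iint_{\mathcal R\cap\{r\leq r^-_Y\}} K^{N_e}(\Phi) + (\text{boundary fluxes on }\Sigma_\tau \text{ and } \mathcal H^+),$$
and on the right the initial flux, the transition integral $\lambda e \iint_{r^-_Y \leq r \leq r^+_Y} K^Y$, and the source terms $-\iint(X_h\Phi + \tfrac{1}{4}w^{X_h}\Phi)G$ (supported in $r \leq \tfrac{23M}{8}$) together with $-\lambda\iint(\partial_{t^*}\Phi + eY\Phi)G$ (to be cut off to $r \leq \tfrac{23M}{8}$ using a cutoff equal to $1$ on the support of $K^{N_e}$, i.e., on $r \leq r^+_Y$, thereby only generating additional transition-region errors). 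In $r \leq r^-_Y$, the Kerr bound $K^{X_h,w^{X_h}} \geq c\Phi^2 - C\epsilon\big((\partial_{t^*}\Phi)^2 + (\partial_r\Phi)^2\big)$ combined with the red-shift bound $K^{N_e} \geq c_p e\,(\partial\Phi)^2$-terms gives, once $e$ is fixed, $\lambda$ is chosen large depending on $e$, and then $\epsilon \ll \lambda^{-1}e$,
$$K^{X_h,w^{X_h}}(\Phi) + \lambda K^{N_e}(\Phi) \geq \tfrac{c}{2}\Phi^2 + \tfrac{\lambda}{2} K^{N_e}(\Phi).$$

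In the collar $r^-_Y \leq r \leq \tfrac{23M}{8}$ the integrand $K^{X_h,w^{X_h}}$ has no pointwise sign, but since $f_h$ and $w^{X_h}$ are bounded it is controlled by $C(\Phi^2 + J^{N_e}_\mu n^\mu_{\Sigma_{t^*}})$; moving this contribution, along with $\lambda e \iint K^Y$ (supported in $r^-_Y \leq r \leq r^+_Y$), to the right produces the stated transition-region error. The boundary fluxes are handled via the comparisons $|J^{X_h,w^{X_h}}_\mu n^\mu_{\Sigma_\tau}| \leq CJ^{N_e}_\mu n^\mu_{\Sigma_\tau}$ and $|J^{X_h,w^{X_h}}_\mu n^\mu_{\mathcal H^+}| \leq CJ^{N_e}_\mu n^\mu_{\mathcal H^+}$: for $\lambda \geq 2C$, the combined boundary terms on the left are non-negative and may be discarded, while the combined $\Sigma_{\tau_0}$-flux on the right contributes at most $C\lambda\int_{\Sigma_{\tau_0}}J^{N_e}_\mu n^\mu_{\Sigma_{\tau_0}}$. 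The source bounds then follow: the $X_h$-piece produces $\iint_{r\leq 23M/8}(|\partial_{r^*}\Phi|+r^{-1}|\Phi|)|G|$ and the cut-off $N_e$-piece produces $|\iint_{r\leq 23M/8}(\partial_{t^*}\Phi+eY\Phi)G|$.

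The hard part is the careful ordering of smallness parameters. One must fix $e$ first (defining $c_p$ and the sharp red-shift constants), then choose $\lambda$ large enough to dominate the $X_h$-boundary and transition-region constants independently of later choices, and finally pick $\epsilon \ll \lambda^{-1}e$ so that the metric-perturbation errors from Proposition \ref{commsta} and the derivative errors in the Kerr bound for $K^{X_h,w^{X_h}}$ are both harmless. This hierarchy is consistent with the convention $\delta < \epsilon \ll \eta \ll e$ stipulated in the Notations section; once these constants are frozen in that order, the remaining manipulations are routine applications of the identities of Section \ref{currents} together with Young's inequality.
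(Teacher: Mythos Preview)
Your approach is correct and is exactly the argument the paper leaves implicit (the proposition is stated there without proof, as following ``together with the red-shift'' from the preceding $X_h$-estimate on Kerr). One small refinement: the multiplier constant $\lambda$ can in fact be taken independent of $e$, since the boundary comparison $|J^{X_h,w^{X_h}}_\mu n^\mu|\le C J^{N_e}_\mu n^\mu$ holds with $C$ independent of $e$, and in $r\le r^-_Y$ the absorption $K^{N_e}-C\epsilon\big((\partial_{t^*}\Phi)^2+(\partial_r\Phi)^2\big)\ge 0$ only needs $\epsilon\ll e$; taking $\lambda$ universal makes the overall constant $C$ in the proposition uniform in small $e$, which is important because $e$ is fixed only later in Remark~\ref{fixr+e}.
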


\subsection{Cutoff, Decomposition and Separation}
Following \cite{DRL}, we define the cutoff
$$\Phi^\tau_{\tau'} =\xi\Phi, $$
where $$\xi=\chi(t^*-1-\tau)\chi(-t^*-1+\tau'),$$
for some smooth cutoff function $\chi(x)$ that is identically 1 for $x\leq -1$ and support on $\{x\leq 0\}$.
Then
$$\Box_g\Phi^\tau_{\tau'}=\xi G+2D^\alpha\Phi D_\alpha\xi+\Phi\Box_g\xi=:F.$$
We then decompose in frequency. We decompose the Fourier transform in $t$ of $\Phi$ into Fourier series in $\phi$ and oblate spheroidal harmonics:
$$\hat{\Phi}^\tau_{\tau'}=\displaystyle\sum_{m,\ell} R^\omega_{m\ell}(r)S_{m\ell}(a\omega,\cos\theta)e^{im\phi}.$$
We also decompose the inhomogeneous term $F$ (which comes both from the original inhomogeneous term $G$ and the cutoff):
$$\hat{F}=\displaystyle\sum_{m,\ell} F^\omega_{m\ell}(r)S_{m\ell}(a\omega,\cos\theta)e^{im\phi}.$$
Letting $\zeta$ be a sharp cutoff with such that $\zeta=1$ for $|x|\leq 1$ and $\zeta=0$ for $|x|>1$, we define 
$$\Phi_{\mbox{$\flat$}}=\int_{-\infty}^{\infty}\zeta(\frac{\omega}{\omega_1})\sum_{m,l:\lambda_{ml}(\omega)\leq\lambda_1}R^\omega_{ml}(r)S_{ml}(a\omega,\cos \theta )e^{im \phi}e^{i\omega t}d\omega$$
$$\Phi_{\lessflat}=\int_{-\infty}^{\infty}\zeta(\frac{\omega}{\omega_1})\sum_{m,l:\lambda_{ml}(\omega)>\lambda_1}R^\omega_{ml}(r)S_{ml}(a\omega,\cos \theta )e^{im \phi}e^{i\omega t}d\omega$$
$$\Phi_{\mbox{$\natural$}}=\int_{-\infty}^{\infty}(1-\zeta(\frac{\omega}{\omega_1}))\sum_{m,l:\lambda_{ml}(\omega)\geq\lambda_2\omega^2}R^\omega_{ml}(r)S_{ml}(a\omega,\cos \theta )e^{im \phi}e^{i\omega t}d\omega$$
$$\Phi_{\mbox{$\sharp$}}=\int_{-\infty}^{\infty}(1-\zeta(\frac{\omega}{\omega_1}))\sum_{m,l:\lambda_{ml}(\omega)<\lambda_2\omega^2}R^\omega_{ml}(r)S_{ml}(a\omega,\cos \theta )e^{im \phi}e^{i\omega t}d\omega.$$
In this decomposition, we think of $\omega_1$ as large and $\lambda_2$ as small.
\subsection{The Trapped Frequencies}
Trapping occurs for $\Phi_{\mbox{$\natural$}}$. An integrated decay estimate is proved in detail in \cite{DRL}. We refer the readers to Section 5.3.3 of \cite{DRL}. Notice that the first term on the right hand side in the following Proposition is different from that in \cite{DRL}, but the inequality still holds as a result of the proof of the corresponding inequality in \cite{DRL}.
\begin{proposition}\label{trapped}
\begin{equation*}
\begin{split}
&\iint_{\mathcal R(-\infty,\infty)}\left(\chi\Phi_{\mbox{$\natural$}}^2+\chi(\partial_r\Phi_{\mbox{$\natural$}})^2+\chi\mathbbm{1}_{\{|r- 3M|\geq \frac{M}{8}\}} J^N_\mu\left(\Phi_{\mbox{$\natural$}}\right)n^\mu_{\Sigma_{t^*}}\right)\\
\leq &C \int_{\mathcal H(-\infty,\infty)} \left(\partial_{t^*}\Phi_{\tau'}^\tau\right)^2+C\epsilon\int_{\mathcal H(-\infty,\infty)} \left(\partial_{\phi^*}\Phi_{\tau'}^{\tau}\right)^2+ \int_{-\infty}^\infty dt^*\int_{r\ge R}
\left(2f (r^2+a^2)^{1/2} F_{\mbox{$\natural$}} \partial_{r^*} ((r^2+a^2)^{1/2} \Phi_{\tau'}^\tau )
\right.\\
&\hskip2pc  \left.
+ f'(r^2+a^2)F_{\mbox{$\natural$}} 
\Phi_{\tau'}^\tau  \right)\frac{\Delta}{r^2+a^2}
 \sin \theta\,  d\phi \, d\theta \,  dr^* +\delta'\iint_{\mathcal R\cap\{r\leq R\}}(\Phi^\tau_{\tau'})^2+(\partial_{r^*}\Phi_{\tau'}^\tau)^2
 \\&+C(\delta')^{-1}\iint_{\mathcal R\cap\{r\leq R\}} F^2,
\end{split}
\end{equation*}
where $\chi$ is a weight that degenerates at infinity and near the event horizon and $f$ is increasing and $f=\tan^{-1}\frac{r^*-\alpha-\sqrt{\alpha}}{\alpha}-\tan^{-1}(-1-\alpha)^{-\frac{1}{2}},$ for $r>R$ for some fixed $\alpha$.
\end{proposition}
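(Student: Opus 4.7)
The plan is to prove this frequency-localized Morawetz estimate by reducing via Carter's separation to a family of one-dimensional problems in $r^*$, applying a radial multiplier adapted to the Kerr trapping structure, and reassembling in physical space via Plancherel. After taking the Fourier transform of $\Phi^\tau_{\tau'}$ in $t^*$, expanding in oblate spheroidal harmonics $S_{m\ell}(a\omega,\cos\theta)e^{im\phi}$, and rescaling $u = (r^2+a^2)^{1/2} R^\omega_{m\ell}(r)$, the inhomogeneous equation $\Box_g \Phi^\tau_{\tau'} = F$ separates into a family of Schr\"odinger-type ODEs of the form $u'' + (\omega^2 - V_{m\ell}(\omega,r)) u = H^\omega_{m\ell}(r)$, where $'=\partial_{r^*}$, $V_{m\ell}$ is the Kerr radial potential, and $H^\omega_{m\ell}$ is the corresponding coefficient of $(r^2+a^2)^{1/2}F/\Delta$. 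I restrict attention to the trapped regime $\lambda_{m\ell}(\omega)\geq \lambda_2 \omega^2$, $|\omega|\geq \omega_1$.

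For each such mode I apply the multiplier $fu' + \tfrac{1}{2} f' u$, take the real part, and integrate by parts over $r^*\in(-\infty,\infty)$. Modulo boundary terms and the inhomogeneous contribution, the bulk is schematically
\[
\int_{-\infty}^{\infty}\left(f'\,|u'|^2 \;-\;\tfrac{1}{2}f\,V'_{m\ell}(\omega,r)\,|u|^2 \;-\;\tfrac{1}{4}f'''\,|u|^2\right) dr^*.
\]
Since $f$ is taken monotone increasing, $f'>0$. The key trapping fact is that in the range $\lambda_{m\ell}(\omega)\geq\lambda_2 \omega^2$ the potential $V_{m\ell}(\omega,\cdot)$ has a single non-degenerate maximum at an $\epsilon$-perturbation $r^*_{m\ell}(\omega)$ of the Schwarzschild photon sphere $r=3M$. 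Choosing $f$ to vanish at this maximum (which is the role of the shift by $\sqrt{\alpha}$ in the stated $f$) gives $-f V'_{m\ell}\geq 0$ pointwise, and the $-\tfrac14 f'''$ term provides zeroth-order positivity away from $r^*=\pm\infty$. Summing in $(m,\ell,\omega)$ via Plancherel, this reproduces the bulk terms $\chi \Phi_\natural^2$, $\chi(\partial_r\Phi_\natural)^2$ and $\chi\mathbbm{1}_{\{|r-3M|\geq M/8\}} J^N_\mu(\Phi_\natural)n^\mu$ on the left, with $\chi$ allowed to degenerate as $r^*\to\pm\infty$ and at the single point $r=3M$ where the Morawetz weight vanishes.

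The main obstacle is the boundary term at the horizon, where superradiance threatens to produce the wrong sign. Using the asymptotic $u\sim A^- e^{-i(\omega-m\omega_+)r^*}$ as $r^*\to-\infty$, the horizon boundary contribution is proportional to $(\omega - m\omega_+)^2|A^-|^2$, which reassembles in physical space as $\int_{\mathcal H}(\hat V\Phi^\tau_{\tau'})^2$. I split
\[
(\omega-m\omega_+)^2\;\leq\;(1+\delta)\omega^2\;+\;C\delta^{-1}m^2\omega_+^2,
\]
and use $\omega_+ = a/(2Mr_+) = O(\epsilon)$ for slowly rotating Kerr so that the second piece becomes precisely the $C\epsilon\int_{\mathcal H}(\partial_{\phi^*}\Phi^\tau_{\tau'})^2$ term in the statement, while the first piece becomes the $C\int_{\mathcal H}(\partial_{t^*}\Phi^\tau_{\tau'})^2$ term. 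At $r^*\to+\infty$, $f$ is bounded (converging to $\pi/2$ minus a constant) so the boundary term vanishes, and the inhomogeneous contribution $-2\operatorname{Re}\int(fu'+\tfrac12 f'u)\overline{H^\omega_{m\ell}}\,dr^*$, after summing and restriction to $\{r\geq R\}$, is exactly the $F_\natural$ integral on the right-hand side.

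The remaining error terms, arising from bulk contributions in $\{r\leq R\}$ where the separation-of-variables multiplier cannot be controlled by a positive quadratic form alone (in particular the inhomogeneous term tested against the multiplier in this region and the mismatch between $\Phi_\natural$ and $\Phi^\tau_{\tau'}$ coming from the low-frequency/bounded-$\lambda$ cutoffs defining $\Phi_\natural$), are handled by Cauchy--Schwarz: they are absorbed into $\delta'\iint_{\mathcal R\cap\{r\leq R\}}((\Phi^\tau_{\tau'})^2 + (\partial_{r^*}\Phi^\tau_{\tau'})^2)$ plus $C(\delta')^{-1}\iint_{\mathcal R\cap\{r\leq R\}} F^2$. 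This yields the last two terms in the stated inequality. The departure from \cite{DRL} noted in the proposition is confined to this step, where the first bulk term on the right is replaced by the integrated $(\partial_{t^*}\Phi^\tau_{\tau'})^2$ on the horizon.
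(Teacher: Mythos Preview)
Your sketch is correct and follows precisely the approach of \cite{DRL}, Section~5.3.3, to which the paper simply defers for its proof; the only modification the paper flags (and which you correctly locate) is that the horizon boundary term is kept as $C\int_{\mathcal H}(\partial_{t^*}\Phi^\tau_{\tau'})^2 + C\epsilon\int_{\mathcal H}(\partial_{\phi^*}\Phi^\tau_{\tau'})^2$ rather than bounded by initial data via a boundedness statement that is not yet available at this point in the argument. One minor clarification: the explicit $\tan^{-1}$ formula for $f$ in the statement is only its frequency-independent tail for $r>R$; in $\{r\leq R\}$ the multiplier is constructed mode-by-mode to vanish at the frequency-dependent potential maximum $r^*_{m\ell}(\omega)$, and it is this interior piece---not the $\sqrt{\alpha}$ shift in the far-region formula---that produces the trapping degeneracy at $r\sim 3M$.
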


\subsection{The Untrapped Frequencies}
For each of the pieces that are untrapped, i.e., $\Phi_{\eighthnote}$ for $\hbox{\eighthnote}={\mbox{$\flat$}}, {\lessflat} \mbox{ or } {\mbox{$\sharp$}}$, a vector field $X_{\eighthnote}$ is constructed in \cite{DRL} so that
$$\iint_{\mathcal R(-\infty,\infty)}\chi\left( J^{N_e}_\mu\left(\Phi_{\eighthnote}\right)n^\mu_{\Sigma_{t^*}}+\Phi_{\eighthnote}^2\right)\leq C\iint_{\mathcal R(-\infty,\infty)} K^{X_{\eighthnote}}\left(\Phi_{\eighthnote}\right),$$
where $\chi$ is a weight function that both degenerates at infinity and vanishes around the event horizon. Using this vector field and the conservation identity, it is shown in Section 5.3.4 in \cite{DRL} that
\begin{proposition}\label{untrapped}
\begin{equation*}
\begin{split}
&\iint_{\mathcal{R}(-\infty,\infty)} \chi\left(
(J^{N_e}_\mu(\Phi_{\mbox {$\flat$}})+  J^{N_e}_\mu(\Phi_{\lessflat})+J^N_\mu(\Phi_{\mbox {$\sharp$}})
 )n^\mu_{\Sigma_\tau}+\left(\Phi_{\mbox {$\flat$}}^2+\Phi_{\lessflat}^2+\Phi_{\mbox {$\sharp$}}^2\right)\right)\\
 \le& 
C \int_{\mathcal H(-\infty,\infty)}J^{N_e}_\mu(\Phi) n_{\mathcal H^+}^\mu  +  C (\delta')^{-1} \iint_{\mathcal{R}(-\infty,\infty)\cap \{r\le R\}}F^2
    \\
&
 +   C \delta' \iint_{\mathcal{R}(-\infty,\infty)\cap \{r\le R\}}
(\Phi_{\tau'}^\tau)^2+(\partial_{r^*}\Phi_{\tau'}^\tau)^2 +\mathbbm 1_{\{r\leq \frac{23M}{8}\}} J^N_\mu(\Phi_{\tau'}^\tau) n_{\Sigma_{t^*}}^\mu\\
&
+ \int_{-\infty}^\infty dt^*\int_{\{r\ge R\}}
\left(2f (r^2+a^2)^{1/2} (F_{\mbox {$\flat$}}+F_{\lessflat}+F_{\mbox {$\sharp$}})
\partial_{r^*} ((r^2+a^2)^{1/2} \Phi_{\tau'}^\tau)
\right.\\
&  \left.
+ f'(r^2+a^2)(F_{\mbox {$\flat$}}+F_{\lessflat}+F_{\mbox {$\sharp$}})
\Phi_{\tau'}^\tau  \right)\frac{\Delta}{r^2+a^2}
 \sin \theta\,  d\phi \, d\theta \,  dr^* ,
\end{split}
\end{equation*}
where $\chi$ and $f$ are exactly as in Proposition \ref{trapped}.
\end{proposition}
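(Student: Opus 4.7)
My approach is to reduce the claim, for each untrapped frequency regime, to a multiplier identity applied to the cutoff function $\Phi^\tau_{\tau'}$, exactly in the spirit of Dafermos--Rodnianski. Since $\Phi^\tau_{\tau'}$ is compactly supported in $t^*$, its Fourier transform is well defined and one may decompose each of $\Phi_{\mbox{$\flat$}}, \Phi_{\lessflat}, \Phi_{\mbox{$\sharp$}}$ at the level of the separated radial ODE for $R^{\omega}_{m\ell}$. The coercive multiplier $X_{\eighthnote}$ is constructed in \cite{DRL} at this ODE level and, after Plancherel, yields the physical-space coercivity
\[ \iint_{\mathcal R(-\infty,\infty)}\chi\bigl(J^{N_e}_\mu(\Phi_{\eighthnote})n^\mu_{\Sigma_{t^*}}+\Phi_{\eighthnote}^2\bigr) \leq C\iint_{\mathcal R(-\infty,\infty)} K^{X_{\eighthnote}}(\Phi_{\eighthnote}) \]
already cited in the paper. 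The strategy is then to apply the energy identity for $\Box_g \Phi^\tau_{\tau'}=F$ against each $X_{\eighthnote}$ in the slab $\{-\infty<t^*<\infty\}$ and collect the boundary and inhomogeneous terms.

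\textbf{Boundary and inhomogeneous terms.} Because $\Phi^\tau_{\tau'}$ vanishes outside a compact time interval, the only boundary contribution from the divergence theorem is the flux on $\mathcal H^+(-\infty,\infty)$. By the construction of $X_{\eighthnote}$ in \cite{DRL} this flux is dominated by $C J^{N_e}_\mu(\Phi) n^\mu_{\mathcal H^+}$ at the level of the full field, after using Parseval to reassemble the frequency-localized horizon contributions. For the bulk inhomogeneous term $\iint F_{\eighthnote}\, X_{\eighthnote}\Phi_{\eighthnote}$ I would split the $r$-support into $\{r\leq R\}$ and $\{r\geq R\}$. On $\{r\leq R\}$ the vector field $X_{\eighthnote}$ is bounded, and Cauchy--Schwarz with parameter $\delta'$ together with the orthogonality $\sum_{\eighthnote}\iint F_{\eighthnote}^2\leq \iint F^2$ yields
\[ \Bigl|\sum_{\eighthnote}\iint_{r\leq R} F_{\eighthnote}\,X_{\eighthnote}\Phi_{\eighthnote}\Bigr| \leq C(\delta')^{-1}\iint_{r\leq R} F^2 + \delta' \iint_{r\leq R}\bigl((\Phi^\tau_{\tau'})^2+(\partial_{r^*}\Phi^\tau_{\tau'})^2+\mathbbm{1}_{\{r\leq \tfrac{23M}{8}\}}J^{N_e}_\mu(\Phi^\tau_{\tau'})n^\mu_{\Sigma_{t^*}}\bigr), \]
which is exactly the structure stated in the conclusion. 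On $\{r\geq R\}$, \cite{DRL} arranges $X_{\eighthnote}$ to coincide with the single multiplier $f(r^*)(r^2+a^2)^{-1/2}\partial_{r^*}((r^2+a^2)^{1/2}\,\cdot\,)$ that also handles the trapped regime in Proposition \ref{trapped}; the three frequency pieces then reassemble under Parseval and produce the infinity-boundary expression in $F_{\mbox{$\flat$}}+F_{\lessflat}+F_{\mbox{$\sharp$}}$ recorded on the right-hand side.

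\textbf{Assembly.} Summing the three identities over $\hbox{\eighthnote}\in\{\mbox{$\flat$},\lessflat,\mbox{$\sharp$}\}$, transporting the nonnegative bulk coercivity to the left-hand side, and inserting the boundary and inhomogeneous bounds gives exactly the claimed estimate. The small-$\delta'$ error terms match in form those in Proposition \ref{trapped}, so that when the two propositions are later combined with the multiplier identities of Propositions \ref{Xinf} and \ref{Xhor}, the $\delta'$-errors can be absorbed into the positive bulk integrand of the final integrated decay estimate by choosing $\delta'$ small enough.

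\textbf{Main obstacle.} The substantive content lies not in this proposition itself but in the existence and coercivity of the frequency-localized multipliers $X_{\eighthnote}$ established in \cite{DRL}: at the separated ODE level one must design a radial vector field whose deformation tensor, tested against the energy-momentum tensor of the relevant regime and integrated against the oblate spheroidal harmonics, is positive, and the design has to be adapted separately to the three regimes (low $\omega$; low $\omega$ with large angular parameter; high $\omega$ with small angular parameter). A secondary technical subtlety is the behavior at the horizon: because the ergoregion renders $\partial_{t^*}$ spacelike near $\mathcal H^+$, the flux of $X_{\eighthnote}$ must be arranged to be controlled by the non-degenerate $J^{N_e}$-flux rather than by the sign-indefinite flux of $\partial_{t^*}$, which is why $J^{N_e}$ appears explicitly on the right-hand side of the proposition.
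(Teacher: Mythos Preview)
Your proposal is correct and follows essentially the same approach as the paper: both defer the construction and coercivity of the frequency-localized multipliers $X_{\eighthnote}$ to Section 5.3.4 of \cite{DRL}, and both identify as the only modification that the horizon flux $\int_{\mathcal H} J^{X_{\eighthnote}}_\mu(\Phi_{\eighthnote})n^\mu_{\mathcal H^+}$ must here be bounded directly by $\int_{\mathcal H} J^{N_e}_\mu(\Phi)n^\mu_{\mathcal H^+}$ rather than by the initial energy (since boundedness has not yet been established). The paper adds only the concrete justification that this horizon bound holds with a constant independent of $e$ because $X_{\eighthnote}=f\partial_{r^*}$ and on $\mathcal H^+$ one has $\partial_{r^*}=O(1)\hat V+O(\epsilon)E_A$.
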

\begin{proof}
This inequality is essentially borrowed from Section 5.3.4 in \cite{DRL}. The only difference is the first term on the right hand side of the inequality. In \cite{DRL}, the estimate
$$\int_{\mathcal H(-\infty,\infty)} J^{X_{\eighthnote}}_{\mu}(\Phi_{\eighthnote}) n_{\mathcal H^+}^\mu \leq C\int_{\Sigma_{\tau'}}J^N_\mu(\Phi^\tau_{\tau'}) n^\mu_{\Sigma_{\tau'}}$$
is used. Here, we have not proved boundedness of the solution and hence we are content with the estimate
$$\int_{\mathcal H(-\infty,\infty)} J^{X_{\eighthnote}}_{\mu}(\Phi_{\eighthnote}) n_{\mathcal H^+}^\mu \leq C\int_{\mathcal H(-\infty,\infty)}J^{N_e}_\mu(\Phi^\tau_{\tau'}) n_{\mathcal H^+}^\mu.$$
Notice that this estimate holds for $C$ independent of $e$ because $X_{\eighthnote}$ is constructed as $f\partial_{r^*}$ and on the event horizon, $\partial_{r^*}=O(1)\hat V+O(\epsilon)E_A$.
\end{proof}

\subsection{The Integrated Decay Estimates}
In order to add up the estimates in the previous sections, we need a Hardy-type inequality:
\begin{proposition}\label{hi}
For $R'<R$,
$$\int_{\Sigma_\tau\cap\{r\geq R\}} r^{\alpha-2}\Phi^2 \leq C\int_{\Sigma_\tau\cap\{r\geq R'\}} r^{\alpha}J^{N_e}_\mu\left(\Phi\right)n^\mu_{\Sigma_\tau}.$$
\end{proposition}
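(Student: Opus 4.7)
The statement is a standard weighted Hardy inequality on the spacelike slice $\Sigma_\tau$; the plan is to reduce it to a 1D problem in the radial variable and integrate by parts, using a radial cutoff to implement the gap between $R'$ and $R$.

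First, I would reduce to a one-dimensional statement. In the region $r\geq R'$, the volume element on $\Sigma_\tau$ is equivalent to $r^2\sin\theta\,dr\,d\theta\,d\phi$ up to constants (Kerr is asymptotically flat), and since $N_e$ reduces to $\partial_{t^*}$ away from the horizon, $J^{N_e}_\mu n^\mu_{\Sigma_\tau}\gtrsim (\partial_r\Phi)^2+r^{-2}|\nabb\Phi|^2$ pointwise. It therefore suffices to establish, for each fixed $(\theta,\phi)$,
$$\int_R^\infty r^\alpha \Phi^2\,dr \leq C\int_{R'}^\infty r^{\alpha+2}(\partial_r\Phi)^2\,dr,$$
and then integrate against $\sin\theta\,d\theta\,d\phi$.

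Second, I would fix a smooth nondecreasing cutoff $\chi(r)$ with $\chi\equiv 0$ on $[0,R']$ and $\chi\equiv 1$ on $[R,\infty)$. By density (approximating the data by smooth compactly supported data, for which the Cauchy development has sufficient decay at infinity) we may assume that $\Phi$ decays enough that boundary contributions at infinity vanish. For $\alpha\neq -1$, write $r^\alpha=(\alpha+1)^{-1}\partial_r(r^{\alpha+1})$ and integrate by parts:
$$\int_{R'}^\infty r^\alpha(\chi\Phi)^2\,dr=-\frac{2}{\alpha+1}\int_{R'}^\infty r^{\alpha+1}\chi\Phi\,\partial_r(\chi\Phi)\,dr,$$
the lower boundary term vanishing thanks to $\chi(R')=0$. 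Applying Cauchy-Schwarz with the weight split $r^{\alpha+1}=r^{\alpha/2}\cdot r^{\alpha/2+1}$ and absorbing the LHS factor yields
$$\int_R^\infty r^\alpha \Phi^2\,dr\leq \int r^\alpha(\chi\Phi)^2\,dr \leq \frac{4}{(\alpha+1)^2}\int r^{\alpha+2}\bigl(\partial_r(\chi\Phi)\bigr)^2\,dr.$$
Expanding $\partial_r(\chi\Phi)=\chi'\Phi+\chi\partial_r\Phi$ splits the right-hand side into the desired main term $\int r^{\alpha+2}\chi^2(\partial_r\Phi)^2\,dr\leq C\int_{\Sigma_\tau\cap\{r\geq R'\}}r^\alpha J^{N_e}_\mu n^\mu$ plus a compactly supported error $\int_{R'}^R r^{\alpha+2}(\chi')^2\Phi^2\,dr$.

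The main obstacle is to absorb this cutoff error, because $J^{N_e}_\mu n^\mu$ does not control $\Phi^2$ directly on the annulus $[R',R]$. To handle it I would choose $\chi$ refined enough so that $(\chi')^2\lesssim \chi$ near $R'$ (for instance, $\chi(r)\sim((r-R')/(R-R'))^2$ on a left neighborhood of $R$), then bound $\Phi(r)^2$ for $r\in[R',R]$ via the mean-value identity
$$\Phi(r)^2\leq 2\Phi(s)^2+2(2R-R')\int_{R'}^{2R}(\partial_r\Phi)^2\,ds,\qquad s\in[R,2R],$$
and average $s$ over $[R,2R]$; since $\chi\equiv 1$ there, the averaged $\Phi(s)^2$ term becomes a bounded multiple of $R^{-1}\int_R^{2R}\Phi^2\,ds$, which is in turn $\leq C_{R,\alpha}\int_R^\infty r^\alpha\Phi^2\,ds$ and can be absorbed into the LHS after iterating the Hardy argument with a slightly larger baseline. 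The constant $C$ ends up depending on $R$, $R'$, and $\alpha$, which is consistent with the statement.
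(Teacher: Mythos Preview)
Your overall strategy---integrate by parts in $r$ and apply Cauchy--Schwarz---is the same as the paper's, but you have introduced a cutoff $\chi$ that creates a spurious error term, and your proposed absorption of that error does not close. Tracking the constants, after your averaging step the contribution $\bigl(\int_{R'}^R r^{\alpha+2}(\chi')^2\,dr\bigr)\cdot R^{-1}\int_R^{2R}\Phi^2$ is comparable to $\frac{R}{R-R'}\int_R^\infty r^\alpha\Phi^2$, with coefficient of order at least one; the hoped-for ``absorption into the LHS after iterating with a larger baseline'' therefore fails as stated, and iterating on dyadic baselines $2^nR$ does not help since the recursion constant stays above $1$.

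The paper avoids this entirely by not cutting off: it sets $k(r)=\int_{R'}^r s^{\alpha-2}\,\mathrm{vol}(s)\,ds$, so that $k(R')=0$ and the boundary term at $R'$ vanishes automatically. One integration by parts on $[R',\infty)$ gives $\int_{R'}^\infty k'\Phi^2=-2\int_{R'}^\infty k\,\Phi\,\partial_r\Phi$, and Cauchy--Schwarz with the splitting $k=(k')^{1/2}\cdot k/(k')^{1/2}$ yields $\int_{R'}^\infty k'\Phi^2\leq 4\int_{R'}^\infty \frac{k^2}{k'}(\partial_r\Phi)^2$; since $k'\sim r^\alpha$ and $k^2/k'\sim r^{\alpha+2}\sim r^\alpha\,\mathrm{vol}$, this is exactly the claim. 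If you wish to keep the cutoff route, the clean fix is to bound the error term directly by the right-hand side rather than by the left: for $r\in[R',R]$ and $\alpha>-1$, write $\Phi(r)=-\int_r^\infty\partial_r\Phi$, apply Cauchy--Schwarz with weights $s^{\mp(\alpha+2)/2}$ to get $\Phi(r)^2\leq(\alpha+1)^{-1}r^{-\alpha-1}\int_{R'}^\infty s^{\alpha+2}(\partial_r\Phi)^2$, and then $\int_{R'}^R r^{\alpha+2}(\chi')^2\Phi^2$ is bounded by $C_{R,R',\alpha}\int_{R'}^\infty r^{\alpha+2}(\partial_r\Phi)^2$ with no absorption needed.
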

\begin{proof}
Let $k(r)$ be defined by solving $$k'(r,\theta,\phi)=r^{\alpha-2}vol,$$ where $vol=vol\left(r,\theta,\phi\right)$ is the volume density on $\Sigma_\tau$ with $r, \theta,\phi$ coordinates, with boundary condition $k(R',\theta,\phi)=0$.
Now
\begin{equation*}
\begin{split}
\int_{\Sigma_\tau} r^{\alpha-2}\Phi^2=& \iiint_{r_+}^{\infty} k'(r)\Phi^2 dr d\theta d\phi\\
=& -2\iiint k(r)\Phi\partial_r\Phi dr d\theta d\phi\\
\leq &2\left(\iiint \frac{k(r)^2}{k'(r)}\left(\partial_r \Phi\right)^2 dr d\theta d\phi\right)^{\frac{1}{2}}\left(\iiint k'(r)\Phi^2 dr d\theta d\phi\right)^{\frac{1}{2}}
\end{split}
\end{equation*}
Notice that $vol \sim r^2$, $k(r)\sim r^{\alpha+1}$ and $k'(r)\sim r^{\alpha}$. Hence $\frac{1+k(r)^2}{1+k'(r)}\sim r^{\alpha}vol$. The lemma follows.
\end{proof}
We now add up the estimates for $\Phi_{\mbox {$\flat$}}$, $\Phi_{\lessflat}$, $\Phi_{\mbox {$\natural$}}$ and $\Phi_{\mbox {$\sharp$}}$.
\begin{proposition}\label{ide}
\begin{equation*}
\begin{split}
&\iint_{\mathcal R(\tau',\tau)}\left(r^{-1-\delta}\mathbbm{1}_{\{|r- 3M|\geq \frac{M}{8}\}}J^{N_e}_\mu\left(\Phi\right)n^\mu_{\Sigma_\tau}+r^{-1-\delta}\left(\partial_r\Phi\right)^2+r^{-3-\delta}\Phi^2\right)\\
\leq& C\left(\int_{\Sigma_\tau}J^{N_e}_\mu\left(\Phi\right)n^\mu_{\Sigma_\tau}+\int_{\Sigma_{\tau'}}J^{N_e}_\mu\left(\Phi\right)n^\mu_{\Sigma_{\tau'}}+\int_{\mathcal H(\tau',\tau)}J^{N_e}_\mu\left(\Phi\right)n^\mu_{\mathcal H^+}\right.\\
&\left.+C\iint_{\mathcal R(\tau'-1,\tau+1)}\left(|\partial_{r^*}\Phi|+r^{-1}|\Phi|\right)|G|+C|\iint_{\mathcal R(\tau'-1,\tau+1)\cap\{r\leq \frac{23M}{8}\}}(\partial_{t^*}\Phi+eY\Phi)G|\right.\\
&\left.
 +C\iint_{\mathcal R(\tau'-1,\tau+1)}G^2\right).
\end{split}
\end{equation*}
\end{proposition}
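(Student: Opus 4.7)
My plan is to assemble the four frequency-localized estimates (Propositions \ref{trapped} and \ref{untrapped}) into a bulk integrated decay estimate for the cutoff $\Phi^\tau_{\tau'}$, then glue this to the near-horizon and near-infinity estimates (Propositions \ref{Xhor} and \ref{Xinf}), and finally use Propositions \ref{hi} and \ref{mg} to clean up the error terms produced by the cutoff.

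The first step is to sum Propositions \ref{trapped} and \ref{untrapped}. The decomposition $\Phi^\tau_{\tau'}=\Phi_{\mbox{$\flat$}}+\Phi_{\lessflat}+\Phi_{\mbox{$\natural$}}+\Phi_{\mbox{$\sharp$}}$, combined with Plancherel in $\omega$, $\phi^*$ and with the orthonormality of the oblate spheroidal harmonics, reassembles $\sum\iint\chi(r)\Phi_{\eighthnote}^2$ into $\iint\chi(r)(\Phi^\tau_{\tau'})^2$, and similarly for $(\partial_{r^*}\Phi^\tau_{\tau'})^2$ modulo $O(\epsilon)$ errors coming from the $\omega$-dependence of the $S_{m\ell}$ basis. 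This yields
\begin{equation*}
\iint_{\mathcal R(-\infty,\infty)}\!\!\!\chi\Bigl(\mathbbm{1}_{\{|r-3M|\geq M/8\}}J^{N_e}_\mu(\Phi^\tau_{\tau'})n^\mu+(\partial_{r^*}\Phi^\tau_{\tau'})^2+(\Phi^\tau_{\tau'})^2\Bigr)\le C\!\!\!\int_{\mathcal H(-\infty,\infty)}\!\!\!\!\!\!J^{N_e}_\mu(\Phi)n^\mu+C\iint F^2+\mathcal B,
\end{equation*}
where $\mathcal B$ is the large-$r$ boundary term appearing in Propositions \ref{trapped}-\ref{untrapped}. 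The small-$\delta'$ spacetime terms on the RHS are absorbed into the LHS by picking $\delta'$ small, which is legitimate because $\chi\sim 1$ on the compact bulk slab and in the trapped annulus $|r-3M|<M/8$ both $(\partial_{r^*}\Phi^\tau_{\tau'})^2$ and $(\Phi^\tau_{\tau'})^2$ already sit on the LHS.

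Next I remove the cutoff. On $\mathcal R(\tau',\tau)$ we have $\xi\equiv 1$, so the LHS restricted to this region gives the desired bulk quantity. Since $D\xi$ and $\Box_g\xi$ are supported in $[\tau'-1,\tau']\cup[\tau,\tau+1]$,
\begin{equation*}
\iint F^2\le C\iint_{\mathcal R(\tau'-1,\tau+1)}\!\!\!\!\!G^2+C\!\int_{[\tau'-1,\tau']\cup[\tau,\tau+1]}\!\!\!\!\!\!\int_{\Sigma_{t^*}}\bigl(J^{N_e}_\mu(\Phi)n^\mu+\Phi^2\bigr).
\end{equation*}
Proposition \ref{mg} dominates the $J^{N_e}$ slab piece by $\int_{\Sigma_{\tau'}}J^{N_e}+\int_{\Sigma_\tau}J^{N_e}+\iint G^2$, and Proposition \ref{hi} handles the $\Phi^2$ piece on the compact $r$-annulus. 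The large-$r$ boundary term $\mathcal B$ is disposed of by Cauchy-Schwarz, using that $f,f'$ are bounded and that $(r^2+a^2)\Delta/(r^2+a^2)^2\sim 1$ for $r\geq R$, producing exactly the $\iint(|\partial_r\Phi|+r^{-1}|\Phi|)|G|$ integrand in the conclusion. Finally I glue in the endpoints: Proposition \ref{Xinf} supplies the $r^{-1-\delta}$-weighted $J^{N_e}$ bound and the $r^{-3-\delta}\Phi^2$ bound on $\{r\geq R\}$ (with its stray $\epsilon r^{-2}(\partial_{t^*}\Phi)^2$ term absorbed by the Step 1 bulk estimate), Proposition \ref{Xhor} supplies the $\Phi^2$ and $K^{N_e}$ control on $\{r\le r_Y^-\}$, and the intermediate compact slab is already handled by Step 1 since $\chi\sim 1$ and $r\sim 1$ there; Proposition \ref{hi} upgrades the bare $\Phi^2$ bound at $r\geq R$ to the weighted $r^{-3-\delta}\Phi^2$ form.

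The most delicate point will be the cutoff bookkeeping: the terms produced by $D\xi$ and $\Box_g\xi$ live on two unit-length time slabs at the endpoints $\tau',\tau$, contain first derivatives and zeroth-order values of $\Phi$, and must be controlled using mild growth without introducing a factor that grows in the length of the interval $[\tau',\tau]$. Keeping the cross-absorption of the $\delta'$-small spacetime terms from Propositions \ref{trapped} and \ref{untrapped} simultaneously compatible with the slab contributions is the main accounting task; this is manageable because all absorptions occur on regions where $\chi$ does not degenerate, so a single small $\delta'$ (chosen after the constants $C$ have been fixed) suffices.
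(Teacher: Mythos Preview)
Your overall architecture matches the paper's proof, but there is a genuine gap in your treatment of the large-$r$ boundary term $\mathcal B$. You claim that Cauchy--Schwarz on $\mathcal B$ produces exactly $\iint(|\partial_r\Phi|+r^{-1}|\Phi|)|G|$; this is only correct for the $\xi G$ piece of $F$. The contribution from $\Phi\,\Box_{g_K}\xi$ in the $2f$-term (supported on the two unit-length endpoint slabs, but extending over all $r\ge R$) does \emph{not} have sufficient decay in $r$: after unwinding the measure the integrand is of size $\Phi\,\partial_{r^*}\Phi$ with no $r$-weight, so a naive Cauchy--Schwarz leaves you needing to bound $\int_{\Sigma_{t^*}}\Phi^2$ without weight. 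The Hardy inequality of Proposition~\ref{hi} only delivers $\int r^{-2}\Phi^2\le C\int J^{N_e}$, which is one power of $r$ short.

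The paper singles this term out explicitly (``one of the terms in the expansion does not have sufficient decay in $r$'') and resolves it by an integration by parts in $r^*$: since $\xi$ depends only on $t^*$, one writes $2(r^2+a^2)^{1/2}\Phi\,\partial_{r^*}\bigl((r^2+a^2)^{1/2}\Phi\bigr)=\partial_{r^*}\bigl((r^2+a^2)\Phi^2\bigr)$ and moves the $r^*$-derivative onto $f\frac{\Delta}{r^2+a^2}$. That factor is $O(r^{-2})$ upon differentiation (both $f'$ and $\partial_{r^*}\frac{\Delta}{r^2+a^2}$ contribute $r^{-2}$), converting the troublesome term into $\iint_{\mathrm{slab}} r^{-2}\Phi^2$, which \emph{is} controlled by Propositions~\ref{hi} and~\ref{mg}. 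The boundary term at $r=R$ is handled by pigeonholing in $r\in[R,R+1]$. You should insert this step before declaring $\mathcal B$ handled; the rest of your outline is sound and follows the paper's route.
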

\begin{proof}
Since the function $f$ appears identically in Propositions \ref{trapped} and \ref{untrapped}, we can add up the estimates to obtain:
\begin{equation*}
\begin{split}
&\iint_{\mathcal{R}(-\infty,\infty)} \left(\chi
(J^{N_e}_\mu(\Phi_{\mbox {$\flat$}})+  J^{N_e}_\mu(\Phi_{\lessflat})+\mathbbm 1_{\{|r-3M|\geq\frac{M}{8}\}}J^{N_e}_\mu(\Phi_{\mbox {$\natural$}})+J^{N_e}_\mu(\Phi_{\mbox {$\sharp$}})
 )n^\mu_{\Sigma_\tau}+{\chi}\left(\Phi_{\mbox {$\flat$}}^2+\Phi_{\lessflat}^2+\Phi_{\mbox {$\natural$}}^2+\Phi_{\mbox {$\sharp$}}^2\right)\right)\\
 \le& 
C \int_{\mathcal H(\tau'-1,\tau+1)}J^{N_e}_\mu(\Phi) n^\mu_{\mathcal H^+}   +  C (\delta')^{-1} \iint_{\mathcal{R}(-\infty,\infty)\cap \{r\le R\}}F^2
    \\
&
 +   C \delta' \iint_{\mathcal{R}(-\infty,\infty)\cap \{r\le R\}}
(\Phi_{\tau'}^\tau)^2+(\partial_{r^*}\Phi_{\tau'}^\tau)^2 +\mathbbm 1_{\{r\leq\frac{23M}{8}\}} J^{N_e}_\mu(\Phi_{\tau'}^\tau) n^\mu_{\Sigma_{t^*}}\\
&
+ \int_{-\infty}^\infty dt^*\int_{\{r\ge R\}}
\left(2f (r^2+a^2)^{1/2} (F_{\mbox {$\flat$}}+F_{\lessflat}+F_{\mbox {$\natural$}}+F_{\mbox {$\sharp$}})
\partial_{r^*} ((r^2+a^2)^{1/2} \Phi_{\tau'}^\tau)
\right.\\
&  \left.+ f'(r^2+a^2)(F_{\mbox {$\flat$}}+F_{\lessflat}+F_{\mbox {$\natural$}}+F_{\mbox {$\sharp$}})
\Phi_{\tau'}^\tau  \right)\frac{\Delta}{r^2+a^2}
 \sin \theta\,  d\phi \, d\theta \,  dr^* .
\end{split}
\end{equation*}
By the definition of the cutoff, we have the pointwise equalities
$$F=F_{\mbox {$\flat$}}+F_{\lessflat}+F_{\mbox {$\natural$}}+F_{\mbox {$\sharp$}}.$$
Therefore, we have
\begin{equation*}
\begin{split}
&\iint_{\mathcal{R}(-\infty,\infty)} \chi \mathbbm 1_{\{|r-3M|\geq\frac{M}{8}\}}
J^{N_e}_\mu(\Phi_{\tau'}^{\tau} ) n^\mu_{\Sigma_{t^*}}+{\chi}\left(\Phi_{\tau'}^\tau \right)^2\\
\le& 
C \int_{\mathcal H(\tau'-1,\tau+1)}J^{N_e}_\mu(\Phi) n^\mu_{\mathcal H^+}  +  C (\delta')^{-1} \iint_{\mathcal{R}(-\infty,\infty)\cap \{r\le R\}}F^2
    \\
&
 +   C \delta' \iint_{\mathcal{R}(-\infty,\infty)\cap \{r\le R\}}
(\Phi_{\tau'}^\tau)^2+(\partial_{r^*}\Phi_{\tau'}^\tau)^2 +\mathbbm 1_{\{r\leq\frac{23M}{8}\}} J^{N_e}_\mu(\Phi_{\tau'}^\tau) n^\mu_{\Sigma_{t^*}}\\
&
+ \int_{-\infty}^\infty dt^*\int_{\{r\ge R\}}
\left(2f (r^2+a^2)^{1/2} F
\partial_{r^*} ((r^2+a^2)^{1/2} \Phi_{\tau'}^\tau) + f'(r^2+a^2)F
\Phi_{\tau'}^\tau  \right)\frac{\Delta}{r^2+a^2}
 \sin \theta\,  d\phi \, d\theta \,  dr^* .
\end{split}
\end{equation*}
First, by Proposition \ref{mg}, we have 
$$\int_{\mathcal H(\tau'-1,\tau+1)}J^{N_e}_\mu(\Phi) n^\mu_{\mathcal H^+}\leq C \int_{\Sigma_{\tau'}}J^{N_e}_\mu(\Phi) n^\mu_{\Sigma_{\tau'}}  +C\int_{\Sigma_\tau} J^{N_e}_\mu\left(\Phi\right)n^\mu_{\Sigma_{\tau}}+\int_{\mathcal H(\tau',\tau)}J^{N_e}_\mu(\Phi) n^\mu_{\mathcal H^+}.$$
Recall that
$$F=\xi G+2D^\alpha\Phi D_\alpha\xi+\Phi\Box_{g_K}\xi.$$
Notice that by the definition of $\xi$, the last two terms are supported in the $t^*$ range $(\tau'-1,\tau')\cup (\tau,\tau+1)$. Moreover, since $\xi$ depends only on $t^*$, the only terms involving $D\Phi$ are $\partial_{t^*}\Phi$ and $O(\epsilon)\partial_{\phi^*}\Phi$. Using this, we immediately have the following with $C$ independent of $e$ as long as $\epsilon\ll e$:
\begin{equation*}
\begin{split}
&C (\delta')^{-1} \iint_{\mathcal{R}(-\infty,\infty)\cap \{r\le R\}}F^2\\
\leq &C (\delta')^{-1}\left( \iint_{\mathcal{R}(\tau'-1,\tau+1)\cap \{r\le R\}}G^2+\iint_{\mathcal{R}(\tau'-1,\tau')\cup\mathcal{R}(\tau,\tau+1)}\left(r^{-2}\Phi^2+J^{N_e}_\mu\left(\Phi\right)n^\mu_{\Sigma_{t^*}}\right)\right).
\end{split}
\end{equation*}
Similarly, we have
\begin{equation*}
\begin{split}
&\int_{-\infty}^\infty dt^*\int_{\{r\ge R\}}
f'(r^2+a^2)F
\Phi_{\tau'}^\tau \frac{\Delta}{r^2+a^2}
\sin \theta\,  d\phi \, d\theta \,  dr^*\\
\leq &C \left(\iint_{\mathcal R(\tau'-1,\tau+1)} r^{-1}|\Phi||G|+\iint_{\mathcal R(\tau'-1,\tau')\cup\mathcal R(\tau,\tau+1)}\left(r^{-2}\Phi^2+J^{N_e}_\mu\left(\Phi\right)n^\mu_{\Sigma_{t^*}}\right)\right).
\end{split}
\end{equation*}
The other term with $F$ is more delicate to estimate. One of the terms in the expansion does not have sufficient decay in $r$:
\begin{equation*}
\begin{split}
&\int_{-\infty}^\infty dt^*\int_{\{r\ge R\}}
2f (r^2+a^2)^{1/2} F
\partial_{r^*} ((r^2+a^2)^{1/2} \Phi_{\tau'}^\tau) \frac{\Delta}{r^2+a^2}
 \sin \theta\,  d\phi \, d\theta \,  dr^*\\
\leq &C \left(\iint_{\mathcal R(\tau'-1,\tau+1)} r^{-1}|\Phi||G|+\iint_{\mathcal R(\tau'-1,\tau')\cup\mathcal R(\tau,\tau+1)}\left(r^{-2}\Phi^2+J^{N_e}_\mu\left(\Phi\right)n^\mu_{\Sigma_{t^*}}\right)\right)\\
&+\int_{-\infty}^\infty dt^*\int_{\{r\ge R\}}
2f (r^2+a^2)^{1/2} \Phi\Box_{g_K}\xi
\partial_{r^*} ((r^2+a^2)^{1/2}\Phi) \xi \frac{\Delta}{r^2+a^2}
 \sin \theta\,  d\phi \, d\theta \,  dr^*\\
\end{split}
\end{equation*}
Nevertheless, noting that $\xi$ is independent of $r^*$, an integration by parts in $r^*$ would give
\begin{equation*}
\begin{split}
&\int_{-\infty}^\infty dt^*\int_{\{r\ge R\}}
2f (r^2+a^2)^{1/2} \Phi\Box_{g_K}\xi
\partial_{r^*} ((r^2+a^2)^{1/2}\Phi) \xi \frac{\Delta}{r^2+a^2}
 \sin \theta\,  d\phi \, d\theta \,  dr^*\\
=&-\int_{-\infty}^\infty dt^*\int_{\{r\ge R\}}
(r^2+a^2) \Phi^2\xi\Box_{g_K}\xi
 \partial_{r^*}\left(f\frac{\Delta}{r^2+a^2}\right)
 \sin \theta\,  d\phi \, d\theta \,  dr^*+\mbox{boundary terms}\\
\leq&C\iint_{\mathcal R(\tau'-1,\tau')\cup\mathcal R(\tau,\tau+1)}r^{-2}\Phi^2,
\end{split}
\end{equation*}
where the boundary terms can be controlled (after possibly changing $R$) by pigeonholing in $r\in [R,R+1]$.
By the mild growth estimate of Proposition \ref{mg}, the estimate near the event horizon from Proposition \ref{Xhor} and the Hardy inequality of Proposition \ref{hi},
\begin{equation*}
\begin{split}
&\iint_{\mathcal R(\tau'-1,\tau')\cup\mathcal R(\tau,\tau+1)}\left(r^{-2}\Phi^2+J^{N_e}_\mu\left(\Phi\right)n^\mu_{\Sigma_{t^*}}\right)\\
\leq &C\left(\int_{\Sigma_{\tau'}} J^{N_e}_\mu\left(\Phi\right)n^\mu_{\Sigma_{\tau'}}+\int_{\Sigma_\tau} J^{N_e}_\mu\left(\Phi\right)n^\mu_{\Sigma_{\tau'}}+\iint_{\mathcal R(\tau'-1,\tau')\cup\mathcal R(\tau,\tau+1)}G^2\right).
\end{split}
\end{equation*}
Therefore, using all the above estimates and noticing the support of $\xi$, we have
\begin{equation*}
\begin{split}
&\iint_{\mathcal{R}(\tau',\tau)} \chi \left(\mathbbm 1_{\{|r-3M|\geq\frac{M}{8}\}}
J^{N_e}_\mu(\Phi ) n^\mu_{\Sigma_{t^*}}+\Phi^2\right)\\
\le& 
C(\delta')^{-1} \int_{\Sigma_{\tau'}}J^{N_e}_\mu(\Phi) n^\mu_{\Sigma_{\tau'}}  +C(\delta')^{-1}\int_{\Sigma_\tau} J^{N_e}_\mu\left(\Phi\right)n^\mu_{\Sigma_{\tau}}+C\int_{\mathcal H(\tau',\tau)} J^{N_e}_\mu\left(\Phi\right)n^\mu_{\mathcal H^+} 
    \\
&+C\iint_{\mathcal R(\tau'-1,\tau+1)}r^{-1}|\Phi||G|+C(\delta')^{-1}\iint_{\mathcal R(\tau'-1,\tau+1)}G^2\\
&
 +   C \delta' \iint_{\mathcal{R}(\tau',\tau)\cap \{r\le R\}}
\Phi^2+(\partial_{r^*}\Phi)^2 +\mathbbm 1_{\{r\leq\frac{23M}{8}\}} J^{N_e}_\mu(\Phi) n^\mu_{\Sigma_{t^*}}.
\end{split}
\end{equation*}
We add to this the estimates near spatial infinity and the event horizon, i.e., Propositions \ref{Xinf} and \ref{Xhor}, to get
\begin{equation*}
\begin{split}
&\iint_{\mathcal{R}(\tau',\tau)} r^{-1-\delta}\mathbbm 1_{\{|r-3M|\geq\frac{M}{8}\}}
J^{N_e}_\mu(\Phi ) n^\mu_{\Sigma_{t^*}}+r^{-1-\delta}\left(\partial_r\Phi\right)^2+r^{-3-\delta}\Phi^2\\
\le& 
C (\delta')^{-1}\int_{\Sigma_{\tau'}}J^{N}_\mu(\Phi) n^\mu_{\Sigma_{\tau'}}  +C(\delta')^{-1}\int_{\Sigma_\tau} J^{N_e}_\mu\left(\Phi\right)n^\mu_{\Sigma_{\tau}}+C\int_{\mathcal H(\tau',\tau)} J^{N_e}_\mu\left(\Phi\right)n^\mu_{\mathcal H^+}   \\
&+C\iint_{\mathcal R(\tau'-1,\tau+1)}\left(|\partial_{r^*}\Phi|+r^{-1}|\Phi|\right)|G|+C|\iint_{\mathcal R(\tau'-1,\tau+1)\cap\{r\leq \frac{23M}{8}\}}(\partial_{t^*}\Phi+eY\Phi)G|\\
&
 +C(\delta')^{-1}\iint_{\mathcal R(\tau'-1,\tau+1)}G^2+   C \delta' \iint_{\mathcal{R}(\tau'-1,\tau+1)\cap \{r\le R\}}
\Phi^2+(\partial_{r^*}\Phi)^2 +\mathbbm 1_{\{r\leq\frac{23M}{8}\}} J^N_\mu(\Phi) n^\mu_{\Sigma_{t^*}}.
\end{split}
\end{equation*}
By choosing $\delta'$ sufficiently small and absorbing $(\delta')^{-1}$ into the constant $C$, we can absorb the last term:
\begin{equation*}
\begin{split}
&\iint_{\mathcal{R}(\tau',\tau)} r^{-1-\delta}\mathbbm 1_{\{|r-3M|\geq\frac{M}{8}\}}
J^{N_e}_\mu(\Phi ) n^\mu_{\Sigma_{t^*}}+r^{-1-\delta}\left(\partial_r\Phi\right)^2+r^{-3-\delta}\Phi^2\\
\le& 
C \int_{\Sigma_{\tau'}}J^{N_e}_\mu(\Phi) n^\mu_{\Sigma_{\tau'}}  +C\int_{\Sigma_\tau} J^{N_e}_\mu\left(\Phi\right)n^\mu_{\Sigma_{\tau}}+C\int_{\mathcal H(\tau',\tau)} J^{N_e}_\mu\left(\Phi\right)n^\mu_{\mathcal H^+}   \\
&+C\iint_{\mathcal R(\tau'-1,\tau+1)\cap\{r\leq \frac{23M}{8}\}}\left(|\partial_{r^*}\Phi|+r^{-1}|\Phi|\right)|G|+C|\iint_{\mathcal R(\tau'-1,\tau+1)\cap\{r\leq \frac{23M}{8}\}}(\partial_{t^*}\Phi+eY\Phi)G|\\
&
 +C\iint_{\mathcal R(\tau'-1,\tau+1)}G^2.
\end{split}
\end{equation*}
using Proposition \ref{mg} and \ref{hi} at the last step.
\end{proof}

\begin{definition}
From now on, denote
$$K^{X_0}\left(\Phi\right)=r^{-1-\delta}\mathbbm{1}_{\{|r- 3M|\geq \frac{M}{8}\}}J^N_\mu\left(\Phi\right)n^\mu_{\Sigma_\tau}+r^{-1-\delta}\left(\partial_r\Phi\right)^2+r^{-3-\delta}\Phi^2,\quad\mbox{and}$$
$$K^{X_1}\left(\Phi\right)=r^{-1-\delta}J^N_\mu\left(\Phi\right)n^\mu_{\Sigma_\tau}+r^{-3-\delta}\Phi^2.$$
We remark that this is a slight abuse of notation because these ``currents'' do not arise directly from a vector field.
\end{definition}

\subsection{Boundedness of the Non-degenerate Energy}
\begin{proposition}\label{bdd}
Let $\Phi$ satisfy $\Box_{g_K}\Phi=G$. For $e$ sufficiently small and $\epsilon \ll e$, we have 
\begin{equation*}
\begin{split}
&\int_{\Sigma_{\tau}} J^{N_e}_\mu\left(\Phi\right)n^\mu_{\Sigma_{\tau}} +\int_{\mathcal H(\tau',\tau)} J^{N_e}_\mu\left(\Phi\right)n^\mu_{\mathcal H^+} +\iint_{\mathcal R(\tau',\tau)\cap\{r\leq r^-_Y\}}K^{N_e}_\mu\left(\Phi\right)\\
\leq &C\left(\int_{\Sigma_{\tau'}} J^{N_e}_\mu\left(\Phi\right)n^\mu_{\Sigma_{\tau'}}+|\iint_{\mathcal R(\tau'-1,\tau+1)} \partial_{t^*}\Phi G|+|\iint_{\mathcal R(\tau'-1,\tau+1)}eY\Phi G|\right.\\
&\left.+\iint_{\mathcal R(\tau'-1,\tau+1)}\left(|\partial_r\Phi|+r^{-1}|\Phi|\right) |G|+\iint_{\mathcal R(\tau'-1,\tau+1)}G^2\right).\\
\end{split}
\end{equation*}
\end{proposition}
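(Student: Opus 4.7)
The plan is to combine the conservation identity for $N_e$ (Proposition \ref{Nid}) with the integrated decay estimate (Proposition \ref{ide}), using the smallness of the red-shift parameter $e$ to close the estimate via absorption. Applying Proposition \ref{Nid} on $\mathcal R(\tau',\tau)$ produces the exact left-hand side of the desired inequality on the left, and on the right one finds $\int_{\Sigma_{\tau'}} J^{N_e}_\mu(\Phi) n^\mu_{\Sigma_{\tau'}}$ (the initial-data term), the contribution $\iint (\partial_{t^*}\Phi + eY\Phi) G$ which after Cauchy–Schwarz or direct triangle inequality splits into the two inhomogeneous terms $|\iint \partial_{t^*}\Phi\, G|$ and $e|\iint Y\Phi\, G|$ already present in the conclusion, and the problematic bulk term
\begin{equation*}
e \iint_{\mathcal R(\tau',\tau) \cap \{r^-_Y \leq r \leq r^+_Y\}} K^Y(\Phi)
\end{equation*}
coming from the intermediate region where $K^{N_e}$ has no definite sign. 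The entire proof hinges on absorbing this term.

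To do so, I would observe that the intermediate region $\{r^-_Y \leq r \leq r^+_Y\}$ is bounded away from both the event horizon and the photon sphere: the constraint $r^+_Y < 11M/4$ forces $|r - 3M| \geq M/4 > M/8$, so the indicator $\mathbbm{1}_{\{|r - 3M| \geq M/8\}}$ in $K^{X_0}$ is active throughout. Since $\partial_{t^*}$ is timelike and all metric coefficients are bounded on this compact range of $r$, the flux $J^{N_e}_\mu n^\mu_{\Sigma_{t^*}}$ is comparable to $\sum (\partial \Phi)^2$, which in turn dominates $K^Y(\Phi)$; hence $K^Y(\Phi) \leq C\, J^{N_e}_\mu n^\mu_{\Sigma_{t^*}} \leq C\, K^{X_0}(\Phi)$ there. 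This gives
\begin{equation*}
e \iint_{\mathcal R(\tau',\tau) \cap \{r^-_Y \leq r \leq r^+_Y\}} K^Y(\Phi) \leq C e \iint_{\mathcal R(\tau',\tau)} K^{X_0}(\Phi),
\end{equation*}
and Proposition \ref{ide} bounds $\iint K^{X_0}(\Phi)$ by a sum of $\int_{\Sigma_\tau} J^{N_e}_\mu n^\mu_{\Sigma_\tau}$, $\int_{\Sigma_{\tau'}} J^{N_e}_\mu n^\mu_{\Sigma_{\tau'}}$, $\int_{\mathcal H(\tau',\tau)} J^{N_e}_\mu n^\mu_{\mathcal H^+}$, plus precisely the inhomogeneous terms $\iint (|\partial_r\Phi| + r^{-1}|\Phi|)|G|$, $|\iint_{\{r\leq 23M/8\}} (\partial_{t^*}\Phi + eY\Phi)G|$ and $\iint G^2$ that appear in the conclusion.

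Substituting back into the identity of Proposition \ref{Nid} yields an inequality of the schematic form
\begin{equation*}
\mathrm{LHS} \leq \int_{\Sigma_{\tau'}} J^{N_e}_\mu(\Phi) n^\mu_{\Sigma_{\tau'}} + C e \cdot \mathrm{LHS} + (\text{the desired inhomogeneous terms}),
\end{equation*}
where $\mathrm{LHS}$ denotes the left-hand side of the proposition. Choosing $e$ small enough (compatibly with the running hypothesis $\epsilon \ll e$) so that $Ce \leq \tfrac{1}{2}$ allows $Ce \cdot \mathrm{LHS}$ to be absorbed, yielding the claim. The principal technical subtlety I expect is verifying that the constant in Proposition \ref{ide} is truly independent of $e$: this is built into how that proposition was set up, namely the horizon flux of each $X_{\hbox{\tiny piece}}$ is estimated against $J^{N_e}_\mu n^\mu_{\mathcal H^+}$ rather than against initial data, and the frequency-localized multipliers are constructed on the fixed Schwarzschild differentiable structure with no $e$-dependence; once this independence is checked the absorption step closes cleanly.
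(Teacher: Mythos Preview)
Your proposal is correct and follows essentially the same approach as the paper: apply the $N_e$ identity (Proposition~\ref{Nid}), control the intermediate-region term $e\iint_{\{r^-_Y\le r\le r^+_Y\}}K^Y(\Phi)$ via the integrated decay estimate (Proposition~\ref{ide}) using that $r^+_Y<\tfrac{11M}{4}$ keeps the region away from the photon sphere, and absorb the resulting $Ce$-multiple of the boundary fluxes by choosing $e$ small. Your remark that the constant in Proposition~\ref{ide} must be independent of $e$ is exactly the point the paper also tracks.
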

\begin{proof}
We recall that $$\iint_{\mathcal R(\tau',\tau)\cap\{r^-_Y\leq r\leq r^+_Y\}} K^{Y}\left(\Phi\right)\leq C\int_{\tau'}^{\tau}\int_{\Sigma_{\bar\tau}} J^{N_e}_\mu\left(\Phi\right)n^\mu_{\Sigma_{t^*}}d{t^*},$$ with $C$ independent of $e$ and $\epsilon$ whenever $\epsilon\ll e<1$. At this point, we choose $r^+_Y < \frac{11M}{4}<\frac{23M}{8}$. Hence this term can be controlled by the integrated decay estimates.
Then, by Proposition \ref{Nid},
\begin{equation*}
\begin{split}
&\int_{\Sigma_{\tau}} J^{N_e}_\mu\left(\Phi\right)n^\mu_{\Sigma_{\tau}} +\int_{\mathcal H(\tau',\tau)} J^{N_e}_\mu\left(\Phi\right)n^\mu_{\mathcal H^+} +\iint_{\mathcal R(\tau',\tau)\cap\{r\leq r^-_Y\}}K^{N_e}_\mu\left(\Phi\right)\\
=&\int_{\Sigma_{\tau'}} J^{N_e}_\mu\left(\Phi\right)n^\mu_{\Sigma_{\tau'}}+e\iint_{\mathcal R(\tau',\tau)\cap\{r^-_Y\leq r\leq r^+_Y\}} K^Y\left(\Phi\right)+\iint_{\mathcal R(\tau',\tau)} \left(\partial_{t^*}\Phi+eY\Phi\right) G\\
\leq &\int_{\Sigma_{\tau'}} J^{N_e}_\mu\left(\Phi\right)n^\mu_{\Sigma_{\tau'}}+Ce\int_{\tau'}^{\tau}\int_{\Sigma_{\bar\tau}\cap\{r^-_Y\leq r\leq r^+_Y\}} J^{N_e}_\mu\left(\Phi\right)n^\mu_{\Sigma_{\bar\tau}}d{\bar\tau}+|\iint_{\mathcal R(\tau',\tau)} \left(\partial_{t^*}\Phi+eY\Phi\right) G|\\
\leq &\int_{\Sigma_{\tau'}} J^{N_e}_\mu\left(\Phi\right)n^\mu_{\Sigma_{\tau'}}+Ce\left(\int_{\Sigma_\tau}J^{N_e}_\mu\left(\Phi\right)n^\mu_{\Sigma_\tau}+\int_{\Sigma_{\tau'}}J^{N_e}_\mu\left(\Phi\right)n^\mu_{\Sigma_{\tau'}}+\int_{\mathcal H(\tau',\tau)}J^{N_e}_\mu(\Phi)n^\mu_{\mathcal H^+}\right.\\
&\left.+\iint_{\mathcal R(\tau'-1,\tau+1)}\left(|\partial_r\Phi|+r^{-1}|\Phi|\right) |G|+|\iint_{\mathcal R(\tau'-1,\tau+1)}G^2\right)+|\iint_{\mathcal R(\tau',\tau)} \left(\partial_{t^*}\Phi+eY\Phi\right) G|.
\end{split}
\end{equation*}
Hence, the Proposition holds if $e$ is chosen to be sufficiently small.
\end{proof}
\begin{remark}\label{fixr+e}
From this point on, we will consider $r^+_Y$ and $e$ to be fixed. After $e$ is fixed, the vector field $N_e$ will be written simply as $N$.
\end{remark}
We now estimate the inhomogeneous terms in Proposition \ref{bdd}:
\begin{proposition}\label{Nbdd}
\begin{equation*}
\begin{split}
&\int_{\Sigma_{\tau}} J^{N}_\mu\left(\Phi\right)n^\mu_{\Sigma_{\tau}} +\int_{\mathcal H(\tau',\tau)} J^{N}_\mu\left(\Phi\right)n^\mu_{\mathcal H^+} +\iint_{\mathcal R(\tau',\tau)\cap\{r\leq r^-_Y\}}K^{N}\left(\Phi\right)+\iint_{\mathcal R(\tau',\tau)}K^{X_0}\left(\Phi\right)\\
\leq &C\left(\int_{\Sigma_{\tau'}} J^{N}_\mu\left(\Phi\right)n^\mu_{\Sigma_{\tau'}}+\left(\int_{\tau'-1}^{\tau+1}\left(\int_{\Sigma_{t^*}} G^2\right)^{\frac{1}{2}}dt^*\right)^2+\iint_{\mathcal R(\tau',\tau)}G^2\right).
\end{split}
\end{equation*}

\end{proposition}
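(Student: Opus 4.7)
The strategy is to combine Proposition \ref{bdd} (boundedness of the non-degenerate energy) with Proposition \ref{ide} (integrated local decay) into a single estimate controlling $\int_{\Sigma_\tau}J^N_\mu n^\mu$, $\int_{\mathcal H(\tau',\tau)}J^N_\mu n^\mu$, $\iint_{\{r\leq r^-_Y\}} K^N$, and $\iint K^{X_0}$ simultaneously, and then to control the resulting cross terms $\iint \partial\Phi\cdot G$ via a Cauchy--Schwarz argument that produces the required $L^1_{t^*}L^2_x$-norm-squared structure on $G$ appearing on the right hand side.

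First, I add a small multiple $\delta_1$ of the Proposition \ref{ide} estimate to the Proposition \ref{bdd} estimate. Since $\int_{\Sigma_\tau}J^N_\mu n^\mu$ and $\int_{\mathcal H(\tau',\tau)}J^N_\mu n^\mu$ appear on the RHS of Proposition \ref{ide} but also on the LHS of Proposition \ref{bdd}, the $\delta_1$ contribution of these terms can be absorbed into the LHS for $\delta_1$ small enough. What remains on the RHS is $C\int_{\Sigma_{\tau'}}J^N_\mu n^\mu$ together with a collection of cross terms
\begin{equation*}
\mathcal E := \left|\iint \partial_{t^*}\Phi\,G\right| + \left|\iint eY\Phi\,G\right| + \iint (|\partial_r\Phi|+r^{-1}|\Phi|)|G| + \iint G^2,
\end{equation*}
all integrated over $\mathcal R(\tau'-1,\tau+1)$. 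Each mixed term is estimated by Cauchy--Schwarz in $x$ followed by an $L^\infty_{t^*}$--$L^1_{t^*}$ pairing in $t^*$; for instance,
\begin{equation*}
\left|\iint \partial_{t^*}\Phi\,G\right| \leq \sup_{t^*\in[\tau'-1,\tau+1]}\|\partial_{t^*}\Phi\|_{L^2(\Sigma_{t^*})} \int_{\tau'-1}^{\tau+1}\|G\|_{L^2(\Sigma_{t^*})}\,dt^*.
\end{equation*}
The pointwise-in-$t^*$ bounds $\|\partial_{t^*}\Phi\|_{L^2(\Sigma_{t^*})}^2,\ \|Y\Phi\|_{L^2(\Sigma_{t^*})}^2,\ \|\partial_r\Phi\|_{L^2(\Sigma_{t^*})}^2 \leq C\int_{\Sigma_{t^*}}J^N_\mu n^\mu$ follow directly from the definition of $J^N$, and a standard Hardy inequality (obtained by integration by parts with weight $(r-r_+)$, vanishing at the horizon boundary) yields $\|r^{-1}\Phi\|_{L^2(\Sigma_{t^*})}^2 \leq C\int_{\Sigma_{t^*}}J^N_\mu n^\mu$. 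AM--GM with a small parameter $\delta_2$ then gives
\begin{equation*}
\mathcal E \leq \delta_2 \sup_{t^*\in[\tau'-1,\tau+1]}\int_{\Sigma_{t^*}}J^N_\mu n^\mu + \frac{C}{\delta_2}\left(\int_{\tau'-1}^{\tau+1}\|G\|_{L^2(\Sigma_{t^*})}\,dt^*\right)^2 + \iint G^2.
\end{equation*}

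To close, I apply the estimate so obtained with every $\tau''\in[\tau',\tau]$ in place of $\tau$ and take the supremum of the LHS. The supremum of $\int_{\Sigma_{t^*}}J^N_\mu n^\mu$ over the slightly enlarged interval $[\tau'-1,\tau+1]$ is controlled by $\sup_{\tau''\in[\tau',\tau]}\int_{\Sigma_{\tau''}}J^N_\mu n^\mu + C\int_{\Sigma_{\tau'}}J^N_\mu n^\mu + C\iint G^2$ using the mild-growth estimate of Proposition \ref{mg} applied in both time directions (valid by stationarity of Kerr). Choosing $\delta_2$ small enough, the $\delta_2\sup$ term is absorbed into the LHS, yielding the proposition. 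The main obstacle is precisely this bootstrap closure: one must simultaneously trade one derivative of $\Phi$ for an $L^1_{t^*}L^2_x$-norm of $G$ via Cauchy--Schwarz, and extend non-degenerate energy control slightly outside $[\tau',\tau]$ to cope with the time-cutoff enlargement $\mathcal R(\tau'-1,\tau+1)$ that is intrinsic to the frequency-space derivation of Proposition \ref{ide}.
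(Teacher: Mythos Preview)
Your proposal is correct and follows essentially the same route as the paper: combine Proposition~\ref{bdd} with a small multiple of Proposition~\ref{ide}, absorb the $\int_{\Sigma_\tau}J^N$ and $\int_{\mathcal H}J^N$ terms, estimate the cross terms by Cauchy--Schwarz in space and $L^\infty_{t^*}\!-\!L^1_{t^*}$ in time (using Proposition~\ref{hi} for the $r^{-1}\Phi$ piece), and close by a sup-in-$t^*$ argument together with Proposition~\ref{mg} to handle the enlarged interval $[\tau'-1,\tau+1]$. The only cosmetic difference is that the paper, instead of your AM--GM step with parameter $\delta_2$ followed by a supremum over $\tau''$, applies the intermediate estimate directly on $[\tau',\tilde\tau]$ for the single $\tilde\tau$ at which the supremum is attained and then substitutes back; the two closures are equivalent.
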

\begin{proof}
Adding the estimates in Propositions \ref{ide} and $\delta$ times the estimates in Proposition \ref{bdd},
\begin{equation*}
\begin{split}
&\int_{\Sigma_{\tau}} J^{N}_\mu\left(\Phi\right)n^\mu_{\Sigma_{\tau}} +\int_{\mathcal H(\tau',\tau)} J^{N}_\mu\left(\Phi\right)n^\mu_{\mathcal H^+} +\iint_{\mathcal R(\tau',\tau)\cap\{r\leq r^-_Y\}}K^{N}\left(\Phi\right)+\delta\iint_{\mathcal R(\tau',\tau)}K^{X_0}\left(\Phi\right)\\
\leq &C\left(\int_{\Sigma_{\tau'}} J^{N}_\mu\left(\Phi\right)n^\mu_{\Sigma_{\tau'}}+\iint_{\mathcal R(\tau'-1,\tau+1)} \left(|\partial\Phi|+r^{-1}|\Phi|\right) |G|+\iint_{\mathcal R(\tau'-1,\tau+1)}G^2\right)\\
&+C\delta\left(\int_{\Sigma_\tau}J^N_\mu\left(\Phi\right)n^\mu_{\Sigma_\tau}+\int_{\mathcal H(\tau',\tau)}J^N_\mu\left(\Phi\right)n^\mu_{\mathcal H^+}\right)\\
\leq &C\left(\int_{\Sigma_{\tau'}} J^{N}_\mu\left(\Phi\right)n^\mu_{\Sigma_{\tau'}}+\sup_{t^*\in [\tau'-1,\tau+1]}\left(\int_{\Sigma_{t^*}}J^{N}_\mu\left(\Phi\right)n^\mu_{\Sigma_{t^*}}\right)^{\frac{1}{2}}\int_{\tau'-1}^{\tau+1}\left(\int_{\Sigma_{t^*}} G^2\right)^{\frac{1}{2}}dt^*\right.\\
&\left.+\iint_{\mathcal R(\tau'-1,\tau+1)}G^2\right)+C\delta\left(\int_{\Sigma_\tau}J^N_\mu\left(\Phi\right)n^\mu_{\Sigma_\tau}+\int_{\mathcal H(\tau',\tau)}J^N_\mu\left(\Phi\right)n^\mu_{\mathcal H^+}\right),\\
\end{split}
\end{equation*}
where at the last step we have used Proposition \ref{hi}. Choosing $C\delta\leq \frac{1}{2}$, we can absorb the last term to the left hand side to get
\begin{equation}\label{bddtstep1}
\begin{split}
&\int_{\Sigma_{\tau}} J^{N}_\mu\left(\Phi\right)n^\mu_{\Sigma_{\tau}} +\int_{\mathcal H(\tau',\tau)} J^{N}_\mu\left(\Phi\right)n^\mu_{\mathcal H^+} +\iint_{\mathcal R(\tau',\tau)\cap\{r\leq r^-_Y\}}K^{N}\left(\Phi\right)+\delta\iint_{\mathcal R(\tau',\tau)}K^{X_0}\left(\Phi\right)\\
\leq &C\left(\int_{\Sigma_{\tau'}} J^{N}_\mu\left(\Phi\right)n^\mu_{\Sigma_{\tau'}}+\sup_{t^*\in [\tau'-1,\tau+1]}\left(\int_{\Sigma_{t^*}}J^{N}_\mu\left(\Phi\right)n^\mu_{\Sigma_{t^*}}\right)^{\frac{1}{2}}\int_{\tau'-1}^{\tau+1}\left(\int_{\Sigma_{t^*}} G^2\right)^{\frac{1}{2}}dt^*\right.\\
&\left.+\iint_{\mathcal R(\tau'-1,\tau+1)}G^2\right).\\
\end{split}
\end{equation}
By considering the above estimate on $[\tau',\tilde{\tau}]$, where $\tilde{\tau}$ is when the supremum on the right hand side is achieved, and using Proposition \ref{mg}, we get
\begin{equation*}
\begin{split}
&\int_{\Sigma_{\tilde{\tau}}} J^{N}_\mu\left(\Phi\right)n^\mu_{\Sigma_{\tilde{\tau}}} +\int_{\mathcal H(\tau',\tilde{\tau})} J^{N}_\mu\left(\Phi\right)n^\mu_{\mathcal H^+} +\iint_{\mathcal R(\tau',\tilde{\tau})\cap\{r\leq r^-_Y\}}K^{N}\left(\Phi\right)+\delta\iint_{\mathcal R(\tau',\tilde{\tau})}K^{X_0}\left(\Phi\right)\\
\leq &C\left(\int_{\Sigma_{\tau'}} J^{N}_\mu\left(\Phi\right)n^\mu_{\Sigma_{\tau'}}+\left(\int_{\tau'-1}^{\tau+1}\left(\int_{\Sigma_{t^*}} G^2\right)^{\frac{1}{2}}dt^*\right)^2\right).
\end{split}
\end{equation*}
We plug this into (\ref{bddtstep1}) and apply Cauchy-Schwarz to prove the proposition.
\end{proof}

We can also estimate the inhomogeneous terms not in $L^1L^2$ but in $L^2L^2$, provided that we allow some extra factors of $r$ and some loss of derivatives in $G$. This is especially useful for estimating the commutator terms from $S$, which do not have sufficient decay in $t^*$ in the interior to be estimated in $L^1L^2$. More precisely, we have 
\begin{proposition}\label{bddloss}
\begin{equation*}
\begin{split}
&\int_{\Sigma_{\tau}}J^N_\mu\left(\Phi\right)n^\mu_{\Sigma_{\tau}}+\int_{\mathcal H(\tau',\tau)}J^N_\mu\left(\Phi\right)n^\mu_{\mathcal H^+}+\iint_{\mathcal R(\tau',\tau)\cap\{r\leq r^-_Y\}}K^{N}\left(\Phi\right)+\iint_{\mathcal R(\tau',\tau)}K^{X_0}\left(\Phi\right)\\
\leq &C\left(\int_{\Sigma_{\tau'}} J^{N}_\mu\left(\Phi\right)n^\mu_{\Sigma_{\tau'}}+\sum_{m=0}^{1}\iint_{\mathcal R(\tau'-1,\tau+1)}r^{1+\delta}\left(\partial_{t^*}^{m}G\right)^2+\sup_{t^*\in [\tau'-1,\tau+1]}\int_{\Sigma_{t^*}\cap\{|r-3M|\leq\frac{M}{8}\}} {G}^2\right).
\end{split}
\end{equation*}
\end{proposition}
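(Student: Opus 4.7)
The plan is to run the same scheme as the proof of Proposition~\ref{Nbdd}: add Proposition~\ref{ide} to a small multiple $\delta$ of Proposition~\ref{bdd}, so that the combined left-hand side matches that of Proposition~\ref{bddloss}, modulo a $\delta(\int_{\Sigma_\tau}J^N_\mu n^\mu+\int_{\mathcal H}J^N_\mu n^\mu)$ on the right which is absorbed. This reduces the task to bounding the inhomogeneous terms
$$I_1=\iint(|\partial_r\Phi|+r^{-1}|\Phi|)|G|,\quad I_2=\Big|\iint\partial_{t^*}\Phi\,G\Big|,\quad I_3=\Big|\iint_{r\leq 23M/8}Y\Phi\,G\Big|,\quad I_4=\iint G^2$$
(all over $\mathcal R(\tau'-1,\tau+1)$ or the stated subsets) by the right-hand side of bddloss. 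The dichotomy driving everything is that $K^{X_0}(\Phi)$ controls $r^{-3-\delta}\Phi^2$ and $r^{-1-\delta}(\partial_r\Phi)^2$ globally, but $r^{-1-\delta}(\partial_{t^*}\Phi)^2$ only off the trapped shell $\{|r-3M|\leq M/8\}$.

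For $I_1$, $I_4$, $I_3$, and for $I_2$ restricted to $\{|r-3M|\geq M/8\}$, a weighted Cauchy--Schwarz $|\Psi||G|\leq\varepsilon r^{-1-\delta}\Psi^2+\varepsilon^{-1}r^{1+\delta}G^2$ (with the $r^{-3-\delta}$ analogue when $\Psi=\Phi$, and weights comparable to constants on compact $r$-ranges) converts each term into $\varepsilon\iint K^{X_0}(\Phi)+C\varepsilon^{-1}\iint r^{1+\delta}G^2$, the first of which is absorbed into the left-hand side. The geometric input that makes $I_3$ and the good piece of $I_2$ work is that $Y$ is supported in $r\leq r^+_Y<11M/4$, which is strictly inside $\{|r-3M|>M/8\}$; there $K^{X_0}$ controls the full $J^N_\mu n^\mu$, and in particular $(Y\Phi)^2$ and $(\partial_{t^*}\Phi)^2$.

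The main obstacle is $I_2$ on the trapped shell, where $K^{X_0}$ loses control of $(\partial_{t^*}\Phi)^2$. Here I integrate by parts in $t^*$: since $\partial_{t^*}$ is Killing, the volume form in Kerr $(t^*,r,\theta,\phi^*)$ coordinates is $t^*$-independent, and
\begin{equation*}
\iint_{\{|r-3M|\leq M/8\}}\partial_{t^*}\Phi\,G=\Big[\int_{\Sigma_s\cap\{|r-3M|\leq M/8\}}\Phi\,G\Big]_{s=\tau'-1}^{s=\tau+1}-\iint \Phi\,\partial_{t^*}G.
\end{equation*}
The bulk term is bounded by Cauchy--Schwarz using that $r^{-3-\delta}$ and $r^{1+\delta}$ are comparable on the shell: the $\Phi$ part is absorbed into $\iint K^{X_0}(\Phi)$ and the remaining $\iint r^{1+\delta}(\partial_{t^*}G)^2$ is precisely the $m=1$ summand on the right of bddloss.

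The boundary terms at $s=\tau\pm1$ are bounded by Cauchy--Schwarz and Hardy (Proposition~\ref{hi}, which applies because the trapped shell is bounded away from the horizon), giving $|\int\Phi G|\leq\varepsilon\int_{\Sigma_s}J^N_\mu n^\mu+\varepsilon^{-1}\sup_{t^*}\int_{|r-3M|\leq M/8}G^2$. Proposition~\ref{mg}, applied forward for $s=\tau+1$ and (by the time-reversal symmetry $(t,\phi)\mapsto(-t,-\phi)$ of the wave equation on Kerr) backward for $s=\tau'-1$, bounds $\int_{\Sigma_{\tau\pm1}}J^N_\mu n^\mu$ by $\int_{\Sigma_\tau}J^N_\mu n^\mu+\int_{\Sigma_{\tau'}}J^N_\mu n^\mu+C\iint G^2$, and $\iint G^2$ is dominated by $\iint r^{1+\delta}G^2$. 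Choosing $\varepsilon$ and $\delta$ sufficiently small absorbs the $\int_{\Sigma_\tau}J^N_\mu n^\mu$ term and all $K^{X_0}$ contributions into the left-hand side, yielding the stated inequality. The technical heart of the argument is this $t^*$-integration by parts on the trapped shell together with the careful bookkeeping of absorbing constants.
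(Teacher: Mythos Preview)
Your proposal is correct and follows essentially the same architecture as the paper's proof: combine Propositions~\ref{ide} and \ref{bdd}, absorb the resulting small multiple of $\int_{\Sigma_\tau}J^N+\int_{\mathcal H}J^N$, handle the nontrapped contributions by weighted Cauchy--Schwarz against $K^{X_0}$, and integrate by parts in $t^*$ on the trapped shell $\{|r-3M|\leq M/8\}$ to trade $\partial_{t^*}\Phi$ for $\partial_{t^*}G$ and boundary terms.

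The only meaningful difference is in closing the boundary terms at $\Sigma_{\tau'-1}$ and $\Sigma_{\tau+1}$. You bound each directly via Proposition~\ref{mg} (invoking the $(t,\phi)\mapsto(-t,-\phi)$ isometry for the backward direction) to reduce to $\int_{\Sigma_{\tau'}}J^N$, $\int_{\Sigma_\tau}J^N$, and $\iint G^2$, then absorb. The paper instead bounds both boundary terms by $\sup_{t^*\in[\tau'-1,\tau+1]}\delta'\int_{\Sigma_{t^*}}J^N$, lets $\tilde\tau$ achieve the supremum, and re-applies the same inequality on $[\tau',\tilde\tau]$ so that the sup appears with coefficient $\delta'$ on both sides and can be subtracted off. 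Your route is slightly more direct and avoids this bootstrap-on-the-sup manoeuvre at the cost of appealing to time-reversal; the paper's route avoids any appeal to symmetry but requires the extra absorption step. Either closes the argument. One small point you gloss over: the $\varepsilon\iint K^{X_0}$ terms you produce live on $\mathcal R(\tau'-1,\tau+1)$, whereas the left-hand side has $K^{X_0}$ only on $\mathcal R(\tau',\tau)$; the paper handles the residual slabs $\mathcal R(\tau'-1,\tau')\cup\mathcal R(\tau,\tau+1)$ via Proposition~\ref{mg} at the end, and you should do likewise.
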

\begin{proof}
By Propositions \ref{ide} and \ref{bdd},
\begin{equation*}
\begin{split}
&\int_{\Sigma_{\tau}}J^N_\mu\left({\Phi}\right)n^\mu_{\Sigma_{\tau}}+\int_{\mathcal H(\tau',\tau)}J^N_\mu\left(\Phi\right)n^\mu_{\mathcal H^+}+\iint_{\mathcal R(\tau',\tau)\cap\{r\leq r^-_Y\}}K^{N}\left(\Phi\right)+\delta\iint_{\mathcal R(\tau',\tau)}K^{X_0}\left(\Phi\right)\\
\leq &C\left(\int_{\Sigma_{\tau'}} J^{N_e}_\mu\left(\Phi\right)n^\mu_{\Sigma_{\tau'}}+|\iint_{\mathcal R(\tau'-1,\tau+1)} \partial_{t^*}\Phi G|+|\iint_{\mathcal R(\tau'-1,\tau+1)}eY\Phi G|\right.\\
&\left.+\iint_{\mathcal R(\tau'-1,\tau+1)}\left(|\partial_r\Phi|+r^{-1}|\Phi|\right) |G|+\iint_{\mathcal R(\tau'-1,\tau+1)}G^2\right)\\
&+C\delta'\left(\int_{\Sigma_\tau}J^N_\mu\left(\Phi\right)n^\mu_{\Sigma_\tau}+\int_{\mathcal H(\tau',\tau)}J^N_\mu\left(\Phi\right)n^\mu_{\mathcal H^+}\right).
\end{split}
\end{equation*}
Choosing $C\delta'\leq\frac{1}{2}$, we can absorb the last term to the left hand side to get
\begin{equation*}
\begin{split}
&\int_{\Sigma_{\tau}}J^N_\mu\left({\Phi}\right)n^\mu_{\Sigma_{\tau}}+\int_{\mathcal H(\tau',\tau)}J^N_\mu\left(\Phi\right)n^\mu_{\mathcal H^+}+\iint_{\mathcal R(\tau',\tau)\cap\{r\leq r^-_Y\}}K^{N}\left(\Phi\right)+\iint_{\mathcal R(\tau',\tau)}K^{X_0}\left(\Phi\right)\\
\leq &C\left(\int_{\Sigma_{\tau'}} J^{N}_\mu\left(\Phi\right)n^\mu_{\Sigma_{\tau'}}+|\iint_{\mathcal R(\tau'-1,\tau+1)} \partial_{t^*}\Phi G|+|\iint_{\mathcal R(\tau'-1,\tau+1)}eY\Phi G|\right.\\
&\left.+\iint_{\mathcal R(\tau'-1,\tau+1)}\left(|\partial_r\Phi|+r^{-1}|\Phi|\right) |G|+\iint_{\mathcal R(\tau'-1,\tau+1)}G^2\right).\\
\end{split}
\end{equation*}
For the bulk error term, we focus at the region $\{|r-3M|\leq \frac{M}{8}\}$ and integrate by parts. 
\begin{equation*}
\begin{split}
&|\iint_{\mathcal R(\tau'-1,\tau+1)\cap\{|r-3M|\leq \frac{M}{8}\}} \partial_{t^*}\Phi G|\\
\leq &\delta'\iint_{\mathcal R(\tau'-1,\tau+1)\cap\{|r-3M|\leq \frac{M}{8}\}} \Phi^2+C(\delta')^{-1}\iint_{\mathcal R(\tau'-1,\tau+1)\cap\{|r-3M|\leq \frac{M}{8}\}}(\partial_{t^*} G)^2\\
&+|\int_{\Sigma_{\tau+1}\cap\{|r-3M|\leq \frac{M}{8}\}} \Phi G|+|\int_{\Sigma_{\tau'-1}\cap\{|r-3M|\leq \frac{M}{8}\}} \Phi G|\\
\leq &\delta'\iint_{\mathcal R(\tau'-1,\tau+1)\cap\{|r-3M|\leq \frac{M}{8}\}} r^{-3-\delta}\Phi^2+C(\delta')^{-1}\iint_{\mathcal R(\tau'-1,\tau+1)\cap\{|r-3M|\leq \frac{M}{8}\}}(\partial_{t^*} G)^2\\
&+\sup_{t^*\in [\tau'-1,\tau+1] }\left(\delta\int_{\Sigma_{t^*}\cap\{|r-3M|\leq \frac{M}{8}\}} r^{-2}\Phi^2+ C(\delta')^{-1}\int_{\Sigma_{t^*}\cap\{|r-3M|\leq \frac{M}{8}\}} G^2\right)\\
\leq &\delta'\iint_{\mathcal R(\tau'-1,\tau+1)\cap\{|r-3M|\leq \frac{M}{8}\}} r^{-3-\delta}\Phi^2+C(\delta')^{-1}\iint_{\mathcal R(\tau'-1,\tau+1)\cap\{|r-3M|\leq \frac{M}{8}\}}(\partial_{t^*} G)^2\\
&+\sup_{t^*\in [\tau'-1,\tau+1] }\left(\delta'\int_{\Sigma_{t^*}} J^{N}_\mu\left({\Phi}\right)n^\mu_{\Sigma_{t^*}}+ C(\delta')^{-1}\int_{\Sigma_{t^*}\cap\{|r-3M|\leq \frac{M}{8}\}} G^2\right),
\end{split}
\end{equation*}
where at the last step we used Proposition \ref{hi}.\\
Therefore,
\begin{equation}\label{bddX1}
\begin{split}
&\int_{\Sigma_{\tau}}J^N_\mu\left({\Phi}\right)n^\mu_{\Sigma_{\tau}}+\int_{\mathcal H(\tau',\tau)}J^N_\mu\left(\Phi\right)n^\mu_{\mathcal H^+}+\iint_{\mathcal R(\tau',\tau)\cap\{r\leq r^-_Y\}}K^{N}\left(\Phi\right)+\iint_{\mathcal R(\tau',\tau)}K^{X_0}\left(\Phi\right)\\
\leq &C\left(\int_{\Sigma_{\tau'}} J^{N}_\mu\left(\Phi\right)n^\mu_{\Sigma_{\tau'}}+|\iint_{\mathcal R(\tau'-1,\tau+1)} \partial_{t^*}\Phi G|+|\iint_{\mathcal R(\tau'-1,\tau+1)}eY\Phi G|\right.\\
&\left.+\iint_{\mathcal R(\tau'-1,\tau+1)}\left(|\partial_r\Phi|+r^{-1}|\Phi|\right) |G|+\iint_{\mathcal R(\tau'-1,\tau+1)}G^2\right)\\
\leq &C\left(\int_{\Sigma_{\tau'}} J^{N}_\mu\left({\Phi}\right)n^\mu_{\Sigma_{\tau'}}+|\iint_{\mathcal R(\tau'-1,\tau+1)\cap\{|r-3M|\leq\frac{M}{8}\}} \partial_{t^*}{\Phi} G|\right)+C(\delta')^{-1}\iint_{\mathcal R(\tau'-1,\tau+1)}r^{1+\delta}{G}^2\\
&+\delta'\iint_{\mathcal R(\tau'-1,\tau+1)} \left(r^{-3-\delta}{\Phi}^2 +r^{-1-\delta}\left(\partial_r{\Phi}\right)^2+\mathbbm 1_{\{r\leq r^+_Y\}}J^{N}_\mu\left({\Phi}\right)n^\mu_{\Sigma_{t^*}}\right)\\
\leq &C\int_{\Sigma_{\tau'}} J^{N}_\mu\left({\Phi}\right)n^\mu_{\Sigma_{\tau'}}+C(\delta')^{-1}\sum_{m=0}^{1}\int_{\mathcal R(\tau'-1,\tau+1)}r^{1+\delta}\left(\partial_{t^*}^{m}{G}\right)^2+\delta'\iint_{\mathcal R(\tau'-1,\tau+1)} K^{X_0}\left(\Phi\right)\\
&+\sup_{t^*\in [\tau'-1,\tau+1] }\left(\delta'\int_{\Sigma_{t^*}} J^{N}_\mu\left({\Phi}\right)n^\mu_{\Sigma_{t^*}}+ C(\delta')^{-1}\int_{\Sigma_{t^*}\cap\{|r-3M|\leq \frac{M}{8}\}} G^2\right).
\end{split}
\end{equation}
where at the last step we have used Propositions \ref{mg}, \ref{hi}.
Suppose $\displaystyle\sup_{t^*\in [\tau'-1,\tau+1] }\delta'\int_{\Sigma_{t^*}} J^{N}_\mu\left({\Phi}\right)n^\mu_{\Sigma_{t^*}}$ is achieved by $t^*=\tilde{\tau}$. Apply (\ref{bddX1}) on $[\tau',\tilde{\tau}]$, we get
\begin{equation*}
\begin{split}
&\int_{\Sigma_{\tilde{\tau}}}J^N_\mu\left({\Phi}\right)n^\mu_{\Sigma_{\tilde{\tau}}}\\
\leq &C\int_{\Sigma_{\tau'}} J^{N}_\mu\left({\Phi}\right)n^\mu_{\Sigma_{\tau'}}+C(\delta')^{-1}\sum_{m=0}^{1}\iint_{\mathcal R(\tau'-1,\tau+1)}r^{1+\delta}\left(\partial_{t^*}^{m}{G}\right)^2+\delta'\iint_{\mathcal R(\tau'-1,\tau+1)} K^{X_0}\left(\Phi\right)\\
&+\delta'\int_{\Sigma_{\tilde{\tau}}} J^{N}_\mu\left({\Phi}\right)n^\mu_{\Sigma_{\tilde{\tau}}}+ C\sup_{t^*\in [\tau'-1,\tau+1] }\int_{\Sigma_{t^*}\cap\{|r-3M|\leq \frac{M}{8}\}} G^2,
\end{split}
\end{equation*}
which, upon choosing $\delta'\leq\frac{1}{2}$ and subtracting the small term on both sides, gives
\begin{equation*}
\begin{split}
&\int_{\Sigma_{\tilde{\tau}}}J^N_\mu\left({\Phi}\right)n^\mu_{\Sigma_{\tilde{\tau}}}\\
\leq &C\int_{\Sigma_{\tau'}} J^{N}_\mu\left({\Phi}\right)n^\mu_{\Sigma_{\tau'}}+C(\delta')^{-1}\sum_{m=0}^{1}\iint_{\mathcal R(\tau'-1,\tau+1)}r^{1+\delta}\left(\partial_{t^*}^{m}{G}\right)^2+\delta'\iint_{\mathcal R(\tau'-1,\tau+1)} K^{X_0}\left(\Phi\right)\\
&+C\sup_{t^*\in [\tau'-1,\tau+1] }\int_{\Sigma_{t^*}\cap\{|r-3M|\leq \frac{M}{8}\}} G^2,
\end{split}
\end{equation*}
Therefore, plugging this back into (\ref{bddX1})
\begin{equation*}
\begin{split}
&\int_{\Sigma_{\tau}}J^N_\mu\left({\Phi}\right)n^\mu_{\Sigma_{\tau}}+\int_{\mathcal H(\tau',\tau)}J^N_\mu\left(\Phi\right)n^\mu_{\mathcal H^+}+\iint_{\mathcal R(\tau',\tau)\cap\{r\leq r^-_Y\}}K^{N}\left(\Phi\right)+\iint_{\mathcal R(\tau',\tau)}K^{X_0}\left(\Phi\right)\\
\leq &C\int_{\Sigma_{\tau'}} J^{N}_\mu\left({\Phi}\right)n^\mu_{\Sigma_{\tau'}}+\delta'\iint_{\mathcal R(\tau',\tau)} K^{X_0}\left(\Phi\right)+\delta'\iint_{\mathcal R(\tau'-1,\tau')\cup\mathcal R(\tau,\tau+1)} K^{X_0}\left(\Phi\right)\\
&+C\sum_{m=0}^{1}\iint_{\mathcal R(\tau'-1,\tau+1)}r^{1+\delta}\left(\partial_{t^*}^{m}{G}\right)^2+C\sup_{t^*\in [\tau'-1,\tau+1] }\int_{\Sigma_{t^*}\cap\{|r-3M|\leq \frac{M}{8}\}} G^2\\
\leq &C\int_{\Sigma_{\tau'}} J^{N}_\mu\left({\Phi}\right)n^\mu_{\Sigma_{\tau'}}+\delta'\iint_{\mathcal R(\tau',\tau)} K^{X_0}\left(\Phi\right)+C\delta'\int_{\Sigma_{\tau}}J^N_\mu\left({\Phi}\right)n^\mu_{\Sigma_{\tau}}\\
&+C\sum_{m=0}^{1}\iint_{\mathcal R(\tau'-1,\tau+1)}r^{1+\delta}\left(\partial_{t^*}^{m}{G}\right)^2+C\sup_{t^*\in [\tau'-1,\tau+1] }\int_{\Sigma_{t^*}\cap\{|r-3M|\leq \frac{M}{8}\}} G^2,\\
\end{split}
\end{equation*}
where at the last step we have used Proposition \ref{mg}. Finally, by choosing $C\delta\leq\frac{1}{2}$, we can absorb the small terms to the left hand side and achieve the conclusion of the Proposition.
\end{proof}
Notice that in the proof of Proposition \ref{bddloss}, there is a loss in derivative for $G$ because we have to integrate by parts in the region $\{|r-3M|\leq\frac{M}{8}\}$. Therefore, if $G$ is supported away from this region, we can repeat the proof without this loss. In other words, we have
\begin{proposition}\label{bddnoloss}
Suppose $G$ is supported away from $\{|r-3M|\leq\frac{M}{8}\}$. Then
\begin{equation*}
\begin{split}
&\int_{\Sigma_{\tau}}J^N_\mu\left(\Phi\right)n^\mu_{\Sigma_{\tau}}+\int_{\mathcal H(\tau',\tau)}J^N_\mu\left(\Phi\right)n^\mu_{\mathcal H^+}+\iint_{\mathcal R(\tau',\tau)\cap\{r\leq r^-_Y\}}K^{N}\left(\Phi\right)+\iint_{\mathcal R(\tau',\tau)}K^{X_0}\left(\Phi\right)\\
\leq &C\left(\int_{\Sigma_{\tau'}} J^{N}_\mu\left(\Phi\right)n^\mu_{\Sigma_{\tau'}}+\sum_{m=0}^{1}\iint_{\mathcal R(\tau'-1,\tau+1)}r^{1+\delta}G^2+\sup_{t^*\in [\tau'-1,\tau+1]}\int_{\Sigma_{t^*}\cap\{|r-3M|\leq\frac{M}{8}\}} {G}^2\right).
\end{split}
\end{equation*}
\end{proposition}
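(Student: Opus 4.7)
The plan is to mirror the proof of Proposition \ref{bddloss}, exploiting the support assumption on $G$ to bypass the single step in that argument which forced a derivative loss. Recall that in Proposition \ref{bddloss}, the derivative loss originated from integrating by parts in $t^*$ to handle the error $|\iint \partial_{t^*}\Phi\,G|$ inside the trapping region $\{|r-3M|\leq M/8\}$, precisely because the current $K^{X_0}(\Phi)$ contains the degenerating factor $\mathbbm{1}_{\{|r-3M|\geq M/8\}}$ in front of $J^N_\mu(\Phi)n^\mu_{\Sigma_\tau}$. Under the support hypothesis of the present proposition, this offending region contributes nothing, so the integration by parts can be avoided and the $L^2$ estimate for $G$ itself (with no time derivative) suffices.

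Concretely, I would begin by adding Proposition \ref{ide} and $\delta$ times Proposition \ref{bdd}, absorbing the resulting $C\delta$ multiples of $\int_{\Sigma_\tau} J^N_\mu(\Phi)n^\mu_{\Sigma_\tau}$ and $\int_{\mathcal H(\tau',\tau)} J^N_\mu(\Phi) n^\mu_{\mathcal H^+}$ into the left-hand side, exactly as in the first lines of the proof of Proposition \ref{bddloss}. This produces a right-hand side in which the only terms involving $G$ are the initial energy, $\iint G^2$, and the three bulk errors $|\iint \partial_{t^*}\Phi\, G|$, $|\iint eY\Phi\, G|$ and $\iint(|\partial_r\Phi|+r^{-1}|\Phi|)|G|$, all integrated over $\mathcal R(\tau'-1,\tau+1)$. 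On the support of $G$ we now have $|r-3M|\geq M/8$, so
\[
K^{X_0}(\Phi)\;\geq\; c\,r^{-1-\delta}\bigl((\partial_{t^*}\Phi)^2+(\partial_r\Phi)^2\bigr)+c\,r^{-3-\delta}\Phi^2
\]
without degeneration; the factor $eY\Phi$ is additionally controlled by $K^N(\Phi)$ in $\{r\leq r^-_Y\}$ and by $K^{X_0}(\Phi)$ otherwise (since the cutoff makes $Y$ bounded and supported in $\{r\leq r^+_Y\}$). A single application of Cauchy--Schwarz to each of the three error terms then yields, for any small $\delta'>0$,
\[
(\text{error terms})\;\leq\;\delta'\iint_{\mathcal R(\tau'-1,\tau+1)}\!\!K^{X_0}(\Phi)\;+\;\delta'\iint_{\mathcal R(\tau'-1,\tau+1)\cap\{r\leq r^-_Y\}}\!\!K^N(\Phi)\;+\;C(\delta')^{-1}\iint_{\mathcal R(\tau'-1,\tau+1)}\!\!r^{1+\delta}G^2,
\]
and $\iint G^2$ is itself dominated by the last summand.

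After choosing $\delta'$ small, the $\delta'$-terms over $\mathcal R(\tau',\tau)$ are absorbed into the left-hand side of the combined estimate. The remaining bulk integrals over the two short collars $\mathcal R(\tau'-1,\tau')$ and $\mathcal R(\tau,\tau+1)$ are handled by Proposition \ref{mg}: on each unit time interval the non-degenerate energy (and hence $\iint K^N$ or $\iint K^{X_0}$ with bounded weights) is bounded by the energy at the endpoint plus $\iint G^2$, which in turn is already present on the right-hand side. The $\sup$ term in the stated inequality is simply zero under the support hypothesis, so no further estimate is needed for it. The main obstacle is therefore purely bookkeeping: verifying that every occurrence of $\partial_{t^*}\Phi$, $\partial_r\Phi$, $Y\Phi$ and $\Phi$ on $\operatorname{supp} G$ is absorbed either by $K^{X_0}(\Phi)$ (in the bulk) or by $K^N(\Phi)$ (near the horizon), with matched weights in $r$. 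This is exactly what the support condition $\operatorname{supp} G\cap\{|r-3M|\leq M/8\}=\emptyset$ guarantees, and it is what allows the estimate to dispense with the $\partial_{t^*}G$ term that appeared in Proposition \ref{bddloss}.
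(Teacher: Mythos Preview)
Your proposal is correct and follows essentially the same approach as the paper: the paper's proof consists of the single observation that the derivative loss in Proposition \ref{bddloss} arose only from the integration by parts in $t^*$ over the trapping region $\{|r-3M|\leq M/8\}$, so under the support hypothesis one simply repeats that proof while skipping that step. Your write-up spells out the bookkeeping (Cauchy--Schwarz against $K^{X_0}$ on $\operatorname{supp} G$, absorption, and handling the unit collars via Proposition \ref{mg}) in more detail than the paper, but the argument is the same.
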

This will be useful in Section \ref{sectioncommutatorS}.

In applications, it is useful to have both ways of estimating $G$.
\begin{proposition}\label{bddcom}
Let $G=G_1+G_2$ be any way to decompose the function $G$. Then
\begin{equation*}
\begin{split}
&\int_{\Sigma_{\tau}} J^{N}_\mu\left(\Phi\right)n^\mu_{\Sigma_{\tau}} +\int_{\mathcal H(\tau',\tau)} J^{N}_\mu\left(\Phi\right)n^\mu_{\mathcal H^+} +\iint_{\mathcal R(\tau',\tau)\cap\{r\leq r^-_Y\}}K^{N}\left(\Phi\right)+\iint_{\mathcal R(\tau',\tau)}K^{X_0}\left(\Phi\right)\\
\leq &C\left(\int_{\Sigma_{\tau'}} J^{N}_\mu\left(\Phi\right)n^\mu_{\Sigma_{\tau'}}+\left(\int_{\tau'-1}^{\tau+1}\left(\int_{\Sigma_{t^*}} G_1^2\right)^{\frac{1}{2}}dt^*\right)^2+\iint_{\mathcal R(\tau'-1,\tau+1)}G_1^2\right.\\
&\left.+\sum_{m=0}^{1}\iint_{\mathcal R(\tau'-1,\tau+1)}r^{1+\delta}\left(\partial_{t^*}^{m}G_2\right)^2+\sup_{t^*\in [\tau'-1,\tau+1]}\int_{\Sigma_{t^*}\cap\{|r-3M|\leq\frac{M}{8}\}} G_2^2\right).
\end{split}
\end{equation*}
\end{proposition}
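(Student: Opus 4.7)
The argument is a direct linearity decomposition combining Propositions \ref{Nbdd} and \ref{bddloss}. Given the decomposition $G = G_1 + G_2$, let $\Phi_1$ be the unique solution of $\Box_{g_K}\Phi_1 = G_1$ on $\mathcal R(\tau'-1,\tau+1)$ with Cauchy data $(\Phi_1, \partial_{t^*}\Phi_1)|_{\Sigma_{\tau'}} = (\Phi, \partial_{t^*}\Phi)|_{\Sigma_{\tau'}}$, obtained by solving the wave equation both forward and backward in $t^*$ from $\Sigma_{\tau'}$. Set $\Phi_2 = \Phi - \Phi_1$; by linearity $\Box_{g_K}\Phi_2 = G_2$, and $\Phi_2$ has vanishing Cauchy data on $\Sigma_{\tau'}$.

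Apply Proposition \ref{Nbdd} to $\Phi_1$. Since the Cauchy data of $\Phi_1$ on $\Sigma_{\tau'}$ agrees with that of $\Phi$, the resulting initial-energy term is exactly $\int_{\Sigma_{\tau'}} J^N_\mu(\Phi) n^\mu_{\Sigma_{\tau'}}$, and the forcing controls are $\bigl(\int_{\tau'-1}^{\tau+1} (\int_{\Sigma_{t^*}} G_1^2)^{1/2}\,dt^*\bigr)^2$ together with $\iint_{\mathcal R(\tau'-1,\tau+1)} G_1^2$. Next apply Proposition \ref{bddloss} to $\Phi_2$. The initial energy of $\Phi_2$ vanishes by construction, so the estimate reduces to the forcing terms $\sum_{m=0}^1 \iint_{\mathcal R(\tau'-1,\tau+1)} r^{1+\delta}(\partial_{t^*}^m G_2)^2$ and $\sup_{t^*\in[\tau'-1,\tau+1]} \int_{\Sigma_{t^*}\cap\{|r-3M|\leq M/8\}} G_2^2$.

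Finally, sum the two estimates. Every quadratic quantity on the left-hand side of Proposition \ref{bddcom} -- namely $J^N_\mu(\cdot) n^\mu$ on $\Sigma_\tau$ and on $\mathcal H(\tau',\tau)$, and $K^N, K^{X_0}$ on the bulk -- is a pointwise non-negative quadratic form in $\Phi$ and its first derivatives, hence $Q(\Phi) = Q(\Phi_1+\Phi_2) \leq 2Q(\Phi_1) + 2Q(\Phi_2)$. Combining the two bounds (after multiplication by $2$) produces the claimed inequality. The only marginally non-trivial point -- which is not really an obstacle -- is the backward well-posedness needed to define $\Phi_1$ on the full region $\mathcal R(\tau'-1,\tau+1)$; this is standard for the wave equation on a globally hyperbolic domain and is already implicit in the use of the time cutoff $\xi$ supported in $[\tau'-1,\tau+1]$ throughout the preceding sections.
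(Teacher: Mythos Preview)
Your superposition argument is correct and does yield the proposition, but it is not the route the paper has in mind. The paper states \ref{bddcom} without proof immediately after \ref{Nbdd} and \ref{bddloss} because both of those proofs start from the \emph{same} master inequality (obtained by combining Propositions \ref{ide} and \ref{bdd}), in which the only $G$-dependence is through bilinear cross terms such as $\iint (|\partial_r\Phi|+r^{-1}|\Phi|)\,|G|$, $|\iint \partial_{t^*}\Phi\, G|$, $|\iint eY\Phi\, G|$ and $\iint G^2$. One simply writes $G=G_1+G_2$ in those cross terms, estimates the $G_1$ contribution by the Cauchy--Schwarz-in-time argument of \ref{Nbdd}, and the $G_2$ contribution by the integration-by-parts-in-$t^*$ argument of \ref{bddloss}; the absorption of the small multiples of $\int_{\Sigma_\tau}J^N$, $\int_{\mathcal H}J^N$ and $\iint K^{X_0}$ into the left then proceeds exactly as before. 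This never decomposes $\Phi$, so no auxiliary Cauchy problem arises.

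Your route is more modular---it uses \ref{Nbdd} and \ref{bddloss} as black boxes---but the backward step you dismiss is a little less innocent than you suggest. The hypersurfaces $\Sigma_\tau$ intersect $\mathcal H^+$ and are \emph{not} Cauchy surfaces for the backward problem on $\mathcal R(\tau'-1,\tau')$: future-directed causal curves from points close to the horizon can cross $\mathcal H^+(\tau'-1,\tau')$ before reaching $\Sigma_{\tau'}$. Hence $\Phi_1$ is not \emph{uniquely} determined by data on $\Sigma_{\tau'}$ alone; one must also prescribe characteristic data on $\mathcal H^+(\tau'-1,\tau')$. Any smooth choice works---Propositions \ref{Nbdd} and \ref{bddloss} only require that $\Box\Phi_i=G_i$ hold on $\mathcal R(\tau'-1,\tau+1)$ and that the Cauchy data on $\Sigma_{\tau'}$ be as stated---so your argument survives with this amendment. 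The paper's direct approach buys you exactly the avoidance of this nuisance.
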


In the above estimates, only the function $\Phi$ and its $\partial_r$ derivative can be estimated without a loss around the trapped set. To estimate the other derivatives, we need to commute with the Killing vector field $\partial_{t^*}$.
\begin{proposition}\label{bddcom1}
\begin{equation*}
\begin{split}
&\iint_{\mathcal R(\tau',\tau)}K^{X_1}\left(\Phi\right)\\
\leq& C\left(\sum_{m=0}^1\int_{\Sigma_{\tau'}} J^{N}_\mu\left(\partial_{t^*}^{m}\Phi\right)n^\mu_{\Sigma_{\tau'}}+\sum_{m=0}^1\left(\int_{\tau'-1}^{\tau+1}\left(\int_{\Sigma_{t^*}} \left(\partial_{t^*}^{m}G_1\right)^2\right)^{\frac{1}{2}}dt^*\right)^2+\sum_{m=0}^1\iint_{\mathcal R(\tau'-1,\tau+1)}\left(\partial_{t^*}^{m}G_1\right)^2\right.\\
&\left.+\sum_{m=0}^{2}\iint_{\mathcal R(\tau'-1,\tau+1)}r^{1+\delta}\left(\partial_{t^*}^{m}G_2\right)^2+\sup_{t^*\in [\tau'-1,\tau+1]}\sum_{m=0}^1\int_{\Sigma_{t^*}\cap\{|r-3M|\leq\frac{M}{8}\}} \left(\partial_{t^*}^{m}G_2\right)^2\right).
\end{split}
\end{equation*}
\end{proposition}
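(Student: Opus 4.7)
The approach is to apply Proposition \ref{bddcom} both to $\Phi$ itself and to $\partial_{t^*}\Phi$. Since $\partial_{t^*}$ is Killing, $\Box_{g_K}(\partial_{t^*}\Phi) = \partial_{t^*}G = \partial_{t^*}G_1 + \partial_{t^*}G_2$, and the two applications together produce exactly the right-hand side of the target inequality as an upper bound for $\iint K^{X_0}(\Phi) + \iint K^{X_0}(\partial_{t^*}\Phi)$. Outside the trapped set, $K^{X_1}(\Phi)\leq C K^{X_0}(\Phi)$ pointwise (since the indicator in $K^{X_0}$ is active), so the region $\{|r-3M|\geq M/8\}$ is already handled. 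In the trapped region $\{|r-3M|\leq M/8\}$, $r$ is bounded above and below, so $K^{X_1}(\Phi)\leq C[(\partial_{t^*}\Phi)^2 + (\partial_r\Phi)^2 + |\nabb\Phi|^2 + \Phi^2]$. The terms $\Phi^2$ and $(\partial_r\Phi)^2$ come directly from $K^{X_0}(\Phi)$, while $(\partial_{t^*}\Phi)^2$ is controlled by the zeroth-order piece $r^{-3-\delta}(\partial_{t^*}\Phi)^2$ of $K^{X_0}(\partial_{t^*}\Phi)$. The only piece that remains is $\iint_{\{|r-3M|\leq M/8\}}|\nabb\Phi|^2$.

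To recover this term, I would choose a cutoff $\chi(r)$ equal to $1$ on $\{|r-3M|\leq M/8\}$ and supported in $\{|r-3M|\leq M/4\}$, multiply $\Box_{g_K}\Phi=G$ by $\chi^2 \Phi$, and integrate over $\mathcal R(\tau',\tau)$ against the Kerr volume form. Integration by parts on the spheres of constant $(t^*,r)$ converts the $\lapp\Phi$ contribution of $\Box_{g_K}$ into $\chi^2|\nabb\Phi|^2$ with the correct sign, while the remaining non-angular second-order pieces $\partial_{t^*}^2\Phi$, $\partial_r^2\Phi$, $\partial_{t^*}\partial_r\Phi$, together with the $O(\epsilon)$ cross terms with $\partial_{\phi^*}$, are transformed by integration by parts in $t^*$ and $r$ into products of first derivatives. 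The $r$-integrations by parts produce no $r$-boundary terms because $\chi$ is compactly supported in $r$, and any derivative landing on $\chi$ is supported in the annulus $\{M/8\leq|r-3M|\leq M/4\}$, where the full $J^N$ energy is already controlled by $K^{X_0}(\Phi)$. The $t^*$-integration by parts produces boundary terms of the form $\int_{\Sigma_\bullet}\chi^2\Phi\partial_{t^*}\Phi\leq C\int_{\Sigma_\bullet}J^N_\mu(\Phi)n^\mu$, which are bounded via Proposition \ref{Nbdd} applied to $\Phi$. The $O(\epsilon)$ cross-term contributions produce small multiples of $|\nabb\Phi|^2$ that can be absorbed into the left-hand side for $\epsilon$ sufficiently small. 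The outcome is
\[
\iint \chi^2 |\nabb\Phi|^2 \;\lesssim\; \iint_{\{|r-3M|\leq M/4\}}\bigl[(\partial_{t^*}\Phi)^2 + (\partial_r\Phi)^2 + \Phi^2 + G^2\bigr] + \int_{\Sigma_\tau\cup\Sigma_{\tau'}}J^N_\mu(\Phi)n^\mu,
\]
each term of which has been already controlled by the right-hand side of the target estimate.

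The main obstacle is the last step: the elliptic recovery of $|\nabb\Phi|^2$ on the trapped set, which is unavoidable because $K^{X_0}$ degenerates there. The subtleties are the careful handling of the Kerr-specific cross terms with $\partial_{\phi^*}$ (absorbed using $a$ small), the bookkeeping of the slice boundary terms (which must be bounded using Propositions \ref{Nbdd} and \ref{bddloss} applied separately to $\Phi$ and $\partial_{t^*}\Phi$ with the given decomposition $G=G_1+G_2$), and ensuring that no loss of more than one additional $\partial_{t^*}$ derivative on $G$ is introduced, consistent with the sums $\sum_{m=0}^1$ and $\sum_{m=0}^2$ appearing on the right-hand side.
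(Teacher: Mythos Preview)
Your argument is correct and follows the same overall architecture as the paper: apply Proposition~\ref{bddcom} to both $\Phi$ and $\partial_{t^*}\Phi$ (using that $\partial_{t^*}$ is Killing), observe that this controls $K^{X_1}(\Phi)$ everywhere except for the angular piece $|\nabb\Phi|^2$ on the trapped set $\{|r-3M|\leq M/8\}$, and then recover that piece by a separate local argument.

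The only genuine difference is in how the angular derivative is recovered near $r=3M$. The paper does this with a \emph{vector field multiplier}: it takes $X_{an}=f_{an}(r)\partial_{r^*}$ with $f_{an}\equiv -1$ on $\{|r-3M|\leq M/8\}$ and supported in $\{|r-3M|\leq M/4\}$, computes $K^{X_{an}}$ on Schwarzschild to see that the coefficient of $|\nabb\Phi|^2$ is bounded below, and then passes to Kerr by stability. You instead use a \emph{zeroth-order (Lagrangian) multiplier}: multiply the equation by $\chi^2\Phi$ and integrate by parts, so that the $\lapp$ term in $\Box_{g_K}$ produces $\chi^2|\nabb\Phi|^2$ directly. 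Both approaches put the remaining terms --- $(\partial_{t^*}\Phi)^2$, $(\partial_r\Phi)^2$, $\Phi^2$, $G^2$, the annulus terms where derivatives hit the cutoff, and the $\Sigma_\tau$, $\Sigma_{\tau'}$ boundary contributions --- on the right-hand side, where they are already controlled by the estimates for $K^{X_0}(\Phi)$, $K^{X_0}(\partial_{t^*}\Phi)$, and $\int_{\Sigma}J^N_\mu(\Phi)n^\mu$ from Proposition~\ref{bddcom}. The paper's route keeps everything in the vector-field-multiplier formalism used throughout; yours is a touch more elementary and makes the source of the angular control (the ellipticity of $\lapp$ on the spheres) more transparent. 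Neither introduces any extra loss beyond the single $\partial_{t^*}$ commutation, so the derivative counts on $G_1,G_2$ match.
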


\begin{proof}
Using Proposition \ref{bddcom} and the fact that $\partial_{t^*}$ is Killing, we immediately have the following estimate for $\partial_{t^*}\Phi$:
\begin{equation*}
\begin{split}
&\iint_{\mathcal R(\tau',\tau)}r^{-3-\delta}\left(\partial_{t^*}\Phi\right)^2\\
\leq& C\left(\int_{\Sigma_{\tau'}} J^{N}_\mu\left(\partial_{t^*}\Phi\right)n^\mu_{\Sigma_{\tau'}}+\left(\int_{\tau'-1}^{\tau+1}\left(\int_{\Sigma_{t^*}} \left(\partial_{t^*}G_1\right)^2\right)^{\frac{1}{2}}dt^*\right)^2+\iint_{\mathcal R(\tau'-1,\tau+1)}\left(\partial_{t^*}G_1\right)^2\right.\\
&\left.+\sum_{m=1}^{2}\iint_{\mathcal R(\tau'-1,\tau+1)}r^{1+\delta}\left(\partial_{t^*}^{m}G_2\right)^2+\sup_{t^*\in [\tau'-1,\tau+1]}\int_{\Sigma_{t^*}\cap\{|r-3M|\leq\frac{M}{8}\}} \left(\partial_{t^*}G_2\right)^2\right).
\end{split}
\end{equation*}
This would allow us to estimate all derivatives of $\Phi$ except for the fact that the estimates for the angular derivatives of $\Phi$ degenerate around $r=3M$:
\begin{equation*}
\begin{split}
&\iint_{\mathcal R(\tau',\tau)}\left(r^{-1-\delta}\mathbbm{1}_{\{|r- 3M|\geq \frac{M}{8}\}}|\nabb\Phi|^2+r^{-1-\delta}\left(\partial_r\Phi\right)^2+r^{-1-\delta}\left(\partial_{t^*}\Phi\right)^2+r^{-3-\delta}\Phi^2\right)\\
\leq& C\left(\sum_{m=0}^1\int_{\Sigma_{\tau'}} J^{N}_\mu\left(\partial_{t^*}^{m}\Phi\right)n^\mu_{\Sigma_{\tau'}}+\sum_{m=0}^1\left(\int_{\tau'-1}^{\tau+1}\left(\int_{\Sigma_{t^*}} \left(\partial_{t^*}^{m}G_1\right)^2\right)^{\frac{1}{2}}dt^*\right)^2+\sum_{m=0}^1\iint_{\mathcal R(\tau'-1,\tau+1)}\left(\partial_{t^*}^{m}G_1\right)^2\right.\\
&\left.+\sum_{m=0}^{2}\iint_{\mathcal R(\tau'-1,\tau+1)}r^{1+\delta}\left(\partial_{t^*}^{m}G_2\right)^2+\sup_{t^*\in [\tau'-1,\tau+1]}\sum_{m=0}^1\int_{\Sigma_{t^*}\cap\{|r-3M|\leq\frac{M}{8}\}} \left(\partial_{t^*}^{m}G_2\right)^2\right).
\end{split}
\end{equation*}
We now use this known estimate and construct another vector field to control the angular derivatives in the region $r\sim 3M$. The argument is simple because the estimate is only local. Take $f_{an}(r)$ to be compactly support in $3M-\frac{M}{4}\leq r \leq 3M+\frac{M}{4}$ and identically equal to $-1$ in $3M-\frac{M}{8}\leq r \leq 3M+\frac{M}{8}$. If we consider $X_{an}=f_{an}(r)\partial_{r^*}$ in Schwarzschild spacetime, we get that the coefficient in front of the terms with angular derivatives is $\frac{\mu }{2r}$, which is bounded below in $3M-\frac{M}{8}\leq r \leq 3M+\frac{M}{8}$. In other words, one gets an estimate of the following form:
\begin{equation}\label{XanS}
\begin{split}
&\iint_{\mathcal R(\tau',\tau)}r^{-1-\delta}\mathbbm{1}_{\{|r- 3M|\leq \frac{M}{8}\}}|\nabb\Phi|^2\\
\leq &C\left(\iint_{\mathcal R(\tau',\tau)}\left(r^{-1-\delta}\mathbbm{1}_{\{|r- 3M|\geq \frac{M}{8}\}}|\nabb\Phi|^2+r^{-1-\delta}\left(\partial_r\Phi\right)^2+r^{-1-\delta}\left(\partial_{t^*}\Phi\right)^2+r^{-3-\delta}\Phi^2\right)\right.\\
&\left.+\int_{\Sigma_\tau}J^N_\mu\left(\Phi\right)n^\mu_{\Sigma_\tau}+\int_{\Sigma_{\tau'}}J^N_\mu\left(\Phi\right)n^\mu_{\Sigma_{\tau'}}+\iint_{\mathcal R(\tau',\tau)}\left(|\partial_r\Phi|+|r^{-1}\Phi|\right)|G|\right.\\
&\left.+\iint_{\mathcal R(\tau',\tau)}G^2\right).
\end{split}
\end{equation}
Using a stability argument, (\ref{XanS}) would hold also on Kerr spacetimes. One easily checks that the terms with $G$ on the right hand side can be estimated in the same manner as before. Hence, the Proposition can be proved by applying Proposition \ref{bddcom}.
\end{proof}

\section{Vector Field Multiplier $Z$ and Decay of Non-degenerate Energy}\label{sectionZ}
We follow the definition of $Z$ in \cite{DRL}. Let $Z=u^2\underline{L}+v^2 L$, where $u$, $v$ are the Schwarzschild coordinates $u=\frac{1}{2}\left(t-r^*_S\right)$, $v=\frac{1}{2}\left(t+r^*_S\right)$, $\underline{L}=\partial_u$ and $L=2V-\underline{L}$, where $V=\partial_{t^*}+\chi(r)\frac{a}{2Mr_+}\partial_{\phi^*}$ with $\chi$ being a cutoff function that is identically $1$ for $r\leq r^-_Y-\frac{r^-_Y-r_+}{2}$ and is compactly supported in $\{r\leq r^-_Y-\frac{r^-_Y-r_+}{4}\}$. Notice that with this definition, $V$ is Killing except in the set $\{r^-_Y-\frac{r^-_Y-r_+}{2}\leq r\leq r^-_Y-\frac{r^-_Y-r_+}{4}\}$. Let $w^Z=\frac{4tr^*_S\left(1-\mu \right)}{r}$. Notice also that while $u\to\infty$ as one approaches the event horizon, $Z$ is continuous up to the event horizon due to the following (However, $Z$ is not $C^1$ and hence its deformation tensor is not continuous up to the event horizon):

\begin{proposition}\label{Znearhorizon}
In the Kerr $(t^*,r,\theta,\phi^*)$ coordinates,
$$\underline{L}=\left(1-\mu\right)\partial_{t^*}-\left(1-\mu\right)\left(\frac{2r_s-2M}{2r-2M}\right)\partial_r.$$
In the null frame near the event horizon in Section \ref{geometry}.3, we can write
$$\underline{L}=\underline{L}^{\hat{V}}\hat{V}+\underline{L}^{\hat{Y}}\hat{Y}+\underline{L}^A E_A,\quad\mbox{ where }|\underline{L}^\alpha|\leq C(1-\mu ).$$
\end{proposition}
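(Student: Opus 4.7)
The plan is to derive both claims by a direct coordinate computation, relying on the explicit diffeomorphism from Kerr to Schwarzschild set up in Section \ref{geometry}.2 and then converting to the null frame via the inverse of the relations given at the end of Section \ref{geometry}.3. The whole computation takes place in the region $r \le r^-_Y$, where the cutoff $\chi$ equals $1$, so the diffeomorphism simplifies to $t^{*}=t_S+2M\log(r_S-2M)$, $r_S = M+\sqrt{(r-M)^2+a^2}$, $\theta = \theta_S$, $\phi^{*}=\phi_S$.

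First I would work out $\underline{L}=\partial_u$ in Schwarzschild $(t_S, r_S^{*},\theta_S,\phi_S)$ coordinates. Since $u=\tfrac12(t_S-r_S^{*})$ and $v=\tfrac12(t_S+r_S^{*})$, one gets $\partial_u=\partial_{t_S}-\partial_{r_S^{*}}=\partial_{t_S}-(1-\mu)\partial_{r_S}$. Next I apply the chain rule for the change of coordinates to $(t^{*},r,\theta,\phi^{*})$. The relations above give $\partial_{t_S}=\partial_{t^{*}}$ and
\[
\partial_{r_S}=\frac{2M}{r_S-2M}\,\partial_{t^{*}}+\frac{dr}{dr_S}\,\partial_r,
\qquad \frac{dr}{dr_S}=\frac{r_S-M}{r-M},
\]
where the last identity comes from differentiating $(r-M)^2+a^2=(r_S-M)^2$. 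The crucial algebraic simplification is
\[
(1-\mu)\,\frac{2M}{r_S-2M}=\frac{r_S-2M}{r_S}\cdot\frac{2M}{r_S-2M}=\frac{2M}{r_S}=\mu,
\]
so that the coefficient of $\partial_{t^{*}}$ in $\partial_{t_S}-(1-\mu)\partial_{r_S}$ collapses to $1-\mu$. This yields the first displayed formula
\[
\underline{L}=(1-\mu)\partial_{t^{*}}-(1-\mu)\frac{r_S-M}{r-M}\partial_r.
\]

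For the null-frame statement, I would invert the two linear relations
\[
(1+\mu)\partial_{t^{*}}+(1-\mu)\partial_r=\hat V+O_1(\epsilon)E_\alpha,\qquad \partial_{t^{*}}-\partial_r=\hat Y+O_1(\epsilon)E_\alpha
\]
recorded in Section \ref{geometry}.3. Solving the $2\times2$ system gives
\[
\partial_{t^{*}}=\tfrac12\hat V+\tfrac{1-\mu}{2}\hat Y+O_1(\epsilon)E_\alpha,\qquad
\partial_r=\tfrac12\hat V-\tfrac{1+\mu}{2}\hat Y+O_1(\epsilon)E_\alpha.
\]
Substituting into the expression for $\underline{L}$ obtained in the previous step, the overall factor $(1-\mu)$ in front of both $\partial_{t^{*}}$ and $\partial_r$ passes directly onto the $\hat V$, $\hat Y$ and $E_A$ coefficients, because the inner factors $\tfrac12$, $\tfrac{1\pm\mu}{2}$, $\tfrac{r_S-M}{r-M}$ and $O_1(\epsilon)$ are all bounded uniformly in the region $r_+\le r\le r^-_Y$ (note $(r_S-M)/(r-M)$ stays bounded since at $r=r_+$ one has $r_S-M=M$ and $r-M=\sqrt{M^2-a^2}>0$ for $a<M$). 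This yields the decomposition $\underline{L}=\underline{L}^{\hat V}\hat V+\underline{L}^{\hat Y}\hat Y+\underline{L}^A E_A$ with $|\underline{L}^\alpha|\le C(1-\mu)$.

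The only part that requires some care is the interpretation of $t$ in the definition $u=\tfrac12(t-r_S^{*})$: since $u,v$ are declared to be Schwarzschild coordinates, one must read $t$ as $t_S$ (not the Boyer--Lindquist $t$); otherwise the algebraic cancellation that produces the clean factor $(1-\mu)$ in front of $\partial_{t^{*}}$ fails. Once this is fixed, both statements reduce to routine coordinate algebra, and no deeper estimate is needed.
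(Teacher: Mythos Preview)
Your argument is correct and complete; the paper states this proposition without proof, so you are supplying exactly the omitted coordinate computation. One small point of precision: the cutoff $\chi$ in the coordinate diffeomorphism equals $1$ only for $r\le r^-_Y-\tfrac{r^-_Y-r_+}{2}$, not on all of $\{r\le r^-_Y\}$, so the explicit formula for $\underline L$ is literally valid only there; however on the remaining strip $(1-\mu)$ is bounded below by a positive constant and $\underline L$ is manifestly bounded, so the null-frame estimate $|\underline L^\alpha|\le C(1-\mu)$ follows trivially in that region. Your remark about reading $t$ as $t_S$ is the right interpretation and is what makes the $(1-\mu)$ factor emerge cleanly.
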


Heuristically, we want to show that in the region $\{r\geq r^-_Y\}$,
$$\int_{\Sigma_\tau\cap\{r\geq r^-_Y\}} J^{Z,w^Z}_\mu\left(\Phi\right)n^\mu_{\Sigma_\tau}\geq 0.$$
Moreover, we would like to have
$$\int_{\Sigma_\tau\cap\{r\geq r^-_Y\}} J^{Z,w^Z}_\mu\left(\Phi\right)n^\mu_{\Sigma_\tau}\geq \int_{\Sigma_\tau\cap\{r\geq r^-_Y\}} u^2\left(\underline L\Phi\right)^2+v^2\left( L\Phi\right)^2+\left(u^2+v^2\right)|\nabb\Phi|^2+\left(\frac{u^2+v^2}{r^2}\right)\Phi^2$$
These are true modulo some error terms that can be controlled:
\begin{proposition}\label{Zlowerbound}
\begin{equation*}
\begin{split}
&\int_{\Sigma_\tau\cap\{r\geq r^-_Y\}} u^2\left(\underline{L}\Phi\right)^2+v^2\left(L\Phi\right)^2+\left(u^2+v^2\right)|\nabb\Phi|^2+\left(\frac{u^2+v^2}{r^2}\right)\Phi^2\\
\leq &C\int_{\Sigma_\tau} J^{Z,w^Z}_\mu\left(\Phi\right)n^\mu_{\Sigma_\tau}+C \int_{\Sigma_\tau} J^{N}_\mu\left(\Phi\right)n^\mu_{\Sigma_\tau}+C^2 \tau^2\int_{\Sigma_\tau\cap\{r\leq r^-_Y\}} J^{N}_\mu\left(\Phi\right)n^\mu_{\Sigma_\tau}.
\end{split}
\end{equation*}
\end{proposition}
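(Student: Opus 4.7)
The plan is to split $\Sigma_\tau$ into the asymptotic piece $\{r\geq r_Y^-\}$, where $Z$ is a well-behaved generalization of the Minkowski conformal Killing field $u^2\partial_u+v^2\partial_v$, and the near-horizon piece $\{r\leq r_Y^-\}$, where $Z$ is merely continuous but can be bounded in absolute value by $C\tau^2$ times the regular current $J^N_\mu n^\mu_{\Sigma_\tau}$. I would then establish a pointwise lower bound on $J^{Z,w^Z}_\mu n^\mu_{\Sigma_\tau}$ in the first region and an absolute-value upper bound in the second, and add them.

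In the asymptotic region I would first perform the calculation on Schwarzschild in the null frame $\{L,\underline L,E_1,E_2\}$. Decomposing $T_{\mu\nu}$ in this frame and contracting with $Z=u^2\underline L+v^2 L$ gives the main terms
$$J^Z_\mu(\Phi)\,n^\mu_{\Sigma_\tau}\sim u^2(\underline L\Phi)^2+v^2(L\Phi)^2+\tfrac{1}{2}(u^2+v^2)|\nabb\Phi|^2$$
up to coefficients of order $1$. The modification $w^Z=4tr^*_S(1-\mu)/r$ contributes $\tfrac{w^Z}{4}\Phi\, n^\mu\partial_\mu\Phi-\tfrac{1}{8}(n^\mu\partial_\mu w^Z)\Phi^2$ on the boundary. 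The key point, to be verified by direct computation exactly as in the classical conformal-energy argument on Minkowski and Schwarzschild, is that after one integration by parts in $r$ on $\Sigma_\tau$ (moving the derivative off $\Phi^2$) and a Cauchy--Schwarz on the remaining cross term, this modification produces a positive coefficient for $(u^2+v^2)r^{-2}\Phi^2$; the surplus cross term is absorbed into $v^2(L\Phi)^2,u^2(\underline L\Phi)^2$, with a remainder controlled by $CJ^N_\mu n^\mu_{\Sigma_\tau}$. Passing to Kerr is then handled by the metric-perturbation stability of multipliers: the discrepancy is $O(\epsilon r^{-2}\max_\alpha|Z^\alpha|(\partial\Phi)^2)\lesssim\epsilon r^{-2}(u^2+v^2)(\partial\Phi)^2$, which for $r\geq r_Y^-$ is absorbable into the positive main terms once $\epsilon$ is small enough. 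The cutoff rendering $V$ non-Killing on a compact $r$-interval likewise contributes only $CJ^N$-type errors.

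In the near-horizon region, Proposition~\ref{Znearhorizon} together with the explicit $v=(\tau+r_S)/2=O(\tau)$ and the fact that $u^2(1-\mu)\to 0$ as $r\to r_+$ imply that every null-frame component of $Z=u^2\underline L+v^2 L$ is bounded by $C\tau^2$; similarly $|w^Z|+|\partial w^Z|\leq C\tau$ on the compact slab $r_+\leq r\leq r_Y^-$. Hence
$$|J^{Z,w^Z}_\mu(\Phi)\,n^\mu_{\Sigma_\tau}|\leq C\tau^2\,J^N_\mu(\Phi)\,n^\mu_{\Sigma_\tau}$$
pointwise there, where the $\Phi^2$ contributions from $w^Z$ are absorbed via a trivial Hardy inequality on the bounded interval $[r_+,r_Y^-]$. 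Combining the two regional estimates and rearranging so that $\int_{\Sigma_\tau}J^{Z,w^Z}$ appears with a positive sign on the right yields the claim. The main obstacle is the algebraic cancellation in the asymptotic region that makes $w^Z$ produce the correct sign for $(u^2+v^2)r^{-2}\Phi^2$; once this is in place, everything else is a routine accounting of lower-order errors against $J^N$ and $\tau^2 J^N$ on $\{r\leq r_Y^-\}$.
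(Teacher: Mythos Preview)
Your overall strategy—establish a coercive lower bound for $J^{Z,w^Z}$ on $\{r\geq r^-_Y\}$ and an absolute-value upper bound $|J^{Z,w^Z}|\leq C\tau^2 J^N$ on $\{r\leq r^-_Y\}$—matches the paper's. However, two steps are not as routine as you describe, and as stated your argument has real gaps.

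First, in the asymptotic region, ``one integration by parts plus Cauchy--Schwarz'' does not yield a positive $(u^2+v^2)r^{-2}\Phi^2$ term with surplus. The paper (following \cite{DRS}) splits the cross term $\tfrac{2tr^*_S(1-\mu)}{r}\hat\Phi\,\partial_t\hat\Phi$ in half and rewrites each half using a \emph{different} identity for $t\partial_t$,
\[
t\partial_t\hat\Phi=vL\hat\Phi+u\underline L\hat\Phi-r^*_S\partial_{r^*_S}\hat\Phi,\qquad
t\partial_t\hat\Phi=\tfrac{t}{r^*_S}\bigl(vL\hat\Phi-u\underline L\hat\Phi\bigr)-\tfrac{t^2}{r^*_S}\partial_{r^*_S}\hat\Phi,
\]
before integrating by parts. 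Completing the square then writes the integrand as $\mu\bigl((vL\hat\Phi+u\underline L\hat\Phi)^2+(vL\hat\Phi-u\underline L\hat\Phi)^2\bigr)+(1-\mu)\cdot(\text{two perfect squares involving }\hat\Phi)$, and the coercivity of this form on the weighted integrand is itself nontrivial and is imported from \cite{DRS}. A single integration by parts followed by Cauchy--Schwarz gives only a non-negative form in which the zero-order term is exactly consumed by the cross term; you do not get the claimed surplus.

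Second, the paper does not use a hard split at $r=r^-_Y$: the integrations by parts above would then generate uncontrolled boundary terms there. Instead it introduces a smooth cutoff $\hat\Phi$ supported in $\{r\geq r^-_Y-(r^-_Y-r_+)/4\}$ and proves the estimate for $\hat\Phi$. Comparing $\int J^{Z,w^Z}(\Phi)$ with $\int J^{Z,w^Z}(\hat\Phi)$ then introduces a term $C\tau^2\int_{\{r\leq r^-_Y\}}\Phi^2$, and there is no Hardy inequality on $[r_+,r^-_Y]$ bounding this by $C\tau^2\int_{\{r\leq r^-_Y\}}J^N$ alone (consider $\Phi$ constant). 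The paper instead selects $\tilde r\in[r^-_Y,r^-_Y+1]$ on which $\int_{\mathbb S^2}\Phi^2(\tilde r)$ is already controlled by the estimate for $\hat\Phi$, integrates $\partial_r\Phi$ from $\tilde r$, and then uses the smallness of $r^-_Y-r_+$ to absorb the resulting contribution back into the left-hand side; this absorption is exactly what fixes $r^-_Y$.
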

\begin{proof}
The proof is analogous to that in Minkowski spacetime (see \cite{Morawetz}) and Schwarzschild spacetime (see \cite{DRS}).
Recall that on Schwarzschild spacetime, on a $t$ slice \cite{DRS}:
\begin{equation*}
\begin{split}
&\left(J^{Z,w^Z}_S\right)_\mu\left(\Phi\right)n^\mu_{\Sigma_{t}}\\
=&\frac{1}{\sqrt{1-\mu }}\left(v^2\left(L\Phi\right)^2+u^2\left(\underline L\Phi\right)^2+\left(u^2+v^2\right)|\nabb\Phi|^2+\frac{2tr^*\left(1-\mu\right)}{r}\Phi\partial_{t}\Phi-\frac{r^*\left(1-\mu\right)}{r}\Phi^2\right).
\end{split}
\end{equation*}
Now, since $t$, $r^*$ are stable under perturbation on $\{r\geq r^-_Y-\left(r^-_Y-r_+\right)/4\}$, we have, on this set:
\begin{equation*}
\begin{split}
&\left(J^{Z,w^Z}_K\right)_\mu\left(\Phi\right)n^\mu_{\Sigma_{\tau}}\\
\geq &\frac{1}{\sqrt{1-\mu }}\left(v^2\left(L\Phi\right)^2+u^2\left(\underline L\Phi\right)^2+\left(u^2+v^2\right)|\nabb\Phi|^2+\frac{2tr^*\left(1-\mu\right)}{r}\Phi\partial_{t^*}\Phi-\frac{r^*\left(1-\mu\right)}{r}\Phi^2\right)\\
&-C\epsilon r^{-2}\left(\left(u^2+v^2\right)\left(\nabla\Phi\right)^2+t^*\Phi^2\right).
\end{split}
\end{equation*}

We now cutoff $\Phi$: define $\hat\Phi$ so that it is supported in $\{r\geq r^-_Y-\left(r^-_Y-r_+\right)/4\}$ and equals $\Phi$ in $\{r\geq r^-_Y\}$. All the error terms arising from the cutoff will be controlled using the red-shift vector field.
\begin{equation*}
\begin{split}
&\int_{\Sigma_\tau} J^{Z,w^Z}_\mu\left(\hat\Phi\right)n^\mu_{\Sigma_\tau}\\
\geq &\int_{\Sigma_\tau\cap\{r\geq r^-_Y-\left(r^-_Y-r_+\right)/4\}}\frac{1}{\sqrt{1-\mu }}\left(v^2\left(L\hat\Phi\right)^2+u^2\left(\underline L\hat\Phi\right)^2+\left(u^2+v^2\right)|\nabb\hat\Phi|^2\right)\\
&+\frac{2tr^*_S\left(1-\mu\right)^\frac{1}{2}}{r}\hat\Phi\partial_{t^*}\hat\Phi-\frac{r^*_S\left(1-\mu\right)^\frac{1}{2}}{r}\hat\Phi^2-C\epsilon r^{-2}\left(\left(u^2+v^2\right)\left(\nabla\hat\Phi\right)^2+t^*\hat\Phi^2\right).
\end{split}
\end{equation*}
The term $$\int_{\Sigma_\tau\cap\{r\geq r^-_Y-\left(r^-_Y-r_+\right)/4\}} \frac{2tr^*_S\left(1-\mu\right)^\frac{1}{2}}{r}\hat\Phi\partial_{t^*}\hat\Phi$$ is to be handled by two different integration by parts. Recall that \cite{DRS} on Schwarzschild spacetimes we have
$$t\partial_t\hat\Phi=vL\hat\Phi+u\underline{L}\hat\Phi-r^*_S\partial_{r^*_S}\hat\Phi, \quad\mbox{and}$$
$$t\partial_t\hat\Phi=\frac{t}{r^*_S}\left(vL\hat\Phi-u\underline{L}\hat\Phi\right)-\frac{t^2}{r^*_S}\partial_{r^*_S}\hat\Phi.$$
Therefore, upon integrating by parts, we have on Schwarzschild spacetimes that
\begin{equation*}
\begin{split}
&\int_{\Sigma_\tau\cap\{r\geq r^-_Y-\left(r^-_Y-r_+\right)/4\}}\frac{tr^*_S\left(1-\mu\right)^{\frac{1}{2}}}{r}\hat\Phi\partial_{t^*}\hat\Phi\\
=&\int_{\Sigma_\tau\cap\{r\geq r^-_Y-\left(r^-_Y-r_+\right)/4\}}\left(\left(1-\mu\right)r^2\frac{r^*_S}{r}\left(vL\hat\Phi+u\underline L\hat\Phi\right)\hat\Phi+\frac{1}{2}\partial_{r^*_S}\left(\left(1-\mu\right)r\left(r^*\right)^2\right)\hat\Phi^2\right)d\theta \, d\phi\, dr^*\\
=&\int_{\Sigma_\tau\cap\{r\geq r^-_Y-\left(r^-_Y-r_+\right)/4\}}\left(1-\mu\right)r^2\left(\frac{r^*_S}{r}\left(vL\hat\Phi+u\underline L\hat\Phi\right)\hat\Phi+\left(\frac{1}{2}\frac{\left(r^*_S\right)^2}{r^2}+\frac{r^*_S}{r}\right)\hat\Phi^2\right)d\theta \, d\phi\, dr^*.
\end{split}
\end{equation*}
Notice that in the above equation, we suppressed the volume form in our notations in the first line, while when we write in coordinates as in the second and the third line, we write out the volume form explicitly. Alternatively, we have
\begin{equation*}
\begin{split}
&\int_{\Sigma_\tau\cap\{r\geq r^-_Y-\left(r^-_Y-r_+\right)/4\}}\frac{tr^*_S\left(1-\mu\right)^{\frac{1}{2}}}{r}\hat\Phi\partial_{t^*}\hat\Phi\\
=&\int_{\Sigma_\tau\cap\{r\geq r^-_Y-\left(r^-_Y-r_+\right)/4\}}\left(\left(1-\mu\right)r^2\frac{t^*}{r}\left(vL\hat\Phi-u\underline L\hat\Phi\right)\hat\Phi+\frac{1}{2}\partial_{r^*}\left((1-\mu )r(t^*)^2\right)\hat\Phi^2\right)d\theta \, d\phi\, dr^*\\
=&\int_{\Sigma_\tau\cap\{r\geq r^-_Y-\left(r^-_Y-r_+\right)/4\}}\left(1-\mu\right)r^2\left(\frac{t^*}{r}\left(vL\hat\Phi-u\underline L\hat\Phi\right)\hat\Phi+\frac{1}{2}\frac{\left(t^*\right)^2}{r^2}\hat\Phi^2\right)d\theta \, d\phi\, dr^*.
\end{split}
\end{equation*}
We would like to imitate this integration by parts on Kerr spacetimes. We notice that on the domain of integration we have
\begin{equation}\label{byparts1}
t\partial_t\hat\Phi=vL\hat\Phi+u\underline{L}\hat\Phi-r^*_S\partial_{r^*_S}\hat\Phi, \quad\mbox{and}
\end{equation}
\begin{equation}\label{byparts2}
t\partial_t\hat\Phi=\frac{t}{r^*_S}\left(vL\hat\Phi-u\underline{L}\hat\Phi\right)-\frac{t^2}{r^*_S}\partial_{r^*_S}\hat\Phi.
\end{equation}
The volume form on a constant $t^*$ slice on a Kerr spacetime is close to that on a Schwarzschild spacetime, including in the region being considered. In other words, for $r\geq r^-_Y-\left(r^-_Y-r_+\right)/4$,
$$dVol_{\Sigma_\tau}=\left(r^2 \left(1-\mu\right)^{-\frac{1}{2}} + O_1(\epsilon)\right) dr dx^A dx^B .$$
Moreover, for $r\geq r^-_Y-\left(r^-_Y-r_+\right)/4$,
$$\partial_{r^*_S}=\left((1-\mu )+O_1(\epsilon r^{-2})\right)\partial_r.$$
Therefore, using (\ref{byparts1}), we have
\begin{equation*}
\begin{split}
&\int_{\Sigma_\tau\cap\{r\geq r^-_Y-\left(r^-_Y-r_+\right)/4\}}\frac{tr^*_S\left(1-\mu\right)^\frac{1}{2}}{r}\hat\Phi\partial_{t^*}\hat\Phi\\
=&\int_{\Sigma_\tau\cap\{r\geq r^-_Y-\left(r^-_Y-r_+\right)/4\}}\left(\left(rr^*_S+O(\epsilon )\right)\left(vL\hat\Phi+u\underline L\hat\Phi\right)\hat\Phi+\left(\frac{1}{2}\partial_{r}\left((1-\mu )r\left(r^*_S\right)^2\right)+O(\epsilon )\right)\hat\Phi^2\right)dr dx^Adx^B\\
=&\int_{\Sigma_\tau\cap\{r\geq r^-_Y-\left(r^-_Y-r_+\right)/4\}}\left(1-\mu\right)^\frac{1}{2}\left(\left(\frac{r^*_S}{r}+O(\epsilon r^{-2})\right)\left(vL\hat\Phi+u\underline L\hat\Phi\right)\hat\Phi+\left(\frac{1}{2}\frac{\left(r^*_S\right)^2}{r^2}+\frac{r^*_S}{r}+O(\epsilon r^{-2})\right)\hat\Phi^2\right).
\end{split}
\end{equation*}
Alternatively, we can integrate by parts after using (\ref{byparts2}):
\begin{equation*}
\begin{split}
&\int_{\Sigma_\tau\cap\{r\geq r^-_Y-\left(r^-_Y-r_+\right)/4\}}\frac{tr^*_S\left(1-\mu\right)^\frac{1}{2}}{r}\hat\Phi\partial_{t^*}\hat\Phi\\
=&\int_{\Sigma_\tau\cap\{r\geq r^-_Y-\left(r^-_Y-r_+\right)/4\}}\left(\left(rt^*+O(\epsilon )\right)\left(vL\hat\Phi-u\underline L\hat\Phi\right)\hat\Phi+\left(\frac{1}{2}\partial_{r}\left((1-\mu )r(t^*)^2\right)+O(\epsilon )\right)\hat\Phi^2\right)\\
=&\int_{\Sigma_\tau\cap\{r\geq r^-_Y-\left(r^-_Y-r_+\right)/4\}}\left(1-\mu\right)^\frac{1}{2}\left(\left(\frac{t^*}{r}+O(\epsilon r^{-2})\right)\left(vL\hat\Phi-u\underline L\hat\Phi\right)\hat\Phi+\left(\frac{1}{2}\frac{\left(t^*\right)^2}{r^2}+O(\epsilon r^{-2})\right)\hat\Phi^2\right).
\end{split}
\end{equation*}
Therefore, we have
\begin{equation}\label{Zcontrol}
\begin{split}
&\int_{\Sigma_\tau} J^{Z,w^Z}_\mu\left(\hat\Phi\right)n^\mu_{\Sigma_\tau}\\
\geq &\int_{\Sigma_\tau\cap\{r\geq r^-_Y-\left(r^-_Y-r_+\right)/4\}}\frac{1}{\sqrt{1-\mu }}\left(v^2\left(L\hat\Phi\right)^2+u^2\left(\underline L\hat\Phi\right)^2+\left(u^2+v^2\right)|\nabb\hat\Phi|^2\right)\\
&+\frac{2tr^*_S\left(1-\mu\right)^\frac{1}{2}}{r}\hat\Phi\partial_{t^*}\hat\Phi-\frac{r^*_S\left(1-\mu\right)^\frac{1}{2}}{r}\hat\Phi^2-C\epsilon r^{-2}\left(\left(u^2+v^2\right)\left(D\hat\Phi\right)^2+t^*\hat\Phi^2\right)\\
\geq &\int_{\Sigma_\tau\cap\{r\geq r^-_Y-\left(r^-_Y-r_+\right)/4\}}\frac{1}{\sqrt{1-\mu }}\left(v^2\left(L\hat\Phi\right)^2+u^2\left(\underline L\hat\Phi\right)^2+\left(u^2+v^2\right)|\nabb\hat\Phi|^2\right)\\
&+\frac{r^*_S\left(1-\mu\right)^\frac{1}{2}}{r}\left(vL\hat\Phi+u\underline L\hat\Phi\right)\hat\Phi+\frac{1}{2}\frac{\left(r^*_S\right)^2\left(1-\mu\right)^\frac{1}{2}}{r^2}\hat\Phi^2\\
&+\frac{t^*\left(1-\mu\right)^\frac{1}{2}}{r}\left(vL\hat\Phi-u\underline L\hat\Phi\right)\hat\Phi+\frac{1}{2}\frac{\left(t^*\right)^2}{r^2}\hat\Phi^2-C\epsilon r^{-2}\left(\left(u^2+v^2\right)\left(D\hat\Phi\right)^2+t^*\hat\Phi^2\right)\\
\geq &c\left(\int_{\Sigma_\tau\cap\{r\geq r^-_Y-\left(r^-_Y-r_+\right)/4\}}\mu\left(\left(vL\hat\Phi+u\underline L\hat\Phi\right)^2+\left(vL\hat\Phi-u\underline L\hat\Phi\right)^2\right)\right.\\
&+\left(1-\mu\right)\left(\left(vL\hat\Phi+u\underline L\hat\Phi+\frac{r^*_S}{r}\hat\Phi\right)^2+\left(vL\hat\Phi-u\underline L\hat\Phi+\frac{t^*}{r}\hat\Phi\right)^2+2\left(u^2+v^2\right)|\nabb\hat\Phi|^2\right)\\
&\left.-C\epsilon r^{-2}\left(\left(u^2+v^2\right)\left(D\hat\Phi\right)^2+t^*\hat\Phi^2\right)\right),
\end{split}
\end{equation}
where the last line is obtained by first completing the square and using $c\leq1-\mu\leq C$ in this region of $r$.
Let us for now ignore the error term and look at the other terms (which are manifestly positive). With exactly the same argument as in \cite{DRS}, we have that these positive terms provide good estimates:
\begin{equation*}
\begin{split}
&\left(\int_{\Sigma_\tau\cap\{r\geq r^-_Y-\left(r^-_Y-r_+\right)/4\}}\mu\left(\left(vL\hat\Phi+u\underline L\hat\Phi\right)^2+\left(vL\hat\Phi-u\underline L\hat\Phi\right)^2\right)\right.\\
&\left.+\left(1-\mu\right)\left(\left(vL\hat\Phi+u\underline L\hat\Phi+\frac{r^*_S}{r}\hat\Phi\right)^2+\left(vL\hat\Phi-u\underline L\hat\Phi+\frac{t^*}{r}\hat\Phi\right)^2+2\left(u^2+v^2\right)|\nabb\hat\Phi|^2\right)\right)\\
\geq &c\int_{\Sigma_\tau\cap\{r\geq r^-_Y-\left(r^-_Y-r_+\right)/4\}}\left(v^2\left(L\hat\Phi\right)^2+u^2\left(\underline L\hat\Phi\right)^2+\left(u^2+v^2\right)|\nabb\hat\Phi|^2+\frac{u^2+v^2}{r^2}\hat\Phi^2\right)\\
&+C\epsilon\int_{\Sigma_\tau\cap\{r\geq r^-_Y-\left(r^-_Y-r_+\right)/4\}} r^{-2}\left(\left(u^2+v^2\right)\left(D\hat\Phi\right)^2+t^*\hat\Phi^2\right).
\end{split}
\end{equation*}
We refer the reader to \cite{DRS} for the proof.
This together with $J^N_\mu\left(\hat\Phi\right)n^\mu_{\Sigma_\tau}$ bound the error term in (\ref{Zcontrol}):
\begin{equation*}
\begin{split}
&\int_{\Sigma_\tau\cap\{r\geq r^-_Y-\left(r^-_Y-r_+\right)/4\}} r^{-2}\left(\left(u^2+v^2\right)\left(D\hat\Phi\right)^2+\tau\Phi^2\right)\\
\leq &C\int_{\Sigma_\tau\cap\{r\geq r^-_Y-\left(r^-_Y-r_+\right)/4\}}\left(v^2\left(L\hat\Phi\right)^2+u^2\left(\underline L\hat\Phi\right)^2+\left(u^2+v^2\right)|\nabb\hat\Phi|^2+\frac{u^2+v^2}{r^2}\hat\Phi^2\right)\\
&+C\int_{\Sigma_\tau\cap\{r\geq \frac{\tau}{4}\}}\left(\underline{L}\hat\Phi\right)^2\\
\leq &C\int_{\Sigma_\tau\cap\{r\geq r^-_Y-\left(r^-_Y-r_+\right)/4\}}\left(v^2\left(L\hat\Phi\right)^2+u^2\left(\underline L\hat\Phi\right)^2+\left(u^2+v^2\right)|\nabb\hat\Phi|^2+\frac{u^2+v^2}{r^2}\hat\Phi^2\right)\\
&+C\int_{\Sigma_\tau}J^N_\mu\left(\hat\Phi\right)n^\mu_{\Sigma_\tau}\\
\leq &C\left(\int_{\Sigma_\tau\cap\{r\geq r^-_Y-\left(r^-_Y-r_+\right)/4\}}\mu\left(\left(vL\hat\Phi+u\underline L\hat\Phi\right)^2+\left(vL\hat\Phi-u\underline L\hat\Phi\right)^2\right)\right.\\
&\left.+\left(1-\mu\right)\left(\left(vL\hat\Phi+u\underline L\hat\Phi+\frac{r^*_S}{r}\hat\Phi\right)^2+\left(vL\hat\Phi-u\underline L\hat\Phi+\frac{t^*}{r}\hat\Phi\right)^2+2\left(u^2+v^2\right)|\nabb\hat\Phi|^2\right)\right)\\
&+C\int_{\Sigma_\tau}J^N_\mu\left(\hat\Phi\right)n^\mu_{\Sigma_\tau}.
\end{split}
\end{equation*}
Therefore, if $\epsilon$ is chosen to be small enough, then (\ref{Zcontrol}) implies that
\begin{equation}\label{Zhatcontrol}
\begin{split}
&\int_{\Sigma_\tau} J^{Z,w^Z}_\mu\left(\hat\Phi\right)n^\mu_{\Sigma_\tau}+\int_{\Sigma_\tau}J^N_\mu\left(\hat\Phi\right)n^\mu_{\Sigma_\tau}\\
\geq &c\int_{\Sigma_\tau\cap\{r\geq r^-_Y-\left(r^-_Y-r_+\right)/4\}}\left(v^2\left(L\hat\Phi\right)^2+u^2\left(\underline L\hat\Phi\right)^2+\left(u^2+v^2\right)|\nabb\hat\Phi|^2+\frac{u^2+v^2}{r^2}\hat\Phi^2\right).
\end{split}
\end{equation}
We note that $c$ here is independent of the choice of $r^-_Y$. With this bound we would like to estimate $\int_{\mathbb S^2} \Phi\left(\tau, r\right)^2$. Using (\ref{Zhatcontrol}), there exists a $\tilde{r}\in [r^-_Y,r^-_Y+1]$ such that
$$\int_{\mathbb S^2} \Phi\left(\tau, \tilde{r}\right)^2 =\int_{\mathbb S^2} \hat\Phi\left(\tau, \tilde{r}\right)^2\leq C\tau^{-2}\left(\int_{\Sigma_\tau} J^{Z,w^Z}_\mu\left(\hat\Phi\right)n^\mu_{\Sigma_\tau}+\int_{\Sigma_\tau}J^N_\mu\left(\hat\Phi\right)n^\mu_{\Sigma_\tau}\right).$$
Then for every $r\in [r_+,r^-_Y]$, since
$$\Phi\left(\tau, \tilde{r}\right)- \Phi\left(\tau, r\right)=\int_r^{\tilde{r}} \partial_r\Phi dr,$$
we have
\begin{equation}\label{0thorder}
\begin{split}
\int_{\mathbb S^2} \Phi\left(\tau, r\right)^2\leq &\int_{\mathbb S^2} \Phi\left(\tau, \tilde{r}\right)^2+\left(\tilde{r}-r\right)\int_{\Sigma_\tau\cap[r,\tilde{r}]} J^N_\mu\left(\Phi\right)n^\mu_{\Sigma_\tau}\\
\leq &C\int_{\Sigma_\tau\cap\{r\leq r^-_Y\}}J^N_\mu\left(\Phi\right)n^\mu_{\Sigma_\tau}+C\tau^{-2}\left(\int_{\Sigma_\tau} J^{Z,w^Z}_\mu\left(\hat\Phi\right)n^\mu_{\Sigma_\tau}+\int_{\Sigma_\tau}J^N_\mu\left(\hat\Phi\right)n^\mu_{\Sigma_\tau}\right)
\end{split}
\end{equation}
Now we need to obtain estimates for $\Phi$ from that for $\hat\Phi$. It is obvious that
\begin{equation*}
\begin{split}
&\int_{\Sigma_\tau\cap\{r\geq r^-_Y\}}\left(v^2\left(L\Phi\right)^2+u^2\left(\underline L\Phi\right)^2+\left(u^2+v^2\right)|\nabb\Phi|^2+\frac{u^2+v^2}{r^2}\Phi^2\right)\\
\leq &\int_{\Sigma_\tau\cap\{r\geq r^-_Y\}}\left(v^2\left(L\hat\Phi\right)^2+u^2\left(\underline L\hat\Phi\right)^2+\left(u^2+v^2\right)|\nabb\hat\Phi|^2+\frac{u^2+v^2}{r^2}\hat\Phi^2\right)\\
\leq &\int_{\Sigma_\tau\cap\{r\geq r^-_Y-\left(r^-_Y-r_+\right)/4\}}\left(v^2\left(L\hat\Phi\right)^2+u^2\left(\underline L\hat\Phi\right)^2+\left(u^2+v^2\right)|\nabb\hat\Phi|^2+\frac{u^2+v^2}{r^2}\hat\Phi^2\right),
\end{split}
\end{equation*}
and
\begin{equation}\label{r-closeZ}
\begin{split}
&\int_{\Sigma_\tau} J^{Z,w^Z}_\mu\left(\hat\Phi\right)n^\mu_{\Sigma_\tau}+\int_{\Sigma_\tau}J^N_\mu\left(\hat\Phi\right)n^\mu_{\Sigma_\tau}\\
\leq& \int_{\Sigma_\tau} J^{Z,w^Z}_\mu\left(\Phi\right)n^\mu_{\Sigma_\tau}+\int_{\Sigma_\tau}J^N_\mu\left(\Phi\right)n^\mu_{\Sigma_\tau}+C\tau^2\int_{\Sigma_\tau\cap\{r\leq r^-_Y\}}J^N_\mu\left(\Phi\right)n^\mu_{\Sigma_\tau}+C\tau^2\int_{\Sigma_\tau\cap\{r\leq r^-_Y\}}\Phi^2\\
&\quad\quad\mbox{where we have used Proposition \ref{Znearhorizon} to show that the $u^2$ factor comes with a factor of $1-\mu$}\\
\leq& \int_{\Sigma_\tau} J^{Z,w^Z}_\mu\left(\Phi\right)n^\mu_{\Sigma_\tau}+\int_{\Sigma_\tau}J^N_\mu\left(\Phi\right)n^\mu_{\Sigma_\tau}+C\tau^2\int_{\Sigma_\tau\cap\{r\leq r^-_Y\}}J^N_\mu\left(\Phi\right)n^\mu_{\Sigma_\tau}\\
&+C\left(r^-_Y-r_+\right)\left(\int_{\Sigma_\tau} J^{Z,w^Z}_\mu\left(\hat\Phi\right)n^\mu_{\Sigma_\tau}+\int_{\Sigma_\tau}J^N_\mu\left(\hat\Phi\right)n^\mu_{\Sigma_\tau}\right)\\
\leq& \int_{\Sigma_\tau} J^{Z,w^Z}_\mu\left(\Phi\right)n^\mu_{\Sigma_\tau}+\int_{\Sigma_\tau}J^N_\mu\left(\Phi\right)n^\mu_{\Sigma_\tau}+C\tau^2\int_{\Sigma_\tau\cap\{r\leq r^-_Y\}}J^N_\mu\left(\Phi\right)n^\mu_{\Sigma_\tau}\\
&+\frac{1}{2}\left(\int_{\Sigma_\tau} J^{Z,w^Z}_\mu\left(\hat\Phi\right)n^\mu_{\Sigma_\tau}+\int_{\Sigma_\tau}J^N_\mu\left(\hat\Phi\right)n^\mu_{\Sigma_\tau}\right),
\end{split}
\end{equation}
for $r^-_Y$ chosen to be sufficiently close to $r_+$.
Then
\begin{equation*}
\begin{split}
&\int_{\Sigma_\tau} J^{Z,w^Z}_\mu\left(\hat\Phi\right)n^\mu_{\Sigma_\tau}+\int_{\Sigma_\tau}J^N_\mu\left(\hat\Phi\right)n^\mu_{\Sigma_\tau}\\
\leq& \int_{\Sigma_\tau} J^{Z,w^Z}_\mu\left(\Phi\right)n^\mu_{\Sigma_\tau}+\int_{\Sigma_\tau}J^N_\mu\left(\Phi\right)n^\mu_{\Sigma_\tau}+C\tau^2\int_{\Sigma_\tau\cap\{r\leq r^-_Y\}}J^N_\mu\left(\Phi\right)n^\mu_{\Sigma_\tau}.
\end{split}
\end{equation*}
\end{proof}
\begin{remark}\label{r-fix}
From this point onward, we consider $r^-_Y$ to be fixed. We note again that $r^-_Y$ is chosen so that (\ref{definekappa}) and (\ref{r-closeZ}) hold.
\end{remark}
\begin{remark}\label{positivity}
The proof of the above proposition in particular shows that
$$\int_{\Sigma_\tau} J^{Z,w^Z}_\mu\left(\Phi\right)n^\mu_{\Sigma_\tau}+C \tau^2\int_{\Sigma_\tau\cap\{r\leq r^-_Y\}} J^{N}_\mu\left(\Phi\right)n^\mu_{\Sigma_\tau}\geq 0$$
\end{remark}

In order to use this Proposition, it is helpful to have a localized version of $\Phi$. This follows \cite{DRS}, \cite{DRL}. The idea is to use the finite speed of propagation and cutoff $\Phi$ outside the domain of dependence. Suppose we now focus on the time interval $[\tau',\tau]$. Take $\tilde{G}$ to be any smooth function agreeing with $G$ on the domain of dependence of the region $(t^*=\tau, r \leq \frac{\tau}{2})$. Let $\tilde{\Phi}\left(\tau'\right)=\chi\Phi\left(\tau'\right)$, $\partial_{t^*}\tilde{\Phi}\left(\tau'\right)=\chi\partial_{t^*}\Phi\left(\tau'\right)$, where $\chi$ is a cutoff function identically equals to 1 for $r\leq\frac{7\tau'}{10}$ and compactly supported in $r\leq\frac{9\tau'}{10}$. Notice that the region for which $\chi$ is one is inside the domain of dependence of the region $(t^*=\tau, r \leq \frac{\tau}{2})$ if $\tau'\leq\tau\leq (1.1)\tau'$. We solve for $\Box_{g_K}\tilde{\Phi}=\tilde{G}$.\\
With this definition of $\tilde{\Phi}$, we have two ways to estimate the non-degenerate energy of $\tilde{\Phi}$:
\begin{proposition}\label{cutoffbound}
\begin{equation*}
\begin{split}
\int_{\Sigma_{\tau'}} J^{N}_\mu\left(\tilde{\Phi}\right)n^\mu_{\Sigma_{\tau'}}\leq C\int_{\Sigma_{\tau'}}J^{N}_\mu\left(\Phi\right)n^\mu_{\Sigma_{\tau'}}.
\end{split}
\end{equation*}
\begin{equation*}
\begin{split}
&\int_{\Sigma_{\tau'}} J^{N}_\mu\left(\tilde{\Phi}\right)n^\mu_{\Sigma_{\tau'}}\\
\leq& C^2\int_{\Sigma_{\tau'}\cap\{r\leq r^-_Y\}}J^{N}_\mu\left(\Phi\right)n^\mu_{\Sigma_{\tau'}}+C(\tau')^{-2}\left(\int_{\Sigma_{\tau'}}J^{N}_\mu\left(\Phi\right)n^\mu_{\Sigma_{\tau'}}+\int_{\Sigma_{\tau'}}J^{Z,w^Z}_\mu\left(\Phi\right)n^\mu_{\Sigma_{\tau'}}\right).
\end{split}
\end{equation*}
\end{proposition}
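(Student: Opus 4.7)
The proof is a direct calculation on the Cauchy data since both quantities are evaluated at $t^*=\tau'$. From $\tilde\Phi(\tau')=\chi\Phi(\tau')$ and $\partial_{t^*}\tilde\Phi(\tau')=\chi\partial_{t^*}\Phi(\tau')$ with $\chi=\chi(r)$, the Leibniz rule gives
$$\partial_{t^*}\tilde\Phi=\chi\partial_{t^*}\Phi,\qquad \nabb\tilde\Phi=\chi\nabb\Phi,\qquad \partial_r\tilde\Phi=\chi\partial_r\Phi+(\partial_r\chi)\Phi,$$
so that pointwise on $\Sigma_{\tau'}$,
$$J^N_\mu(\tilde\Phi)n^\mu_{\Sigma_{\tau'}}\le C\chi^2 J^N_\mu(\Phi)n^\mu_{\Sigma_{\tau'}}+C(\partial_r\chi)^2\Phi^2.$$
Here $|\partial_r\chi|\le C(\tau')^{-1}$ is supported in $\{7\tau'/10\le r\le 9\tau'/10\}$, a set contained in $\{r\ge R\}$ for any fixed $R$ once $\tau'$ is sufficiently large.

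For the first inequality, use $\chi^2\le 1$ and $(\partial_r\chi)^2\le C r^{-2}$ on the support of $\partial_r\chi$; the Hardy inequality of Proposition \ref{hi} with $\alpha=0$ then yields
$$\int_{\Sigma_{\tau'}}(\partial_r\chi)^2\Phi^2\le C\int_{\Sigma_{\tau'}}r^{-2}\Phi^2\le C\int_{\Sigma_{\tau'}}J^N_\mu(\Phi)n^\mu_{\Sigma_{\tau'}},$$
which combined with the pointwise bound above finishes this case.

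For the second inequality, I split $\Sigma_{\tau'}$ into $\{r\le r^-_Y\}$ and $\{r\ge r^-_Y\}$. On the former, $\chi\equiv 1$ (assuming $\tau'$ is large enough that $7\tau'/10\ge r^-_Y$; the remaining finite range of $\tau'$ is absorbed into $C$), hence $\tilde\Phi=\Phi$ and the contribution equals $\int_{\{r\le r^-_Y\}}J^N_\mu(\Phi)n^\mu$, which is dominated by the first summand on the right. On the latter region, $\tilde\Phi$ is supported in $\{r^-_Y\le r\le 9\tau'/10\}$, where the Schwarzschild null coordinates $u=\tfrac12(\tau'-r^*_S)$ and $v=\tfrac12(\tau'+r^*_S)$ both satisfy $u,v\ge c\tau'$. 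Writing $\partial_{t^*}$ and $\partial_r$ as bounded combinations of $L,\underline L$ (plus $O(\epsilon)$ angular corrections, using the null frame of Section \ref{geometry}.3), and noting that $r\sim\tau'$ on the support of $\partial_r\chi$ to get $(\partial_r\chi)^2\Phi^2\le C(\tau')^{-2}\tfrac{u^2+v^2}{r^2}\Phi^2$, I obtain
$$J^N_\mu(\tilde\Phi)n^\mu_{\Sigma_{\tau'}}\le C(\tau')^{-2}\Bigl(u^2(\underline L\Phi)^2+v^2(L\Phi)^2+(u^2+v^2)|\nabb\Phi|^2+\tfrac{u^2+v^2}{r^2}\Phi^2\Bigr)$$
on $\{r\ge r^-_Y\}$. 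Integrating and applying Proposition \ref{Zlowerbound} with $\tau=\tau'$ produces
$$C(\tau')^{-2}\Bigl(\int_{\Sigma_{\tau'}}J^{Z,w^Z}_\mu(\Phi)n^\mu+\int_{\Sigma_{\tau'}}J^N_\mu(\Phi)n^\mu\Bigr)+C^2\int_{\Sigma_{\tau'}\cap\{r\le r^-_Y\}}J^N_\mu(\Phi)n^\mu,$$
where the $(\tau')^2$ factor attached to the horizon term in Proposition \ref{Zlowerbound} cancels precisely against the $(\tau')^{-2}$ prefactor. The main (very minor) obstacle is arranging the $(\tau')^{\pm 2}$ factors to cancel as stated; the algebra works because the cutoff $\chi$ is localized at scale $r\sim\tau'$, which is exactly the scale at which $u$ and $v$ become comparable to $\tau'$.
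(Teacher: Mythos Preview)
Your argument is correct and follows essentially the same route as the paper: a Leibniz bound on $D\tilde\Phi$, Hardy (Proposition~\ref{hi}) for the first inequality, and for the second inequality the observation that $u,v\ge c\tau'$ on $\{r^-_Y\le r\le 9\tau'/10\}$ so that $(D\Phi)^2$ is dominated by $(\tau')^{-2}$ times the weighted quantities controlled by Proposition~\ref{Zlowerbound}. Your treatment of the near-horizon piece is in fact slightly cleaner than the paper's: since $\chi\equiv 1$ on $\{r\le r^-_Y\}$ you simply get $\int_{\{r\le r^-_Y\}}J^N_\mu(\Phi)n^\mu$, whereas the paper carries an extra $\Phi^2$ term there and then invokes the pointwise sphere bound (\ref{0thorder}) to dispose of it. One cosmetic remark: the reference to ``the null frame of Section~\ref{geometry}.3'' is misplaced, as that frame lives only in $\{r\le r^+_Y\}$; what you actually need (and use) is just that $\partial_{t^*},\partial_r$ are bounded combinations of $L,\underline L$ for $r\ge r^-_Y$, which is immediate from the definitions $L+\underline L=2\partial_t$, $L-\underline L=2\partial_{r^*_S}$ plus $O(\epsilon)$ corrections.
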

\begin{proof}
The first part is an easy application of Proposition \ref{hi}
\begin{equation*}
\begin{split}
\int_{\Sigma_{\tau'}} J^{N}_\mu\left(\tilde{\Phi}\right)n^\mu_{\Sigma_{\tau'}}
\leq &C\int_{\Sigma_{\tau'}\cap\{R\leq r\leq \frac{9\tau'}{10}\}}\left(\left(D\Phi\right)^2+(\tau')^{-2}\Phi^2\right)\\
\leq &C\int_{\Sigma_{\tau'}\cap\{R\leq r\leq \frac{9\tau'}{10}\}}\left(\left(D\Phi\right)^2+r^{-2}\Phi^2\right)\\
\leq &C\int_{\Sigma_{\tau'}}J^{N}_\mu\left(\Phi\right)n^\mu_{\Sigma_{\tau'}}.
\end{split}
\end{equation*}
Following (\ref{0thorder}), we have
\begin{equation*}
\begin{split}
\int_{\Sigma_{\tau'}\cap\{r\leq r^-_Y\}}\Phi^2\leq C\left(\int_{\Sigma_{\tau'}\cap\{r\leq r^-_Y\}}J^{N}_\mu\left(\Phi\right)n^\mu_{\Sigma_{\tau'}}+\int_{\Sigma_{\tau'}\cap\{r^-_Y\leq r\leq r^+_Y\}} \Phi^2\right).
\end{split}
\end{equation*}
Using this and Proposition \ref{Zlowerbound}, we have
\begin{equation*}
\begin{split}
&\int_{\Sigma_{\tau'}} J^{N}_\mu\left(\tilde{\Phi}\right)n^\mu_{\Sigma_{\tau'}}\\
\leq &C\int_{\Sigma_{\tau'}\cap\{r\leq \frac{9\tau'}{10}\}}\left(\left(D\Phi\right)^2+(\tau')^{-2}\Phi^2\right)\\
\leq &C\int_{\Sigma_{\tau'}\cap\{r\leq r^-_Y\}}\left(\left(D\Phi\right)^2+\Phi^2\right)\\
&+C(\tau')^{-2}\int_{\Sigma_{\tau'}\cap\{r^-_Y\leq r\leq \frac{9\tau'}{10}\}}\left(u^2\left(\underline{L}\Phi\right)^2+v^2\left(L\Phi\right)^2+\left(u^2+v^2\right)|\nabb\Phi|^2+\left(\frac{u^2+v^2}{r^2}\right)\Phi^2\right)\\
\leq& C^2\int_{\Sigma_{\tau'}\cap\{r\leq r^-_Y\}}J^{N}_\mu\left(\Phi\right)n^\mu_{\Sigma_{\tau'}}+C(\tau')^{-2}\left(\int_{\Sigma_{\tau'}}J^{N}_\mu\left(\Phi\right)n^\mu_{\Sigma_{\tau'}}+\int_{\Sigma_{\tau'}}J^{Z,w^Z}_\mu\left(\Phi\right)n^\mu_{\Sigma_{\tau'}}\right).
\end{split}
\end{equation*}
\end{proof}

The cutoff procedure above would also allow us to localized the estimates for the bulk term:
\begin{proposition}\label{localization}
Let $G=G_1+G_2$ be any way to decompose the function $G$. Then for $\tau'\leq \tau\leq (1.1)\tau'$, we have
\begin{enumerate}
\item Localized Boundedness Estimate
\begin{equation*}
\begin{split}
&\int_{\Sigma_{\tau}\cap\{r\leq\frac{\tau}{2}\}} J^{N}_\mu\left(\Phi\right)n^\mu_{\Sigma_{\tau}} +\int_{\mathcal H(\tau',\tau)} J^{N}_\mu\left(\Phi\right)n^\mu_{\mathcal H^+} \\
&+\iint_{\mathcal R(\tau',\tau)\cap\{r\leq r^-_Y\}}K^{N}\left(\Phi\right)+\iint_{\mathcal R(\tau',\tau)\cap\{r\leq\frac{t^*}{2}\}}K^{X_0}\left(\Phi\right)\\
\leq &C\left(\int_{\Sigma_{\tau'}} J^{N}_\mu\left(\Phi\right)n^\mu_{\Sigma_{\tau'}}+\left(\int_{\tau'-1}^{\tau+1}\left(\int_{\Sigma_{t^*}\cap\{r\leq\frac{9t^*}{10}\}}  G_1^2\right)^{\frac{1}{2}}dt^*\right)^2\right.\\
&\left.+\iint_{\mathcal R(\tau'-1,\tau+1)\cap\{r\leq\frac{9t^*}{10}\}}  G_1^2+\sum_{m=0}^{1}\iint_{\mathcal R(\tau'-1,\tau+1)\cap\{r\leq\frac{9t^*}{10}\}}r^{1+\delta}\left(\partial_{t^*}^{m}{G_2}\right)^2\right.\\
&\left.+\sup_{t^*\in [\tau'-1,\tau+1]}\int_{\Sigma_{t^*}\cap\{|r-3M|\leq\frac{M}{8}\}\cap\{r\leq\frac{9t^*}{10}\}} {G_2}^2\right).
\end{split}
\end{equation*}
\item Localized Decay Estimate
\begin{equation*}
\begin{split}
&\int_{\Sigma_{\tau}\cap\{r\leq\frac{\tau}{2}\}} J^{N}_\mu\left(\Phi\right)n^\mu_{\Sigma_{\tau}} +\int_{\mathcal H(\tau',\tau)} J^{N}_\mu\left(\Phi\right)n^\mu_{\mathcal H^+} \\
&+\iint_{\mathcal R(\tau',\tau)\cap\{r\leq r^-_Y\}}K^{N}\left(\Phi\right)+\iint_{\mathcal R(\tau',\tau)\cap\{r\leq\frac{t^*}{2}\}}K^{X_0}\left(\Phi\right)\\
\leq& C\left(\tau^{-2}\int_{\Sigma_{\tau'}} J^{Z+N,w^Z}_\mu\left(\Phi\right)n^\mu_{\Sigma_{\tau'}}+ C\int_{\Sigma_{\tau'}\cap\{r\leq r^-_Y\}} J^{N}_\mu\left(\Phi\right)n^\mu_{\Sigma_{\tau'}}\right)\\
&+C\left(\left(\int_{\tau'-1}^{\tau+1}\left(\int_{\Sigma_{t^*}\cap\{r\leq\frac{9t^*}{10}\}}  G_1^2\right)^{\frac{1}{2}}dt^*\right)^2+\iint_{\mathcal R(\tau'-1,\tau+1)\cap\{r\leq\frac{9t^*}{10}\}}  G_1^2\right.\\
&\left.+\sum_{m=0}^{1}\iint_{\mathcal R(\tau'-1,\tau+1)\cap\{r\leq\frac{9t^*}{10}\}}r^{1+\delta}\left(\partial_{t^*}^{m}{G_2}\right)^2+\sup_{t^*\in [\tau'-1,\tau+1]}\int_{\Sigma_{t^*}\cap\{|r-3M|\leq\frac{M}{8}\}\cap\{r\leq\frac{9t^*}{10}\}} G_2^2\right).
\end{split}
\end{equation*}
\end{enumerate}
\end{proposition}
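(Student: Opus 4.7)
The plan is to reduce both estimates to Proposition \ref{bddcom} applied to the cutoff solution $\tilde{\Phi}$ already constructed in the paragraph preceding Proposition \ref{cutoffbound}, using finite speed of propagation to transfer the conclusion back to $\Phi$ on the region $\{r\leq t^*/2\}\cap\mathcal R(\tau',\tau)$ and on $\mathcal H(\tau',\tau)$. Concretely, I fix a smooth cutoff $\chi_2(t^*,r)$ equal to $1$ on $\{r\leq 7t^*/10\}$ and supported in $\{r\leq 9t^*/10\}$, and set $\tilde{G}=\chi_2 G$, with the splitting $\tilde{G}_i=\chi_2 G_i$ for $i=1,2$. Since $\tau\leq (1.1)\tau'$, a quick inspection of the null cones in the $(t^*,r)$ coordinates shows that the past domain of dependence of $\{t^*=\tau,\,r\leq\tau/2\}\cup\mathcal H(\tau',\tau)$ meets $\Sigma_{\tau'}$ inside $\{r\leq 7\tau'/10\}$ and sits entirely inside $\{r\leq 7t^*/10\}$. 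On this domain of dependence both $\tilde{G}=G$ and the initial data of $\tilde{\Phi}$ coincide with those of $\Phi$, so uniqueness gives $\tilde{\Phi}=\Phi$ there; in particular this identification holds on all the sets appearing on the left-hand side of the proposition.

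Next I apply Proposition \ref{bddcom} on $[\tau',\tau]$ to $\tilde{\Phi}$ with the decomposition $\tilde{G}=\tilde{G}_1+\tilde{G}_2$. Every inhomogeneous term on the right-hand side is automatically supported in $\{r\leq 9t^*/10\}$ by construction of $\chi_2$, which produces the localized $G$-norms displayed in the conclusion. The left-hand side of Proposition \ref{bddcom} then controls the localized left-hand side of our Proposition \emph{a fortiori}, once the identification $\tilde{\Phi}=\Phi$ is invoked on the relevant region.

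At this stage only the initial-data term $\int_{\Sigma_{\tau'}}J^N_\mu(\tilde{\Phi})n^\mu_{\Sigma_{\tau'}}$ involves $\tilde{\Phi}$, and it is estimated using Proposition \ref{cutoffbound}. Using the first inequality there, which bounds this energy by $\int_{\Sigma_{\tau'}}J^N_\mu(\Phi)n^\mu_{\Sigma_{\tau'}}$, yields the localized boundedness estimate. Using instead the second inequality there introduces a factor $(\tau')^{-2}\sim\tau^{-2}$ in front of $\int_{\Sigma_{\tau'}}J^{Z,w^Z}_\mu(\Phi)n^\mu_{\Sigma_{\tau'}}+\int_{\Sigma_{\tau'}}J^N_\mu(\Phi)n^\mu_{\Sigma_{\tau'}}$, together with the unavoidable $\int_{\Sigma_{\tau'}\cap\{r\leq r^-_Y\}}J^N_\mu(\Phi)n^\mu_{\Sigma_{\tau'}}$ contribution; repackaging $J^Z+J^N$ as $J^{Z+N,w^Z}$ (which by Remark \ref{positivity} is a sensible nonnegative quantity up to the $\{r\leq r^-_Y\}$ term already present) gives the localized decay estimate in exactly the stated form.

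The main obstacle I expect is the finite-speed-of-propagation step used to identify $\tilde{\Phi}$ with $\Phi$: one must verify that in the regular $(t^*,r)$ coordinates (which are smooth across $\mathcal H^+$), the incoming null geodesics from $\{t^*=\tau,\,r=\tau/2\}\cup\mathcal H(\tau',\tau)$, traced backward to $t^*=\tau'$, remain inside $\{r\leq 7\tau'/10\}$ and inside $\{r\leq 7t^*/10\}$ along the way. For $r$ large this is essentially the bound $dr/dt^*\sim -1$ for incoming rays, which is stable under the $O(\epsilon r^{-2})$ Kerr perturbation of the metric; near the horizon one uses that $\mathcal H^+$ itself is null, so incoming rays from $\mathcal H(\tau',\tau)$ simply sit along the horizon and cannot escape into larger $r$. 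Once this geometric check is in place the rest of the argument is essentially bookkeeping built on results already proved in the paper.
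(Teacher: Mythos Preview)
Your approach is essentially identical to the paper's: cut off the data and the inhomogeneity, apply Proposition \ref{bddcom} to $\tilde{\Phi}$, use finite speed of propagation to identify $\tilde{\Phi}=\Phi$ on $\{r\leq t^*/2\}$, and then invoke the two parts of Proposition \ref{cutoffbound} to handle the initial term. One small point you gloss over that the paper makes explicit: since $\chi_2$ depends on $t^*$ through the ratio $r/t^*$, the term $\partial_{t^*}\tilde{G}_2$ picks up an extra $(\partial_{t^*}\chi_2)G_2$ supported in $\{7t^*/10\leq r\leq 9t^*/10\}$, with $|\partial_{t^*}\chi_2|\lesssim (t^*)^{-1}$; this is harmlessly absorbed into the $m=0$ part of the $G_2$ sum, but it is the reason the localized $G_2$-norms still involve only $\partial_{t^*}^m G_2$ and not derivatives of the cutoff.
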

\begin{proof}
Applying Proposition \ref{bddcom} to the equation $\Box_{g_K}\tilde{\Phi}=\tilde{G}$, we have
\begin{equation*}
\begin{split}
&\int_{\Sigma_{\tau}\cap\{r\leq\frac{\tau}{2}\}} J^{N}_\mu\left(\tilde{\Phi}\right)n^\mu_{\Sigma_{\tau}} +\int_{\mathcal H(\tau',\tau)} J^{N}_\mu\left(\tilde{\Phi}\right)n^\mu_{\mathcal H^+} +\iint_{\mathcal R(\tau',\tau)\cap\{r\leq r^-_Y\}}K^{N}\left(\tilde{\Phi}\right)+\iint_{\mathcal R(\tau',\tau)\cap\{r\leq\frac{t^*}{2}\}}K^{X_0}\left(\tilde{\Phi}\right)\\
\leq &C\left(\int_{\Sigma_{\tau'}} J^{N}_\mu\left(\tilde{\Phi}\right)n^\mu_{\Sigma_{\tau'}}+\left(\int_{\tau'-1}^{\tau+1}\left(\int_{\Sigma_{t^*}} \tilde G_1^2\right)^{\frac{1}{2}}dt^*\right)^2+\iint_{\mathcal R(\tau'-1,\tau+1)}\tilde G_1^2\right.\\
&\left.+\sum_{m=0}^{1}\iint_{\mathcal R(\tau'-1,\tau+1)}r^{1+\delta}\left(\partial_{t^*}^{m}\tilde G_2\right)^2+\sup_{t^*\in [\tau'-1,\tau+1]}\int_{\Sigma_{t^*}\cap\{|r-3M|\leq\frac{M}{8}\}} \tilde G_2^2\right).
\end{split}
\end{equation*}
Since by the finite speed of propagation, $\tilde{\Phi}=\Phi$ in $\{r\leq \frac{t^*}{2}\}$, we have
\begin{equation*}
\begin{split}
&\int_{\Sigma_{\tau}\cap\{r\leq\frac{\tau}{2}\}} J^{N}_\mu\left(\Phi\right)n^\mu_{\Sigma_{\tau}} +\int_{\mathcal H(\tau',\tau)} J^{N}_\mu\left(\Phi\right)n^\mu_{\mathcal H^+} +\iint_{\mathcal R(\tau',\tau)\cap\{r\leq r^-_Y\}}K^{N}\left(\Phi\right)+\iint_{\mathcal R(\tau',\tau)\cap\{r\leq\frac{t^*}{2}\}}K^{X_0}\left(\Phi\right)\\
\leq &C\left(\int_{\Sigma_{\tau'}} J^{N}_\mu\left(\tilde\Phi\right)n^\mu_{\Sigma_{\tau'}}+\left(\int_{\tau'-1}^{\tau+1}\left(\int_{\Sigma_{t^*}} \tilde G_1^2\right)^{\frac{1}{2}}dt^*\right)^2+\iint_{\mathcal R(\tau'-1,\tau+1)}\tilde G_1^2\right.\\
&\left.+\sum_{m=0}^{1}\iint_{\mathcal R(\tau'-1,\tau+1)}r^{1+\delta}\left(\partial_{t^*}^{m}\tilde G_2\right)^2+\sup_{t^*\in [\tau'-1,\tau+1]}\int_{\Sigma_{t^*}\cap\{|r-3M|\leq\frac{M}{8}\}} \tilde G_2^2\right).
\end{split}
\end{equation*}
Now, we choose a particular $\tilde{G}$. Define $\tilde{G}$ to be $G$ for $r\leq\frac{7t^*}{10}$, and $0$ for $r\geq\frac{9t^*}{10}$. It can be easily shown that one can have the bounds $|\partial_{t^*}^m\tilde{G}|\leq C\displaystyle\sum_{k=0}^m|\left(\frac{r^*}{(t^*)^2}\right)^k\partial_{t^*}^{m-k}G|\leq C\displaystyle\sum_{k=0}^m|(t^*)^{-k}\partial_{t^*}^{m-k}G|$ for $\frac{7t^*}{10}\leq r\leq\frac{9t^*}{10}$. Therefore, we have
\begin{equation*}
\begin{split}
&\int_{\Sigma_{\tau}\cap\{r\leq\frac{\tau}{2}\}} J^{N}_\mu\left(\Phi\right)n^\mu_{\Sigma_{\tau}} +\int_{\mathcal H(\tau',\tau)} J^{N}_\mu\left(\Phi\right)n^\mu_{\mathcal H^+} +\iint_{\mathcal R(\tau',\tau)\cap\{r\leq r^-_Y\}}K^{N}\left(\Phi\right)+\iint_{\mathcal R(\tau',\tau)\cap\{r\leq\frac{t^*}{2}\}}K^{X_0}\left(\Phi\right)\\
\leq &C\left(\int_{\Sigma_{\tau'}} J^{N}_\mu\left(\tilde\Phi\right)n^\mu_{\Sigma_{\tau'}}+\left(\int_{\tau'-1}^{\tau+1}\left(\int_{\Sigma_{t^*}\cap\{r\leq\frac{9t^*}{10}\}}  G_1^2\right)^{\frac{1}{2}}dt^*\right)^2+\iint_{\mathcal R(\tau'-1,\tau+1)\cap\{r\leq\frac{9t^*}{10}\}} G_1^2\right.\\
&\left.+\sum_{m=0}^{1}\iint_{\mathcal R(\tau'-1,\tau+1)\cap\{r\leq\frac{9t^*}{10}\}}r^{1+\delta}\left(\partial_{t^*}^{m} G_2\right)^2+\sup_{t^*\in [\tau'-1,\tau+1]}\int_{\Sigma_{t^*}\cap\{|r-3M|\leq\frac{M}{8}\}}  G_2^2\right).
\end{split}
\end{equation*}
We can now conclude the Proposition using Proposition \ref{cutoffbound}.
\end{proof}

We can remove the degeneracy around $r\sim 3M$ using an extra derivative.
\begin{proposition}\label{localizationT}
Let $G=G_1+G_2$ be any way to decompose the function $G$. Then for $\tau'\leq \tau\leq (1.1)\tau'$, we have
\begin{enumerate}
\item Localized Boundedness Estimate
\begin{equation*}
\begin{split}
&\iint_{\mathcal R(\tau',\tau)\cap\{r\leq\frac{t^*}{2}\}}K^{X_1}\left(\Phi\right)\\
\leq &C\left(\sum_{m=0}^{1}\int_{\Sigma_{\tau'}} J^{N}_\mu\left(\partial_{t^*}^m\Phi\right)n^\mu_{\Sigma_{\tau'}}+\sum_{m=0}^{1}\left(\int_{\tau'-1}^{\tau+1}\left(\int_{\Sigma_{t^*}\cap\{r\leq\frac{9t^*}{10}\}}  \left(\partial_{t^*}^mG_1\right)^2\right)^{\frac{1}{2}}dt^*\right)^2\right.\\
&\left.+\sum_{m=0}^{1}\iint_{\mathcal R(\tau'-1,\tau+1)\cap\{r\leq\frac{9t^*}{10}\}}  \left(\partial_{t^*}^mG_1\right)^2+\sum_{m=0}^{2}\iint_{\mathcal R(\tau'-1,\tau+1)\cap\{r\leq\frac{9t^*}{10}\}}r^{1+\delta}\left(\partial_{t^*}^{m}{G_2}\right)^2\right.\\
&\left.+\sup_{t^*\in [\tau'-1,\tau+1]}\sum_{m=0}^{1}\int_{\Sigma_{t^*}\cap\{|r-3M|\leq\frac{M}{8}\}\cap\{r\leq\frac{9t^*}{10}\}} \left(\partial_{t^*}^m G_2\right)^2\right).
\end{split}
\end{equation*}
\item Localized Decay Estimate
\begin{equation*}
\begin{split}
&\iint_{\mathcal R(\tau',\tau)\cap\{r\leq\frac{t^*}{2}\}}K^{X_1}\left(\Phi\right)\\
\leq& C\left(\tau^{-2}\sum_{m=0}^{1}\int_{\Sigma_{\tau'}} J^{Z+N,w^Z}_\mu\left(\partial_{t^*}^m\Phi\right)n^\mu_{\Sigma_{\tau'}}+ C\sum_{m=0}^{1}\int_{\Sigma_{\tau'}\cap\{r\leq r^-_Y\}} J^{N}_\mu\left(\partial_{t^*}^m\Phi\right)n^\mu_{\Sigma_{\tau'}}\right)\\
&+C\left(\sum_{m=0}^{1}\left(\int_{\tau'-1}^{\tau+1}\left(\int_{\Sigma_{t^*}\cap\{r\leq\frac{9t^*}{10}\}}  \left(\partial_{t^*}^mG_1\right)^2\right)^{\frac{1}{2}}dt^*\right)^2+\sum_{m=0}^{1}\iint_{\mathcal R(\tau'-1,\tau+1)\cap\{r\leq\frac{9t^*}{10}\}}  \left(\partial_{t^*}^mG_1\right)^2\right.\\
&\left.+\sum_{m=0}^{2}\iint_{\mathcal R(\tau'-1,\tau+1)\cap\{r\leq\frac{9t^*}{10}\}}r^{1+\delta}\left(\partial_{t^*}^{m}{G_2}\right)^2\right.\\
&\left.+\sup_{t^*\in [\tau'-1,\tau+1]}\sum_{m=0}^{1}\int_{\Sigma_{t^*}\cap\{|r-3M|\leq\frac{M}{8}\}\cap\{r\leq\frac{9t^*}{10}\}} \left(\partial_{t^*}^m G_2\right)^2\right).
\end{split}
\end{equation*}
\end{enumerate}
\end{proposition}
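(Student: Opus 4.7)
The plan is to parallel the two-step argument of Proposition \ref{bddcom1}---first controlling the non-angular components of $K^{X_1}$ via a $\partial_{t^*}$-commutation, then recovering the angular component near the photon sphere with an auxiliary multiplier---but now using the localized estimate of Proposition \ref{localization} in place of the global estimate of Proposition \ref{bddcom}. No new analytic input beyond these propositions and the mild growth estimate Proposition \ref{mg} should be needed.

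Step one: apply Proposition \ref{localization} directly to $\Phi$ with the prescribed splitting $G=G_1+G_2$. This controls $r^{-1-\delta}(\partial_r\Phi)^2+r^{-3-\delta}\Phi^2$ throughout $\{r\leq t^*/2\}$, together with $r^{-1-\delta}J^N_\mu(\Phi)n^\mu_{\Sigma_{t^*}}$ away from the trapping set $\{|r-3M|\leq M/8\}$. Since $\partial_{t^*}$ is Killing, the function $\partial_{t^*}\Phi$ satisfies $\Box_{g_K}(\partial_{t^*}\Phi)=\partial_{t^*}G_1+\partial_{t^*}G_2$, and a second application of Proposition \ref{localization} to $\partial_{t^*}\Phi$ with the split source gives control of $r^{-3-\delta}(\partial_{t^*}\Phi)^2$ uniformly in $r\leq t^*/2$. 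The only piece of $K^{X_1}(\Phi)$ left unaccounted for after these two steps is $r^{-1-\delta}|\nabb\Phi|^2$ on the trapping shell $\{|r-3M|\leq M/8\}$.

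Step two: I reuse the multiplier $X_{an}=f_{an}(r)\partial_{r^*}$ from the proof of Proposition \ref{bddcom1}, where $f_{an}$ is compactly supported in $\{|r-3M|\leq M/4\}$ and equals $-1$ on $\{|r-3M|\leq M/8\}$. The key geometric point is that the support of $X_{an}$ lies in a fixed compact $r$-interval, which is contained in $\{r\leq t^*/2\}$ for all $\tau'$ larger than some universal constant (the finitely many small-$\tau'$ values being absorbed once and for all via Proposition \ref{mg}). The energy identity for $J^{X_{an}}_\mu(\Phi)$ on $\mathcal R(\tau',\tau)$ yields a bulk lower bound proportional to $|\nabb\Phi|^2$ on $\{|r-3M|\leq M/8\}$, modulo (i) the already-controlled quantities $(\partial_r\Phi)^2$, $(\partial_{t^*}\Phi)^2$ and $\Phi^2$ on the slightly larger set $\{|r-3M|\leq M/4\}$ coming from step one; (ii) the fluxes on $\Sigma_\tau$ and $\Sigma_{\tau'}$ restricted to this compact set, which are dominated by $\int J^N_\mu(\Phi)n^\mu$ on the same set and estimated via Proposition \ref{cutoffbound} exactly as in Proposition \ref{localization}; and (iii) an inhomogeneous term $\iint X_{an}\Phi\cdot G$ split according to $G=G_1+G_2$ and estimated as in Proposition \ref{bddloss}---Cauchy--Schwarz in $t^*$ for $G_1$, and integration by parts in $t^*$ against the $r^{1+\delta}$-weighted $L^2$ norm for $G_2$.

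The main obstacle is the bookkeeping of derivative losses, which dictates the precise shape of the right-hand side. The $\partial_{t^*}$-commutation forces one extra $\partial_{t^*}$-derivative on both $G_1$ and $G_2$, while the integration by parts in $t^*$ used to absorb the trapping-region contribution $\iint_{\{|r-3M|\leq M/8\}}\partial_{t^*}\Phi\cdot G_2$ costs a second $\partial_{t^*}$-derivative on $G_2$ and produces the supremum-in-$t^*$ term $\sup_{t^*}\int_{\Sigma_{t^*}\cap\{|r-3M|\leq M/8\}}(\partial_{t^*}^m G_2)^2$ for $m=0,1$ visible in the statement. Everything else is routine absorption of small terms into the left-hand side, as in the proof of Proposition \ref{bddloss}. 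The decay version of the proposition follows by invoking the decay branch of Proposition \ref{localization} in place of the boundedness branch at the first step, which introduces the weighted $Z+N$ flux $\tau^{-2}\int_{\Sigma_{\tau'}}J^{Z+N,w^Z}_\mu(\partial_{t^*}^m\Phi)n^\mu$ in place of the pure $N$-flux.
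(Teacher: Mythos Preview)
Your proposal is correct and follows essentially the same route as the paper. The paper's proof is the one-liner ``repeat the argument in Proposition \ref{localization}, using Proposition \ref{bddcom1} instead of \ref{bddcom}'': that is, it performs the cutoff $\tilde\Phi$ first and then applies the already-packaged global $K^{X_1}$ estimate. You instead invoke the localized Proposition \ref{localization} for $\Phi$ and $\partial_{t^*}\Phi$ and then run the $X_{an}$ multiplier directly on the original $\Phi$ in the compact trapping shell; this merely unpacks the content of Proposition \ref{bddcom1} inside the localized setting, and the derivative bookkeeping you describe matches the statement. One minor expository point: the inhomogeneous term arising from the $X_{an}$ identity itself involves only $\partial_r\Phi$ and $r^{-1}\Phi$ (see \eqref{XanS}), so it is handled by Cauchy--Schwarz against $K^{X_0}$ without any integration by parts in $t^*$; the second $\partial_{t^*}$-derivative on $G_2$ comes solely from your step-one application of Proposition \ref{localization} to $\partial_{t^*}\Phi$.
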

\begin{proof}
We repeat the argument in Proposition \ref{localization}, using Proposition \ref{bddcom1} instead of \ref{bddcom}.
\end{proof}

When using the conservation law for $Z$, we can ignore the part of the bulk term that has a good sign.
\begin{definition}
Let $K^{Z,w^Z}_+\left(\Phi\right)=\max\{K^{Z,w^Z}\left(\Phi\right),0\}.$
\end{definition}

Using the conservation law for the modified vector field, we have a one-sided bound:
\begin{proposition}\label{Zid}
\begin{equation*}
\begin{split}
&\int_{\Sigma_\tau} J^{Z,w^Z}_\mu\left(\Phi\right)n^\mu_{\Sigma_\tau}+\int_{\mathcal H(\tau',\tau)} J^{Z,w^Z}_\mu\left(\Phi\right)n^\mu_{\mathcal H^+}\\
\leq& C(\tau')^2\int_{\Sigma_{\tau'}\cap\{r\leq r^-_Y\}} J^{N}_\mu\left(\Phi\right)n^\mu_{\Sigma_{\tau'}}+\int_{\Sigma_{\tau'}} J^{Z,w^Z}_\mu\left(\Phi\right)n^\mu_{\Sigma_{\tau'}}\\
&+\iint_{\mathcal R(\tau',\tau)} K^{Z,w^Z}_+\left(\Phi\right)+|\iint_{\mathcal R(\tau',\tau)}\left(u^2 L\Phi+v^2 \underline L\Phi-\frac{1}{4}w\Phi\right)G|.
\end{split}
\end{equation*}
\begin{remark}
In the above Proposition, the left hand side is not claimed to be positive. Note, however, that the right hand side is positive by Remark \ref{positivity}.
\end{remark}
\begin{remark}
We note also that $\int_{\mathcal H(\tau',\tau)} J^{Z,w^Z}_\mu\left(\Phi\right)n^\mu_{\mathcal H^+}\geq0$ because $Z$ and $n^\mu_{\mathcal H^+}$ are both null and future directed and $w^Z=0$ on the event horizon.
\end{remark}

\end{proposition}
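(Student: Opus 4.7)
The plan is straightforward: apply the modified current identity of the previous section and read off the desired inequality, taking absolute values and discarding favorable bulk contributions as needed.

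First, I would apply the inhomogeneous modified current identity (the second of the two displayed propositions in Section \ref{currents}) to $\Phi$ solving $\Box_{g_K}\Phi=G$, with the choice $V=Z$, $w=w^Z$, $F=G$, on the spacetime region $\mathcal R(\tau',\tau)$:
\begin{equation*}
\begin{split}
&\int_{\Sigma_\tau} J^{Z,w^Z}_\mu n^\mu_{\Sigma_\tau}+\int_{\mathcal H(\tau',\tau)} J^{Z,w^Z}_\mu n^\mu_{\mathcal H^+}+\iint_{\mathcal R(\tau',\tau)} K^{Z,w^Z}\\
=\;&\int_{\Sigma_{\tau'}} J^{Z,w^Z}_\mu n^\mu_{\Sigma_{\tau'}}+\iint_{\mathcal R(\tau',\tau)}\left(-\tfrac{1}{4}Gw^Z\Phi-G Z^\nu\partial_\nu\Phi\right).
\end{split}
\end{equation*}
Since $Z^\nu\partial_\nu\Phi=u^2\underline L\Phi+v^2 L\Phi$, the inhomogeneous contribution on the right takes exactly the form $-\iint G\left(u^2\underline L\Phi+v^2L\Phi+\tfrac14 w^Z\Phi\right)$ appearing in the statement (up to a sign that is absorbed when we pass to absolute values).

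Next, I would rearrange so that the two fluxes we wish to estimate sit on the left-hand side, and split the bulk term $K^{Z,w^Z}$ according to sign. Writing $K^{Z,w^Z}=K^{Z,w^Z}_+-K^{Z,w^Z}_-$ with both pieces non-negative, the contribution $-\iint K^{Z,w^Z}_+$ has a favorable sign and can simply be dropped, while the negative-part contribution $+\iint K^{Z,w^Z}_-$ must be kept as an error. Using the sign convention of the paper, this yields
$$-\iint_{\mathcal R(\tau',\tau)} K^{Z,w^Z}\;\leq\;\iint_{\mathcal R(\tau',\tau)} K^{Z,w^Z}_+.$$
Combining with the triangle inequality on the $G$-terms produces every summand appearing on the right-hand side of the proposition except the one involving $C(\tau')^2 \int_{\Sigma_{\tau'}\cap\{r\leq r^-_Y\}}J^N_\mu n^\mu_{\Sigma_{\tau'}}$.

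Finally, this last summand is inserted simply to ensure the right-hand side is manifestly non-negative; its appearance is exactly the content of Remark \ref{positivity} (which itself follows from Proposition \ref{Zlowerbound} applied at $t^*=\tau'$). One just notes that adding a non-negative quantity to the upper bound preserves the inequality. There is no serious obstacle in this proof: everything is a consequence of the divergence identity already established for modified currents, combined with bookkeeping about the signs of the bulk term. The one point to be careful about is that the horizon flux $\int_{\mathcal H(\tau',\tau)}J^{Z,w^Z}_\mu n^\mu_{\mathcal H^+}$ retained on the left must indeed be non-negative (as observed in the second remark following the statement), since $Z$ and $n^\mu_{\mathcal H^+}$ are both null and future-directed and $w^Z$ vanishes on $\mathcal H^+$; this ensures that keeping it on the LHS (as opposed to moving it to the right) is legitimate and strengthens the conclusion.
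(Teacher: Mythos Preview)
Your approach is correct and is exactly what the paper does: the paper gives no explicit proof, merely prefacing the proposition with ``Using the conservation law for the modified vector field, we have a one-sided bound,'' which is precisely the divergence identity you invoke followed by dropping favorable signs.

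One small remark on the bookkeeping: your own decomposition $K^{Z,w^Z}=K^{Z,w^Z}_+-K^{Z,w^Z}_-$ correctly shows that $-\iint K^{Z,w^Z}\le \iint K^{Z,w^Z}_-$, i.e.\ the error kept on the right is the \emph{negative} part $K_-=\max(-K,0)$, not $K_+=\max(K,0)$ as literally defined in the paper. Your display ``$-\iint K^{Z,w^Z}\le \iint K^{Z,w^Z}_+$'' does not follow from what you wrote just before it; you are tacitly relying on the paper's notation $K^{Z,w^Z}_+$ to denote the error term rather than its stated definition. This is a notational inconsistency in the paper itself (note also the $u^2L\Phi+v^2\underline L\Phi$ versus $u^2\underline L\Phi+v^2 L\Phi$ discrepancy between this proposition and Proposition~\ref{Zinho}), not a flaw in your reasoning, but it would be cleaner to state explicitly that the bulk error is the portion of $K^{Z,w^Z}$ with the unfavorable sign.
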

To show that $\int_{\Sigma_\tau} J^{Z,w^Z}_\mu\left(\Phi\right)n^\mu_{\Sigma_\tau}$ is almost bounded, we would have to show that $\int_{\Sigma_{\tau'}\cap\{r\leq r^-_Y\}} J^{N}_\mu\left(\Phi\right)n^\mu_{\Sigma_{\tau'}}$ in fact decays. This is given by the following Proposition:
\begin{proposition}\label{decaynearhorizon}
\begin{equation*}
\begin{split}
&\int_{\Sigma_{\tau}\cap\{r\leq\frac{t^*}{2}\}} J^{N}_\mu\left(\Phi\right)n^\mu_{\Sigma_{\tau}}\\
\leq&C^2\tau^{-2}\int_{\Sigma_{(1.1)^{-2}\tau}\cap\{r\leq r^-_Y\}} J^{N}_\mu\left(\Phi\right)n^\mu_{\Sigma_{(1.1)^{-2}\tau}}+C\tau^{-2}\int_{\Sigma_{(1.1)^{-2}\tau}} J^{Z,w^Z}_\mu\left(\Phi\right)n^\mu_{\Sigma_{(1.1)^{-1}\tau}}\\
&+C\tau^{-2}\iint_{\mathcal R((1.1)^{-2}\tau,\tau)} K^{Z,w^Z}_+\left(\Phi\right)+C\tau^{-2}|\iint_{\mathcal R((1.1)^{-2}\tau,\tau)}\left(u^2 L\Phi+v^2 \underline L\Phi-\frac{1}{4}w\Phi\right)G|\\
&+C\left(\left(\int_{(1.1)^{-2}\tau-1}^{\tau+1}\left(\int_{\Sigma_{t^*}\cap\{r\leq\frac{9t^*}{10}\}}  G_1^2\right)^{\frac{1}{2}}dt^*\right)^2+\iint_{\mathcal R((1.1)^{-2}\tau-1,\tau+1)\cap\{r\leq\frac{9t^*}{10}\}}  G_1^2\right)\\
&+C\left(\sum_{m=0}^{1}\iint_{\mathcal R((1.1)^{-2}\tau-1,\tau+1)\cap\{r\leq\frac{9t^*}{10}\}}r^{1+\delta}\left(\partial_{t^*}^{m}{G_2}\right)^2+\sup_{t^*\in [(1.1)^{-2}\tau-1,\tau+1]}\int_{\Sigma_{t^*}\cap\{|r-3M|\leq\frac{M}{8}\}\cap\{r\leq\frac{9t^*}{10}\}} G_2^2\right).\\
\end{split}
\end{equation*}
\end{proposition}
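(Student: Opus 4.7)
The proof combines the localized decay estimate of Proposition \ref{localization}(2) with the $Z$-current conservation identity of Proposition \ref{Zid}. Applying Proposition \ref{localization}(2) on the dyadic interval $[(1.1)^{-1}\tau,\tau]$ yields
$$\int_{\Sigma_\tau\cap\{r\leq\tau/2\}}J^N_\mu(\Phi)n^\mu \leq C\tau^{-2}\int_{\Sigma_{(1.1)^{-1}\tau}}J^{Z+N,w^Z}_\mu(\Phi)n^\mu+C\int_{\Sigma_{(1.1)^{-1}\tau}\cap\{r\leq r^-_Y\}}J^N_\mu(\Phi)n^\mu+(\text{inhom})_1,$$
where $(\text{inhom})_1$ collects the five inhomogeneous source terms from Proposition \ref{localization}(2) over $[(1.1)^{-1}\tau-1,\tau+1]$. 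The first right-hand term carries the desired $\tau^{-2}$ decay factor but is measured in terms of a $Z+N$-energy one dyadic step earlier, while the second is a residual near-horizon piece at the intermediate time $(1.1)^{-1}\tau$.

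To handle the first term I split $J^{Z+N,w^Z}=J^{Z,w^Z}+J^N$ and apply Proposition \ref{Zid} on $[(1.1)^{-2}\tau,(1.1)^{-1}\tau]$, whose length is $\sim\tau$. With $\tau'=(1.1)^{-2}\tau$, the factor $C(\tau')^2$ in Proposition \ref{Zid} combines with the outer $\tau^{-2}$ to produce the near-horizon $J^N$ term at $(1.1)^{-2}\tau$, while the surviving contributions yield exactly the $C\tau^{-2}\int_{\Sigma_{(1.1)^{-2}\tau}}J^{Z,w^Z}$, $C\tau^{-2}\iint_{\mathcal{R}((1.1)^{-2}\tau,\tau)}K^{Z,w^Z}_+$, and $C\tau^{-2}|\iint(u^2L\Phi+v^2\underline{L}\Phi-\tfrac{1}{4}w\Phi)G|$ terms appearing in the target. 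The $\tau^{-2}\int J^N((1.1)^{-1}\tau)$ contribution from the $J^N$ half of the splitting is bounded by the boundedness/mild growth estimates of Propositions \ref{mg} and \ref{bddcom} on the same interval. For the residual $C\int_{\Sigma_{(1.1)^{-1}\tau}\cap\{r\leq r^-_Y\}}J^N$, I reapply Proposition \ref{localization}(2) on $[(1.1)^{-2}\tau,(1.1)^{-1}\tau]$, using that $\{r\leq r^-_Y\}\subset\{r\leq(1.1)^{-1}\tau/2\}$ for $\tau$ sufficiently large; this produces a further copy of $C\tau^{-2}\int J^{Z+N,w^Z}$ at time $(1.1)^{-2}\tau$ together with the claimed residual near-horizon $J^N$ quantity at that time.

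The main obstacle is the bookkeeping: merging the inhomogeneous contributions produced by the two applications of Proposition \ref{localization}(2) and the one of Proposition \ref{Zid} into the single set of $G_1$ and $G_2$ norms stated in the conclusion, over the slightly enlarged slab $[(1.1)^{-2}\tau-1,\tau+1]\cap\{r\leq 9t^*/10\}$. In particular, one must verify that the $L^1_tL^2_x$ and $L^2_tL^2_x$ norms of $G_1$, the weighted $L^2_tL^2_x$ norm of $G_2$ (with one extra $\partial_{t^*}$ derivative, since the second spatial decomposition requires a $\partial_{t^*}$ integration by parts near the trapped set), and the supremum over $\{|r-3M|\le M/8\}\cap\{r\le 9t^*/10\}$ combine correctly, and that the compounded numerical constants are absorbed into the $C$ and $C^2$ appearing in the statement rather than into higher powers. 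Once this accounting is carried out the desired inequality follows from adding the three displayed bounds.
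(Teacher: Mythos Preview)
Your approach has a genuine gap in the treatment of the near-horizon term. When you apply Proposition \ref{localization}(2) on $[(1.1)^{-1}\tau,\tau]$ to bound $\int_{\Sigma_\tau\cap\{r\leq\tau/2\}}J^N$, the right-hand side contains $C\int_{\Sigma_{(1.1)^{-1}\tau}\cap\{r\leq r^-_Y\}}J^N$ with \emph{no} decaying prefactor. You then propose to reapply Proposition \ref{localization}(2) on $[(1.1)^{-2}\tau,(1.1)^{-1}\tau]$ to this term. But that second application again produces the near-horizon piece $C\int_{\Sigma_{(1.1)^{-2}\tau}\cap\{r\leq r^-_Y\}}J^N$ with no decay. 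Composing the two steps gives $C^2\int_{\Sigma_{(1.1)^{-2}\tau}\cap\{r\leq r^-_Y\}}J^N$, not the $C^2\tau^{-2}\int_{\Sigma_{(1.1)^{-2}\tau}\cap\{r\leq r^-_Y\}}J^N$ claimed in the statement. Proposition \ref{localization}(2) by itself never improves the near-horizon slice energy; iterating it cannot manufacture the $\tau^{-2}$.

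The missing ingredient is a pigeonhole (averaging) step. The paper first uses Proposition \ref{localization}(2) to control the \emph{spacetime} quantity $\iint_{\mathcal R((1.1)^{-1}\tau,\tau)\cap\{r\leq r^-_Y\}}K^N(\Phi)$, which dominates $\int_{(1.1)^{-1}\tau}^{\tau}\int_{\Sigma_{t^*}\cap\{r\leq r^-_Y\}}J^N\,dt^*$. Since the time interval has length $\sim\tau$, there exists a slice $\tilde\tau\in[(1.1)^{-1}\tau,\tau]$ on which $\int_{\Sigma_{\tilde\tau}\cap\{r\leq r^-_Y\}}J^N\leq C\tau^{-1}\iint K^N$. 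This is where one factor of $\tau^{-1}$ is gained on the near-horizon piece. One then applies Proposition \ref{localization}(2) on $[\tilde\tau,\tau]$ (using Proposition \ref{Zid} and \ref{bddcom} to push the $Z$-energy at $\tilde\tau$ back to $(1.1)^{-1}\tau$), obtaining \eqref{nearhorizon1} with coefficient $C^2\tau^{-1}$ on $\int_{\Sigma_{(1.1)^{-1}\tau}\cap\{r\leq r^-_Y\}}J^N$. Repeating the entire pigeonhole-plus-localization argument on $[(1.1)^{-2}\tau,(1.1)^{-1}\tau]$ gives \eqref{nearhorizon2}, and substituting it into \eqref{nearhorizon1} yields the second $\tau^{-1}$ and hence the stated $\tau^{-2}$. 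Your use of Proposition \ref{Zid} and your bookkeeping discussion are fine; what is absent is this averaging mechanism, without which the argument does not close.
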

\begin{proof}
By Proposition \ref{localization}.2 applied to the $t^*$ interval $[(1.1)^{-1}\tau,\tau]$, we have
\begin{equation*}
\begin{split}
&\iint_{\mathcal R((1.1)^{-1}\tau,\tau)\cap\{r\leq r^-_Y\}}K^{N}\left(\Phi\right)\\
\leq& C\tau^{-2}\int_{\Sigma_{(1.1)^{-1}\tau}} J^{Z+N,w^Z}_\mu\left(\Phi\right)n^\mu_{\Sigma_{(1.1)^{-1}\tau}}+ C^2\int_{\Sigma_{(1.1)^{-1}\tau}\cap\{r\leq r^-_Y\}} J^{N}_\mu\left(\Phi\right)n^\mu_{\Sigma_{(1.1)^{-1}\tau}}\\
&+C\left(\sum_{m=0}^{1}\iint_{\mathcal R((1.1)^{-1}\tau-1,\tau+1)\cap\{r\leq\frac{9t^*}{10}\}}r^{1+\delta}\left(\partial_{t^*}^{m}{G}\right)^2+\sup_{t^*\in [(1.1)^{-1}\tau-1,\tau+1]}\int_{\Sigma_{t^*}\cap\{|r-3M|\leq\frac{M}{8}\}} G^2\right).\\
\end{split}
\end{equation*}
By taking the infimum there exists $\tilde{\tau} \in [(1.1)^{-1}\tau,\tau]$ such that
$$\int_{\Sigma_{\tilde{\tau}}\cap\{r\leq r^-_Y\}} J^{N}_\mu\left(\Phi\right)n^\mu_{\Sigma_{\tilde{\tau}}}\leq C\tau^{-1} \iint_{\mathcal R((1.1)^{-1}\tau,\tau)\cap\{r\leq r^-_Y\}}K^{N}\left(\Phi\right).$$
Hence,
\begin{equation*}
\begin{split}
&\int_{\Sigma_{\tilde{\tau}}\cap\{r\leq r^-_Y\}} J^{N}_\mu\left(\Phi\right)n^\mu_{\Sigma_{\tilde{\tau}}}\\
\leq& C\tau^{-2}\int_{\Sigma_{(1.1)^{-1}\tau}} J^{Z+N,w^Z}_\mu\left(\Phi\right)n^\mu_{\Sigma_{(1.1)^{-1}\tau}}+ C^2\tau^{-1}\int_{\Sigma_{(1.1)^{-1}\tau}\cap\{r\leq r^-_Y\}} J^{N}_\mu\left(\Phi\right)n^\mu_{\Sigma_{(1.1)^{-1}\tau}}\\
&+C\tau^{-1}\left(\sum_{m=0}^{1}\iint_{\mathcal R((1.1)^{-1}\tau-1,\tau+1)\cap\{r\leq\frac{9t^*}{10}\}}r^{1+\delta}\left(\partial_{t^*}^{m}{G}\right)^2+\sup_{t^*\in [(1.1)^{-1}\tau-1,\tau+1]}\int_{\Sigma_{t^*}\cap\{|r-3M|\leq\frac{M}{8}\}} {G}^2\right).\\
\end{split}
\end{equation*}
Apply Proposition \ref{localization}.2 to the $t^*$ interval $[\tilde{\tau},\tau]$ and use Proposition \ref{bddcom} and \ref{Zid}
\begin{equation}\label{nearhorizon1}
\begin{split}
&\int_{\Sigma_{\tau}\cap\{r\leq\frac{t^*}{2}\}} J^{N}_\mu\left(\Phi\right)n^\mu_{\Sigma_{\tau}}\\
\leq& C\left(\tau^{-2}\int_{\Sigma_{\tilde{\tau}}} J^{Z+N,w^Z}_\mu\left(\Phi\right)n^\mu_{\Sigma_{\tilde{\tau}}}+ C\int_{\Sigma_{\tilde{\tau}}\cap\{r\leq r^-_Y\}} J^{N}_\mu\left(\Phi\right)n^\mu_{\Sigma_{\tilde{\tau}}}\right)\\
&+C\left(\sum_{m=0}^{1}\iint_{\mathcal R((1.1)^{-1}\tau-1,\tau+1)\cap\{r\leq\frac{9t^*}{10}\}}r^{1+\delta}\left(\partial_{t^*}^{m}{G}\right)^2+\sup_{t^*\in [(1.1)^{-1}\tau-1,\tau+1]}\int_{\Sigma_{t^*}\cap\{|r-3M|\leq\frac{M}{8}\}\cap\{r\leq\frac{9t^*}{10}\}} {G}^2\right)\\
\leq& C\tau^{-2}\left(\int_{\Sigma_{\tilde{\tau}}} J^{Z+N,w^Z}_\mu\left(\Phi\right)n^\mu_{\Sigma_{\tilde{\tau}}}+ \int_{\Sigma_{(1.1)^{-1}\tau}} J^{Z+N,w^Z}_\mu\left(\Phi\right)n^\mu_{\Sigma_{(1.1)^{-1}\tau}}\right)\\
&+C^2\tau^{-1}\int_{\Sigma_{(1.1)^{-1}\tau}\cap\{r\leq r^-_Y\}} J^{N}_\mu\left(\Phi\right)n^\mu_{\Sigma_{(1.1)^{-1}\tau}}\\
&+C\left(\sum_{m=0}^{1}\iint_{\mathcal R((1.1)^{-1}\tau-1,\tau+1)\cap\{r\leq\frac{9t^*}{10}\}}r^{1+\delta}\left(\partial_{t^*}^{m}{G}\right)^2+\sup_{t^*\in [(1.1)^{-1}\tau-1,\tau+1]}\int_{\Sigma_{t^*}\cap\{|r-3M|\leq\frac{M}{8}\}\cap\{r\leq\frac{9t^*}{10}\}} {G}^2\right)\\
\leq&C^2\tau^{-1}\int_{\Sigma_{(1.1)^{-1}\tau}\cap\{r\leq r^-_Y\}} J^{N}_\mu\left(\Phi\right)n^\mu_{\Sigma_{(1.1)^{-1}\tau}}+C\tau^{-2}\int_{\Sigma_{(1.1)^{-1}\tau}} J^{Z+N,w^Z}_\mu\left(\Phi\right)n^\mu_{\Sigma_{(1.1)^{-1}\tau}}\\
&+C\tau^{-2}\iint_{\mathcal R((1.1)^{-1}\tau,\tau)} K^{Z,w^Z}_+\left(\Phi\right)+C\tau^{-2}|\iint_{\mathcal R((1.1)^{-1}\tau,\tau)}\left(u^2 \underline{L}\Phi+v^2 L\Phi-\frac{1}{4}w\Phi\right)G|\\
&+C\left(\sum_{m=0}^{1}\iint_{\mathcal R((1.1)^{-1}\tau-1,\tau+1)\cap\{r\leq\frac{9t^*}{10}\}}r^{1+\delta}\left(\partial_{t^*}^{m}{G}\right)^2+\sup_{t^*\in [(1.1)^{-1}\tau-1,\tau+1]}\int_{\Sigma_{t^*}\cap\{|r-3M|\leq\frac{M}{8}\}\cap\{r\leq\frac{9t^*}{10}\}} {G}^2\right)\\
\end{split}
\end{equation}
Replacing $[(1.1)^{-1}\tau,\tau]$ with $[(1.1)^{-2}\tau,(1.1)^{-1}\tau]$, we get also
\begin{equation}\label{nearhorizon2}
\begin{split}
&\int_{\Sigma_{(1.1)^{-1}\tau}\cap\{r\leq\frac{t^*}{2}\}} J^{N}_\mu\left(\Phi\right)n^\mu_{\Sigma_{\tau}}\\
\leq&C^2\tau^{-1}\int_{\Sigma_{(1.1)^{-2}\tau}\cap\{r\leq r^-_Y\}} J^{N}_\mu\left(\Phi\right)n^\mu_{\Sigma_{(1.1)^{-2}\tau}}+C\tau^{-2}\int_{\Sigma_{(1.1)^{-2}\tau}} J^{Z+N,w^Z}_\mu\left(\Phi\right)n^\mu_{\Sigma_{(1.1)^{-2}\tau}}\\
&+C\tau^{-2}\iint_{\mathcal R((1.1)^{-2}\tau,(1.1)^{-1}\tau)} K^{Z,w^Z}_+C\left(\Phi\right)+C\tau^{-2}|\iint_{\mathcal R((1.1)^{-2}\tau,(1.1)^{-1}\tau)}\left(u^2 L\Phi+v^2 \underline L\Phi-\frac{1}{4}w\Phi\right)G|\\
&+C\left(\sum_{m=0}^{1}\iint_{\mathcal R((1.1)^{-2}\tau-1,(1.1)^{-1}\tau+1)\cap\{r\leq\frac{9t^*}{10}\}}r^{1+\delta}\left(\partial_{t^*}^{m}{G}\right)^2\right.\\
&\left.+\sup_{t^*\in [(1.1)^{-2}\tau-1,(1.1)^{-1}\tau+1]}\int_{\Sigma_{t^*}\cap\{|r-3M|\leq\frac{M}{8}\}\cap\{r\leq\frac{9t^*}{10}\}} {G}^2\right)\\
\end{split}
\end{equation}
Therefore, plugging (\ref{nearhorizon2}) into (\ref{nearhorizon1}),
\begin{equation*}
\begin{split}
&\int_{\Sigma_{\tau}\cap\{r\leq\frac{t^*}{2}\}} J^{N}_\mu\left(\Phi\right)n^\mu_{\Sigma_{\tau}}\\
\leq&C^2\tau^{-2}\int_{\Sigma_{(1.1)^{-2}\tau}\cap\{r\leq r^-_Y\}} J^{N}_\mu\left(\Phi\right)n^\mu_{\Sigma_{(1.1)^{-2}\tau}}+C\tau^{-2}\int_{\Sigma_{(1.1)^{-2}\tau}} J^{Z,w^Z}_\mu\left(\Phi\right)n^\mu_{\Sigma_{(1.1)^{-1}\tau}}\\
&+C\tau^{-2}\iint_{\mathcal R((1.1)^{-2}\tau,\tau)} K^{Z,w^Z}_+\left(\Phi\right)+C\tau^{-2}|\iint_{\mathcal R((1.1)^{-2}\tau,\tau)}\left(u^2 L\Phi+v^2 \underline L\Phi-\frac{1}{4}w\Phi\right)G|\\
&+C\left(\sum_{m=0}^{1}\iint_{\mathcal R((1.1)^{-2}\tau-1,\tau+1)\cap\{r\leq\frac{9t^*}{10}\}}r^{1+\delta}\left(\partial_{t^*}^{m}{G}\right)^2+\sup_{t^*\in [(1.1)^{-2}\tau-1,\tau+1]}\int_{\Sigma_{t^*}\cap\{|r-3M|\leq\frac{M}{8}\}\cap\{r\leq\frac{9t^*}{10}\}} {G}^2\right)\\
\end{split}
\end{equation*}
\end{proof}
Proposition \ref{decaynearhorizon} immediately gives control over the non-degenerate energy and conformal energy using Propositions \ref{Zlowerbound} and \ref{Zid} respectively:
\begin{corollary}\label{decaywitherror}
For any $\gamma<1$,
\begin{equation*}
\begin{split}
&\int_{\Sigma_{\tau}} J^{Z,w^Z}_\mu\left(\Phi\right)n^\mu_{\Sigma_{{\tau}}}+C\tau^2\int_{\Sigma_{\tau}\cap\{r\leq \gamma\tau\}} J^{N}_\mu\left(\Phi\right)n^\mu_{\Sigma_{\tau}}\\
\leq& C\left(\int_{\Sigma_{\tau_0}} J^{Z,w^Z}_\mu\left(\Phi\right)n^\mu_{\Sigma_{{\tau_0}}}
+C\int_{\Sigma_{\tau_0}} J^{N}_\mu\left(\Phi\right)n^\mu_{\Sigma_{\tau_0}}\right.\\
&\left.+\iint_{\mathcal R(\tau_0,\tau)} K^{Z,w^Z}_+\left(\Phi\right)+|\iint_{\mathcal R(\tau_0,\tau)}\left(u^2 \underline{L}\Phi+v^2 L\Phi-\frac{1}{4}w\Phi\right)G|\right)\\
&+C\left(\sum_{m=0}^{1}\iint_{\mathcal R(\tau_0-1,\tau+1)\cap\{r\leq\frac{9t^*}{10}\}}(t^*)^2 r^{1+\delta}\left(\partial_{t^*}^{m}G\right)^2+\sup_{t^*\in [\tau_0-1,\tau+1]}\int_{\Sigma_{t^*}\cap\{|r-3M|\leq\frac{M}{8}\}\cap\{r\leq\frac{9t^*}{10}\}} (t^*)^2G^2\right).\\
\end{split}
\end{equation*}
\end{corollary}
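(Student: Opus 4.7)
My plan is to combine three ingredients already in place: the almost-conservation identity for the $Z$-energy (Proposition \ref{Zid}), the decay estimate for the non-degenerate energy near and away from the horizon (Proposition \ref{decaynearhorizon}), and the coercivity of the $Z$-energy (Proposition \ref{Zlowerbound}) which trades the weights $u^2, v^2$ in the conformal current for a $\tau^2$ factor on the non-degenerate energy when $r$ is bounded by $\gamma\tau$.

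\emph{Step 1 (bound $E_Z(\tau)$).} I would apply Proposition \ref{Zid} with $\tau' = \tau_0$. Because $\tau_0$ is a fixed initial time, the factor $C(\tau_0)^2$ is just a constant, so $C\tau_0^2 \int_{\Sigma_{\tau_0}\cap\{r\le r^-_Y\}} J^N_\mu(\Phi) n^\mu$ is absorbed into $C\int_{\Sigma_{\tau_0}} J^N_\mu(\Phi) n^\mu$. This already controls the first term on the left-hand side of the corollary, with errors exactly of the form $\iint K^{Z,w^Z}_+ (\Phi)$ and $|\iint(u^2\underline L\Phi + v^2 L\Phi - \tfrac14 w\Phi)G|$ appearing on the right-hand side of the statement.

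\emph{Step 2 (bound $\tau^2$ times the near-horizon non-degenerate energy).} Next I apply Proposition \ref{decaynearhorizon} at time $\tau$, which yields
\[
\tau^2 \!\int_{\Sigma_\tau\cap\{r\le \tau/2\}}\!\!\! J^N_\mu(\Phi) n^\mu_{\Sigma_\tau} \;\le\; C^2 E_N^{\text{near}}((1.1)^{-2}\tau) + C\, E_Z((1.1)^{-2}\tau) + \tau^2\cdot(\text{inhomogeneous errors}).
\]
The $E_Z$ term at the earlier time is controlled by Step 1 (applied on $[\tau_0,(1.1)^{-2}\tau]$), while $E_N^{\text{near}}((1.1)^{-2}\tau) \le E_N((1.1)^{-2}\tau)$ is bounded by $C E_N(\tau_0) + (\text{errors})$ via Proposition \ref{bddloss} (or \ref{Nbdd}); no iteration of \ref{decaynearhorizon} is needed since we only want boundedness, not decay, for this factor. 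The $\tau^2$ multiplying the inhomogeneous errors from \ref{decaynearhorizon} becomes the weight $(t^*)^2$ in the final statement, since $(t^*)^2\sim \tau^2$ on the relevant time interval $[(1.1)^{-2}\tau-1,\tau+1]$; the supremum-in-$t^*$ term picks up the same factor in the same way.

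\emph{Step 3 (extend from $r \le r^-_Y$ to $r \le \gamma\tau$).} In the annular region $r^-_Y \le r \le \gamma\tau$ with $\gamma < 1$, the Schwarzschild null coordinate $u = (t^*_S - r^*_S)/2$ satisfies $u \ge c(1-\gamma)\tau$ (and $v\ge u$), so $u^2, v^2 \gtrsim \tau^2$ pointwise on $\Sigma_\tau$. Thus Proposition \ref{Zlowerbound} gives
\[
c\tau^2 \!\!\int_{\Sigma_\tau\cap\{r^-_Y \le r \le \gamma\tau\}}\!\!\! J^N_\mu(\Phi) n^\mu_{\Sigma_\tau} \;\le\; C\,E_Z(\tau) + C\,E_N(\tau) + C\tau^2\, E_N^{\text{near}}(\tau).
\]
The first term on the right is handled by Step 1, the third by Step 2, and $E_N(\tau)$ by Proposition \ref{bddloss}. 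Adding this to the near-horizon contribution controlled in Step 2 yields the bound for $C\tau^2 \int_{\Sigma_\tau \cap \{r\le \gamma\tau\}} J^N_\mu(\Phi) n^\mu$ in the corollary.

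The step I expect to require the most care is not any single estimate but the bookkeeping of error terms: each of the three intermediate propositions produces errors in slightly different forms (e.g.\ Proposition \ref{decaynearhorizon} has errors on an interval of length $\sim \tau$, while Proposition \ref{Zid} has errors over all of $[\tau_0,\tau]$), and one must check that after multiplying by the appropriate $\tau^2$ factors and using $(t^*)^2\sim \tau^2$ on the relevant subintervals, everything fits into the spacetime-weighted form $\iint_{\mathcal R(\tau_0-1,\tau+1)\cap\{r\le 9t^*/10\}}(t^*)^2 r^{1+\delta}(\partial_{t^*}^m G)^2$ and the supremum term stated in the corollary. No new cancellations or constructions are required; the content is already present in Propositions \ref{Zid}, \ref{decaynearhorizon}, and \ref{Zlowerbound}.
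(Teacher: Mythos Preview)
Your approach uses the same three ingredients as the paper (Propositions \ref{Zlowerbound}, \ref{Zid}, \ref{decaynearhorizon}), and the overall logic is sound. The paper orders them slightly differently---first using \ref{Zlowerbound} to reduce $\tau^2\int_{\{r\le\gamma\tau\}}J^N$ to $E_Z(\tau)+E_N(\tau)+\tau^2 E_N^{\{r\le r_Y^-\}}(\tau)$, then applying \ref{Zid} and \ref{decaynearhorizon} on the short interval $[(1.1)^{-2}\tau,\tau]$---but the substantive difference is the paper's last line: ``We then use the same estimate for $[(1.1)^{-4}\tau,(1.1)^{-2}\tau]$, $[(1.1)^{-6}\tau,(1.1)^{-4}\tau]$, \ldots.'' The paper \emph{iterates} the one-step inequality dyadically back to $\tau_0$, whereas you jump there directly via \ref{Zid} and \ref{bddloss}.

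This iteration is not cosmetic; it is exactly where your shortcut leaks. In Step~2 you bound $E_N^{\text{near}}((1.1)^{-2}\tau)$ by Proposition \ref{bddloss} on $[\tau_0,(1.1)^{-2}\tau]$, and in Step~3 you bound $E_N(\tau)$ the same way. But the inhomogeneous errors produced by \ref{bddloss} are
\[
\sum_{m=0}^1\iint_{\mathcal R(\tau_0-1,\ldots)} r^{1+\delta}(\partial_{t^*}^m G)^2
\]
over the \emph{entire} spatial region---not restricted to $\{r\le\tfrac{9t^*}{10}\}$. The piece of that integral sitting in $\{r>\tfrac{9t^*}{10}\}$ is not dominated by any term on the right-hand side of the corollary, so what you prove is a strictly weaker statement with unlocalized $G$-errors. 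The paper avoids this because at each dyadic step the $G$-errors come only from Proposition \ref{decaynearhorizon}, which already carries the localization $\{r\le\tfrac{9t^*}{10}\}$; the factor $\tau_k^2$ at step $k$ then becomes the weight $(t^*)^2$ after summation. Your closing remark that the bookkeeping of error terms is the delicate point is exactly right---the resolution is the dyadic iteration you explicitly set aside with ``no iteration of \ref{decaynearhorizon} is needed.''
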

\begin{proof}
By Proposition \ref{Zlowerbound},
\begin{equation*}
\begin{split}
\tau^2\int_{\Sigma_{\tau}\cap\{r\leq \gamma\tau\}} J^{N}_\mu\left(\Phi\right)n^\mu_{\Sigma_{\tau}}
\leq &C\int_{\Sigma_{\tau}} J^{Z+N,w^Z}_\mu\left(\Phi\right)n^\mu_{\Sigma_{{\tau}}}+C^2\tau^2\int_{\Sigma_{\tau}\cap\{r\leq r^-_Y\}} J^{N}_\mu\left(\Phi\right)n^\mu_{\Sigma_{\tau}}.
\end{split}
\end{equation*}
Therefore, by Propositions \ref{Zid} and \ref{decaynearhorizon},
\begin{equation*}
\begin{split}
&\int_{\Sigma_{\tau}} J^{Z,w^Z}_\mu\left(\Phi\right)n^\mu_{\Sigma_{{\tau}}}+C\tau^2\int_{\Sigma_{\tau}\cap\{r\leq \gamma\tau\}} J^{N}_\mu\left(\Phi\right)n^\mu_{\Sigma_{\tau}}\\
\leq& C\left(\int_{\Sigma_{(1.1)^{-2}\tau}} J^{N}_\mu\left(\Phi\right)n^\mu_{\Sigma_{(1.1)^{-2}{\tau}}}+\int_{\Sigma_{(1.1)^{-2}\tau}} J^{Z,w^Z}_\mu\left(\Phi\right)n^\mu_{\Sigma_{(1.1)^{-2}{\tau}}}+\iint_{\mathcal R((1.1)^{-2}\tau,\tau)} K^{Z,w^Z}_+\left(\Phi\right)\right.\\
&\left.\quad\quad\quad+|\iint_{\mathcal R((1.1)^{-2}\tau,\tau)}\left(u^2 \underline{L}\Phi+v^2 L\Phi-\frac{1}{4}w\Phi\right)G|\right)\\
&+C\tau^2\left(\sum_{m=0}^{1}\iint_{\mathcal R((1.1)^{-2}\tau-1,\tau+1)}r^{1+\delta}\left(\partial_{t^*}^{m}G\right)^2+\sup_{t^*\in [(1.1)^{-2}\tau-1,\tau+1]}\int_{\Sigma_{t^*}\cap\{|r-3M|\leq\frac{M}{8}\}\cap\{r\leq\frac{9t^*}{10}\}} G^2\right)\\
\end{split}
\end{equation*}
We then use the same estimate for $[(1.1)^{-4}\tau,(1.1)^{-2}\tau], [(1.1)^{-6}\tau,(1.1)^{-4}\tau], ...$. 
\end{proof}

The term $\iint_{\mathcal R(\tau_0,\tau)} K^{Z,w^Z}_+\left(\Phi\right)$ can be controlled. Here is where the control of the logarithmic divergences from the red-shift vector field is crucially used.
\begin{proposition}
\begin{equation*}
\begin{split}
&\iint_{\mathcal R(\tau',\tau)} K^{Z,w^Z}_+\left(\Phi\right)\\
\leq& C\iint_{\mathcal R(\tau',\tau)\cap\{r\geq r^-_Y\}} t^*\left(r^{-2} J^N_\mu\left(\Phi\right)n^\mu_{\Sigma_{\bar\tau}}+r^{-4}\Phi\right)+\epsilon\iint_{\mathcal R(\tau',\tau)\cap\{r\leq r^-_Y\}} \left(t^*\right)^2K^N\left(\Phi\right).
\end{split}
\end{equation*}
\begin{proof}
See \cite{DRL}.
\end{proof}
\end{proposition}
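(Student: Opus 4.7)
The plan is to bound the positive part of $K^{Z,w^Z}(\Phi)$ pointwise in two separate regimes and then integrate. I would split $\mathcal R(\tau',\tau)$ at $r=r^-_Y$: in the region away from the event horizon, $Z$ is smooth and the Schwarzschild coordinates are well-behaved, and one should recover a bound of the form $|K^{Z,w^Z}_S(\Phi)|\leq Ct^*(r^{-2}J^N_\mu n^\mu_{\Sigma_{t^*}}+r^{-4}\Phi^2)$; near the horizon, one must exploit the factor $(1-\mu)$ in the null-frame components of $\underline L$ (Proposition \ref{Znearhorizon}) together with the factor $(1-\mu)$ in $w^Z$ itself to absorb the potentially large $u^2$ weight, ultimately using that the Schwarzschild contribution has a favorable sign and only the Kerr perturbation contributes an $\epsilon$-small defect.

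In the region $r\geq r^-_Y$, I would first compute $K^{Z,w^Z}_S(\Phi)$ directly on Schwarzschild. On Minkowski the analogous choice of $Z$ and $w$ satisfies $K^{Z,w}_m=0$ identically by conformal invariance, so the Schwarzschild computation reduces to collecting terms proportional to $\mu=2M/r$ and its derivatives. A direct calculation parallel to the Schwarzschild one in \cite{DRS}, carried out with $w^Z=4tr^*_S(1-\mu)/r$, shows that every term in $K^{Z,w^Z}_S$ carries at least one power of $\mu$, producing the pointwise bound stated above. I would then pass to Kerr via the perturbation estimate for currents from Section 4.2, which with $|Z^\alpha|\lesssim (t^*)^2+r^2$ and $|w^Z|\lesssim t^*r^*/r$ yields $|K^{Z,w^Z}_K-K^{Z,w^Z}_S|\leq C\epsilon t^*(r^{-2}J^N_\mu n^\mu+r^{-4}\Phi^2)$ in $\{r\geq r^-_Y\}$, which is absorbed into the first term on the right-hand side for $\epsilon$ small.

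In $\{r\leq r^-_Y\}$ I would work directly in the null frame $\{\hat V,\hat Y,E_1,E_2\}$. By Proposition \ref{Znearhorizon}, each null-frame component of $\underline L$ carries a factor $(1-\mu)$, so $u^2\underline L$ has null-frame components bounded by $u^2(1-\mu)$; since $(1-\mu)|r^*_S|$ remains uniformly bounded as $r\to r_+$, one obtains $u^2(1-\mu)\lesssim (t^*)^2$ throughout this region. Computing $K^{Z,w^Z}_S$ in the null frame, every resulting term is $(t^*)^2$ times a regular quadratic expression in null-frame derivatives of $\Phi$, and following \cite{DRL} the Schwarzschild part is in fact nonpositive in $\{r\leq r^-_Y\}$ (with the specific choice of $w^Z$), so $K^{Z,w^Z}_{+,S}=0$ here. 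The Kerr perturbation from Section 4.2 then contributes at most $\epsilon(t^*)^2K^N(\Phi)$, giving the second term on the right-hand side.

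The main obstacle is the near-horizon step: one must verify that $K^{Z,w^Z}_S$ is genuinely nonpositive in $\{r\leq r^-_Y\}$, not merely bounded, so that the positive part is of size $\epsilon$ rather than $1$. This relies on a delicate matching between the $(1-\mu)$ factor in $w^Z$, the $(1-\mu)$ factor in the null-frame decomposition of $\underline L$, and the fact that $Z$ is null and future-directed on $\mathcal H^+$ (and moreover tangent to $\mathcal H^+$ to leading order). Without identifying this cancellation one would only obtain a bound of the form $(t^*)^2 K^N$, which lacks the small factor $\epsilon$ needed to close the decay argument in Corollary \ref{decaywitherror}.
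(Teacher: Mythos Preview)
Your strategy is correct and matches the argument the paper defers to in \cite{DRL}: split at $r=r^-_Y$, obtain the $t^*r^{-2}$ bound away from the horizon via the Schwarzschild computation plus the perturbation estimate of Section~5.2, and handle the near-horizon region by exploiting that $V$ is Killing there together with the $(1-\mu)$ factors in $\underline L$ from Proposition~\ref{Znearhorizon}. You are also right to flag the Schwarzschild sign near the horizon as the crux --- without it the $\epsilon$ would be lost and the bootstrap in Proposition~\ref{0energydecay2} would not close.

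One imprecision to fix: your claim that near the horizon ``every resulting term is $(t^*)^2$ times a \emph{regular} quadratic expression'' is not accurate. Since $Z$ is only $C^0$ at $\mathcal H^+$ (the paper says this explicitly), its deformation tensor picks up logarithmic weights $|r^*_S|\sim|\log(r-r_+)|$; likewise $|\partial Z^\alpha|\lesssim t^*+|r^*_S|$, so the perturbation bound from Section~5.2 is not simply $\epsilon(t^*)^2(\partial\Phi)^2$. What saves the estimate is that the non-regular red-shift construction of Section~\ref{sectionN} gives $K^N(\Phi)\geq c_p|\log(r-r_+)|^p\bigl((\hat V\Phi)^2+\sum_A(E_A\Phi)^2\bigr)+c(\hat Y\Phi)^2$ for \emph{any} $p$, so these logarithmic losses are absorbed into $(t^*)^2K^N(\Phi)$. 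This is exactly why the paper remarks, immediately before this proposition, that ``the control of the logarithmic divergences from the red-shift vector field is crucially used.'' Once you account for this, your outline goes through.
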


The bulk term arising from the inhomogeneous term $G$ can also be controlled. 
\begin{proposition}\label{Zinho}
\begin{equation*}
\begin{split} 
&|\iint_{\mathcal R(\tau_0,\tau)}\left(u^2 \underline{L}\Phi+v^2 L\Phi-\frac{1}{4}w\Phi\right)G|\\
\leq& \delta'\iint_{\mathcal R(\tau_0,\tau)\cap\{r\leq \frac{t^*}{2}\}} (t^*)^2K^{X_0}\left(\Phi\right)+\delta'\iint_{\mathcal R(\tau_0,\tau)\cap\{r\leq r^-_Y\}} (t^*)^2K^{N}\left(\Phi\right)\\
&+\delta'\sup_{t^*\in [\tau_0,\tau]}\left(\int_{\Sigma_{t^*}\cap\{r\geq \frac{t^*}{2}\}} J^{Z+N,w^Z}_\mu\left(\Phi\right)n^\mu_{\Sigma_{t^*}}+(t^*)^2\int_{\Sigma_{t^*}\cap\{r\leq \frac{23M}{8}\}} J^{N}_\mu\left(\Phi\right)n^\mu_{\Sigma_{t^*}}\right)\\
&+C(\delta')^{-1}\sum_{m=0}^1\iint_{\mathcal R(\tau_0,\tau)\cap\{r\leq\frac{t^*}{2}\}} (t^*)^2r^{1+\delta}\left(\partial_{t^*}^m G\right)^2+C(\delta')^{-1}\left(\int_{\tau_0}^{\tau}\left(\int_{\Sigma_{t^*}\cap\{r\geq \frac{t^*}{2}\}}r^2 G^2 \right)^{\frac{1}{2}}dt^*\right)^2\\
&+C(\delta')^{-1}\sup_{t^*\in [\tau_0,\tau]}\int_{\Sigma_{t^*}\cap\{r^-_Y\leq r\leq \frac{25M}{8}\}} (t^*)^2 G^2.
\end{split}
\end{equation*}
\end{proposition}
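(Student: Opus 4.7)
The plan is to decompose $\mathcal R(\tau_0,\tau)$ into three regions chosen according to the size of the multiplier coefficients $u^2,v^2,w^Z$ and the coercivity of the bulk weights $K^{X_0}$ and $K^N$: the wave zone $\{r\ge t^*/2\}$, the bounded interior $\{r^-_Y\le r\le t^*/2\}$ with a separate treatment of the trapping subregion $\{|r-3M|\le M/8\}$, and the near-horizon region $\{r\le r^-_Y\}$. In each region I match one group of terms on the right-hand side of the target inequality.

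In the wave zone one has $|u|,|v|,|w^Z|\le Cr$, so slice-wise Cauchy--Schwarz gives
\begin{equation*}
\left|\iint_{r\ge t^*/2}(u^2\underline L\Phi+v^2L\Phi)G\right|\le\int_{\tau_0}^\tau\!\left(\int_{\Sigma_{t^*}}\!\!(u^2(\underline L\Phi)^2+v^2(L\Phi)^2)\right)^{1/2}\!\left(\int_{\Sigma_{t^*}\cap\{r\ge t^*/2\}}\!\!r^2G^2\right)^{1/2}dt^*.
\end{equation*}
By Proposition \ref{Zlowerbound} the first factor is controlled by $(\int J^{Z+N,w^Z}_\mu n^\mu)^{1/2}$, and AM--GM extracts $\delta'\sup_{t^*}\int_{r\ge t^*/2}J^{Z+N,w^Z}_\mu n^\mu +C(\delta')^{-1}\bigl(\int(\int r^2G^2)^{1/2}\,dt^*\bigr)^2$. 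The $w\Phi G$ contribution uses that $J^{Z,w^Z}$ controls $r^{-2}(u^2+v^2)\Phi^2$ in this region, and is handled identically.

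In the bounded interior away from the trapping set, $|u|,|v|,|w^Z|\le Ct^*$ and $K^{X_0}$ controls $r^{-1-\delta}(\underline L\Phi)^2,\ r^{-1-\delta}(L\Phi)^2,\ r^{-3-\delta}\Phi^2$. Weighted Cauchy--Schwarz then yields
\begin{equation*}
(t^*)^2|\underline L\Phi||G|+t^*|\Phi||G|\le \delta'(t^*)^2K^{X_0}(\Phi)+C(\delta')^{-1}(t^*)^2r^{1+\delta}G^2,
\end{equation*}
where the $\Phi$ piece uses $r^{3+\delta}\le C(t^*)^2r^{1+\delta}$ on $\{r\le t^*/2\}$. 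In the near-horizon region, Proposition \ref{Znearhorizon} gives $\underline L=O(1-\mu)\cdot\partial$ in coordinate derivatives, so $u^2\underline L$ has bounded coefficient (since $u^2(1-\mu)^2\le C$ for $r\in[r_+,r^-_Y]$); Cauchy--Schwarz against $K^N$ then closes the estimate because $r^{1+\delta}=O(1)$ on this compact set.

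The main obstacle is the trapping subregion, where $K^{X_0}$ degenerates on $\partial_{t^*}\Phi$ and angular derivatives. Using $L=2V-\underline L$ with $V=\partial_{t^*}+O(\epsilon)\partial_{\phi^*}$ almost Killing and $\underline L=\partial_{t^*}-\partial_{r^*}+O(\epsilon)$, I decompose
\begin{equation*}
u^2\underline L\Phi+v^2L\Phi=(u^2-v^2)\underline L\Phi+2v^2V\Phi,
\end{equation*}
and observe that at trapping $|u^2-v^2|=|tr^*|=O(t^*)$ while $v^2=O((t^*)^2)$. The $(u^2-v^2)(-\partial_{r^*}\Phi)$ piece lands directly in $\delta'(t^*)^2K^{X_0}(\Phi)+C(\delta')^{-1}r^{1+\delta}G^2$ via Cauchy--Schwarz, since $\partial_{r^*}$ is not degenerate. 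For the remaining $(u^2-v^2)\partial_{t^*}\Phi$ and $2v^2V\Phi$ contributions, I integrate by parts in $t^*$ (equivalently, using $V$, with $O(\epsilon)$ deformation errors that are absorbable): the bulk term transfers one $\partial_{t^*}$ derivative onto $G$, producing the $\iint(t^*)^2r^{1+\delta}(\partial_{t^*}G)^2$ term and explaining the $m=1$ loss, while the boundary contributions on $\Sigma_{\tau_0},\Sigma_\tau$ of the form $(t^*)^2\int\Phi G$ are split by Cauchy--Schwarz into $\delta'(t^*)^2\int_{r\le 23M/8}J^N_\mu n^\mu$ (after using Proposition \ref{hi} to bound $\int\Phi^2$ on the bounded region by $\int J^N$) and $C(\delta')^{-1}\sup_{t^*}\int_{r^-_Y\le r\le 25M/8}(t^*)^2G^2$. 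The $O(\epsilon)\partial_{\phi^*}$ contribution from $V$ is absorbed using $\epsilon\ll 1$ together with the fact that $\partial_{\phi^*}$ is exactly Killing on Kerr, so that the same integration by parts introduces no additional derivative loss on $\Phi$. Summing the contributions from the three regions produces the stated bound.
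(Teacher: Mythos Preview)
Your regional decomposition and the treatment of the wave zone, the bounded interior away from trapping, and the trapping subregion are all essentially correct and match the paper's argument (the paper's trapping step is actually simpler: it writes $u^2\underline L\Phi+v^2L\Phi=(u^2+v^2)\partial_{t^*}\Phi+(v^2-u^2)\partial_{r^*}\Phi$ in this bounded-$r$ region where $V=\partial_{t^*}$ exactly, and integrates the whole $\partial_{t^*}$ piece by parts in one stroke, but your route works too).

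The genuine gap is in the near-horizon region $\{r\le r^-_Y\}$. You only discuss $u^2\underline L$, and even there the claim ``$u^2(1-\mu)^2\le C$'' is off: since $\underline L=O(1-\mu)\partial$, the relevant coefficient is $u^2(1-\mu)$, and this is not bounded uniformly in $t^*$ (one has $u^2(1-\mu)\le C(t^*)^2$ from the $t^2$ part of $u^2$, while only the $(r^*)^2(1-\mu)$ part is bounded). More seriously, you never address the $v^2L$ term near the horizon. Since $L=2V-\underline L$ and $V=\hat V$ on $\mathcal H^+$, the $\hat V$-component of $v^2L$ has coefficient of order $v^2$, and $v^2\le C\bigl((t^*)^2+(r^*_S)^2\bigr)$ is \emph{unbounded} as $r\to r_+$ because $(r^*_S)^2\sim|\log(r-r_+)|^2$. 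A plain Cauchy--Schwarz against $(D_{\hat V}\Phi)^2$ therefore does not close. The paper handles this by first bounding $((t^*)^2+(r^*_S)^2)\le C(t^*)^2|\log(r-r_+)|^2$ and then invoking the specific logarithmic weight in $K^{N_e}$, namely $K^{N_e}(\Phi)\ge c_p|\log(r-r_+)|^p(D_{\hat V}\Phi)^2$ for any $p$, which is exactly the purpose of the non-$C^1$ red-shift construction. Without this step the near-horizon contribution cannot be absorbed into $\delta'\iint(t^*)^2K^N(\Phi)$.
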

\begin{proof}
Two regions require particular care to deal with. The first is the region $\{r\leq r^-_Y\}$, since the coefficients of the vector field $Z$ are not bounded as $r\to r_+$. The other is the region $\{|r-3M|\leq \frac{M}{8}\}$. This is where trapping occurs and where the integrated decay estimate degenerates or loses derivatives. We first look at the region $\{r\leq r_+\}$ using the null frame:
\begin{equation*}
\begin{split}
&|\iint_{\mathcal R(\tau_0,\tau)\cap\{r\leq r^-_Y\}}\left(u^2 \underline{L}\Phi+v^2 L\Phi-\frac{1}{4}w\Phi\right)G|\\
\leq& C\iint_{\mathcal R(\tau_0,\tau)\cap\{r\leq r^-_Y\}} \left((t^*)^2+(r^*_S)^2\right)\left(|\nabla_{\hat{V}}\Phi G|+(1-\mu )|\nabla_{\hat{Y}}\Phi G|+(1-\mu )\sum_{A}|\nabla_{E_A}\Phi G|\right)\\
&\quad\quad\mbox{using Proposition \ref{Znearhorizon}}\\
\leq& C\iint_{\mathcal R(\tau_0,\tau)\cap\{r\leq r^-_Y\}} (t^*)^2\left(|\log|r-r_+||^2|\nabla_{\hat{V}}\Phi G|+|\nabla_{\hat{Y}}\Phi G|+\sum_{A}|\nabla_{E_A}\Phi G|\right)\\
\leq& \delta'\iint_{\mathcal R(\tau_0,\tau)\cap\{r\leq r^-_Y\}} (t^*)^2\left(|\log|r-r_+||^4\left(\nabla_{\hat{V}}\Phi \right)^2+\left(\nabla_{\hat{Y}}\Phi \right)^2+\sum_{A}\left(\nabla_{E_A}\Phi\right)^2\right)\\
&+ C(\delta')^{-1}\iint_{\mathcal R(\tau_0,\tau)\cap\{r\leq r^-_Y\}} (t^*)^2 G^2\\
\leq& \delta'\iint_{\mathcal R(\tau_0,\tau)\cap\{r\leq r^-_Y\}} (t^*)^2K^N\left(\Phi\right)+ C(\delta')^{-1}\iint_{\mathcal R(\tau_0,\tau)\cap\{r\leq r^-_Y\}} (t^*)^2 G^2.\\
\end{split}
\end{equation*} For the region $\{r^-_Y\leq r\leq \frac{25M}{8}\}$, where trapping occurs, we integrate by parts in $t^*$ so that the bulk term does not have $\partial_{t^*}\Phi$, which cannot be controlled by the integrated decay estimate.
\begin{equation*}
\begin{split}
&|\iint_{\mathcal R(\tau_0,\tau)\cap\{r^-_Y\leq r\leq \frac{25M}{8}\}}\left(u^2 \underline{L}\Phi+v^2 L\Phi-\frac{1}{4}w\Phi\right)G|\\
\leq&C\iint_{\mathcal R(\tau_0,\tau)\cap\{r^-_Y\leq r\leq \frac{25M}{8}\}}(t^*)^2 |\partial_r\Phi G|+(t^*)^2 |\Phi \partial_{t^*}G| +t^*|\Phi G|+\int_{\Sigma_\tau} \tau^2|\Phi G|+\int_{\Sigma_{\tau_0}} \tau_0^2|\Phi G|\\
\leq&C\left(\iint_{\mathcal R(\tau_0,\tau)\cap\{ r\leq \frac{25M}{8}\}}(t^*)^2 \left(\Phi^2+\left(\partial_r\Phi\right)^2\right)\right)^{\frac{1}{2}}\left(\sum_{m=0}^1\iint_{\mathcal R(\tau_0,\tau)\cap\{r^-_Y\leq r\leq \frac{25M}{8}\}}(t^*)^2\left(\partial_{t^*}^m G\right)^2\right)^{\frac{1}{2}}\\
&+\delta'\int_{\Sigma_\tau\cap\{r^-_Y\leq r\leq r^-_Y\leq\frac{25M}{8}\}} \tau^2 J^N_\mu\left(\Phi\right)n^\mu_{\Sigma_\tau}+\delta'\int_{\Sigma_{\tau_0}\cap\{r^-_Y\leq r\leq \frac{25M}{8}\}} \tau_0^2 J^N_\mu\left(\Phi\right)n^\mu_{\Sigma_{\tau_0}}\\
&+C(\delta')^{-1}\sup_{t^*\in [\tau_0,\tau]}\int_{\Sigma_{t^*}\cap\{r^-_Y\leq r\leq \frac{25M}{8}\}}  (t^*)^2 G^2,\\
\end{split}
\end{equation*}
using Proposition \ref{hi}.
We then move to the region $\{\frac{25M}{8}\leq r\leq\frac{t^*}{2}\}$:
\begin{equation*}
\begin{split}
&|\iint_{\mathcal R(\tau_0,\tau)\cap\{\frac{25M}{8}\leq r\leq \frac{t^*}{2}\}}\left(u^2 \underline{L}\Phi+v^2 L\Phi-\frac{1}{4}w\Phi\right)G|\\
\leq&C\iint_{\mathcal R(\tau_0,\tau)\cap\{\frac{25M}{8}\leq r\leq \frac{t^*}{2}\}}\left((t^*)^2 |\partial\Phi|+t^*|\Phi|\right)|G|\\
\leq&C\left(\iint_{\mathcal R(\tau_0,\tau)\cap\{\frac{25M}{8}\leq r\leq \frac{t^*}{2}\}}(t^*)^2\left(r^{-3-\delta}\Phi^2+ r^{-1-\delta} J^N_\mu\left(\Phi\right)n^\mu_{\Sigma_{t^*}}\right)\right)^{\frac{1}{2}}\\
&\times\left(\iint_{\mathcal R(\tau_0,\tau)\cap\{\frac{25M}{8}\leq r\leq \frac{t^*}{2}\}}\left(r^{3+\delta}+(t^*)^2 r^{1+\delta}\right)G^2\right)^{\frac{1}{2}}\\
\leq&C\left(\iint_{\mathcal R(\tau_0,\tau)\cap\{\frac{25M}{8}\leq r\leq \frac{t^*}{2}\}}(t^*)^2\left(r^{-3-\delta}\Phi^2+ r^{-1-\delta} J^N_\mu\left(\Phi\right)n^\mu_{\Sigma_{t^*}}\right)\right)^{\frac{1}{2}}\left(\iint_{\mathcal R(\tau_0,\tau)\cap\{\frac{25M}{8}\leq r\leq \frac{t^*}{2}\}}(t^*)^2 r^{1+\delta}G^2\right)^{\frac{1}{2}}\\
\end{split}
\end{equation*}
Finally, we estimate in the region $\{r\geq\frac{t^*}{2}\}$:
\begin{equation*}
\begin{split}
&|\iint_{\mathcal R(\tau_0,\tau)\cap\{r\geq \frac{t^*}{2}\}}\left(u^2 \underline{L}\Phi+v^2 L\Phi-\frac{1}{4}w\Phi\right)G|\\
\leq&C\sup_{t^*\in [\tau_0,\tau]}\left(\int_{\Sigma_{t^*}\cap\{r\geq \frac{t^*}{2}\}} J^{Z,w^Z}_\mu\left(\Phi\right)n^\mu_{\Sigma_{t^*}}+(t^*)^2\int_{\Sigma_{t^*}\cap\{r\leq r^-_Y\}} J^{N}_\mu\left(\Phi\right)n^\mu_{\Sigma_{t^*}}\right)^{\frac{1}{2}}\int_{\tau_0}^{\tau}\left(\int_{\Sigma_{t^*}\cap\{r\geq \frac{t^*}{2}\}}r^2 G^2 \right)^{\frac{1}{2}}dt^*,
\end{split}
\end{equation*}
where we have used Proposition \ref{Zlowerbound}.The Proposition follows from Cauchy-Schwarz.
\end{proof}

We have therefore proved the following decay result associated to the vector field $Z$.
\begin{proposition}\label{energydecay}
For $\delta, \delta'>0$ sufficiently small and $0\leq\gamma< 1$, there exist $c=c(\delta,\gamma)$ and $C=C(\delta,\gamma)$ such that the following estimate holds for any solution to $\Box_{g_K}\Phi=G$:
\begin{equation*}
\begin{split}
&c\int_{\Sigma_{\tau}} J^{Z,w^Z}_\mu\left(\Phi\right)n^\mu_{\Sigma_{{\tau}}}+\tau^2\int_{\Sigma_{\tau}\cap\{r\leq \gamma\tau\}} J^{N}_\mu\left(\Phi\right)n^\mu_{\Sigma_{\tau}}\\
\leq &C\int_{\Sigma_{\tau_0}} J^{Z+CN,w^Z}_\mu\left(\Phi\right)n^\mu_{\Sigma_{{\tau_0}}}+C\iint_{\mathcal R(\tau_0,\tau)} t^*r^{-1+\delta}K^{X_1}\left(\Phi\right)\\
& +C\delta'\iint_{\mathcal R(\tau_0,\tau)\cap\{r\leq \frac{t^*}{2}\}} (t^*)^2K^{X_0}\left(\Phi\right)+C\left(\delta'+\epsilon\right)\iint_{\mathcal R(\tau_0,\tau)\cap\{r\leq r^-_Y\}}(t^*)^2K^N\left(\Phi\right)\\
&+C(\delta')^{-1}\left(\int_{\tau_0}^{\tau}\left(\int_{\Sigma_{t^*}\cap\{r\geq \frac{t^*}{2}\}}r^2 G^2 \right)^{\frac{1}{2}}dt^*\right)^2+C(\delta')^{-1}\sum_{m=0}^1\iint_{\mathcal R(\tau_0,\tau)\cap\{r\leq\frac{9t^*}{10}\}} (t^*)^2r^{1+\delta}\left(\partial_{t^*}^m G\right)^2\\
&+C(\delta')^{-1}\sup_{t^*\in [\tau_0,\tau]}\int_{\Sigma_{t^*}\cap\{r^-_Y\leq r\leq \frac{25M}{8}\}} (t^*)^2 G^2.
\end{split}
\end{equation*}
\end{proposition}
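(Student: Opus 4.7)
The plan is to combine three pieces already prepared in the chapter: Corollary \ref{decaywitherror}, the proposition estimating $\iint K^{Z,w^Z}_+$, and Proposition \ref{Zinho}. Starting from Corollary \ref{decaywitherror}, applied with $G$ decomposed trivially as itself, the left-hand side of the proposition is dominated by the initial $J^{Z,w^Z}$-flux, the initial $J^{N}$-flux (weighted by $C$), the bulk $\iint K^{Z,w^Z}_+(\Phi)$, the inhomogeneous bulk $|\iint(u^2\underline L\Phi+v^2 L\Phi-\tfrac{1}{4}w\Phi)G|$, and the $(t^*)^2$-weighted $G$ terms that already appear in the statement of the corollary. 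All of these but two are already in the desired final form.

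I would next substitute the bound on $\iint K^{Z,w^Z}_+$ from the proposition preceding \ref{Zinho}: the spatial integrand splits as $t^*(r^{-2}J^N_\mu n^\mu+r^{-4}\Phi^2)$ on $\{r\geq r^-_Y\}$ plus $\epsilon(t^*)^2 K^N$ on $\{r\leq r^-_Y\}$. Observing that
\[
t^*\bigl(r^{-2}J^N_\mu n^\mu+r^{-4}\Phi^2\bigr)\;\leq\; t^*\,r^{-1+\delta}\,K^{X_1}(\Phi),
\]
this piece is exactly the second term on the right-hand side of the proposition, while the $\epsilon(t^*)^2 K^N$ contribution folds into the $C(\delta'+\epsilon)$ term. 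The inhomogeneous bulk $|\iint(u^2\underline L\Phi+v^2 L\Phi-\tfrac{1}{4}w\Phi)G|$ is then attacked with Proposition \ref{Zinho}; this produces a $\delta'$-small multiple of $\iint(t^*)^2 K^{X_0}$ on $\{r\leq t^*/2\}$ and of $\iint(t^*)^2 K^N$ on $\{r\leq r^-_Y\}$ (entering the third and fourth RHS terms), a $\delta'$-small supremum term, and the $(\delta')^{-1}$-weighted $G$-source integrals which are precisely the last three terms of the statement.

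The main obstacle is the supremum term
\[
\delta'\sup_{t^*\in[\tau_0,\tau]}\Bigl(\int_{\Sigma_{t^*}\cap\{r\geq t^*/2\}} J^{Z+N,w^Z}_\mu(\Phi)n^\mu_{\Sigma_{t^*}}+(t^*)^2\int_{\Sigma_{t^*}\cap\{r\leq 23M/8\}}J^{N}_\mu(\Phi)n^\mu_{\Sigma_{t^*}}\Bigr)
\]
coming out of Proposition \ref{Zinho}: it threatens to compete with the very LHS we are bounding. I would handle this by a supremum-achieving bootstrap. Choose $\tilde\tau\in[\tau_0,\tau]$ at which the sup is attained and run the whole argument on $[\tau_0,\tilde\tau]$ in place of $[\tau_0,\tau]$. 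In the region $\{r\geq t^*/2\}$ the vector field $\partial_{t^*}$ is uniformly timelike and, by Proposition \ref{Zlowerbound}, $\int_{r\geq t^*/2}J^{Z+N,w^Z}_\mu n^\mu$ is controlled by $\int J^{Z,w^Z}_\mu n^\mu + \int J^N_\mu n^\mu$; meanwhile $\{r\leq 23M/8\}\subset\{r\leq \gamma t^*\}$ for $t^*$ larger than a fixed threshold (small $t^*$ contribute boundedly via $N$), so $(t^*)^2\int_{r\leq 23M/8}J^N\leq (t^*)^2\int_{r\leq\gamma t^*} J^N$ is absorbable into the second term of the LHS with a small prefactor $\delta'$. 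Because $J^{Z,w^Z}_\mu n^\mu$ is not a priori nonnegative, this absorption must be done using the combination in Remark \ref{positivity}, i.e.\ adding a $C(t^*)^2\int_{r\leq r^-_Y}J^N$ to render the sup nonnegative before absorbing.

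Finally, I would choose $\delta'$ sufficiently small (and recall $\epsilon\ll e$) so that all $\delta'$- and $\epsilon$-small bulk terms on the right can be absorbed into the LHS and the $C(\delta',\gamma)$-type constants stabilize. The $\tau_0$-flux terms are combined into a single $C\int_{\Sigma_{\tau_0}}J^{Z+CN,w^Z}_\mu(\Phi)n^\mu$ using Remark \ref{positivity} to ensure positivity. The resulting inequality is exactly the claim of Proposition \ref{energydecay}. The only genuinely delicate step is the supremum absorption described above; the rest is a careful bookkeeping of the three regions $\{r\leq r^-_Y\}$, $\{r^-_Y\leq r\leq t^*/2\}$, and $\{r\geq t^*/2\}$ to ensure every bulk term lands in one of the five categories on the right.
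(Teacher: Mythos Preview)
Your approach is correct and is exactly the route the paper takes: it simply writes ``We have therefore proved the following decay result'' after Proposition~\ref{Zinho}, i.e.\ the proposition is the combination of Corollary~\ref{decaywitherror}, the $K^{Z,w^Z}_+$ bound, and Proposition~\ref{Zinho}. You have supplied the details the paper omits, in particular the supremum-absorption step, and your $\tilde\tau$ argument is the right way to handle it.

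One small correction to your final paragraph: you should \emph{not} try to absorb the bulk terms $C\delta'\iint (t^*)^2K^{X_0}$ and $C(\delta'+\epsilon)\iint(t^*)^2K^N$ into the left-hand side. The left-hand side of the proposition contains only fluxes on $\Sigma_\tau$, not spacetime integrals, so there is nothing there to absorb them into; and indeed these terms appear verbatim on the right-hand side of the statement. The only $\delta'$-small term that must be absorbed at this stage is the supremum flux term from Proposition~\ref{Zinho}, exactly as you describe. The smallness of $\delta'$ is used solely for that absorption (and is why the proposition is stated ``for $\delta'$ sufficiently small''); the remaining $\delta'$-small bulk terms are carried along and only get closed later, in the bootstrap of Proposition~\ref{0energydecay2}.
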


\section{Estimates for Solutions to $\Box_{g_K}\Phi=0$}\label{sectionhomo}
From this point onwards, we consider $\Box_{g_K}\Phi=0$. In this section, we write down the energy estimates derived by Dafermos-Rodnianski \cite{DRL}. These will be used in later sections.
\begin{proposition}\label{0energydecay2}
\begin{equation*}
\begin{split}
&\tau^2\int_{\Sigma_\tau\cap\{r\leq \frac{\tau}{2}\}} J^{N}_\mu\left(\Phi\right)n^\mu_{\Sigma_\tau}+c\int_{\Sigma_\tau} J^{Z+N,w^Z}_\mu\left(\Phi\right)n^\mu_{\Sigma_\tau}\\
\leq& C\tau^{\eta}\sum_{m=0}^2\left(\int_{\Sigma_{\tau_0}} J^{Z,w^Z}_\mu\left(\partial_{t^*}^m\Phi\right)n^\mu_{\Sigma_{{\tau_0}}}+\int_{\Sigma_{\tau_0}} J^{N}_\mu\left(\partial_{t^*}^m\Phi\right)n^\mu_{\Sigma_{{\tau_0}}}\right).\\
\end{split}
\end{equation*}
\end{proposition}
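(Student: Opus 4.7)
The plan is to derive this from Proposition \ref{energydecay} applied with $G = 0$, via a bootstrap argument that exploits the fact that $\partial_{t^*}$ is Killing. This essentially reproduces the argument of Dafermos-Rodnianski \cite{DRL}. With $G \equiv 0$, Proposition \ref{energydecay} reduces to
\begin{equation*}
\begin{split}
&c\int_{\Sigma_\tau} J^{Z+N,w^Z}_\mu(\Phi) n^\mu_{\Sigma_\tau} + \tau^2 \int_{\Sigma_\tau \cap \{r \leq \gamma\tau\}} J^N_\mu(\Phi) n^\mu_{\Sigma_\tau}\\
\leq\;& C \int_{\Sigma_{\tau_0}} J^{Z+CN,w^Z}_\mu(\Phi) n^\mu_{\Sigma_{\tau_0}} + C \iint_{\mathcal R(\tau_0,\tau)} t^* r^{-1+\delta} K^{X_1}(\Phi)\\
&+ C\delta' \iint_{\mathcal R(\tau_0,\tau)\cap\{r \leq t^*/2\}} (t^*)^2 K^{X_0}(\Phi) + C(\delta'+\epsilon) \iint_{\mathcal R(\tau_0,\tau)\cap\{r \leq r^-_Y\}} (t^*)^2 K^N(\Phi),
\end{split}
\end{equation*}
so the task reduces to controlling these three bulk terms. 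Since $\partial_{t^*}$ commutes with $\Box_{g_K}$, the same inequality applies to $\partial_{t^*}^m \Phi$, which motivates a bootstrap on
$$F(\tau) := \sup_{\tau_0 \leq \tilde\tau \leq \tau} \tilde\tau^{-\eta} \sum_{m=0}^{2} \int_{\Sigma_{\tilde\tau}} J^{Z+CN,w^Z}_\mu(\partial_{t^*}^m\Phi) n^\mu_{\Sigma_{\tilde\tau}}.$$

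The principal bulk term $\iint t^* r^{-1+\delta} K^{X_1}(\Phi)$ is split at $r = t^*/2$. In the far region $\{r \geq t^*/2\}$, I use $r^{-1+\delta} \leq C(t^*)^{-1+\delta}$ together with $K^{X_1}(\Phi) \leq C r^{-1-\delta} J^N_\mu(\Phi)n^\mu$ and Proposition \ref{Zlowerbound} to control $J^N \leq Cr^{-2} J^{Z,w^Z}$, producing a dyadic sum bounded by $C\tau^\eta F(\tau_0)$. In the near region $\{r \leq t^*/2\}$, partition $[\tau_0, \tau]$ into dyadic slabs with $\tau_i = (1.1)^i \tau_0$ and apply Proposition \ref{localizationT}.2 on each slab $[\tau_i, (1.1)\tau_i]$ to obtain
$$\iint_{[\tau_i,(1.1)\tau_i] \cap \{r\leq t^*/2\}} K^{X_1}(\Phi) \leq C\tau_i^{-2} \sum_{m=0}^{1} \int_{\Sigma_{\tau_i}} J^{Z+CN,w^Z}_\mu(\partial_{t^*}^m\Phi) n^\mu_{\Sigma_{\tau_i}} \leq C\tau_i^{-2+\eta} F(\tau).$$
Multiplying by the weight $t^* \sim \tau_i$ and summing in $i$ gives a total bound of $C\tau^\eta F(\tau)$.

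For the two remaining $(t^*)^2$-weighted terms, I apply Proposition \ref{localization}.2 dyadically (for the $K^{X_0}$ piece) and Proposition \ref{decaynearhorizon} (for the near-horizon $K^N$ piece), so that the $(1.1)^{2i}$ weights are absorbed by matching $(1.1)^{-2i}$ decay factors from the localized decay estimates; each sum contributes at most $C(\delta'+\epsilon)\tau^\eta F(\tau)$. Combining everything, taking the supremum over $\tilde\tau \leq \tau$ and over $m \leq 2$, yields
$$F(\tau) \leq C_0 \sum_{m=0}^{2} \int_{\Sigma_{\tau_0}} J^{Z+CN,w^Z}_\mu(\partial_{t^*}^m\Phi) n^\mu + C(\delta'+\epsilon) F(\tau).$$
Choosing $\delta'$ and then $\epsilon$ (equivalently $a$) sufficiently small closes the bootstrap. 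The statement of the proposition then follows, with the separate $\tau^2 \int_{\Sigma_\tau \cap \{r \leq \tau/2\}} J^N$ piece extracted from the conformal energy via Proposition \ref{Zlowerbound}.

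The main obstacle is the careful tracking of the $\tau^\eta$ factor through the dyadic summations and verifying that the coefficient of $F(\tau)$ on the right can be made strictly less than one; this is precisely where the smallness of $a$ (hence $\epsilon$) enters, and explains the relation $\eta \sim C\epsilon$ declared in the paper's notation. The appearance of $\partial_{t^*}^m \Phi$ up to $m = 2$ rather than $m = 1$ traces to Proposition \ref{localizationT}: estimating $K^{X_1}(\Phi)$ requires the $Z+N$ energy of $\partial_{t^*}\Phi$ at initial time, and applying the same estimate to $\partial_{t^*}\Phi$ in the bootstrap then requires the $Z+N$ energy of $\partial_{t^*}^2\Phi$ initially.
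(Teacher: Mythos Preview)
Your overall strategy—reduce to Proposition~\ref{energydecay} with $G=0$, decompose dyadically, and close a bootstrap—matches the paper. But the bootstrap as you have set it up does not close, and the far-region step is incorrect.

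\textbf{The bootstrap gap.} The principal term $\iint t^* r^{-1+\delta} K^{X_1}(\Phi)$ carries no smallness factor $\delta'$ or $\epsilon$. In your near-region estimate you obtain $\iint_{[\tau_i,(1.1)\tau_i]\cap\{r\le t^*/2\}} K^{X_1}(\Phi) \le C\tau_i^{-2+\eta} F(\tau)$; multiplying by $t^*\sim\tau_i$ and summing gives $CF(\tau)$ (the sum $\sum_i \tau_i^{-1+\eta}$ converges), not something that can be absorbed. Your final inequality $F(\tau)\le C_0\,(\mathrm{initial}) + C(\delta'+\epsilon)F(\tau)$ therefore omits a term of size $CF(\tau)$ with $C$ not small. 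The paper avoids this by running \emph{two} bootstraps at different strengths: assumption~(\ref{bootstrap2}) on $\partial_{t^*}\Phi$ at the weaker rate $A\tau^{1+\eta}$, and assumption~(\ref{bootstrap1}) on $\Phi$ at the stronger rate $A^2\tau^\eta$. The $K^{X_1}(\Phi)$ term is then estimated via Proposition~\ref{localizationT}.2 using (\ref{bootstrap2}), yielding a coefficient $CA$ (linear in $A$), so closing (\ref{bootstrap1}) requires only $A^2 \ge C + C\eta^{-1}A + C(\delta'+\epsilon)\eta^{-1}A^2$, which holds for $A$ large once $\delta',\epsilon$ are chosen. A single-quantity bootstrap cannot replicate this hierarchy. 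Relatedly, your $F(\tau)$ does not encode the localized decay $\int_{\{r\le r_Y^-\}} J^N(\partial_{t^*}^m\Phi)\lesssim \tau^{-2+\eta}$ needed on the right-hand side of Proposition~\ref{localizationT}.2; this term has no $\tau^{-2}$ prefactor there.

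\textbf{The far-region step.} The claim $J^N \le Cr^{-2} J^{Z,w^Z}$ via Proposition~\ref{Zlowerbound} fails for the $\underline{L}$-component: that proposition only controls $u^2(\underline{L}\Phi)^2$, and in $\{r\ge t^*/2\}$ the coordinate $u$ can be arbitrarily small. The paper instead uses the global boundedness $\iint_{\mathcal R(\tau_{i-1},\tau_i)} K^{X_1}(\Phi) \le C\sum_{m=0}^1\int_{\Sigma_{\tau_0}} J^N_\mu(\partial_{t^*}^m\Phi)n^\mu$ from Proposition~\ref{bddcom1}, and the factor $r^{-1+\delta}\le C\tau_i^{-1+\delta}$ then supplies the decay.
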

\begin{proof}
We introduce the bootstrap assumptions:
\begin{equation}\label{bootstrap1}
\begin{split}
&\tau^2\int_{\Sigma_\tau\cap\{r\leq \frac{\tau}{2}\}} J^{N}_\mu\left(\Phi\right)n^\mu_{\Sigma_\tau}+c\int_{\Sigma_\tau} J^{Z+N,w^Z}_\mu\left(\Phi\right)n^\mu_{\Sigma_\tau}\\
\leq& A^2\tau^{\eta}\sum_{m=0}^2\int_{\Sigma_{\tau_0}} J^{Z+CN,w^Z}_\mu\left(\partial_{t^*}^m\Phi\right)n^\mu_{\Sigma_{{\tau_0}}}.\\
\end{split}
\end{equation}
\begin{equation}\label{bootstrap2}
\begin{split}
&\tau^2\int_{\Sigma_\tau\cap\{r\leq \frac{\tau}{2}\}} J^{N}_\mu\left(\partial_{t^*}\Phi\right)n^\mu_{\Sigma_\tau}+c\int_{\Sigma_\tau} J^{Z+N,w^Z}_\mu\left(\Phi\right)n^\mu_{\Sigma_\tau}\\
\leq& A\tau^{1+\eta}\sum_{m=0}^2\int_{\Sigma_{\tau_0}} J^{Z+CN,w^Z}_\mu\left(\partial_{t^*}^m\Phi\right)n^\mu_{\Sigma_{{\tau_0}}}.\\
\end{split}
\end{equation}
Here we think of $\eta$ as a small positive number.
We divide the interval $[\tau_0,\tau]$ dyadically into $\tau_0\leq\tau_1\leq ...\leq \tau_{n-1}\leq\tau_n=\tau$ with $\tau_{i+1}\leq (1.1)\tau_i$ and $n$ the smallest integer for doing such division. We then have $n\sim\log|\tau-\tau'|$. We can now apply Proposition \ref{localization} on the intervals $[\tau_{i-1},\tau_i]$ and use the bootstrap assumption (\ref{bootstrap1}):
\begin{equation*}
\begin{split}
&\iint_{\mathcal R(\tau_{i-1},\tau_i)\cap\{r\leq \frac{t^*}{2}\}}K^{X_0}\left(\Phi\right)+\iint_{\mathcal R(\tau_{i-1},\tau_i)\cap\{r\leq r^-_Y\}}K^{N}\left(\Phi\right)\\
\leq& C\left(\tau_i^{-2}\int_{\Sigma_{\tau_{i-1}}} J^{Z,w^Z}_\mu\left(\Phi\right)n^\mu_{\Sigma_{\tau_{i-1}}}+ C\int_{\Sigma_{\tau_{i-1}}\cap\{r\leq r^-_Y\}} J^{N}_\mu\left(\Phi\right)n^\mu_{\Sigma_{\tau_{i-1}}}\right)\\
\leq& CA^2\tau_i^{-2+\eta}\sum_{m=0}^2\int_{\Sigma_{\tau_0}} J^{Z+CN,w^Z}_\mu\left(\partial_{t^*}^m\Phi\right)n^\mu_{\Sigma_{{\tau_0}}}.
\end{split}
\end{equation*}
Similarly, we can apply Proposition \ref{localization} on the intervals $[\tau_{i-1},\tau_i]$ for $\partial_{t^*}\Phi$ and use the bootstrap assumption (\ref{bootstrap2}):
\begin{equation*}
\begin{split}
&\iint_{\mathcal R(\tau_{i-1},\tau_i)\cap\{r\leq \frac{t^*}{2}\}}K^{X_0}\left(\partial_{t^*}\Phi\right)+\iint_{\mathcal R(\tau_{i-1},\tau_i)\cap\{r\leq r^-_Y\}}K^{N}\left(\partial_{t^*}\Phi\right)\\
\leq& C\left(\tau_i^{-2}\int_{\Sigma_{\tau_{i-1}}} J^{Z,w^Z}_\mu\left(\partial_{t^*}\Phi\right)n^\mu_{\Sigma_{\tau_{i-1}}}+ C\int_{\Sigma_{\tau_{i-1}}\cap\{r\leq r^-_Y\}} J^{N}_\mu\left(\partial_{t^*}\Phi\right)n^\mu_{\Sigma_{\tau_{i-1}}}\right)\\
\leq& CA\tau_i^{-1+\eta}\sum_{m=0}^2\int_{\Sigma_{\tau_0}} J^{Z+CN,w^Z}_\mu\left(\partial_{t^*}^m\Phi\right)n^\mu_{\Sigma_{{\tau_0}}}.
\end{split}
\end{equation*}
By Proposition \ref{bddcom1}, we have
\begin{equation*}
\begin{split}
&\iint_{\mathcal R(\tau_{i-1},\tau_i)}r^{-1+\delta}K^{X_1}\left(\partial_{t^*}\Phi\right)\leq C\sum_{m=0}^1\int_{\Sigma_{\tau_0}} J^{N}_\mu\left(\partial_{t^*}^m\Phi\right)n^\mu_{\Sigma_{{\tau_0}}}.
\end{split}
\end{equation*}
By Propositions \ref{bddcom1} and \ref{localizationT}, we have
\begin{equation*}
\begin{split}
&\iint_{\mathcal R(\tau_{i-1},\tau_i)}r^{-1+\delta}K^{X_1}\left(\Phi\right)\\
\leq&C\iint_{\mathcal R(\tau_{i-1},\tau_i)\cap\{r\leq\frac{t^*}{2}\}}K^{X_1}\left(\Phi\right)+C\tau_i^{-1+\delta}\iint_{\mathcal R(\tau_{i-1},\tau_i)\cap\{r\geq\frac{t^*}{2}\}}K^{X_1}\left(\Phi\right)\\
\leq& CA\tau_i^{-1+\eta}\sum_{m=0}^2\int_{\Sigma_{\tau_0}} J^{Z+CN,w^Z}_\mu\left(\partial_{t^*}^m\Phi\right)n^\mu_{\Sigma_{{\tau_0}}}.
\end{split}
\end{equation*}
Apply Proposition \ref{energydecay}, we get
\begin{equation*}
\begin{split}
&c\int_{\Sigma_{\tau}} J^{Z+N,w^Z}_\mu\left(\Phi\right)n^\mu_{\Sigma_{{\tau}}}+\tau^2\int_{\Sigma_{\tau}\cap\{r\leq \gamma\tau\}} J^{N}_\mu\left(\Phi\right)n^\mu_{\Sigma_{\tau}}\\
\leq &C\int_{\Sigma_{\tau_0}} J^{Z+CN,w^Z}_\mu\left(\Phi\right)n^\mu_{\Sigma_{{\tau_0}}}+C\iint_{\mathcal R(\tau_0,\tau)} t^*r^{-1+\delta}K^{X_1}\left(\Phi\right)\\
& +C\delta'\iint_{\mathcal R(\tau_0,\tau)\cap\{r\leq \frac{t^*}{2}\}} (t^*)^2K^{X_0}\left(\Phi\right)+C\left(\delta'+\epsilon\right)\iint_{\mathcal R(\tau_0,\tau)\cap\{r\leq r^-_Y\}}(t^*)^2K^N\left(\Phi\right) \\
\leq &\left(C+\left(C+CA+CA^2(2\delta'+\epsilon)\right)\sum_{i=0}^{n-1}\tau_i^\eta\right)\sum_{m=0}^2\int_{\Sigma_{\tau_0}} J^{Z+CN,w^Z}_\mu\left(\partial_{t^*}^m\Phi\right)n^\mu_{\Sigma_{{\tau_0}}} \\
\leq&\left(C+\eta^{-1}\left(C+CA+CA^2(2\delta'+\epsilon)\right)\tau^\eta\right)\sum_{m=0}^2\int_{\Sigma_{\tau_0}} J^{Z+CN,w^Z}_\mu\left(\partial_{t^*}^m\Phi\right)n^\mu_{\Sigma_{{\tau_0}}}.
\end{split}
\end{equation*}
Now take $A$ large, $\epsilon=\frac{\eta}{4C}$ and $\delta'=\frac{\epsilon}{2}$, we improve (\ref{bootstrap1}). Apply Proposition \ref{energydecay} again, this time to $\partial_{t^*}\Phi$, we have
\begin{equation*}
\begin{split}
&c\int_{\Sigma_{\tau}} J^{Z,w^Z}_\mu\left(\partial_{t^*}\Phi\right)n^\mu_{\Sigma_{{\tau}}}+\tau^2\int_{\Sigma_{\tau}\cap\{r\leq \gamma\tau\}} J^{N}_\mu\left(\partial_{t^*}\Phi\right)n^\mu_{\Sigma_{\tau}}\\
\leq &C\int_{\Sigma_{\tau_0}} J^{Z+CN,w^Z}_\mu\left(\partial_{t^*}\Phi\right)n^\mu_{\Sigma_{{\tau_0}}}+C\iint_{\mathcal R(\tau_0,\tau)} t^*r^{-1+\delta}K^{X_1}\left(\partial_{t^*}\Phi\right)\\
& +C\delta'\iint_{\mathcal R(\tau_0,\tau)\cap\{r\leq \frac{t^*}{2}\}} (t^*)^2K^{X_0}\left(\partial_{t^*}\Phi\right)+C\left(\delta'+\epsilon\right)\iint_{\mathcal R(\tau_0,\tau)\cap\{r\leq r^-_Y\}}(t^*)^2K^N\left(\partial_{t^*}\Phi\right)\\
\leq&\left(C+C\sum_{i=0}^{n-1}\tau_i+CA(2\delta'+\epsilon)\sum_{i=0}^{n-1}\tau_i^{1+\eta}\right)\sum_{m=0}^2\int_{\Sigma_{\tau_0}} J^{Z+CN,w^Z}_\mu\left(\partial_{t^*}^m\Phi\right)n^\mu_{\Sigma_{{\tau_0}}} \\
\leq&\left(C+C\tau+CA(2\delta'+\epsilon)\tau^{1+\eta}\right)\sum_{m=0}^2\int_{\Sigma_{\tau_0}} J^{Z+CN,w^Z}_\mu\left(\partial_{t^*}^m\Phi\right)n^\mu_{\Sigma_{{\tau_0}}}.
\end{split}
\end{equation*}
Now take $A$ large, $\delta'={\epsilon}$ and $\epsilon$ sufficiently small, we also improve (\ref{bootstrap2}).
\end{proof}

In particular, the Theorem of Dafermos-Rodnianski \cite{DRL} is retrieved.
\begin{corollary}[Dafermos-Rodnianski]\label{DRDecay}
Suppose $\Box_{g_K}\Phi=0$. Then for all $\eta>0$ and all $M>0$ there exists $a_0$ such that the following estimates hold on Kerr spacetimes with $(M,a)$ for which $a\leq a_0$:
\begin{enumerate}
\item Boundedness of Non-degenerate Energy
\begin{equation*}
\begin{split}
&\int_{\Sigma_{\tau}} J^{N}_\mu\left(\Phi\right)n^\mu_{\Sigma_{\tau}} +\int_{\mathcal H(\tau_0,\tau)} J^{N}_\mu\left(\Phi\right)n^\mu_{\mathcal H^+} +\iint_{\mathcal R(\tau_0,\tau)\cap\{r\leq r^-_Y\}}K^{N}\left(\Phi\right)+\iint_{\mathcal R(\tau',\tau)}K^{X_0}\left(\Phi\right)\\
\leq &C\int_{\Sigma_{\tau_0}} J^{N}_\mu\left(\Phi\right)n^\mu_{\Sigma_{\tau_0}}.
\end{split}
\end{equation*}
 \item Decay of Non-degenerate Energy
\begin{equation*}
\begin{split}
&\tau^2\int_{\Sigma_\tau\cap\{r\leq \gamma\tau\}} J^{N}_\mu\left(\Phi\right)n^\mu_{\Sigma_\tau}+c\int_{\Sigma_\tau} J^{Z+N,w^Z}_\mu\left(\Phi\right)n^\mu_{\Sigma_\tau}\leq C\tau^{1+\eta}\sum_{m=0}^1\int_{\Sigma_{\tau_0}} J^{Z+CN,w^Z}_\mu\left(\partial_{t^*}^m\Phi\right)n^\mu_{\Sigma_{{\tau_0}}}.
\end{split}
\end{equation*}
and
\begin{equation*}
\begin{split}
&\tau^2\int_{\Sigma_\tau\cap\{r\leq \gamma\tau\}} J^{N}_\mu\left(\Phi\right)n^\mu_{\Sigma_\tau}+c\int_{\Sigma_\tau} J^{Z+N,w^Z}_\mu\left(\Phi\right)n^\mu_{\Sigma_\tau}\leq C\tau^{\eta}\sum_{m=0}^2\int_{\Sigma_{\tau_0}} J^{Z+CN,w^Z}_\mu\left(\partial_{t^*}^m\Phi\right)n^\mu_{\Sigma_{{\tau_0}}}.
\end{split}
\end{equation*}
\item Decay of Local Integrated Energy\\
For $\tau'\leq\tau\leq (1.1)\tau'$,
\begin{equation*}
\begin{split}
&\iint_{\mathcal R(\tau',\tau)\cap\{r\leq \frac{t^*}{2}\}}K^{X_0}\left(\Phi\right)+\iint_{\mathcal R(\tau',\tau)\cap\{r\leq r^-_Y\}}K^{N}\left(\Phi\right)\leq C\tau^{-2+\eta}\sum_{m=0}^2\int_{\Sigma_{\tau_0}} J^{Z+CN,w^Z}_\mu\left(\partial_{t^*}^m\Phi\right)n^\mu_{\Sigma_{{\tau_0}}}.
\end{split}
\end{equation*}
and
\begin{equation*}
\begin{split}
&\iint_{\mathcal R(\tau',\tau)\cap\{r\leq \frac{t^*}{2}\}}K^{X_1}\left(\Phi\right)\leq C\tau^{-2+\eta}\sum_{m=0}^3\int_{\Sigma_{\tau_0}} J^{Z+CN,w^Z}_\mu\left(\partial_{t^*}^m\Phi\right)n^\mu_{\Sigma_{{\tau_0}}}.
\end{split}
\end{equation*}
\end{enumerate}
\end{corollary}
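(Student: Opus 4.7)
The plan is to derive all three statements as consequences of the inhomogeneous estimates already proved, by simply setting $G=0$ and feeding them into each other on dyadic time intervals.

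For Part 1, I would apply Proposition \ref{bddcom} (or equivalently Proposition \ref{Nbdd}) to $\Phi$ with $G_1=G_2=0$. Every inhomogeneous term on the right-hand side vanishes and one obtains at once
\[
\int_{\Sigma_\tau} J^N_\mu(\Phi)n^\mu_{\Sigma_\tau} + \int_{\mathcal H(\tau_0,\tau)} J^N_\mu(\Phi)n^\mu_{\mathcal H^+} + \iint_{\mathcal R(\tau_0,\tau)} K^{X_0}(\Phi) + \iint_{\mathcal R(\tau_0,\tau)\cap\{r\leq r^-_Y\}} K^N(\Phi) \leq C \int_{\Sigma_{\tau_0}} J^N_\mu(\Phi) n^\mu_{\Sigma_{\tau_0}},
\]
which is exactly the boundedness statement together with a global (in time) integrated decay estimate that will also be useful below.

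For Part 2, the second inequality is exactly the content of Proposition \ref{0energydecay2}, so no new work is needed. For the first inequality, where only one $\partial_{t^*}$ derivative is available in the initial data, I would repeat verbatim the bootstrap argument used to prove Proposition \ref{0energydecay2}, but retain only the ansatz corresponding to (\ref{bootstrap2}), namely
\[
\tau^2 \int_{\Sigma_\tau\cap\{r\leq \tau/2\}} J^N_\mu(\Phi)n^\mu + c \int_{\Sigma_\tau} J^{Z+N,w^Z}_\mu(\Phi)n^\mu \leq A \tau^{1+\eta} \sum_{m=0}^1 \int_{\Sigma_{\tau_0}} J^{Z+CN,w^Z}_\mu(\partial_{t^*}^m\Phi)n^\mu_{\Sigma_{\tau_0}}.
\]
Feeding this into Propositions \ref{localization} and \ref{localizationT} on each dyadic block $[\tau_{i-1},\tau_i]$ gives
$\iint K^{X_0}(\Phi), \iint_{\{r\leq r^-_Y\}} K^N(\Phi) \lesssim A \tau_i^{-1+\eta}$ and
$\iint K^{X_1}(\Phi) \lesssim A \tau_i^{-1+\eta}$
(the latter costing exactly one extra $\partial_{t^*}$ derivative, via Proposition \ref{bddcom1}), where the initial-data side now involves at most one $\partial_{t^*}$. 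Plugging these back into Proposition \ref{energydecay} applied to $\Phi$ and summing the geometric series $\sum_i \tau_i^\eta \sim \eta^{-1}\tau^\eta$ closes the bootstrap for $A$ large and $\epsilon,\delta'$ sufficiently small, improving the $\tau^{1+\eta}$ exponent.

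For Part 3, I would apply Proposition \ref{localization} on the short interval $[\tau',\tau]$ with $\tau'\leq\tau\leq (1.1)\tau'$ and $G=0$ to bound
\[
\iint_{\mathcal R(\tau',\tau)\cap\{r\leq t^*/2\}} K^{X_0}(\Phi) + \iint_{\mathcal R(\tau',\tau)\cap\{r\leq r^-_Y\}} K^N(\Phi) \leq C\tau^{-2} \int_{\Sigma_{\tau'}} J^{Z+N,w^Z}_\mu(\Phi) n^\mu + C \int_{\Sigma_{\tau'}\cap\{r\leq r^-_Y\}} J^N_\mu(\Phi) n^\mu,
\]
and then invoke Part 2 to dominate both terms on the right by $C\tau^{-2+\eta}$ times the initial-data norm with two $\partial_{t^*}$ derivatives. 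The $K^{X_1}$ version follows identically using Proposition \ref{localizationT} and costs one additional $\partial_{t^*}$ derivative in the data. I do not expect a genuine obstacle: the only delicate point is ensuring that the bootstrap for the first form of Part 2 closes with only one $\partial_{t^*}$-derivative of the data, which works because the $K^{X_1}$ term produced by Proposition \ref{energydecay} costs exactly one commutation by $\partial_{t^*}$, matching the index range $m\leq 1$ on the right-hand side.
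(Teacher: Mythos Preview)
Your approach is correct and matches the paper's proof: Part 1 is Proposition \ref{Nbdd} with $G=0$; the second inequality of Part 2 is Proposition \ref{0energydecay2} itself, and the first is, as the paper puts it, ``evident from the proof,'' your single-bootstrap extraction being exactly the intended reading; Part 3 is Proposition \ref{localization} (resp.\ \ref{localizationT}) on $[\tau',\tau]$ followed by Part 2.

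One small correction in your Part 2 argument: with only the single bootstrap for $\Phi$, Proposition \ref{bddcom1} gives mere boundedness $\iint_{[\tau_{i-1},\tau_i]} K^{X_1}(\Phi)\leq C\sum_{m=0}^1\int_{\Sigma_{\tau_0}}J^N_\mu(\partial_{t^*}^m\Phi)n^\mu$, not the decay $\lesssim A\tau_i^{-1+\eta}$ you state (the latter would require Proposition \ref{localizationT}.2 and hence a bootstrap on $\partial_{t^*}\Phi$ as well). But boundedness suffices: the $K^{X_1}$ term in Proposition \ref{energydecay} carries weight $t^*$, so it contributes $\sum_i \tau_i\cdot C\sim C\tau$, which is absorbed by $A\tau^{1+\eta}$ for $A$ large---this is exactly how the paper handles $\iint r^{-1+\delta}K^{X_1}(\partial_{t^*}\Phi)$ when closing (\ref{bootstrap2}).
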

\begin{proof}
1 follows directly from Proposition \ref{Nbdd}. 2 contains two statements. The second one is a restatement of Proposition \ref{0energydecay2}. The first one is evident from the proof of Proposition \ref{0energydecay2}. 3 again has two statements. For the first statement, we revisit the proof of Proposition \ref{0energydecay2}. Notice that the bootstrap assumptions are true. Hence it holds. For the second statement, we note by comparing Propositions \ref{localization} and \ref{localizationT} that $K^{X_1}$ can be estimated in the same way as $K^{X_0}$ except for an extra derivative. The second statement in 3 can then be proved by re-running the argument in Proposition \ref{0energydecay2} with an extra derivative.
\end{proof}

\section{Estimates for $\hat{Y}\Phi$ and Elliptic Estimates}\label{sectioncommutatorY}

Away from the event horizon, we can control all higher order derivatives simply by commuting with $\partial_{t^*}$ and using standard elliptic estimates. We write down a general version of the estimates in which we have inhomogeneous terms.
\begin{proposition}\label{elliptic}
Suppose $\Box_{g_K}\Phi=G$. For $m\geq 1$ and for any $\alpha$,
\begin{enumerate}
\item Boundedness of Weighted Energy
\begin{equation*}
\begin{split}
\int_{\Sigma_{\tau}\cap\{r\geq r^-_Y\}} r^\alpha\left(D^m\Phi\right)^2 \leq C_{\alpha,m}\left(\sum_{j=0}^{m-1}\int_{\Sigma_{\tau}} r^\alpha J^N_\mu\left(\partial_{t^*}^j\Phi\right)n^\mu_{\Sigma_\tau}+\sum_{j=0}^{m-2}\int_{\Sigma_{\tau}} r^\alpha\left(D^jG\right)^2\right).
\end{split}
\end{equation*}
\item Boundedness of Local Energy\\
For any $0< \gamma<\gamma'$,
\begin{equation*}
\begin{split}
&\int_{\Sigma_{\tau}\cap\{r^-_Y\leq r\leq \gamma t^*\}} r^\alpha\left(D^m\Phi\right)^2 \\
 \leq &C_{\alpha,m,\gamma,\gamma'}\left(\sum_{j=0}^{m-1}\int_{\Sigma_{\tau}\cap\{r\leq \gamma' t^*\}} r^\alpha J^N_\mu\left(\partial_{t^*}^j\Phi\right)n^\mu_{\Sigma_\tau}+\tau^{\alpha-\beta-2}\int_{\Sigma_{\tau}} r^\beta J^N_\mu\left(\Phi\right)n^\mu_{\Sigma_\tau}\right. \\
&\left. \quad\quad+\sum_{j=0}^{m-2}\int_{\Sigma_{\tau}\cap\{r\leq \gamma' t^*\}} r^\alpha\left(D^jG\right)^2\right). \\
\end{split}
\end{equation*}
\end{enumerate}
\end{proposition}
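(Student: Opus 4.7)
The plan is to exploit the fact that for $r \geq r_Y^-$ the Killing field $\partial_{t^*}$ is uniformly timelike, so $\Sigma_\tau$ is uniformly spacelike and the spatial part of $\Box_{g_K}$ is uniformly elliptic on compact subsets of $\{r \geq r_Y^-\}$, and asymptotically approaches the flat Laplacian as $r \to \infty$. We then solve the wave equation for the spatial principal part, trading two spatial derivatives for two $\partial_{t^*}$-derivatives plus the source $G$, and iterate.

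More precisely, for Part 1 with $m=2$, I would write
\begin{equation*}
\Box_{g_K} = g^{t^*t^*}\partial_{t^*}^2 + 2 g^{t^* i}\partial_{t^*}\partial_i + L_{sp} + (\text{first-order terms}),
\end{equation*}
where $L_{sp}$ is a second-order operator tangent to $\Sigma_\tau$ that is uniformly elliptic on $\{r \geq r_Y^-\}$. Solving $L_{sp}\Phi = G - g^{t^*t^*}\partial_{t^*}^2\Phi - 2g^{t^*i}\partial_{t^*}\partial_i\Phi + \ldots$ and applying standard weighted interior elliptic estimates (which admit arbitrary polynomial weights $r^\alpha$ because the coefficients of $L_{sp}$ are asymptotic to the flat Laplacian) bounds $\int r^\alpha(D^2\Phi)^2$ by the claimed quantities for $m=2$. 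For general $m$, I would induct: since $\partial_{t^*}$ is Killing, $\Box_{g_K}(\partial_{t^*}^j\Phi) = \partial_{t^*}^j G$, so applying the elliptic estimate to $\partial_{t^*}^{m-2}\Phi$ controls $D^2\partial_{t^*}^{m-2}\Phi$; applying it to (spatially) differentiated versions of the equation, handling commutators of $D$ with the coefficients of $L_{sp}$ (which are bounded and smooth), then yields $D^3\partial_{t^*}^{m-3}\Phi, D^4\partial_{t^*}^{m-4}\Phi$, and so on up to $D^m\Phi$.

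For Part 2, I would introduce a cutoff $\chi(r, t^*)$ supported in $\{r \leq \gamma' t^*\}$, identically one in $\{r \leq \gamma t^*\}$, and satisfying $|D^k\chi| \leq C(t^*)^{-k}$, then repeat the elliptic argument for $\chi\Phi$ localized to $\{r \leq \gamma' t^*\}$. The cutoff produces commutator/error terms involving lower orders of $\Phi$ (down to $\Phi$ itself) supported in the annular region $r \sim t^*$ and weighted by negative powers of $t^*$. On this region $r \sim t^*$, one can trade $(t^*)^{-k} r^\alpha \lesssim r^\beta (t^*)^{\alpha - \beta - k}$ for any choice of $\beta$, and then invoke the Hardy inequality of Proposition \ref{hi} to absorb the resulting $\int r^\beta \Phi^2$ into $\int r^{\beta + 2} J^N_\mu(\Phi) n^\mu$ (with the extra two powers of $t^*$ providing the factor $\tau^{\alpha - \beta - 2}$ in the statement).

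The main obstacle is purely bookkeeping: one must verify that the weighted elliptic estimates go through with arbitrary polynomial $r^\alpha$ weights uniformly out to infinity (this is routine because the coefficients of $L_{sp}$ are smooth and behave like the flat Laplacian), and carefully track how the commutators from differentiating the coefficients of $L_{sp}$ at each step of the induction remain harmless (they are lower order and bounded), and how the cutoff errors in Part 2 trade powers of $r$ and $t^*$ so as to produce the precise form $\tau^{\alpha - \beta - 2}\int r^\beta J^N_\mu(\Phi) n^\mu$. No serious new idea is needed beyond standard interior elliptic regularity plus the Hardy inequality already at our disposal.
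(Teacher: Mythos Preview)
Your proposal is correct and follows essentially the same route as the paper: write the spatial part of $\Box_{g_K}$ as an elliptic operator $\Delta_{g_K}$ on $\Sigma_\tau$, use the equation to express $\Delta_{g_K}\Phi = G - g^{t^*t^*}\partial_{t^*}^2\Phi - 2g^{t^*\phi^*}\partial_{t^*}\partial_{\phi^*}\Phi$, invoke weighted interior elliptic estimates, and induct on $m$ (commuting $D^{m-2}$ past $\Delta_{g_K}$ and feeding the resulting $D^{m-1}\partial_{t^*}\Phi$ back into the induction hypothesis applied to $\partial_{t^*}\Phi$); for Part 2 the paper likewise applies Part 1 to $\chi(r/\tau)\Phi$, producing cutoff errors $\tau^{-1}\tilde\chi\,\partial_r\Phi + \tau^{-2}\tilde{\tilde\chi}\,\Phi$ supported where $r\sim\tau$, and closes with the Hardy inequality. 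Your bookkeeping for the $\tau^{\alpha-\beta-2}$ factor is slightly off (you should trade to $r^{\beta-2}\Phi^2$ before Hardy, not $r^\beta\Phi^2$), but as you note this is purely bookkeeping and the argument goes through.
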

\begin{proof}
This is obvious for $m=1$ (even without the restriction $r\geq r_Y^-$). We will proceed by induction. Take $\delta\ll\frac{r^-_Y-r_+}{4}$. Assume
\begin{equation*}
\begin{split}
\sum_{j=1}^{m-1}\int_{\Sigma_{\tau}\cap\{r\geq r^-_Y-2\delta\}} r^\alpha\left(D^j\Phi\right)^2 \leq C\left(\sum_{j=0}^{m-2}\int_{\Sigma_{\tau}} r^\alpha J^N_\mu\left(\partial_{t^*}^j\Phi\right)n^\mu_{\Sigma_\tau}+\sum_{j=0}^{m-3}\int_{\Sigma_{\tau}} r^\alpha\left(D^jG\right)^2\right).
\end{split}
\end{equation*}
We want to show
\begin{equation*}
\begin{split}
\int_{\Sigma_{\tau}\cap\{r\geq r^-_Y-\delta\}} r^\alpha\left(D^m\Phi\right)^2 \leq C\left(\sum_{j=0}^{m-1}\int_{\Sigma_{\tau}} r^\alpha J^N_\mu\left(\partial_{t^*}^j\Phi\right)n^\mu_{\Sigma_\tau}+\sum_{j=0}^{m-2}\int_{\Sigma_{\tau}} r^\alpha\left(D^jG\right)^2\right),
\end{split}
\end{equation*}
which would then imply the conclusion.
Denote by $\Delta_{g_K}$ the Laplace-Beltrami operator for the metric $g_K$ restricted on the spacelike hypersurface $t^*=\mbox{constant}$. Notice that since $\partial_{t^*}$ is Killing, the operator is defined independent of $t^*$. Then we have 
$$|[\Delta_{g_K},D^k]\Phi|\leq C\sum_{j=1}^{k+1}|D^j\Phi|.$$
Denote by $\nabla$ the spatial derivatives with respect to the spatial coordinate variables in the Schwarzschild $\left(t^*_S,r_S,x^1_S,x^2_S\right)$ coordinate system. On the set $\{r\geq r^-_Y-\frac{r^-_Y-r_+}{4}\}$, $\Delta_{g_K}$ is elliptic and therefore controls all spatial derivatives:
\begin{equation*}
\begin{split}
&\int_{\Sigma_{\tau}\cap\{r\geq r^-_Y-\delta\}} r^\alpha \left(D^m\Phi\right)^2\\
\leq &C\int_{\Sigma_{\tau}\cap\{r\geq r^-_Y-2\delta\}} r^\alpha \left(\left(\Delta_{g_K} D^{m-2}\Phi\right)^2+\left(D^{m-1}\Phi\right)^2+\left(\partial_{t^*}^{m-1}\nabla\Phi\right)^2+\left(\partial_{t^*}^m\Phi\right)^2\right)\\
\leq &C\int_{\Sigma_{\tau}\cap\{r\geq r^-_Y-2\delta\}} r^\alpha \left(\left( D^{m-2}\Delta_{g_K}\Phi\right)^2+\sum_{j=1}^{m-1}\left(D^{j}\Phi\right)^2+r^{-2}\Phi^2+\left(\partial_{t^*}^{m-1}\nabla\Phi\right)^2+\left(\partial_{t^*}^m\Phi\right)^2\right)
\end{split}
\end{equation*}
The last two terms are obviously bounded by $C\int_{\Sigma_{\tau}} J^N_\mu\left(\partial_{t^*}^{m-1}\Phi\right)n^\mu_{\Sigma_\tau}$. The second term can be bounded using the induction hypothesis. The third term can be bounded using the Hardy inequality in Proposition \ref{hi}. Finally, to estimate the first term we use the equation $\Box_{g_K}\Phi=G$. Then, by the form of the Kerr metric, $\Delta_{g_K}\Phi=G-g^{t^*t^*}\partial_{t^*}^2\Phi-2g^{t^*\phi^*}\partial_{t^*}\partial_{\phi^*}\Phi$. Therefore,
\begin{equation*}
\begin{split}
&\int_{\Sigma_{\tau}\cap\{r\geq r^-_Y-2\delta\}} r^\alpha \left(D^{m-2}\Delta_{g_K}\Phi\right)^2\\
\leq&C\int_{\Sigma_{\tau}\cap\{r\geq r^-_Y-2\delta\}}r^\alpha \left(\left(D^{m-1}\partial_{t^*}\Phi\right)^2+\left(D^{m-2}G\right)^2\right)\\
\leq&C\left(\sum_{j=0}^{m-1}\int_{\Sigma_{\tau}} r^\alpha J^N_\mu\left(\partial_{t^*}^j\Phi\right)n^\mu_{\Sigma_\tau}+\sum_{j=0}^{m-2}\int_{\Sigma_\tau}r^\alpha \left(D^jG\right)^2\right),
\end{split}
\end{equation*}
where at the last step we have used the induction hypothesis for $\partial_{t^*}\Phi$. We have thus proved the boundedness of weighted energy.
To prove the second part of the Proposition, consider the function $\chi(\frac{r}{\tau})\Phi\left(\tau\right)$ for a fixed time $t^*=\tau$, where $\chi:\mathbb R_{\geq 0} \to \mathbb R_{\geq 0}$ is supported in $\{x\leq \gamma'\}$ and is identically 1 in $\{x\leq \gamma\}$. Now $$\Box_{g_K}\Phi=\chi G+\tau^{-1}\tilde{\chi}\partial_r\Phi+\tau^{-2}\tilde{\tilde{\chi}}\Phi,$$
where $\tilde{\chi}$ and $\tilde{\tilde{\chi}}$ are supported in $\{\gamma\leq \frac{t^*}{r}\leq \gamma'\}$.
Thus, by the estimate just proved,
\begin{equation*}
\begin{split}
&\int_{\Sigma_{\tau}\cap\{r^-_Y\leq r\leq \gamma t^*\}} r^\alpha\left(D^m\Phi\right)^2 \\
 \leq &C_\alpha\left(\sum_{j=0}^{m-1}\int_{\Sigma_{\tau}\cap\{r\leq \gamma' t^*\}} r^\alpha J^N_\mu\left(\partial_{t^*}^j\Phi\right)n^\mu_{\Sigma_\tau}+\int_{\Sigma_{\tau}\cap\{\gamma t\leq r\leq \gamma' t^*\}} r^\alpha \tau^{-4}\Phi^2+\sum_{j=0}^{m-2}\int_{\Sigma_{\tau}\cap\{r\leq \gamma' t^*\}} r^\alpha\left(D^jG\right)^2\right) \\
 \leq &C_\alpha\left(\sum_{j=0}^{m-1}\int_{\Sigma_{\tau}\cap\{r\leq \gamma' t^*\}} r^\alpha J^N_\mu\left(\partial_{t^*}^j\Phi\right)n^\mu_{\Sigma_\tau}+\tau^{\alpha-\beta-2}\int_{\Sigma_{\tau}} r^\beta J^N_\mu\left(\Phi\right)n^\mu_{\Sigma_\tau} +\sum_{j=0}^{m-2}\int_{\Sigma_{\tau}\cap\{r\leq \gamma' t^*\}} r^\alpha\left(D^jG\right)^2\right), \\
\end{split}
\end{equation*}
by Hardy inequality in Proposition \ref{hi}.
\end{proof}
\begin{remark}
The boundedness of local energy should be seen as a decay result because for example for the homogeneous equation, the right hand side of the inequality decays.
\end{remark}

Near the event horizon, higher order derivatives can be controlled by commuting with the red-shift vector field as in \cite{DRK}, \cite{DRL}. The computation here will be completely local, i.e., only in the region $\{r\leq r^-_Y\}$.

We have the following estimate for higher order derivatives:
\begin{proposition}\label{elliptichorizon}
Suppose $\Box_{g_K}\Phi=G$. For every $m\geq 1$,
\begin{equation*}
\begin{split}
\int_{\Sigma_{\tau}\cap\{r\leq r^-_Y\}} \left(D^m\Phi\right)^2 \leq C\left(\sum_{j+k\leq m-1}\int_{\Sigma_{\tau}\cap\{r\leq r^-_Y\}} J^N_\mu\left(\partial_{t^*}^j\hat{Y}^k\Phi\right)n^\mu_{\Sigma_\tau}+\sum_{j=0}^{m-2}\int_{\Sigma_{\tau}\cap\{r\leq r^-_Y\}}\left(D^jG\right)^2\right).
\end{split}
\end{equation*}
\end{proposition}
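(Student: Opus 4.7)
The argument will proceed by induction on $m$. The base case $m=1$ follows immediately from the pointwise equivalence $J^N_\mu(\Phi)n^\mu_{\Sigma_\tau} \gtrsim (D\Phi)^2$ for $r\leq r^-_Y$ established in Section \ref{sectionN}.

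For the inductive step, assume the proposition holds for all orders up to $m-1$. Since $\partial_{t^*}$ is Killing, applying the inductive hypothesis to $\partial_{t^*}\Phi$ (which satisfies $\Box_{g_K}\partial_{t^*}\Phi = \partial_{t^*}G$) controls $\int_{r\leq r^-_Y}(D^{m-1}\partial_{t^*}\Phi)^2$ by the desired right-hand side. Applying it to $\hat{Y}\Phi$, which by Section \ref{commutators}.4 satisfies $\Box_{g_K}(\hat{Y}\Phi) = \hat{Y}G + \kappa\hat{Y}^2\Phi + O(|D\partial_{t^*}\Phi| + \epsilon|D^2\Phi| + |D\Phi|)$, controls $\int_{r\leq r^-_Y}(D^{m-1}\hat{Y}\Phi)^2$ modulo a small multiple $\epsilon\int(D^m\Phi)^2$ to be absorbed on the left. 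These two steps together control every $m$-th derivative whose outermost slot is $\partial_{t^*}$ or $\hat{Y}$.

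To treat the remaining derivatives, whose outermost slot is $\hat{V}$ or $E_A$, I would use two frame-level identities. First, the expansion from Section \ref{geometry}.3,
\begin{equation*}
\hat{V} = 2\partial_{t^*} - (1-\mu)\hat{Y} + O_1(\epsilon)\partial,
\end{equation*}
trades a leading $\hat{V}$ for $\partial_{t^*}$ and $\hat{Y}$ modulo a factor $(1-\mu)$ that is small near the horizon and an $O(\epsilon)$ coordinate-derivative error. Second, expanding $\Box_{g_K}$ in the null frame gives
\begin{equation*}
\sum_{A} D_{E_A}^2 \Phi = G + \tfrac{1}{2}\left(D_{\hat{V}}D_{\hat{Y}}\Phi + D_{\hat{Y}}D_{\hat{V}}\Phi\right) + (\text{first-order terms}),
\end{equation*}
so the sphere sum $\sum_A E_A^2\Phi$ is controlled by $G$ and by mixed second derivatives already handled. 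Since $\sum_A E_A^2$ differs from the sphere Laplace--Beltrami operator $\lapp_{g_K}$ only by a bounded first-order operator, standard elliptic estimates on each constant-$(t^*,r)$ sphere yield
\begin{equation*}
\int_{\mathbb{S}^2}\left(|E_A\Phi|^2 + |E_A E_B\Phi|^2\right) \lesssim \int_{\mathbb{S}^2}\bigl|\textstyle\sum_A E_A^2\Phi\bigr|^2 + \int_{\mathbb{S}^2}|\Phi|^2,
\end{equation*}
which integrated in $r$ gives the required control. For higher-order mixed terms I would apply the same identity not to $\Phi$ but to $\partial_{t^*}^j\hat{Y}^k\Phi$ with $j+k \leq m-2$, commuting the angular $E_A$ through the (Killing) $\partial_{t^*}$ freely and through $\hat{Y}$ at the cost of bounded commutators that are lower order and hence absorbed by induction.

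The main obstacle will be the careful bookkeeping of commutator errors of three types: (i) the Lie brackets $[\hat{V},\hat{Y}]$, $[\hat{Y},E_A]$, $[\hat{V},E_A]$, which are nonzero but produce only first-order frame terms, controlled by the $m-1$ hypothesis; (ii) the $O_1(\epsilon)\partial$ remainder in the expansion of $\hat{V}$, which produces a full $m$-th order contribution weighted by $\epsilon$ that must be absorbed into the left-hand side by choosing $\epsilon$ small; (iii) the $\kappa\hat{Y}^2\Phi$ piece of $[\Box_{g_K},\hat{Y}]\Phi$, which when iterated through the $\hat{Y}$-commutation forces the right-hand side to sum over $j+k\leq m-1$ with $k$ ranging all the way up to $m-1$, rather than a simpler expression.
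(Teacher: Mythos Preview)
Your approach is essentially the paper's: induct on $m$, use the induction hypothesis on $\partial_{t^*}\Phi$ and on $\hat{Y}\Phi$, and recover the angular derivatives by writing $\Box_{g_K}$ in the null frame so that $\lapp$ is expressed through $G$, $\nabla_{\hat{Y}}\nabla_{\hat{V}}$ and first-order terms, then invoking the spherical elliptic estimate. The separate treatment of $\hat{V}$ is unnecessary (it is a bounded combination of coordinate fields), and no $\epsilon$-absorption is actually required in the $\hat{Y}$-step since the commutator contributes only at order $\leq m-1$.

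There is one genuine gap in your articulation. You propose applying the wave-equation identity only to $\partial_{t^*}^j\hat{Y}^k\Phi$ with $j+k\leq m-2$; this yields control of $E_AE_B\partial_{t^*}^j\hat{Y}^k\Phi$, but never of $\nabb^p\Phi$ with $p\geq 3$ (e.g.\ $E_AE_BE_C\Phi$ when $m=3$). The paper closes this by applying the identity to an \emph{arbitrary} $(m-2)$-th coordinate derivative $D^{m-2}\Phi$: one gets
\[
\lapp D^{m-2}\Phi = \Box_{g_K}(D^{m-2}\Phi) + 4\nabla_{\hat{Y}}\nabla_{\hat{V}}D^{m-2}\Phi - P_1 D^{m-2}\Phi,
\]
and the dangerous $\nabla_{\hat{Y}}\nabla_{\hat{V}}D^{m-2}\Phi$ term, after commuting $\hat{Y}$ to the inside, lands in $D^{m-1}\hat{Y}\Phi$ plus lower order---already controlled by your $\hat{Y}$-step. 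With this single modification (apply the null-frame identity to $D^{m-2}\Phi$ rather than only to $\partial_{t^*}^j\hat{Y}^k\Phi$), your proof goes through and coincides with the paper's.
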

\begin{proof}
This is obvious for $m=1$.  We will proceed by induction. Suppose, for some $m\geq 2$ that
\begin{equation}\label{elliptichorizonassumption}
\begin{split}
\sum_{j=0}^{m-1}\int_{\Sigma_{\tau}\cap\{r\leq r^-_Y\}} \left(D^j\Phi\right)^2 \leq C\left(\sum_{j+k\leq m-2}\int_{\Sigma_{\tau}\cap\{r\leq r^-_Y\}} J^N_\mu\left(\partial_{t^*}^j\hat{Y}^k\Phi\right)n^\mu_{\Sigma_\tau}+\sum_{j=0}^{m-3}\int_{\Sigma_{\tau}\cap\{r\leq r^-_Y\}}\left(D^jG\right)^2\right).
\end{split}
\end{equation}
Since $\Box_{g_K}\left(\partial_{t^*}\Phi\right)=\partial_{t^*}G$, this immediately implies
\begin{equation}\label{elliptichorizonT}
\begin{split}
\int_{\Sigma_{\tau}\cap\{r\leq r^-_Y\}} \left(\partial_{t^*}D^{m-1}\Phi\right)^2 \leq C\left(\sum_{j+k\leq m-1}\int_{\Sigma_{\tau}\cap\{r\leq r^-_Y\}} J^N_\mu\left(\partial_{t^*}^j\hat{Y}^k\Phi\right)n^\mu_{\Sigma_\tau}+\sum_{j=0}^{m-2}\int_{\Sigma_{\tau}\cap\{r\leq r^-_Y\}}\left(D^jG\right)^2\right).
\end{split}
\end{equation}
Since $\Box_{g_K}\Phi=G$, we have $\Box_{g_K}\left(\hat{Y}\Phi\right)=\hat{Y}G+O(1)\left(D^2\Phi+D\Phi\right)$.
Then using the induction hypothesis (\ref{elliptichorizonassumption}) (both on $\hat{Y}\Phi$ and $\Phi$), we have
\begin{equation}\label{ellipticY}
\begin{split}
&\sum_{j=0}^{m-1}\int_{\Sigma_{\tau}\cap\{r\leq r^-_Y\}} \left(D^j\hat{Y}\Phi\right)^2\\
\leq& C\left(\sum_{j+k\leq m-2}\int_{\Sigma_{\tau}\cap\{r\leq r^-_Y\}} J^N_\mu\left(\partial_{t^*}^j\hat{Y}^{k+1}\Phi\right)n^\mu_{\Sigma_\tau}+\sum_{j=0}^{m-3}\int_{\Sigma_{\tau}}\left(D^j\hat{Y}G\right)^2+\sum_{j=0}^{m-1}\int_{\Sigma_{\tau}\cap\{r\leq r^-_Y\}}\left(D^j\Phi\right)^2\right)\\
\leq&C\left(\sum_{j+k\leq m-1}\int_{\Sigma_{\tau}\cap\{r\leq r^-_Y\}} J^N_\mu\left(\partial_{t^*}^j\hat{Y}^{k}\Phi\right)n^\mu_{\Sigma_\tau}+\sum_{j=0}^{m-2}\int_{\Sigma_{\tau}\cap\{r\leq r^-_Y\}}\left(D^jG\right)^2\right).
\end{split}
\end{equation}
Using the null frame $\{\hat{V},\hat{Y}, E_1,E_2\}$,
\begin{equation*}
\begin{split}
\Box_{g_K}\left(D^{m-2}\Phi\right)=&-4\nabla_{\hat{Y}}\nabla_{\hat{V}}D^{m-2}\Phi+\lapp D^{m-2}\Phi+P_1D^{m-2}\Phi,\\
\end{split}
\end{equation*}
where $P_1$ denotes a first order differential operator.
Notice that we also have
$$|\Box_{g_K}\left(D^{m-2}\Phi\right)|=|[\Box_{g_K},D^{m-2}]\Phi+D^{m-2}G|\leq C\left(\sum_{j=0}^{m-1}|D^{j}\Phi|+|D^{m-2}G|\right).$$
Now using a standard $L^2$ elliptic estimate on the sphere,
\begin{equation*}
\begin{split}
\int_{\mathbb S^2}|\nabb^2D^{m-2}\Phi|^2 dA \leq C \int_{\mathbb S^2}\left(\left(D^{m-2}G\right)^2+\sum_{j=0}^{m-1}\left(D^{j}\Phi\right)^2+\left(D^{m-1}\nabla_{\hat{Y}}\Phi\right)^2\right)dA,
\end{split}
\end{equation*}
where we notice that the constant can be chosen uniformly because the metric on the sphere is everywhere close to that of the standard metric.
Therefore, after integrate over $\{r_+\leq r\leq r^-_Y\}$ and applying (\ref{elliptichorizonassumption}) and (\ref{ellipticY}), we have
\begin{equation}\label{elliptichorizonangular}
\begin{split}
&\int_{\Sigma_{\tau}\cap\{r\leq r^-_Y\}}|\nabb^2D^{m-2}\Phi|^2\\
\leq &C \int_{\Sigma_{\tau}\cap\{r\leq r^-_Y\}}\left(\left(D^{m-2}G\right)^2+\sum_{j=0}^{m-1}\left(D^{j}\Phi\right)^2+\left(D^{m-1}\nabla_{\hat{Y}}\Phi\right)^2\right)\\
\leq &C\left(\sum_{j+k\leq m-1}\int_{\Sigma_{\tau}\cap\{r\leq r^-_Y\}} J^N_\mu\left(\partial_{t^*}^j\hat{Y}^k\Phi\right)n^\mu_{\Sigma_\tau}+\sum_{j=0}^{m-2}\int_{\Sigma_{\tau}\cap\{r\leq r^-_Y\}}\left(D^jG\right)^2\right).
\end{split}
\end{equation}
Combining (\ref{elliptichorizonT}), (\ref{ellipticY}) and (\ref{elliptichorizonangular}), we have
\begin{equation*}
\begin{split}
\int_{\Sigma_{\tau}\cap\{r\leq r^-_Y\}} \left(D^{m}\Phi\right)^2 \leq C\left(\sum_{j+k\leq m}\int_{\Sigma_{\tau}\cap\{r\leq r^-_Y\}} J^N_\mu\left(\partial_{t^*}^j\hat{Y}^k\Phi\right)n^\mu_{\Sigma_\tau}+\sum_{j=0}^{m-2}\int_{\Sigma_{\tau}\cap\{r\leq r^-_Y\}}\left(D^jG\right)^2\right).
\end{split}
\end{equation*}
\end{proof}
We show that the currents associated to $\hat{Y}^k\Phi$ can actually be controlled. Again, in view of the nonlinear problem, we work in the setting of an inhomogeneous equation.

\begin{proposition}\label{commYcontrol}
Suppose $\Box_{g_K}\Phi=G$. For every $k\geq 0$,
\begin{equation*}
\begin{split}
&\int_{\Sigma_\tau\cap\{r\leq r^+_Y\}} J^{N}_\mu\left(\hat{Y}^{k}\Phi\right)n^\mu_{\Sigma_\tau} +\int_{\mathcal H(\tau',\tau)} J^{N}_\mu\left(\hat{Y}^{k}\Phi\right)n^\mu_{\Sigma_\tau} +\iint_{\mathcal R(\tau',\tau)\cap\{r\leq r^-_Y\}}K^{N}\left(\hat{Y}^{k}\Phi\right)\\
\leq &C\left(\sum_{j+m\leq k}\int_{\Sigma_{\tau'}\cap\{r\leq r^+_Y\}} J^N_\mu\left(\partial_{t^*}^j\hat{Y}^m\Phi\right)n^\mu_{\Sigma_{\tau'}}+\sum_{j=0}^k\int_{\Sigma_{\tau}\cap\{r\leq r^+_Y\}} J^N_\mu\left(\partial_{t^*}^j\Phi\right)n^\mu_{\Sigma_{\tau}}\right.\\
&\left.+\sum_{j=0}^k\iint_{\mathcal R(\tau',\tau)\cap\{ r\leq \frac{23M}{8}\}} \left(\Phi^2+J^N_\mu\left(\partial_{t^*}^j\Phi\right)n^\mu_{\Sigma_{t^*}}\right)+\sum_{j=0}^{k}\iint_{\mathcal R(\tau',\tau)\cap\{r\leq\frac{23M}{8}\}}\left(D^jG\right)^2\right).
\end{split}
\end{equation*}
\end{proposition}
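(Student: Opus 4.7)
The proof proceeds by induction on $k$. The base case $k=0$ follows by applying the energy identity of Proposition \ref{Nid} with a spatial cutoff supported in $\{r \leq r^+_Y + \delta_0\}$ that equals $1$ on $\{r \leq r^+_Y\}$; the commutator of $\Box_{g_K}$ with the cutoff produces only first-order errors supported in the transition region $\{r^+_Y \leq r \leq r^+_Y + \delta_0\} \subset \{r \leq \tfrac{23M}{8}\}$, which are absorbed by the $\iint_{\mathcal R\cap\{r\leq 23M/8\}}(\Phi^2 + J^N_\mu(\Phi)n^\mu)$ term on the right. The inhomogeneous $N\Phi \cdot G$ contribution is then handled by Cauchy–Schwarz to produce $\iint G^2$.

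For the inductive step, assume the estimate for all values $0,\ldots,k-1$. Iterating the commutator estimate for $[\Box_{g_K},\hat{Y}]$ from Section \ref{commutators}, in the region $r\leq r^-_Y$ one has the structural decomposition
$$\Box_{g_K}(\hat{Y}^k\Phi) = \hat{Y}^kG + k\kappa\,\hat{Y}^{k+1}\Phi + R_k,$$
where $R_k$ is pointwise bounded by a linear combination of $|D\partial_{t^*}\hat{Y}^{j}\Phi|$, $\epsilon|D^2\hat{Y}^j\Phi|$, and $|D\hat{Y}^j\Phi|$ for $j\leq k-1$; in the transition region $r^-_Y \leq r \leq r^+_Y$, $[\Box_{g_K},\hat{Y}^k]\Phi$ is just a bounded second-order expression in coordinate derivatives of $\hat{Y}^j\Phi$ with $j\leq k-1$. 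Since $\hat{Y}$ is compactly supported in $\{r\leq r^+_Y\}$, so is $\hat{Y}^k\Phi$, so the energy identity of Proposition \ref{Nid} applied to $\hat{Y}^k\Phi$ with multiplier $N$ is automatically localized to $\{r\leq r^+_Y\}$, with an extra bulk error $e\iint_{\{r^-_Y\leq r\leq r^+_Y\}}K^Y(\hat{Y}^k\Phi)$ controllable by $J^N(\hat{Y}^k\Phi)$ via Gronwall.

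The crucial term is $-k\kappa\iint_{\mathcal R\cap\{r\leq r^-_Y\}}N(\hat{Y}^k\Phi)\cdot\hat{Y}^{k+1}\Phi$, which by Cauchy–Schwarz is bounded by
$$\delta\iint_{\mathcal R\cap\{r\leq r^-_Y\}}(\hat{Y}^{k+1}\Phi)^2 + C\delta^{-1}\iint_{\mathcal R\cap\{r\leq r^-_Y\}}(N(\hat{Y}^k\Phi))^2.$$
The first integrand is absorbed into $K^N(\hat{Y}^k\Phi)$ on the left, using the red-shift positivity $K^N(\Psi)\geq c_p e(D_{\hat{Y}}\Psi)^2$ applied to $\Psi=\hat{Y}^k\Phi$ (with $\delta \ll e$). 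The second is bounded by the energy $\int_{\Sigma_t}J^N(\hat{Y}^k\Phi)n^\mu$ and handled by Gronwall. The remaining contributions from $R_k$ and from the transition-region commutator are controlled by the inductive hypothesis (applied to $\hat{Y}^j\Phi$ with $j\leq k-1$) combined with the elliptic estimate of Proposition \ref{elliptichorizon} to trade spatial derivatives $|D^m\hat{Y}^j\Phi|$ for $J^N(\partial_{t^*}^a\hat{Y}^b\Phi)$ with $a+b\leq j+m-1\leq k-1$; the $\partial_{t^*}\hat{Y}^{k-1}\Phi$ term in $R_k$, being timelike, is simply absorbed into the energy after using that $\partial_{t^*}$ is Killing. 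The source $\hat{Y}^kG$ contributes $\sum_{j\leq k}\iint_{r\leq 23M/8}(D^jG)^2$ after Cauchy–Schwarz.

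The main obstacle is bookkeeping the hierarchy of commutator errors: each application of $\hat{Y}$ produces not only the good red-shift term $k\kappa\hat{Y}^{k+1}\Phi$ but also a zoo of lower-order terms involving $\partial_{t^*}$ and $\epsilon$-small second-order pieces. The structural point is that every such lower-order term either has one fewer $\hat{Y}$ (so is handled by the inductive hypothesis via elliptic estimates), or is a $\partial_{t^*}$-commuted version of the same object (trivially bounded by the Killing property of $\partial_{t^*}$), or is $\epsilon$-small and absorbable. The positivity $\kappa>0$ on the horizon, inherited from the choice of $r^-_Y$ close to $r_+$ in \eqref{definekappa}, is what makes the entire hierarchy closable without loss.
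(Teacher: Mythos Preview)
Your handling of the crucial term $-k\kappa\iint N(\hat{Y}^k\Phi)\cdot\hat{Y}^{k+1}\Phi$ via Cauchy--Schwarz and Gronwall does not close. To absorb $\delta\iint(\hat{Y}^{k+1}\Phi)^2$ into $K^N(\hat{Y}^k\Phi)\geq ce\,(D_{\hat Y}\hat{Y}^k\Phi)^2$ you must take $\delta\ll e$; but then $C\delta^{-1}\iint(N\hat{Y}^k\Phi)^2$ carries a coefficient $\delta^{-1}\gg e^{-1}$, and since $(N\hat{Y}^k\Phi)^2$ itself contains a piece $e^2(\hat{Y}^{k+1}\Phi)^2$, you generate a term of size $\delta^{-1}e^2\gg e$ times $(\hat{Y}^{k+1}\Phi)^2$ that cannot be reabsorbed. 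Bounding instead by $J^N(\hat{Y}^k\Phi)$ and invoking Gronwall yields a constant $\exp\bigl(C\delta^{-1}(\tau-\tau')\bigr)$, not uniform in $\tau-\tau'$; since the proposition is applied in Proposition~\ref{Yhomo} on intervals whose length $K$ is chosen in terms of the constant $C$ obtained here, exponential dependence on $\tau-\tau'$ destroys that argument.

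The paper's proof (following Dafermos--Rodnianski) avoids this by \emph{not} taking absolute values. One splits $N=\partial_{t^*}+eY$ and observes that $-\kappa^{k_0} e\,Y(\hat{Y}^{k_0}\Phi)\cdot\hat{Y}^{k_0+1}\Phi\approx -e\kappa^{k_0}(\hat{Y}^{k_0+1}\Phi)^2$ already has a favorable sign, so it moves to the left as an \emph{additional} coercive term $+\iint(\hat{Y}^{k_0+1}\Phi)^2$. The remaining cross term $-\kappa^{k_0}\partial_{t^*}(\hat{Y}^{k_0}\Phi)\cdot\hat{Y}^{k_0+1}\Phi$ is then Cauchy--Schwarz'd against this new term: half is absorbed, and the other half $(\partial_{t^*}\hat{Y}^{k_0}\Phi)^2\sim J^N(\hat{Y}^{k_0-1}\partial_{t^*}\Phi)$ carries one fewer $\hat{Y}$ and falls to the induction hypothesis applied to $\partial_{t^*}\Phi$. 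The same remark applies to your treatment of $e\iint_{\{r^-_Y\leq r\leq r^+_Y\}}K^Y(\hat{Y}^k\Phi)$: rather than Gronwall, one notes that in this region $\hat{Y}^k\Phi$ is a bounded combination of $D^j\Phi$ with $j\leq k$ and invokes Proposition~\ref{elliptic} (with a cutoff) to land directly on the term $\sum_j\iint_{\{r\leq 23M/8\}}J^N(\partial_{t^*}^j\Phi)$ already present on the right.
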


\begin{proof}
We prove the Proposition by induction on $k$. The $k=0$ case is trivial because the right hand side simply contains more terms than the left hand side. We suppose the Proposition is true for $k\leq k_0-1$ for some $k_0\geq 1$. 
Commuting $\Box_{g_K}$ with $\hat{Y}$ for $k_0$ times, we get
\begin{equation*}
\begin{split}
\Box_{g_K}\hat{Y}^{k_0}\Phi=\kappa^{k_0}\hat{Y}^{k_0+1}\Phi+O(1)\hat{Y}^{k_0}\partial_{t^*}\Phi+O(\epsilon)D^{k_0+1}\Phi+O(1)\sum_{j=1}^{k_0}D^{j}\Phi+\hat{Y}^{k_0}G.
\end{split}
\end{equation*}
We now use the energy identity for the vector field $N$, i.e., Proposition \ref{Nid} for $\hat{Y}^{k}\Phi$. Notice that $\hat{Y}$ is supported in $\{r\leq r^+_Y\}$ and therefore each term is supported in the same set.
\begin{equation*}
\begin{split}
&\int_{\Sigma_\tau\cap\{r\leq r^+_Y\}} J^{N}_\mu\left(\hat{Y}^{k_0}\Phi\right)n^\mu_{\Sigma_\tau} +\int_{\mathcal H(\tau_0,\tau)} J^{N}_\mu\left(\hat{Y}^{k_0}\Phi\right)n^\mu_{\Sigma_\tau} +\iint_{\mathcal R(\tau_0,\tau)\cap\{r\leq r^-_Y\}}K^{N}\left(\hat{Y}^{k_0}\Phi\right)\\
=&\int_{\Sigma_{\tau_0}} J^{N}_\mu\left(\hat{Y}^{k_0}\Phi\right)n^\mu_{\Sigma_{\tau_0}}+e\iint_{\mathcal R(\tau_0,\tau)\cap\{r^-_Y\leq r\leq r^+_Y\}} K^Y\left(\hat{Y}^{k_0}\Phi\right)\\
&+\iint_{\mathcal R(\tau_0,\tau)\cap\{r\leq r^+_Y\}} \left(\partial_{t^*}\hat{Y}^{k_0}\Phi+e\hat{Y}^{k_0+1}\Phi\right) \left(-\kappa^{k_0}\hat{Y}^{k_0+1}\Phi+O(1)\hat{Y}^{k_0}\partial_{t^*}\Phi\right.\\
&\left.\quad\quad\quad\quad\quad\quad\quad\quad\quad\quad\quad\quad\quad\quad\quad\quad\quad\quad+O(\epsilon)D^{k_0+1}\Phi+O(1)\sum_{j=1}^{k_0}D^{j}\Phi+\hat{Y}^kG\right).
\end{split}
\end{equation*}
The crucial observation in \cite{DRK} is that one of the inhomogeneous terms has a good sign and thus gives
\begin{equation*}
\begin{split}
&\int_{\Sigma_\tau} J^{N}_\mu\left(\hat{Y}^{k_0}\Phi\right)n^\mu_{\Sigma_\tau} +\int_{\mathcal H(\tau',\tau)} J^{N}_\mu\left(\hat{Y}^{k_0}\Phi\right)n^\mu_{\Sigma_\tau} +\iint_{\mathcal R(\tau',\tau)\cap\{r\leq r^-_Y\}}K^{N}\left(\hat{Y}^{k_0}\Phi\right)\\
&+\iint_{\mathcal R(\tau',\tau)}\left(\hat{Y}^{k_0+1}\Phi\right)^2\\
\leq&C\left(\int_{\Sigma_{\tau'}\cap\{r\leq r^+_Y\}} J^{N}_\mu\left(\hat{Y}^{k_0}\Phi\right)n^\mu_{\Sigma_{\tau_0}}+\iint_{\mathcal R(\tau',\tau)\cap\{r^-_Y\leq r\leq r^+_Y\}} K^N\left(\hat{Y}^{k_0}\Phi\right)+\epsilon\iint_{\mathcal R(\tau',\tau)\cap\{r\leq r^+_Y\}}\left(D^{k_0+1}\Phi\right)^2\right.\\
&\left.+\iint_{\mathcal R(\tau',\tau)\cap\{r\leq r^+_Y\}} J^N_\mu\left(\partial_{t^*}\hat{Y}^{k_0-1}\Phi\right)n^\mu_{\Sigma_{t^*}}+\sum_{j=1}^{k_0}\iint_{\mathcal R(\tau',\tau)\cap\{r\leq r^+_Y\}}\left(D^{j}\Phi\right)^2+\iint_{\mathcal R(\tau',\tau)\cap\{r\leq r^+_Y\}}(\hat{Y}^{k_0}G)^2\right)\\
\leq&C\left(\int_{\Sigma_{\tau'}\cap\{r\leq r^+_Y\}} J^{N}_\mu\left(\hat{Y}^{k_0}\Phi\right)n^\mu_{\Sigma_{\tau_0}}+\sum_{j=1}^{k_0+1}\iint_{\mathcal R(\tau',\tau)\cap\{r^-_Y\leq r\leq r^+_Y\}} \left(D^{j}\Phi\right)^2+\epsilon\iint_{\mathcal R(\tau',\tau)\cap\{r\leq r^-_Y\}}\left(D^{k_0+1}\Phi\right)^2\right.\\
&\left.+\iint_{\mathcal R(\tau',\tau)\cap\{r\leq r^-_Y\}} J^N_\mu\left(\partial_{t^*}\hat{Y}^{k_0-1}\Phi\right)n^\mu_{\Sigma_{t^*}}+\sum_{j=1}^{k_0}\iint_{\mathcal R(\tau',\tau)\cap\{r\leq r^-_Y\}}\left(D^{j}\Phi\right)^2+\iint_{\mathcal R(\tau',\tau)\cap\{r\leq r^+_Y\}}(\hat{Y}^{k_0}G)^2\right).
\end{split}
\end{equation*}
Using Proposition \ref{elliptic} with an appropriate cutoff,
\begin{equation*}
\begin{split}
&\sum_{j=1}^{k_0+1}\iint_{\mathcal R(\tau',\tau)\cap\{r^-_Y\leq r\leq r^+_Y\}} \left(D^{j}\Phi\right)^2\\
\leq&C\left(\sum_{j=0}^{k_0}\iint_{\mathcal R(\tau',\tau)\cap\{r\leq \frac{23M}{8}\}} \left(\Phi^2+J^N_\mu\left(\partial_{t^*}^j\Phi\right)n^\mu_{\Sigma_{t^*}}\right)+\sum_{j=0}^{k_0-1}\iint_{\mathcal R(\tau',\tau)\cap\{r\leq \frac{23M}{8}\}}\left(D^jG\right)^2\right).
\end{split}
\end{equation*}
Using Proposition \ref{elliptichorizon}, 
\begin{equation*}
\begin{split}
&\sum_{j=1}^{k_0}\iint_{\mathcal R(\tau',\tau)\cap\{r\leq r^-_Y\}}\left(D^{j}\Phi\right)^2\\
\leq&C\left(\sum_{j+m\leq k_0-1}\iint_{\mathcal R(\tau',\tau)\cap\{r\leq r^-_Y\}} J^N_\mu\left(\partial_{t^*}^j\hat{Y}^m\Phi\right)n^\mu_{\Sigma_{t^*}}+\sum_{j=0}^{k_0-2}\iint_{\mathcal R(\tau',\tau)\cap\{r\leq r^-_Y\}}\left(D^jG\right)^2\right)\\
\leq &C\left(\sum_{j+m\leq k_0-1}\int_{\Sigma_{\tau'}\cap\{r\leq r^+_Y\}} J^N_\mu\left(\partial_{t^*}^j\hat{Y}^m\Phi\right)n^\mu_{\Sigma_{\tau'}}+\sum_{j=0}^{ k-1}\int_{\Sigma_{\tau}\cap\{r\leq r^+_Y\}} J^N_\mu\left(\partial_{t^*}^j\Phi\right)n^\mu_{\Sigma_{\tau}}\right.\\
+&\left.\sum_{j+m\leq k_0-1}\iint_{\mathcal R(\tau',\tau)\cap\{r\leq \frac{23M}{8}\}} J^N_\mu\left(\partial_{t^*}^j\hat{Y}^m\Phi\right)n^\mu_{\Sigma_{t^*}}+\sum_{j=0}^{k_0-1}\iint_{\mathcal R(\tau',\tau)\cap\{r\leq\frac{23M}{8}\}}\left(D^jG\right)^2\right),
\end{split}
\end{equation*}
using the induction hypothesis (on $\partial_{t^*}^m\Phi$ instead of $\Phi$) at the last step.
Similarly, using Proposition \ref{elliptichorizon},
\begin{equation*}
\begin{split}
&\epsilon\iint_{\mathcal R(\tau',\tau)\cap\{r\leq r^-_Y\}}\left(D^{k_0+1}\Phi\right)^2\\
\leq&C\epsilon\left(\sum_{j+m\leq k_0}\iint_{\mathcal R(\tau',\tau)\cap\{r\leq r^-_Y\}} J^N_\mu\left(\partial_{t^*}^j\hat{Y}^m\Phi\right)n^\mu_{\Sigma_{t^*}} +\sum_{j=0}^{k-1}\iint_{\mathcal R(\tau',\tau)\cap\{r\leq r^-_Y\}}\left(D^jG\right)^2\right)\\
\leq&C\epsilon\left(\iint_{\mathcal R(\tau',\tau)\cap\{r\leq r^+_Y\}} J^N_\mu\left(\hat{Y}^{k_0}\Phi\right)n^\mu_{\Sigma_{t^*}}+\sum_{j=0}^{k_0}\int_{\Sigma_{\tau'}\cap\{r\leq r^+_Y\}} J^N_\mu\left(\partial_{t^*}^j\Phi\right)n^\mu_{\Sigma_{\tau'}}\right.\\
&\left.+\sum_{j=0}^{ k_0}\iint_{\mathcal R(\tau',\tau)\cap\{r\leq \frac{23M}{8}\}} J^N_\mu\left(\partial_{t^*}^j\Phi\right)n^\mu_{\Sigma_{t^*}}+\sum_{j=0}^{k_0}\iint_{\mathcal R(\tau',\tau)\cap\{r\leq\frac{23M}{8}\}}\left(D^jG\right)^2\right),
\end{split}
\end{equation*}
where again the induction hypotheses is used at the last step.
All these together give
\begin{equation*}
\begin{split}
&\int_{\Sigma_\tau} J^{N}_\mu\left(\hat{Y}^{k_0}\Phi\right)n^\mu_{\Sigma_\tau} +\int_{\mathcal H(\tau_0,\tau)} J^{N}_\mu\left(\hat{Y}^{k_0}\Phi\right)n^\mu_{\Sigma_\tau} +\iint_{\mathcal R(\tau_0,\tau)\cap\{r\leq r^-_Y\}}K^{N}\left(\hat{Y}^{k_0}\Phi\right)\\
\leq &C\left(\sum_{j+m\leq k_0}\int_{\Sigma_{\tau'}\cap\{r\leq r^+_Y\}} J^N_\mu\left(\partial_{t^*}^j\hat{Y}^m\Phi\right)n^\mu_{\Sigma_{\tau'}}+\sum_{j=0}^{k_0}\int_{\Sigma_{\tau'}\cap\{r\leq r^+_Y\}} J^N_\mu\left(\partial_{t^*}^j\Phi\right)n^\mu_{\Sigma_{\tau'}}\right.\\
&\left.+\sum_{j=0}^{k_0}\iint_{\mathcal R(\tau',\tau)\cap\{r\leq \frac{23M}{8}\}}\left(\Phi^2+ J^N_\mu\left(\partial_{t^*}^j\Phi\right)n^\mu_{\Sigma_{t^*}}\right)+\sum_{j=0}^{k_0}\int_{\mathcal R(\tau',\tau)\cap\{r\leq\frac{23M}{8}\}}\left(D^jG\right)^2\right.\\
&\left.+\epsilon\iint_{\mathcal R(\tau',\tau)\cap\{r\leq r^-_Y\}} J^N_\mu\left(\hat{Y}^{k_0}\Phi\right)n^\mu_{\Sigma_{t^*}}\right).
\end{split}
\end{equation*}
The Proposition can be proved by noticing that
$$\iint_{\mathcal R(\tau',\tau)\cap\{r\leq r^-_Y\}} J^N_\mu\left(\hat{Y}^{k_0}\Phi\right)n^\mu_{\Sigma_{t^*}}\leq C\iint_{\mathcal R(\tau_0,\tau)\cap\{r\leq r^-_Y\}}K^{N}\left(\hat{Y}^{k_0}\Phi\right).$$
and absorbing the small term to the left hand side.
\end{proof}

We now specialize to the case $\Box_{g_K}\Phi=0$. The above Proposition implies that the behavior of $\hat{Y}^k\Phi$ is determined by the behavior of $\partial_{t^*}^m\Phi$ in the region $\{r\leq \frac{23M}{8}\}$. 
\begin{proposition}\label{Yhomo}
Fix $k\geq0$. Suppose $\Box_{g_K}\Phi=0$ and suppose for some constants $\alpha, B>0$ (independents of $\tau$),
$$\sum_{j=0}^{k}\int_{\Sigma_{\tau}\cap\{r\leq \frac{23M}{8}\}} \left(\Phi^2+J^N_\mu\left(\partial_{t^*}^j\Phi\right)n^\mu_{\Sigma_{\tau}}\right)\leq CB\tau^{-\alpha}.$$
Then
\begin{equation*}
\begin{split}
\sum_{j+m\leq k}\int_{\Sigma_\tau} J^{N}_\mu\left(\partial_{t^*}^j\hat{Y}^{m}\Phi\right)n^\mu_{\Sigma_\tau}
\leq &C\tau^{-\alpha}\left(\sum_{j+m\leq k}\int_{\Sigma_{\tau_0}} J^N_\mu\left(\partial_{t^*}^j\hat{Y}^m\Phi\right)n^\mu_{\Sigma_{\tau_0}}+B\right),\\
\end{split}
\end{equation*}
and
\begin{equation*}
\begin{split}
\sum_{j+m\leq k}\iint_{\mathcal R(\tau',\tau)\cap\{r\leq r^-_Y\}}K^{N}\left(\partial_{t^*}^j\hat{Y}^{m}\Phi\right)
\leq &C(\tau')^{-\alpha}\left(\sum_{j+m\leq k}\int_{\Sigma_{\tau_0}} J^N_\mu\left(\partial_{t^*}^j\hat{Y}^m\Phi\right)n^\mu_{\Sigma_{\tau_0}}+B\right).\\
\end{split}
\end{equation*}

\end{proposition}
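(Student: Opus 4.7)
I would proceed by induction on $k$, with the base case $k=0$ reducing (on $\{r\leq r^+_Y\}$) directly to the hypothesis. For the inductive step, the key is to commute the equation with $\hat Y^m\partial_{t^*}^j$ and apply the energy-with-dissipation estimate of Proposition~\ref{commYcontrol}. Because $\partial_{t^*}$ is Killing, each $\partial_{t^*}^j\Phi$ still satisfies $\Box_{g_K}(\partial_{t^*}^j\Phi)=0$, so Proposition~\ref{commYcontrol} applies to each and summing over $j+m\leq k$ gives, schematically,
\begin{equation*}
F(\tau)+\!\!\sum_{j+m\leq k}\iint_{\mathcal R(\tau',\tau)\cap\{r\leq r^-_Y\}}\!\!\!\!K^N\bigl(\partial_{t^*}^j\hat Y^m\Phi\bigr)\leq CF(\tau')+C\!\!\sum_{j\leq k}\!\int_{\Sigma_\tau\cap\{r\leq r^+_Y\}}\!\!\!\!J^N_\mu(\partial_{t^*}^j\Phi)n^\mu+C\!\!\sum_{j\leq k}\!\iint_{\mathcal R(\tau',\tau)\cap\{r\leq \frac{23M}{8}\}}\!\!\!\!\bigl(\Phi^2+J^N_\mu(\partial_{t^*}^j\Phi)n^\mu\bigr),
\end{equation*}
where $F(\tau):=\sum_{j+m\leq k}\int_{\Sigma_\tau\cap\{r\leq r^+_Y\}}J^N_\mu(\partial_{t^*}^j\hat Y^m\Phi)n^\mu$. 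Under the decay hypothesis on $\sum_{j\leq k}\int_{\Sigma_{t^*}\cap\{r\leq 23M/8\}}(\Phi^2+J^N(\partial_{t^*}^j\Phi))$, the penultimate term is $\leq CB\tau^{-\alpha}$ and the last is $\leq CB\int_{\tau'}^\tau (t^*)^{-\alpha}dt^*$.

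\textbf{Extracting pointwise decay.} To upgrade this to $F(\tau)\leq C\tau^{-\alpha}(F(\tau_0)+B)$, I would use the dissipation on the left together with elliptic estimates. Since $K^N\gtrsim J^N$ pointwise on $\{r\leq r^-_Y\}$ (by the positivity estimates for $K^{N_e}$ in Section~\ref{sectionN}), and since Proposition~\ref{elliptic} bounds $\int_{\Sigma_{t^*}\cap\{r^-_Y\leq r\leq r^+_Y\}} J^N(\hat Y^m\Phi)$ in terms of $\sum_{j\leq m}J^N(\partial_{t^*}^j\Phi)$-type quantities on a slightly larger annulus—which are $O(B(t^*)^{-\alpha})$ by hypothesis—the dissipation integral effectively controls $\int_{\tau'}^\tau F(t^*)\,dt^*$ up to an additional $CB\int_{\tau'}^\tau(t^*)^{-\alpha}dt^*$. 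This yields a Gronwall-type inequality
\begin{equation*}
F(\tau)+c\int_{\tau'}^\tau F(t^*)\,dt^*\leq CF(\tau')+CB(\tau-\tau'+1)(\tau')^{-\alpha}.
\end{equation*}
Applying this on a dyadic interval of the form $[\tau_n,\tau_{n+1}]$ with $\tau_{n+1}=(1+c)\tau_n$, the pigeonhole principle produces a $\tilde\tau_n\in[\tau_n,\tau_{n+1}]$ at which $F(\tilde\tau_n)\leq (c\tau_n)^{-1}\int_{\tau_n}^{\tau_{n+1}}F$; re-applying the estimate on $[\tilde\tau_n,\tau_{n+1}]$ (which has length $\lesssim c\tau_n$) then closes a geometric decay relation across dyadic scales, iterating to give $F(\tau)\leq C\tau^{-\alpha}(F(\tau_0)+B)$. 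The second (bulk) estimate follows immediately by substituting the now-established pointwise decay for $F$ back into Proposition~\ref{commYcontrol} on $[\tau',\tau]$.

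\textbf{Main obstacle.} The chief difficulty is that the integrated-in-time source $B\int_{\tau'}^\tau(t^*)^{-\alpha}dt^*$ coming from the commutator terms in Proposition~\ref{commYcontrol} is of order $B(\tau-\tau')(\tau')^{-\alpha}$, which is much larger than the target $\tau^{-\alpha}$ rate over long intervals. This forces the use of both the dissipation $\iint K^N$ on the left and the dyadic pigeonhole above to recover the sharp rate; moreover, since the dissipation is available only on $\{r\leq r^-_Y\}$ while $F$ is defined on the larger set $\{r\leq r^+_Y\}$, the elliptic estimates of Proposition~\ref{elliptic}, which trade $\hat Y$ and angular derivatives in the annulus $\{r^-_Y\leq r\leq r^+_Y\}$ for $\partial_{t^*}^j\Phi$ (already controlled by hypothesis), are essential to bridge this gap.
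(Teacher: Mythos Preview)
Your derivation of the dissipative inequality
\[
F(\tau)+c\int_{\tau'}^{\tau} F(t^*)\,dt^*\leq CF(\tau')+CB\!\int_{\tau'}^{\tau}(t^*)^{-\alpha}\,dt^*+CB\tau^{-\alpha}
\]
from Proposition~\ref{commYcontrol}, the positivity of $K^N$ on $\{r\leq r^-_Y\}$, and the elliptic estimate on the annulus $\{r^-_Y\leq r\leq r^+_Y\}$ is correct and is exactly the engine the paper uses. The gap is in how you extract the pointwise rate from it: the dyadic pigeonhole loses a full power of $\tau$. On a dyadic interval $[\tau_n,(1+c)\tau_n]$ the source is
\[
CB\!\int_{\tau_n}^{(1+c)\tau_n}(t^*)^{-\alpha}\,dt^*\ \sim\ B\tau_n^{1-\alpha},
\]
so after pigeonholing to find $\tilde\tau_n$ and re-applying on $[\tilde\tau_n,\tau_{n+1}]$ (still of length $\sim c\tau_n$), you obtain only $F(\tau_{n+1})\leq CF(\tilde\tau_n)+CB\tau_n^{1-\alpha}$. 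The second term is larger than the target $\tau_n^{-\alpha}$ by a factor of $\tau_n$, and iterating across dyadic scales yields at best $F(\tau)\lesssim B\tau^{1-\alpha}$, not $B\tau^{-\alpha}$.

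The paper avoids this by working on intervals of \emph{fixed} length $K$ together with a bootstrap. One assumes $F(\tau)\leq A\tau^{-\alpha}(F(\tau_0)+B)$ and applies Proposition~\ref{commYcontrol} on $[\tau-K,\tau]$; the source is now $\leq CKB\tau^{-\alpha}$ (length $K$, not $c\tau$), and the bootstrap bounds $F(\tau-K)\leq CA\tau^{-\alpha}(F(\tau_0)+B)$, so the dissipation integral is $\leq C\tau^{-\alpha}\bigl(AF(\tau_0)+AB+KB\bigr)$. Pigeonhole then selects $\tilde\tau\in[\tau-K,\tau]$ with
\[
\int_{\Sigma_{\tilde\tau}\cap\{r\leq r^-_Y\}}J^N_\mu(\hat Y^k\Phi)n^\mu\leq CK^{-1}\tau^{-\alpha}\bigl(AF(\tau_0)+AB+KB\bigr);
\]
re-applying on $[\tilde\tau,\tau]$ (length $\leq K$) picks up at most $CKB\tau^{-\alpha}$, and choosing first $K$ large so that $CAK^{-1}\leq A/4$, then $A$ large, closes the bootstrap with the sharp rate. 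The key structural point you missed is that the interval over which one ultimately re-applies the energy estimate must have length $O(1)$, not $O(\tau)$, so that the accumulated source from the hypothesis remains of order $\tau^{-\alpha}$. (Incidentally, the induction on $k$ is not needed: the bootstrap handles the full sum $\sum_{j+m\leq k}$ at once.) Your remark that the second, bulk estimate follows by substituting the pointwise decay back into Proposition~\ref{commYcontrol} is correct.
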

\begin{remark}
In the applications, we will apply this Proposition with $B$ being some energy quantity of the initial condition. 
\end{remark}
\begin{proof}
We will proof this with a bootstrap argument. Suppose for all $\tau$ that 
\begin{equation}\label{bootstrapcommY}
\begin{split}
&\sum_{j+m\leq k}\int_{\Sigma_\tau\cap\{r\leq r^+_Y\}} J^{N}_\mu\left(\partial_{t^*}^j\hat{Y}^{m}\Phi\right)n^\mu_{\Sigma_\tau} 
\leq A\tau^{-\alpha}\left(\sum_{j+m\leq k}\int_{\Sigma_{\tau_0}} J^N_\mu\left(\partial_{t^*}^j\hat{Y}^m\Phi\right)n^\mu_{\Sigma_{\tau_0}}+B\right).\\
\end{split}
\end{equation}
This obviously holds initially for any $A\geq 1$ (and in particular independent of $\Phi$). 
By taking $\tau'=\tau-K,$ for some (large and to be chosen) constant $K$ and $\tau \geq 2K$, Proposition \ref{commYcontrol} implies
\begin{equation*}
\begin{split}
&\sum_{j+m\leq k}\left(\int_{\Sigma_{\tau}\cap\{r\leq r^+_Y\}} J^{N}_\mu\left(\partial_{t^*}^j\hat{Y}^{m}\Phi\right)n^\mu_{\Sigma_{\tau}}  +\iint_{\mathcal R(\tau-K,\tau)\cap\{r\leq r^-_Y\}}J^{N}_\mu\left(\partial_{t^*}^j\hat{Y}^{m}\Phi\right)n^\mu_{\Sigma_{t^*}}\right)\\
\leq &C\left(\sum_{j+m\leq k}\int_{\Sigma_{\tau-K}\cap\{r\leq r^+_Y\}} J^N_\mu\left(\partial_{t^*}^j\hat{Y}^m\Phi\right)n^\mu_{\Sigma_{\tau-K}}+\sum_{j=0}^k\int_{\Sigma_{\tau}\cap\{r\leq r^+_Y\}} J^N_\mu\left(\partial_{t^*}^j\Phi\right)n^\mu_{\Sigma_{\tau}}\right.\\
&\left.+\sum_{j=0}^k\iint_{\mathcal R(\tau-K,\tau)\cap\{ r\leq \frac{23M}{8}\}} \left(\Phi^2+J^N_\mu\left(\partial_{t^*}^j\Phi\right)n^\mu_{\Sigma_{t^*}}\right)\right)\\
\leq &C\left(\sum_{j+m\leq k}\int_{\Sigma_{\tau-K}\cap\{r\leq r^+_Y\}} J^N_\mu\left(\partial_{t^*}^j\hat{Y}^m\Phi\right)n^\mu_{\Sigma_{\tau-K}}+KB\tau^{-\alpha}\right)\\
&\quad\quad\mbox{using the assumption of the Proposition and using Proposition \ref{elliptic}}\\
\leq &C\left(A\left(\tau-K\right)^{-\alpha}\left(\sum_{j+m\leq k}\int_{\Sigma_{\tau_0}} J^N_\mu\left(\partial_{t^*}^j\hat{Y}^m\Phi\right)n^\mu_{\Sigma_{\tau_0}}+B\right)+KB\tau^{-\alpha}\right)\\
&\quad\quad\mbox{using the bootstrap assumption}\\
\leq &C\tau^{-\alpha}\left(\sum_{j+m\leq k}A\int_{\Sigma_{\tau_0}} J^N_\mu\left(\partial_{t^*}^j\hat{Y}^m\Phi\right)n^\mu_{\Sigma_{\tau_0}}+AB+KB\right).\\
\end{split}
\end{equation*}
Notice that $C$ is independent of $K$. By selecting a $t^*$ slice, we have that for some $\tilde{\tau}$,
\begin{equation*}
\begin{split}
&\int_{\Sigma_{\tilde{\tau}}\cap\{r\leq r^-_Y\}} J^{N}_\mu\left(\hat{Y}^{k}\Phi\right)n^\mu_{\Sigma_{\tilde{\tau}}}\\
\leq &CK^{-1}\tau^{-\alpha}\left(\sum_{j+m\leq k}A\int_{\Sigma_{\tau_0}} J^N_\mu\left(\partial_{t^*}^j\hat{Y}^m\Phi\right)n^\mu_{\Sigma_{\tau_0}}+AB+KB\right).\\
\end{split}
\end{equation*}
Now apply Proposition \ref{commYcontrol} on $[\tilde{\tau},\tau]$ to get
\begin{equation*}
\begin{split}
&\sum_{j+m\leq k}\int_{\Sigma_{\tau}\cap\{r\leq r^+_Y\}} J^{N}_\mu\left(\partial_{t^*}^j\hat{Y}^{m}\Phi\right)n^\mu_{\Sigma_{\tau}}\\
\leq &C\left(\sum_{j+m\leq k}\int_{\Sigma_{\tilde{\tau}}\cap\{r\leq r^+_Y\}} J^N_\mu\left(\partial_{t^*}^j\hat{Y}^m\Phi\right)n^\mu_{\Sigma_{\tilde{\tau}}}+\sum_{j=0}^k\int_{\Sigma_{\tau}\cap\{r\leq r^+_Y\}} J^N_\mu\left(\partial_{t^*}^j\Phi\right)n^\mu_{\Sigma_{\tau}}\right.\\
&\left.+\sum_{j=0}^k\iint_{\mathcal R(\tilde{\tau},\tau)\cap\{ r\leq \frac{23M}{8}\}} \left(\Phi^2+J^N_\mu\left(\partial_{t^*}^j\Phi\right)n^\mu_{\Sigma_{t^*}}\right)\right)\\
\leq &CK^{-1}\tau^{-\alpha}\left(\sum_{j+m\leq k}A\int_{\Sigma_{\tau_0}} J^N_\mu\left(\partial_{t^*}^j\hat{Y}^m\Phi\right)n^\mu_{\Sigma_{\tau_0}}+AB+BK\right)+CB(K+1)\tau^{-\alpha}\\
\leq &CAK^{-1}\tau^{-\alpha}\sum_{j+m\leq k}\int_{\Sigma_{\tau_0}} J^N_\mu\left(\partial_{t^*}^j\hat{Y}^m\Phi\right)n^\mu_{\Sigma_{\tau_0}}+\left(CAK^{-1}+CK+C\right)B\tau^{-\alpha}.\\
\end{split}
\end{equation*}
This would improve (\ref{bootstrapcommY}) if we choose $K=4C$ and $A$ sufficiently large.
Hence we have proved 
\begin{equation*}
\begin{split}
&\sum_{j+m\leq k}\int_{\Sigma_\tau\cap\{r\leq r^+_Y\}} J^{N}_\mu\left(\partial_{t^*}^j\hat{Y}^{m}\Phi\right)n^\mu_{\Sigma_\tau} 
\leq C\tau^{-\alpha}\left(\sum_{j+m\leq k}\int_{\Sigma_{\tau_0}} J^N_\mu\left(\partial_{t^*}^j\hat{Y}^m\Phi\right)n^\mu_{\Sigma_{\tau_0}}+B\right).\\
\end{split}
\end{equation*}
To prove the second statement of the Proposition, we simply use the first statement and Proposition \ref{commYcontrol}.
\end{proof}
We can use Corollary \ref{DRDecay} to show the decay of $\hat{Y}^k\Phi$.
\begin{corollary}\label{YDecay}
Suppose $\Box_{g_K}\Phi=0$. Then for $\tau'\leq\tau\leq (1.1)\tau'$,
\begin{equation*}
\begin{split}
&\sum_{j+m\leq k}\int_{\Sigma_\tau\cap\{r\leq r^+_Y\}} J^{N}_\mu\left(\partial_{t^*}^j\hat{Y}^{m}\Phi\right)n^\mu_{\Sigma_\tau}\\
\leq &C\tau^{-2+\eta}\left(\sum_{j+m\leq k}\int_{\Sigma_{\tau_0}} J^N_\mu\left(\partial_{t^*}^j\hat{Y}^m\Phi\right)n^\mu_{\Sigma_{\tau_0}}+\sum_{j=0}^{k+2}\int_{\Sigma_{\tau_0}} J^{Z+CN}_\mu\left(\partial_{t^*}^j\Phi\right)n^\mu_{\Sigma_{\tau_0}}\right),\\
\end{split}
\end{equation*}
and
\begin{equation*}
\begin{split}
&\sum_{j+m\leq k}\iint_{\mathcal R(\tau',\tau)\cap\{r\leq r^-_Y\}}K^{N}\left(\partial_{t^*}^j\hat{Y}^{m}\Phi\right)\\
\leq &C\tau^{-2+\eta}\left(\sum_{j+m\leq k}\int_{\Sigma_{\tau_0}} J^N_\mu\left(\partial_{t^*}^j\hat{Y}^m\Phi\right)n^\mu_{\Sigma_{\tau_0}}+\sum_{j=0}^{k+2}\int_{\Sigma_{\tau_0}} J^{Z+CN}_\mu\left(\partial_{t^*}^j\Phi\right)n^\mu_{\Sigma_{\tau_0}}\right).\\
\end{split}
\end{equation*}

\end{corollary}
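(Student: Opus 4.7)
The plan is to verify the decay hypothesis of Proposition \ref{Yhomo} with $\alpha = 2-\eta$ and then invoke that proposition directly. Since $\partial_{t^*}$ is Killing, each $\partial_{t^*}^j \Phi$ also solves $\Box_{g_K} \partial_{t^*}^j \Phi = 0$, so I first apply Corollary \ref{DRDecay} part 2 to $\partial_{t^*}^j\Phi$ for $j=0,1,\ldots,k$, which yields
\[
\int_{\Sigma_\tau\cap\{r\leq \gamma \tau\}} J^N_\mu\!\left(\partial_{t^*}^j\Phi\right)n^\mu_{\Sigma_\tau}
\leq C\tau^{-2+\eta}\sum_{m=0}^{k+2}\int_{\Sigma_{\tau_0}} J^{Z+CN,w^Z}_\mu\!\left(\partial_{t^*}^m\Phi\right)n^\mu_{\Sigma_{\tau_0}} =: CB\tau^{-2+\eta}.
\]
The two-derivative loss in each application of Corollary \ref{DRDecay} is precisely what produces the range $m\leq k+2$ on the right-hand side. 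For $\tau$ sufficiently large (with the trivial range absorbed into the constant), the fixed region $\{r\leq 23M/8\}$ sits inside $\{r\leq \gamma\tau\}$, so the $J^N_\mu(\partial_{t^*}^j\Phi)$ part of the Proposition \ref{Yhomo} hypothesis is satisfied.

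Next I would handle the zeroth-order $\Phi^2$ contribution in that hypothesis. Splitting $\{r\leq 23M/8\}$ at $r=r^-_Y$, the outer piece $\{r^-_Y\leq r\leq 23M/8\}$ is controlled by Proposition \ref{hi} (with $\alpha=2$, so $r^\alpha$ is bounded on this compact range) by $\int J^N_\mu(\Phi) n^\mu$, which already decays at rate $\tau^{-2+\eta}$ by the step above. For the inner piece $\{r\leq r^-_Y\}$ I would imitate the argument leading to (\ref{0thorder}) in Proposition \ref{Zlowerbound}: choose a radius $\tilde r\in [r^-_Y, r^-_Y+1]$ where the spherical $L^2$ norm of $\Phi$ is bounded by $\tau^{-2}$ times the conformal energy of $\Phi$, then integrate $\partial_r\Phi$ inward at a cost of $\int_{\Sigma_\tau\cap\{r\leq r^-_Y\}} J^N_\mu(\Phi) n^\mu$. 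Both contributions decay at rate $\tau^{-2+\eta}$ via Corollary \ref{DRDecay} (the conformal energy is controlled by the degenerate $Z$-energy of up to two $\partial_{t^*}$ derivatives, which is precisely the content of $B$).

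With the hypothesis of Proposition \ref{Yhomo} established with $\alpha = 2-\eta$ and the $B$ above, the first conclusion of the corollary follows immediately upon noting that $\int_{\Sigma_{\tau_0}} J^N_\mu(\partial_{t^*}^j\hat Y^m\Phi) n^\mu_{\Sigma_{\tau_0}}$ appearing on the right is already of the desired form. For the second conclusion, I apply the second statement of Proposition \ref{Yhomo} on the interval $[\tau',\tau]$, which produces the factor $(\tau')^{-2+\eta}$; since $\tau'\leq \tau\leq (1.1)\tau'$, this is equivalent to $\tau^{-2+\eta}$ up to absorbing the factor $1.1^{2-\eta}$ into $C$.

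The main obstacle is really just the bookkeeping: tracking the two-derivative loss in Corollary \ref{DRDecay} to confirm the $k+2$ on the right-hand side, and verifying that the $\Phi^2$ hypothesis of Proposition \ref{Yhomo} truly decays with the same rate as the $J^N_\mu$ terms (for which the near-horizon analysis reusing the (\ref{0thorder}) argument is the only nontrivial ingredient). No new vector-field identities or commutator computations are needed beyond what has already been developed.
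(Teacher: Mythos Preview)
Your approach is correct and matches the paper's: the corollary follows directly by feeding the decay from Corollary~\ref{DRDecay} into Proposition~\ref{Yhomo} with $\alpha=2-\eta$, and the paper's own argument is nothing more than the sentence ``We can use Corollary~\ref{DRDecay} to show the decay of $\hat{Y}^k\Phi$.'' One small imprecision: Proposition~\ref{hi} as stated controls $\Phi^2$ on $\{r\geq R\}$ by the energy on a slightly larger (unbounded) region, so it is not quite the localized tool you want for $\{r^-_Y\leq r\leq 23M/8\}$; the cleanest route for that piece is to invoke Proposition~\ref{Zlowerbound} directly, since on this compact $r$-range $(u^2+v^2)/r^2\sim\tau^2$ and the right-hand side is bounded by $C\tau^\eta$ times initial data via Corollary~\ref{DRDecay}.2.
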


\section{Estimates for $\tilde{\Omega}\Phi$}\label{sectioncommutatoromega}
In this section, we would like to prove estimates for $\tilde{\Omega}^\ell\Phi$. The estimates for $\tilde{\Omega}\Phi$ are useful to provide an extra factor of $r$ in the energy estimates.
\begin{proposition}\label{improveomega}
\begin{equation*}
\begin{split}
\int_{\Sigma_{\tau}\cap\{r\geq r^-_Y\}} r^2|\nabb^2\Phi|^2 \leq C\int_{\Sigma_{\tau}} J^N_\mu\left(\Phi,\partial_{t^*}\Phi,\tilde{\Omega}\Phi\right)n^\mu_{\Sigma_\tau}.
\end{split}
\end{equation*}
\begin{equation*}
\begin{split}
\iint_{\mathcal R(\tau',\tau)\cap\{r\geq r^-_Y\}} r^{1-\delta}|\nabb^2\Phi|^2 \leq C\iint_{\mathcal R(\tau',\tau)} r^{-1-\delta}J^N_\mu\left(\Phi,\partial_{t^*}\Phi,\tilde{\Omega}\Phi\right)n^\mu_{\Sigma_{t^*}}.
\end{split}
\end{equation*}
\begin{equation*}
\begin{split}
\iint_{\mathcal R(\tau',\tau)\cap\{r\geq r^-_Y\}} r^{1-\delta}|\nabb^2\Phi|^2 \leq C\iint_{\mathcal R(\tau',\tau)} K^{X_1}\left(\Phi,\partial_{t^*}\Phi\right)+K^{X_0}\left(\tilde{\Omega}\Phi\right).
\end{split}
\end{equation*}
\end{proposition}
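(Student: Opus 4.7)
The three inequalities in this proposition all ultimately rest on a single pointwise identity relating second angular derivatives to first derivatives of $\tilde{\Omega}\Phi$: away from the event horizon, on Schwarzschild (and approximately on Kerr) one has
\begin{equation*}
r^2|\nabb^2\Phi|^2 \leq C\sum_i|\nabb\tilde{\Omega}_i\Phi|^2 + C|\nabb\Phi|^2 + C\epsilon r^{-2}|\partial\Phi|^2.
\end{equation*}
To prove this, I would use the standard Killing identity $\sum_i \tilde{\Omega}_i^A\tilde{\Omega}_i^B = r^2 g_{S^2(r)}^{AB}$ to rewrite $r^2|\nabb^2\Phi|^2 \sim r^{-2}\sum_{i,j}(\tilde{\Omega}_i\tilde{\Omega}_j\Phi)^2$ modulo Christoffel terms bounded by $|\nabb\Phi|^2$, and then apply the fact that each $\tilde{\Omega}_i$ is tangent to the sphere with $|\tilde{\Omega}_i|^2 \leq Cr^2$, so $|\tilde{\Omega}_i\tilde{\Omega}_j\Phi| \leq Cr|\nabb\tilde{\Omega}_j\Phi|$. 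The $O(\epsilon)$ error comes from the fact that $\tilde{\Omega}_i$ is only approximately Killing for the induced Kerr sphere metric, via the bound (\ref{smallness}).

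Statement 1 then follows by integrating this pointwise bound over $\Sigma_\tau \cap \{r\geq r^-_Y\}$ and using that, in this region, $J^N_\mu(\psi)n^\mu_{\Sigma_\tau} \sim |\partial\psi|^2$ (from the $N_e$ proposition in Section \ref{sectionN}), which in particular controls $|\nabb\psi|^2$. The error $O(\epsilon r^{-2})|\partial\Phi|^2$ is absorbed into $J^N(\Phi) + J^N(\partial_{t^*}\Phi)$. Statement 2 follows in the same way after multiplying the pointwise bound by $r^{-1-\delta}$ and integrating over $\mathcal R(\tau',\tau)\cap\{r\geq r^-_Y\}$.

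Statement 3 is more delicate because $K^{X_0}(\tilde{\Omega}\Phi)$ contains the indicator $\mathbbm{1}_{\{|r-3M|\geq M/8\}}$ and therefore degenerates on the photon sphere. I would split the domain of integration into $\{|r-3M|\geq M/8\}$ and $\{|r-3M|<M/8\}$. On the former the argument of Statement 2 applies verbatim, with $r^{-1-\delta}J^N(\tilde{\Omega}\Phi) \leq K^{X_0}(\tilde{\Omega}\Phi)$ and $r^{-1-\delta}J^N(\Phi) \leq K^{X_1}(\Phi)$. On the trapped region, which is compact in $r$ so that all $r$-weights are uniformly equivalent, I would abandon the pointwise identity and instead use the elliptic (Bochner) estimate on the sphere
\begin{equation*}
|\nabb^2\Phi|^2 \leq C(|\lapp\Phi|^2 + |\nabb\Phi|^2),
\end{equation*}
combined with the wave equation $\Box_{g_K}\Phi=0$ to re-express
\begin{equation*}
\lapp\Phi = \alpha_1\partial_{t^*}^2\Phi - \alpha_2\partial_r^2\Phi - \alpha_3\partial_r\partial_{t^*}\Phi - \alpha_4\partial_{t^*}\Phi - \alpha_5\partial_r\Phi + O(\epsilon r^{-2})\partial^2\Phi
\end{equation*}
(cf.\ Section \ref{commS} and Proposition \ref{commsta}). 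The squared term $(\partial_{t^*}^2\Phi)^2$ is bounded by $J^N(\partial_{t^*}\Phi)n^\mu$, which on the compact region $\{|r-3M|<M/8\}$ is comparable to $K^{X_1}(\partial_{t^*}\Phi)$ (no trapping for $\partial_{t^*}\Phi$). The remaining second-order terms $(\partial_r^2\Phi)^2$ and $(\partial_r\partial_{t^*}\Phi)^2$ are controlled using Proposition \ref{elliptic}(1) with $m=2$, $\alpha=0$, yielding bounds by $J^N(\Phi,\partial_{t^*}\Phi) \leq CK^{X_1}(\Phi,\partial_{t^*}\Phi)$ on the bounded region.

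The main obstacle is precisely the handling of the trapping region for Statement 3: the direct Killing-vector-field argument used for Statements 1 and 2 produces $J^N(\tilde{\Omega}\Phi)$ on the right-hand side, but this cannot be absorbed into $K^{X_0}(\tilde{\Omega}\Phi)$ near $r = 3M$. Trading an angular derivative for a temporal derivative via the wave equation—at the cost of requiring control of $K^{X_1}(\partial_{t^*}\Phi)$—is the standard mechanism for bypassing trapping, and it is the reason $\partial_{t^*}\Phi$ appears on the right-hand side.
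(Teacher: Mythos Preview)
Your overall strategy is right, but the pointwise identity you state does not hold on all of $\{r\geq r^-_Y\}$. Recall from Section~\ref{commOmega} that $\tilde{\Omega}_i=\chi(r)\Omega_i$ is supported in $\{r>R_\Omega\}$ and equals $\Omega_i$ only for $r>R_\Omega+1$; by the choice made in Proposition~\ref{omegaerror} one has $R_\Omega>\tfrac{25M}{8}$. Hence on the entire compact shell $r^-_Y\leq r\leq R_\Omega$ one has $\tilde{\Omega}_i\Phi\equiv 0$, and your inequality would collapse to $r^2|\nabb^2\Phi|^2\leq C|\nabb\Phi|^2$, which is false. The identity $\sum_i|\nabb\Omega_i\Phi|^2\geq c\,r^2|\nabb^2\Phi|^2 - C|\nabb\Phi|^2$ is correct (with an $O(\epsilon)$ correction on Kerr), but it is an identity in $\Omega$, not $\tilde{\Omega}$, and is therefore usable only in $\{r>R_\Omega+1\}$.

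The correct decomposition is thus $\{r>R_\Omega+1\}$ versus the compact region $\{r^-_Y\leq r\leq R_\Omega+1\}$. On the far region your argument goes through; on the compact region one invokes the local elliptic estimate of Proposition~\ref{elliptic} (with $m=2$, localized by a cutoff), which yields $|\nabb^2\Phi|^2\leq |D^2\Phi|^2\leq C\,J^N_\mu(\Phi,\partial_{t^*}\Phi)n^\mu$. This is the actual reason $\partial_{t^*}\Phi$ appears on the right-hand side of all three statements --- not, as you write, to absorb the $O(\epsilon r^{-2})|\partial\Phi|^2$ error (that term needs only $J^N(\Phi)$). For Statements~1 and~2 the compact region contributes $\int J^N(\Phi,\partial_{t^*}\Phi)$ and $\iint r^{-1-\delta}J^N(\Phi,\partial_{t^*}\Phi)$ respectively, since all $r$-weights are equivalent there. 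For Statement~3 the same compact-region bound gives $\iint J^N(\Phi,\partial_{t^*}\Phi)\leq C\iint K^{X_1}(\Phi,\partial_{t^*}\Phi)$, again because $r$ is bounded and $K^{X_1}$ has no trapping degeneracy.

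Your separate treatment of the trapped annulus $\{|r-3M|<M/8\}$ via the wave equation and a Bochner estimate is essentially a re-derivation of Proposition~\ref{elliptic} and is fine, but it is being applied to the wrong set: since $\tilde{\Omega}$ vanishes there anyway, the degeneracy of $K^{X_0}(\tilde{\Omega}\Phi)$ at $r=3M$ is irrelevant. The genuine issue is the cutoff, and your elliptic argument should be run on the whole of $\{r^-_Y\leq r\leq R_\Omega+1\}$, not just on $\{|r-3M|<M/8\}$.
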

In order to prove such estimates, we commute $\Box_{g_K}$ with $\tilde{\Omega}$. Recall from Section \ref{commOmega} that 
\begin{equation*}
|[\Box_{g_K}, \tilde{\Omega}]\Phi|\leq Cr^{-2}\left(|D^2\Phi|+|D\Phi|\right) \quad \mbox{everywhere}, \mbox{ and}
\end{equation*}
\begin{equation*}
[\Box_{g_K}, \tilde{\Omega}]\Phi=0, \quad \mbox{for } r<R_\Omega.
\end{equation*}
Now suppose $\Box_{g_K}\Phi=0$. We have 
$$\Box_{g_K}\left(\tilde{\Omega}^\ell\Phi\right)=\sum_{j=0}^ {\ell-1}\tilde{\Omega}^j[\Box_{g_K}, \tilde{\Omega}]\tilde{\Omega}^{\ell-j-1}\Phi=:G_{\Omega,\ell}.$$
Since $[D,\tilde{\Omega}]=D$, we have 
$$|G_{\Omega,\ell}|\leq Cr^{-2}\left(\sum_{j=0}^{\ell-1}\left(|D^2\tilde{\Omega}^j\Phi|+|D\tilde{\Omega}^j\Phi|\right)+\sum_{j=0}^{\ell+1}|D^j\Phi|\right),$$
and $G_{\Omega,\ell}$ is supported in $\{r\geq R_\Omega\}$.
\begin{definition}
$$E_{\Omega,\ell}=\iint_{\mathcal R(\tau'-1,\tau+1)}r^{1+\delta}G_{\Omega,\ell}^2+\sup_{t^*\in [\tau'-1,\tau+1]}\int_{\Sigma_{t^*}\cap\{|r-3M|\leq\frac{M}{8}\}} {G_{\Omega,\ell}}^2.$$
\end{definition}
This is the error term for the energy estimates for $\tilde{\Omega}^\ell\Phi$. We show that this can be controlled.
\begin{proposition}\label{omegaerror}
\begin{equation*}
\begin{split}
E_{\Omega,\ell}\leq &C\sum_{m=0}^1\sum_{j=0}^{\ell-1}\iint_{\mathcal R(\tau'-1,\tau+1)\cap\{r\geq R_\Omega\}}K^{X_0}\left(\partial_{t^*}^m\tilde{\Omega}^j\Phi\right)+C\sum_{m=0}^{\ell}\iint_{\mathcal R(\tau'-1,\tau+1)\cap\{r\geq R_\Omega\}}K^{X_0}\left(\partial_{t^*}^m\Phi\right).
\end{split}
\end{equation*}

\end{proposition}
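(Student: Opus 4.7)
The plan is to prove this by induction on $\ell\geq 0$. The base case $\ell=0$ is immediate since $G_{\Omega,0}\equiv 0$ and so $E_{\Omega,0}=0$. For the inductive step, I assume the asserted bound at all levels $\ell'<\ell$ and, without loss of generality, fix $R_\Omega$ large enough (for instance $R_\Omega>25M/8$, which is compatible with the only standing constraint $R_\Omega>11M/4$) that $\{r\geq R_\Omega\}\cap\{|r-3M|\leq M/8\}=\emptyset$. Since $G_{\Omega,\ell}$ is supported in $\{r\geq R_\Omega\}$, the supremum term in $E_{\Omega,\ell}$ then vanishes identically, and only the bulk integral remains to be estimated.

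On this support the weight inequality $r^{-3+\delta}\leq CR_\Omega^{-2+2\delta}r^{-1-\delta}\mathbbm{1}_{\{|r-3M|\geq M/8\}}$ holds, and the pointwise bound for $|G_{\Omega,\ell}|$ yields
\[
r^{1+\delta}G_{\Omega,\ell}^2\leq Cr^{-3+\delta}\mathbbm{1}_{\{r\geq R_\Omega\}}\Big(\sum_{j=0}^{\ell-1}(|D^2\tilde{\Omega}^j\Phi|^2+|D\tilde{\Omega}^j\Phi|^2)+\sum_{j=0}^{\ell+1}|D^j\Phi|^2\Big).
\]
First-order pieces $r^{-3+\delta}|D\tilde{\Omega}^j\Phi|^2\leq Cr^{-3+\delta}J^N_\mu(\tilde{\Omega}^j\Phi)n^\mu$ are absorbed directly into $CK^{X_0}(\tilde{\Omega}^j\Phi)$ by the weight inequality. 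The pure-$\Phi$ pieces $|D^j\Phi|^2$ for $0\leq j\leq \ell+1$ are treated by invoking Proposition \ref{elliptic}.1 with $G\equiv 0$ (since $\Box_{g_K}\Phi=0$), which reduces them to $\sum_{m=0}^{\ell}r^{-3+\delta}J^N_\mu(\partial_{t^*}^m\Phi)n^\mu$ and then to $\sum_{m=0}^{\ell}K^{X_0}(\partial_{t^*}^m\Phi)$; the borderline $j=0$ piece $r^{-3+\delta}\Phi^2$ is handled separately by a Hardy-type inequality in the spirit of Proposition \ref{hi}.

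The substantive step concerns the second-order terms $r^{-3+\delta}|D^2\tilde{\Omega}^j\Phi|^2$ for $j\leq \ell-1$. Applying Proposition \ref{elliptic}.1 to the equation $\Box_{g_K}(\tilde{\Omega}^j\Phi)=G_{\Omega,j}$ with $m=2$ and weight $\alpha=-3+\delta$, then integrating in $t^*$, I obtain
\[
\iint r^{-3+\delta}|D^2\tilde{\Omega}^j\Phi|^2\leq C\sum_{m=0}^{1}\iint r^{-3+\delta}J^N_\mu(\partial_{t^*}^m\tilde{\Omega}^j\Phi)n^\mu+C\iint r^{-3+\delta}G_{\Omega,j}^2.
\]
The first sum absorbs into $\sum_{m=0}^{1}K^{X_0}(\partial_{t^*}^m\tilde{\Omega}^j\Phi)$ exactly as above. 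For the inhomogeneous remainder, the key inequality $r^{-3+\delta}\leq CR_\Omega^{-4}r^{1+\delta}$ on $\{r\geq R_\Omega\}$ gives
\[
\iint r^{-3+\delta}G_{\Omega,j}^2\leq CR_\Omega^{-4}\iint r^{1+\delta}G_{\Omega,j}^2\leq CR_\Omega^{-4}E_{\Omega,j},
\]
which by the inductive hypothesis at level $j<\ell$ (with strictly fewer $\tilde{\Omega}$-commutators) is dominated by the right-hand side of the proposition.

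The main delicate point—and where I expect the bookkeeping to be most involved—is precisely that the elliptic estimate of Proposition \ref{elliptic} generates the inhomogeneous remainder $G_{\Omega,j}^2$, which is what forces the induction on $\ell$. The uniform weight gain $R_\Omega^{-4}$ at each step, together with the base case $G_{\Omega,0}=0$, guarantees the recursion closes; no further analytic input beyond the pointwise bound on $G_{\Omega,\ell}$, Proposition \ref{elliptic}, and a Hardy inequality is needed.
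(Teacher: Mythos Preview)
Your proof is correct and follows essentially the same approach as the paper: use the pointwise bound on $G_{\Omega,\ell}$, invoke Proposition~\ref{elliptic} to convert second-order derivatives into first-order $J^N$ energies, and absorb everything into $K^{X_0}$ on $\{r\geq R_\Omega\}$ (the supremum term vanishing by the choice of $R_\Omega$). Your explicit induction on $\ell$ to dispose of the $G_{\Omega,j}$ remainders generated by applying the elliptic estimate to $\tilde{\Omega}^j\Phi$ is in fact more careful than the paper, which passes from $(D^2\tilde{\Omega}^j\Phi)^2$ directly to $\sum_{m=0}^1 J^N_\mu(\partial_{t^*}^m\tilde{\Omega}^j\Phi)n^\mu$ without mentioning that remainder; your recursion with the $R_\Omega^{-4}$ gain makes this step rigorous.
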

\begin{proof}
\begin{equation*}
\begin{split}
&\iint_{\mathcal R(\tau'-1,\tau+1)}r^{1+\delta}G_{\Omega,\ell}^2\\
\leq&C\sum_{j=0}^{\ell-1}\iint_{\mathcal R(\tau'-1,\tau+1)\cap\{r\geq R_\Omega\}}r^{-3+\delta}\left(\left(D^2\tilde{\Omega}^j\Phi\right)^2+\left(D\tilde{\Omega}^j\Phi\right)^2\right)\\
&+C\sum_{j=0}^{\ell+1}\iint_{\mathcal R(\tau'-1,\tau+1)\cap\{r\geq R_\Omega\}}r^{-3+\delta}\left(D^j\Phi\right)^2\\
\leq&C\sum_{m=0}^1\sum_{j=0}^{\ell-1}\iint_{\mathcal R(\tau'-1,\tau+1)\cap\{r\geq R_\Omega\}}r^{-3+\delta}J^N_\mu\left(\partial_{t^*}^m\tilde{\Omega}^j\Phi\right)n^\mu_{\Sigma{t^*}}\\
&+C\sum_{m=0}^{\ell}\iint_{\mathcal R(\tau'-1,\tau+1)\cap\{r\geq R_\Omega\}}r^{-3+\delta}J^N_\mu\left(\partial_{t^*}^m\Phi\right)n^\mu_{\Sigma{t^*}}\\
\leq&C\sum_{m=0}^1\sum_{j=0}^{\ell-1}\iint_{\mathcal R(\tau'-1,\tau+1)\cap\{r\geq R_\Omega\}}K^{X_0}\left(\partial_{t^*}^m\tilde{\Omega}^j\Phi\right)+C\sum_{m=0}^{\ell}\iint_{\mathcal R(\tau'-1,\tau+1)\cap\{r\geq R_\Omega\}}K^{X_0}\left(\partial_{t^*}^m\Phi\right).\\
\end{split}
\end{equation*}
By choosing $R_\Omega$ sufficiently large, the second term of $E_{\Omega,\ell}$ vanishes.
\end{proof}
We can show that the non-degenerate energy of $\tilde{\Omega}^\ell\Phi$ is almost bounded.
\begin{proposition}\label{bddomega}
Suppose $\Box_{g_K}\Phi=0$. Then
\begin{equation*}
\begin{split}
&\int_{\Sigma_{\tau}} J^{N}_\mu\left(\tilde{\Omega}^\ell\Phi\right)n^\mu_{\Sigma_{\tau}} +\int_{\mathcal H(\tau_0,\tau)} J^{N}_\mu\left(\tilde{\Omega}^\ell\Phi\right)n^\mu_{\mathcal H^+} +\iint_{\mathcal R(\tau_0,\tau)\cap\{r\leq r^-_Y\}}K^{N}\left(\tilde{\Omega}^\ell\Phi\right)+\iint_{\mathcal R(\tau_0,\tau)}K^{X_0}\left(\tilde{\Omega}^\ell\Phi\right)\\
\leq &C\sum_{i+j\leq\ell}\int_{\Sigma_{\tau_0}} J^{N}_\mu\left(\partial_{t^*}^i\tilde{\Omega}^j\Phi\right)n^\mu_{\Sigma_{\tau_0}}.\\
\end{split}
\end{equation*}
\end{proposition}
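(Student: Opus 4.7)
The plan is to prove the estimate by induction on $\ell$, strengthening the inductive hypothesis to include all mixed derivatives of the form $\partial_{t^*}^m\tilde{\Omega}^{\ell'}\Phi$ with $\ell' \leq \ell$ and arbitrary $m$. This strengthening is forced upon us because Proposition \ref{omegaerror}, which controls the commutator source, involves a $\partial_{t^*}$-derivative, and one then needs the same bound applied to time-differentiated quantities in order to close. Since $\partial_{t^*}$ is Killing and commutes with $\tilde{\Omega}$ (both being spatial coordinate vector fields independent of $t^*$), we have $\Box_{g_K}(\partial_{t^*}^m\tilde{\Omega}^\ell\Phi) = \partial_{t^*}^m G_{\Omega,\ell}$, so the same machinery applies uniformly.

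The base case $\ell = 0$ is exactly the boundedness part of Corollary \ref{DRDecay} applied to $\partial_{t^*}^m\Phi$ for each $m$. For the inductive step, I first note that by choosing $R_\Omega$ sufficiently large (in particular $R_\Omega > 25M/8$, which is consistent with the standing constraint $R_\Omega > 11M/4$), the source $G_{\Omega,\ell}$ is supported away from the trapped set $\{|r-3M|\leq M/8\}$. I can therefore apply Proposition \ref{bddnoloss} to $\partial_{t^*}^m\tilde{\Omega}^\ell\Phi$ with source $\partial_{t^*}^m G_{\Omega,\ell}$, obtaining
\begin{equation*}
\begin{split}
&\int_{\Sigma_{\tau}} J^N_\mu(\partial_{t^*}^m\tilde{\Omega}^\ell\Phi)n^\mu_{\Sigma_\tau} + \int_{\mathcal H(\tau_0,\tau)} J^N_\mu(\partial_{t^*}^m\tilde{\Omega}^\ell\Phi)n^\mu_{\mathcal H^+} \\
&\qquad + \iint_{\mathcal R(\tau_0,\tau)\cap\{r\leq r^-_Y\}} K^N(\partial_{t^*}^m\tilde{\Omega}^\ell\Phi) + \iint_{\mathcal R(\tau_0,\tau)} K^{X_0}(\partial_{t^*}^m\tilde{\Omega}^\ell\Phi) \\
&\leq C\int_{\Sigma_{\tau_0}} J^N_\mu(\partial_{t^*}^m\tilde{\Omega}^\ell\Phi)n^\mu_{\Sigma_{\tau_0}} + C\iint_{\mathcal R(\tau_0-1,\tau+1)} r^{1+\delta}(\partial_{t^*}^m G_{\Omega,\ell})^2.
\end{split}
\end{equation*}
The error term is then bounded, via a straightforward adaptation of Proposition \ref{omegaerror} to include the extra $\partial_{t^*}^m$, by
\begin{equation*}
C\sum_{k=0}^{m+1}\sum_{j=0}^{\ell-1}\iint_{\mathcal R(\tau_0-1,\tau+1)} K^{X_0}(\partial_{t^*}^k\tilde{\Omega}^j\Phi) + C\sum_{k=0}^{m+\ell}\iint_{\mathcal R(\tau_0-1,\tau+1)} K^{X_0}(\partial_{t^*}^k\Phi).
\end{equation*}

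To close the argument, I invoke the inductive hypothesis (which controls $\iint K^{X_0}(\partial_{t^*}^k\tilde{\Omega}^j\Phi)$ for $j\leq \ell-1$ by initial data sums of the form $\sum_{i+j \leq \ell} \int_{\Sigma_{\tau_0}} J^N_\mu(\partial_{t^*}^i\tilde{\Omega}^j\Phi) n^\mu$) and the boundedness of non-degenerate energy from Corollary \ref{DRDecay} (which controls the purely $\partial_{t^*}^k$ terms). Specializing to $m=0$ yields the stated Proposition. The main technical point to watch is simply the bookkeeping of derivatives: each $\tilde{\Omega}$ commutation loses two derivatives (from $[\Box_{g_K},\tilde{\Omega}]$ containing $D^2\Phi$) but this is offset by the $r^{-2+\delta}$ decay of $G_{\Omega,\ell}$, which converts to $r^{-3+\delta}$ in the integrand of $\iint r^{1+\delta}G_{\Omega,\ell}^2$, precisely the weight present in $K^{X_0}$. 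The main obstacle is therefore not any single estimate but the combinatorial structure of the induction — ensuring the hypothesis is strong enough (in terms of powers of $\partial_{t^*}$) to feed back into the next step without an uncontrolled loss of regularity, which is why the simultaneous induction on all time derivatives is needed.
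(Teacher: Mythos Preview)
Your proof is correct and follows essentially the same route as the paper: induction on $\ell$, using Proposition~\ref{bddnoloss} (valid because $G_{\Omega,\ell}$ is supported in $\{r\geq R_\Omega\}$ with $R_\Omega$ chosen large enough to avoid the trapped region) together with Proposition~\ref{omegaerror} to bound the commutator source by $K^{X_0}$ terms at lower $\tilde{\Omega}$-levels. The only organizational difference is that you explicitly carry all powers $\partial_{t^*}^m$ through a simultaneous induction, whereas the paper performs the induction on $\ell$ alone and then observes that, since $\partial_{t^*}$ is Killing, the estimate at level $\ell$ applied with $\partial_{t^*}^m\Phi$ in place of $\Phi$ immediately yields the $\partial_{t^*}^m$-version needed to close; the two formulations are equivalent.
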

\begin{proof}
We prove this by induction on $\ell$. The $\ell=0$ case is true by setting $G=0$ in Proposition \ref{bddcom}. We assume that the Proposition is true for $\ell\leq \ell_0-1$. This in particular implies, after a commutations with the Killing vector field $\partial_{t^*}$, that
\begin{equation*}
\begin{split}
&\sum_{j=0}^{\ell_0-1}\iint_{\mathcal R(\tau_0,\tau)}K^{X_0}\left(\partial_{t^*}^m\tilde{\Omega}^j\Phi\right)\leq C\sum_{i+j\leq m+\ell_0-1}\int_{\Sigma_{\tau_0}} J^{N}_\mu\left(\partial_{t^*}^i\tilde{\Omega}^j\Phi\right)n^\mu_{\Sigma_{\tau_0}}.\\
\end{split}
\end{equation*}
By Propositions \ref{bddnoloss} and \ref{omegaerror},
\begin{equation*}
\begin{split}
&\int_{\Sigma_{\tau}}J^N_\mu\left(\tilde{\Omega}^{\ell_0}\Phi\right)n^\mu_{\Sigma_{\tau}}+\int_{\mathcal H(\tau_0,\tau)}J^N_\mu\left(\tilde{\Omega}^{\ell_0}\Phi\right)n^\mu_{\mathcal H^+}+\iint_{\mathcal R(\tau_0,\tau)\cap\{r\leq r^-_Y\}}K^{N}\left(\tilde{\Omega}^{\ell_0}\Phi\right)+\iint_{\mathcal R(\tau_0,\tau)}K^{X_0}\left(\tilde{\Omega}^{\ell_0}\Phi\right)\\
\leq &C\left(\int_{\Sigma_{\tau_0}} J^{N}_\mu\left(\tilde{\Omega}^{\ell_0}\Phi\right)n^\mu_{\Sigma_{\tau_0}}+\iint_{\mathcal R(\tau'-1,\tau+1)}r^{1+\delta}G_{\Omega,\ell_0}^2+\sup_{t^*\in [\tau'-1,\tau+1]}\int_{\Sigma_{t^*}\cap\{|r-3M|\leq\frac{M}{8}\}} G_{\Omega,{\ell_0}}^2\right)\\
\leq &C\left(\int_{\Sigma_{\tau_0}} J^{N}_\mu\left(\tilde{\Omega}^{\ell_0}\Phi\right)n^\mu_{\Sigma_{\tau_0}}+C\sum_{m=0}^1\sum_{j=0}^{{\ell_0}-1}\iint_{\mathcal R(\tau_0-1,\tau+1)\cap\{r\geq R_\Omega\}}K^{X_0}\left(\partial_{t^*}^m\tilde{\Omega}^j\Phi\right)\right.\\
&\left.+C\sum_{m=0}^{\ell_0}\iint_{\mathcal R(\tau_0-1,\tau+1)\cap\{r\geq R_\Omega\}}K^{X_0}\left(\partial_{t^*}^m\Phi\right)\right)\\
\leq &C\sum_{i+j\leq\ell_0}\int_{\Sigma_{\tau_0}} J^{N}_\mu\left(\partial_{t^*}^i\tilde{\Omega}^{j}\Phi\right)n^\mu_{\Sigma_{\tau_0}}.
\end{split}
\end{equation*}

\end{proof}
\begin{remark}
Only the $\ell=1$ case will be used.
\end{remark}

\section{Estimates for $S\Phi$}\label{sectioncommutatorS}
We will now use the energy estimates that we have derived to control $S\Phi$. In particular, we would like to prove a local integrated decay estimate for $S\Phi$. This will be used in the next section where we prove our main theorem. Recall from Section \ref{commS} that for $r$ large
$$|[\Box_{g_K},S]\Phi-\left(2+\frac{r^*\mu}{r}\right)\Box_{g_K} \Phi -\frac{2}{r}\left(\frac{r^*}{r}-1-\frac{2r^*\mu}{r}\right)\partial_{r^*}\Phi-2\left(\left(\frac{r^*}{r}-1\right)-\frac{3r^*\mu}{2r}\right)\lapp\Phi|\leq C\epsilon r^{-2}(\sum_{k=1}^2|\partial^k\Phi|),$$
and that for $r\leq R$, we have 
$$|[\Box_{g_K},S]\Phi|\leq C(\sum_{k=1}^2|D^k\Phi|).$$

From now on we will prove estimates for $S\Phi$ by considering the wave equation that it satisfies. We will assume, as before, $\Box_{g_K}\Phi=0$ and let $G$ denote the commutator term, i.e., $\Box_{g_K}(S\Phi)=G$. If we look at our estimates in the previous sections, we will need to control $G$ in three different norms. We now consider them separately.

\begin{proposition}\label{Sinho1}
Let $\tau'\leq\tau\leq (1.1)\tau'$. Then
\begin{equation*}
\begin{split}
&\sum_{m=0}^{\ell}\iint_{\mathcal R(\tau'-1,\tau+1)} r^{1+\delta}\left(\partial_{t^*}^{m}G\right)^2\\
\leq&C\tau^{-1+\eta}\sum_{m+k+j\leq \ell+3} \left(\int_{\Sigma_{\tau_0}} J^{Z,w^Z}_\mu\left(\partial_{t^*}^m\hat{Y}^k\tilde{\Omega}^j\Phi\right)n^\mu_{\Sigma_{{\tau_0}}}+C\int_{\Sigma_{\tau_0}} J^{N}_\mu\left(\partial_{t^*}^m\hat{Y}^k\tilde{\Omega}^j\Phi\right)n^\mu_{\Sigma_{{\tau_0}}}\right).\\
\end{split}
\end{equation*}
and
\begin{equation*}
\begin{split}
&\sum_{m=0}^{\ell}\iint_{\mathcal R(\tau'-1,\tau+1)\cap\{r\leq\frac{9t^*}{10}\}} r^{1+\delta}\left(\partial_{t^*}^{m}G\right)^2\\
\leq&C\tau^{-2+\eta}\sum_{m+k+j\leq \ell+4} \left(\int_{\Sigma_{\tau_0}} J^{Z,w^Z}_\mu\left(\partial_{t^*}^m\hat{Y}^k\tilde{\Omega}^j\Phi\right)n^\mu_{\Sigma_{{\tau_0}}}+C\int_{\Sigma_{\tau_0}} J^{N}_\mu\left(\partial_{t^*}^m\hat{Y}^k\tilde{\Omega}^j\Phi\right)n^\mu_{\Sigma_{{\tau_0}}}\right).\\
\end{split}
\end{equation*}
In other words, we can get more decay if we localize and allow an extra derivative.
\end{proposition}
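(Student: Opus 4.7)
\medskip

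\noindent\textbf{Proof proposal.} The plan is to exploit the structure of the commutator $G=[\Box_{g_K},S]\Phi$ established in Section \ref{commS}, namely
\[
|G|\;\lesssim\; r^{-2+\delta}\bigl(|D^2\Phi|+|D\Phi|+r|D\nabb\Phi|\bigr)+\mathbbm 1_{\{r\leq R\}}\bigl(|D^2\Phi|+|D\Phi|\bigr),
\]
together with the elementary observation that $\partial_{t^*}$ commutes with $[\Box_{g_K},S]$. Indeed, since $[\partial_{t^*},\Box_{g_K}]=0$ and $[\partial_{t^*},S]=\partial_{t^*}$, the Jacobi identity yields $[\partial_{t^*},[\Box_{g_K},S]]=[\Box_{g_K},\partial_{t^*}]=0$. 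Hence $\partial_{t^*}^m G=[\Box_{g_K},S]\partial_{t^*}^m\Phi$ and it suffices to prove the $m=0$ bound with $\Phi$ replaced by $\partial_{t^*}^m\Phi$. Converting the angular derivative via $r|\nabb\Phi|\lesssim|\tilde\Omega\Phi|$ from Section \ref{commOmega} reduces the integrand to
\[
r^{1+\delta}(\partial_{t^*}^m G)^2\;\lesssim\; r^{-3+3\delta}\bigl(|D^2\partial_{t^*}^m\Phi|^2+|D\partial_{t^*}^m\Phi|^2+|D\tilde\Omega\partial_{t^*}^m\Phi|^2\bigr)+\mathbbm 1_{\{r\leq R\}}(\cdots).
\]

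I would then partition $\mathcal R(\tau'-1,\tau+1)$ into four zones: (I) $r\leq r^-_Y$, (II) $r^-_Y\leq r\leq R$ for $R$ a suitably large fixed constant, (III) $R\leq r\leq 9t^*/10$, and (IV) $r\geq 9t^*/10$. For the second (localized) estimate zone (IV) is empty, which is the source of the improved decay. In zone (I) the weight $r^{1+\delta}$ is harmless; I convert $|D^2\partial_{t^*}^m\Phi|^2$ to quantities of the form $\sum_{j+k\leq m+1}J^N(\partial_{t^*}^j\hat Y^k\Phi)$ via the horizon elliptic estimate, Proposition \ref{elliptichorizon}, and in zone (II) I do the same using the uniform elliptic estimate Proposition \ref{elliptic}. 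In both cases the time-integrated decay $\tau^{-2+\eta}$ follows from Corollary \ref{YDecay} because the interval $[\tau'-1,\tau+1]$ is comparable to a single dyadic window $\tau\sim\tau'$.

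In zone (III) the key observation is that $r^{-3+3\delta}\leq r^{-1-\delta}$, so the integrand is controlled by the $K^{X_1}$-density of $\partial_{t^*}^m\Phi$, $\partial_{t^*}^m\tilde\Omega\Phi$, and its $\partial_{t^*}$-derivatives. Applying Corollary \ref{DRDecay}.3 together with the $\tilde\Omega$-commuted analogue obtained by combining Proposition \ref{bddomega} with the proof of Proposition \ref{0energydecay2} gives $\tau^{-2+\eta}$ times a sum of $Z+CN$-energies with $j+k+m\leq\ell+4$ total derivatives, which accounts for the second statement. For the non-localized estimate one must also treat zone (IV): there $r^{-3+3\delta}\leq C\tau^{-3+3\delta}$, the pointwise squared derivatives on $\Sigma_{t^*}$ integrate to the (almost) bounded non-degenerate energies $J^N(\partial_{t^*}^j\hat Y^k\tilde\Omega^\ell\partial_{t^*}^m\Phi)$ through Corollary \ref{DRDecay}.1 and Proposition \ref{bddomega}, and integrating in $t^*$ over the interval of length $\lesssim\tau$ yields at worst $\tau^{-2+3\delta}$. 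Summing the four regions produces the first statement, with one derivative saved because zone (IV) is handled by boundedness rather than by the decay estimates used in zones (I)-(III).

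The main obstacle, and what makes zones (III) and (IV) delicate, is the bookkeeping of the commutator budget: at large $r$ we must trade the bad angular factor $r\nabb\Phi$ for $\tilde\Omega\Phi$, which forces us to commute $\Box_{g_K}$ with $\tilde\Omega$ and absorb the inhomogeneous error terms $[\Box_{g_K},\tilde\Omega]\partial_{t^*}^m\Phi$ using the $r^{-2}$ decay of that commutator through Proposition \ref{bddomega}; near the horizon we must instead commute with $\hat Y$ and use Corollary \ref{YDecay}. Verifying that the total derivative count $\ell+3$ (resp.\ $\ell+4$) in the initial-data norms on the right-hand side is not exceeded by any of these commutations -- and that the $\tilde\Omega$-commuted estimate used in zones (III) and (IV) itself decays at the advertised rate -- is where the care is required.
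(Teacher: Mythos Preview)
Your overall strategy---pointwise control of $G$, elliptic conversion to $J^N$-quantities of $\partial_{t^*}^j\hat Y^k\tilde\Omega^\ell\Phi$, then zone-by-zone application of the decay and boundedness results---is exactly the paper's approach, and your proof of the \emph{second} (localized) statement is correct. Two points in the treatment of the \emph{first} statement, however, do not go through as written.

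\textbf{The derivative saving is not caused by zone (IV).} Your argument for zones (I)--(III) invokes Corollary~\ref{DRDecay}.3 (the $K^{X_1}$ integrated decay), which costs three extra derivatives beyond the $\ell+1$ coming from the elliptic step, giving $\ell+4$. Adding zone (IV), where you correctly get $\tau^{-2+3\delta}$ from boundedness, does not reduce the count already incurred in zones (I)--(III); summing all four zones still yields $\ell+4$, not the claimed $\ell+3$. The paper obtains $\ell+3$ by arguing \emph{differently} in the inner region when only $\tau^{-1+\eta}$ is required: instead of the integrated $K^{X_1}$ estimate, one uses the \emph{slice} energy decay of Corollary~\ref{DRDecay}.2 and Corollary~\ref{YDecay} (two extra derivatives, so $\ell+3$), and then simply integrates $\int_{\tau'-1}^{\tau+1}(t^*)^{-2+\eta}\,dt^*\lesssim\tau^{-1+\eta}$ over the dyadic window. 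The derivative saving is thus tied to the weaker target decay, not to the absence or presence of zone (IV).

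\textbf{The $\tilde\Omega$-commuted localized decay is not available.} You appeal to an analogue of Corollary~\ref{DRDecay}.3 for $\tilde\Omega\Phi$, manufactured from Proposition~\ref{bddomega} and the proof of Proposition~\ref{0energydecay2}. But Proposition~\ref{bddomega} only gives \emph{boundedness} of $\iint K^{X_0}(\tilde\Omega\Phi)$, and the paper never upgrades this to localized decay. The paper sidesteps the issue by splitting at $r=t^*/2$ rather than at $r=9t^*/10$: for $r\le t^*/2$ the angular term $r^{-1+\delta}|\lapp\Phi|^2$ is handled by Proposition~\ref{elliptic} (no $\tilde\Omega$ needed) and then the $K^{X_1}$ decay for $\partial_{t^*}^j\Phi$; for $r\ge t^*/2$ one has $r^{-3+\delta}\le C(t^*)^{-3+\delta}$, and Proposition~\ref{improveomega} plus the \emph{boundedness} of $J^N(\tilde\Omega\Phi)$ from Proposition~\ref{bddomega} then suffice. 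Adjusting your zone boundary from $9t^*/10$ to $t^*/2$ and replacing the unavailable $\tilde\Omega$-decay by this boundedness argument closes the gap.
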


\begin{proof}
\begin{equation*}
\begin{split}
&\sum_{m=0}^{\ell}\iint_{\mathcal R(\tau'-1,\tau+1)}r^{1+\delta}\left(\partial_{t^*}^{m}G\right)^2\\
\leq&C\sum_{m=0}^{\ell}\iint_{\mathcal R(\tau'-1,\tau+1)}r^{-3+\delta}\left(\left(\partial_{t^*}^{m}D^2\Phi\right)^2+\left(\partial_{t^*}^{m}D\Phi\right)^2+\left(r\partial_{t^*}^m\lapp\Phi\right)^2\right)\\
&\quad\quad\mbox{noting that the }\delta\mbox{ in the two lines are different}\\
\leq&C\sum_{m+k\leq \ell+1}\left(\iint_{\mathcal R(\tau'-1,\tau+1)\cap\{r\leq\frac{t^*}{2}\}}r^{-1+\delta}J^N_\mu\left(\partial_{t^*}^m\hat{Y}^k\Phi\right)n^\mu_{\Sigma_t^*}\right.\\
+&\left.\iint_{\mathcal R(\tau'-1,\tau+1)\cap\{ r\geq\frac{t^*}{2}}r^{-3+\delta}J^N_\mu\left(\partial_{t^*}^m\tilde{\Omega}^k\Phi\right)n^\mu_{\Sigma_t^*}\right)\\
&\quad\quad\mbox{by Proposition \ref{elliptic}, \ref{elliptichorizon} and \ref{improveomega}}\\
\leq&C\sum_{m+k\leq \ell+3} \tau^{-1+\eta}\left(\int_{\Sigma_{\tau_0}} J^{Z,w^Z}_\mu\left(\partial_{t^*}^m\Phi\right)n^\mu_{\Sigma_{{\tau_0}}}+C\int_{\Sigma_{\tau_0}} J^{N}_\mu\left(\partial_{t^*}^m\hat{Y}^k\Phi\right)n^\mu_{\Sigma_{{\tau_0}}}\right)\\
&+C\sum_{m+k\leq \ell+3} \int_{\tau'-1}^{\tau+1}(t^*)^{-3+\delta}\left(\int_{\Sigma_{\tau_0}} J^{Z,w^Z}_\mu\left(\partial_{t^*}^m\Phi\right)n^\mu_{\Sigma_{{\tau_0}}}+C\int_{\Sigma_{\tau_0}} J^{N}_\mu\left(\partial_{t^*}^m\hat{Y}^k\Phi\right)n^\mu_{\Sigma_{{\tau_0}}}\right)dt^*\\
&+C\sum_{m\leq \ell+1} \int_{\tau'-1}^{\tau+1}(t^*)^{-3+\delta}\left(\int_{\Sigma_{\tau_0}} J^{N}_\mu\left(\partial_{t^*}^m\tilde{\Omega}\Phi\right)n^\mu_{\Sigma_{{\tau_0}}}\right)dt^*\\
&\quad\quad\mbox{using Corollaries \ref{DRDecay}, \ref{YDecay} and Proposition \ref{bddomega}}\\
\leq&C\tau^{-1+\eta}\sum_{m+k+j\leq \ell+3} \left(\int_{\Sigma_{\tau_0}} J^{Z,w^Z}_\mu\left(\partial_{t^*}^m\hat{Y}^k\tilde{\Omega}^j\Phi\right)n^\mu_{\Sigma_{{\tau_0}}}+C\int_{\Sigma_{\tau_0}} J^{N}_\mu\left(\partial_{t^*}^m\hat{Y}^k\tilde{\Omega}^j\Phi\right)n^\mu_{\Sigma_{{\tau_0}}}\right).\\
\end{split}
\end{equation*}
We then move on to the localized version:
\begin{equation*}
\begin{split}
&\sum_{m=0}^{\ell}\iint_{\mathcal R(\tau'-1,\tau+1)\cap\{r\leq\frac{9t^*}{10}\}}r^{1+\delta}\left(\partial_{t^*}^{m}G\right)^2\\
\leq&C\sum_{m=0}^{\ell}\iint_{\mathcal R(\tau'-1,\tau+1)\cap\{r\leq\frac{9t^*}{10}\}}r^{-3+\delta}\left(\left(\partial_{t^*}^{m}D^2\Phi\right)^2+\left(\partial_{t^*}^{m}D\Phi\right)^2+\left(r\partial_{t^*}^m\lapp\Phi\right)^2\right)\\
\leq&C\sum_{m+k\leq \ell+1}\left(\iint_{\mathcal R(\tau'-1,\tau+1)\cap\{r\leq\frac{t^*}{2}\}}r^{-1+\delta}J^N_\mu\left(\partial_{t^*}^m\hat{Y}\Phi\right)n^\mu_{\Sigma_t^*}\right.\\
&\left.+\iint_{\mathcal R(\tau'-1,\tau+1)\cap\{ \frac{t^*}{2}\leq r\leq\frac{19t^*}{20}\}}r^{-3+\delta}J^N_\mu\left(\partial_{t^*}^m\hat{Y}\Phi\right)n^\mu_{\Sigma_t^*}\right)\\
+&C\sum_{m=0}^1\left(\iint_{\mathcal R(\tau'-1,\tau+1)\cap\{ \frac{t^*}{2}\leq r\leq\frac{19t^*}{20}\}}r^{-3+\delta}J^N_\mu\left(\partial_{t^*}^m\tilde{\Omega}\Phi\right)n^\mu_{\Sigma_t^*}\right)\\
&\quad\quad\mbox{by Proposition \ref{elliptic}, \ref{elliptichorizon} and \ref{improveomega}}\\
\leq&C\sum_{m+k\leq \ell+4} \tau^{-2+\eta}\left(\int_{\Sigma_{\tau_0}} J^{Z,w^Z}_\mu\left(\partial_{t^*}^m\Phi\right)n^\mu_{\Sigma_{{\tau_0}}}+C\int_{\Sigma_{\tau_0}} J^{N}_\mu\left(\partial_{t^*}^m\hat{Y}\Phi\right)n^\mu_{\Sigma_{{\tau_0}}}\right)\\
&+C\sum_{m+k\leq \ell+3} \int_{\tau'-1}^{\tau+1}(t^*)^{-3+\delta}\left(\int_{\Sigma_{\tau_0}} J^{Z,w^Z}_\mu\left(\partial_{t^*}^m\Phi\right)n^\mu_{\Sigma_{{\tau_0}}}+C\int_{\Sigma_{\tau_0}} J^{N}_\mu\left(\partial_{t^*}^m\hat{Y}\Phi\right)n^\mu_{\Sigma_{{\tau_0}}}\right)dt^*\\
&+C\sum_{m=0}^\ell \int_{\tau'-1}^{\tau+1}(t^*)^{-3+\delta}\left(\int_{\Sigma_{\tau_0}} J^{N}_\mu\left(\partial_{t^*}^m\tilde{\Omega}\Phi\right)n^\mu_{\Sigma_{{\tau_0}}}\right)dt^*\\
&\quad\quad\mbox{using Corollaries \ref{DRDecay}, \ref{YDecay} and Proposition \ref{bddomega}}\\
\leq&C\tau^{-2+\eta}\sum_{m+k+j\leq \ell+4} \left(\int_{\Sigma_{\tau_0}} J^{Z,w^Z}_\mu\left(\partial_{t^*}^m\hat{Y}^k\tilde{\Omega}^j\Phi\right)n^\mu_{\Sigma_{{\tau_0}}}+C\int_{\Sigma_{\tau_0}} J^{N}_\mu\left(\partial_{t^*}^m\hat{Y}^k\tilde{\Omega}^j\Phi\right)n^\mu_{\Sigma_{{\tau_0}}}\right).\\
\end{split}
\end{equation*}
\end{proof}

To estimate the inhomogeneous term in the region $r\leq\frac{t^*}{2}$, we would also need to estimate a term not integrated over $t^*$ which arises from the integration by parts.

\begin{proposition}\label{Sinho2}
For $\tau'\leq\tau\leq(1.1)\tau'$,
\begin{equation*}
\begin{split}
&\sup_{t^*\in [\tau'-1,\tau+1]}\sum_{m=0}^\ell\int_{\Sigma_{t^*}\cap\{|r-3M|\leq\frac{M}{8}\}} (\partial_{t^*}^m G)^2\\
\leq&C\tau^{-2+\eta}\sum_{m+j\leq \ell+3}\left(\int_{\Sigma_{\tau_0}} J^{Z,w^Z}_\mu\left(\partial_{t^*}^m\tilde{\Omega}^j\Phi\right)n^\mu_{\Sigma_{{\tau_0}}}+C\int_{\Sigma_{\tau_0}} J^{N}_\mu\left(\partial_{t^*}^m\tilde{\Omega}^j\Phi\right)n^\mu_{\Sigma_{{\tau_0}}}\right).
\end{split}
\end{equation*}
\end{proposition}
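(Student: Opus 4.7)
\textbf{Proof proposal for Proposition \ref{Sinho2}.}
The plan is to exploit the fact that the region $\{|r-3M|\leq M/8\}$ is a compact $r$-range that lies strictly inside $\{r\leq R\}\cap\{r\geq r^-_Y\}$. Here the commutator estimate of Section \ref{commS} gives the simple pointwise bound $|[\Box_{g_K},S]\Phi|\leq C\sum_{k=1}^{2}|D^k\Phi|$, and since $\partial_{t^*}$ is Killing we obtain, after commuting with $\partial_{t^*}^m$,
\begin{equation*}
|\partial_{t^*}^m G|=|\partial_{t^*}^m [\Box_{g_K},S]\Phi|\leq C\sum_{k=1}^{2}|D^k\partial_{t^*}^m\Phi|\qquad\text{on }\{|r-3M|\leq M/8\}.
\end{equation*}
Squaring and integrating, the task reduces to controlling $\sup_{t^*\in[\tau'-1,\tau+1]}\sum_{m=0}^{\ell}\int_{\Sigma_{t^*}\cap\{|r-3M|\leq M/8\}}\bigl((D^2\partial_{t^*}^m\Phi)^2+(D\partial_{t^*}^m\Phi)^2\bigr)$.

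Next I would apply the elliptic estimate of Proposition \ref{elliptic}.1 (with $\alpha=0$, $m=2$, and trivial inhomogeneity since $\Box_{g_K}\Phi=0$) to the functions $\partial_{t^*}^m\Phi$. Because $\{|r-3M|\leq M/8\}\subset\{r\geq r^-_Y\}$, this yields
\begin{equation*}
\int_{\Sigma_{t^*}\cap\{|r-3M|\leq M/8\}}(D^2\partial_{t^*}^m\Phi)^2\leq C\sum_{j=0}^{m+1}\int_{\Sigma_{t^*}}J^{N}_\mu(\partial_{t^*}^{j}\Phi)\,n^\mu_{\Sigma_{t^*}},
\end{equation*}
and the first-order piece is trivially dominated by the same right-hand side. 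Summing in $m=0,\dots,\ell$ brings in at most $\ell+1$ time derivatives of $\Phi$.

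The final step is to convert these non-degenerate energies into decay in $t^*$. For $t^*$ large enough that $\{|r-3M|\leq M/8\}\subset\{r\leq \gamma t^*\}$ (take $\gamma=1/2$, which requires $t^*\geq 25M/4$), Corollary \ref{DRDecay}.2 gives
\begin{equation*}
\int_{\Sigma_{t^*}\cap\{r\leq\gamma t^*\}}J^{N}_\mu(\partial_{t^*}^{j}\Phi)\,n^\mu_{\Sigma_{t^*}}\leq C(t^*)^{-2+\eta}\sum_{k=0}^{2}\int_{\Sigma_{\tau_0}}J^{Z+CN,w^Z}_\mu(\partial_{t^*}^{j+k}\Phi)\,n^\mu_{\Sigma_{\tau_0}},
\end{equation*}
which after summing over $j\leq\ell+1$ and $k\leq 2$ produces at most $\ell+3$ time derivatives, exactly matching the claimed right-hand side (the $\tilde\Omega^j$ terms in the conclusion merely absorb the $j=0$ contribution from our bound). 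For the remaining, bounded range of $t^*$ where $\gamma t^*<25M/8$, the decay factor $\tau^{-2+\eta}$ is comparable to a constant, so the estimate follows from the boundedness statement in Corollary \ref{DRDecay}.1 applied to $\partial_{t^*}^j\Phi$.

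There is no real obstacle in this argument—the main point is just the bookkeeping of derivative counts. If anything requires care it is verifying that the elliptic estimate of Proposition \ref{elliptic}.1 can be applied with the implicit cutoff to the compact $r$-strip $\{|r-3M|\leq M/8\}$ without picking up boundary terms that degenerate at $r=3M$; this is immediate because the strip is contained in the open set $\{r>r^-_Y\}$ where the proposition applies with no degeneracy and no loss at the trapped sphere, the trapping only being an obstruction for $K^{X_0}$-type bulk estimates, not for elliptic estimates on a fixed time slice.
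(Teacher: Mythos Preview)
Your overall strategy matches the paper's: bound $|\partial_{t^*}^m G|$ pointwise by $\sum_{k=1}^2|D^k\partial_{t^*}^m\Phi|$ on the compact strip, convert the second-order piece to first-order energies via the elliptic estimate, and then invoke the $\tau^{-2+\eta}$ decay from Corollary~\ref{DRDecay}.2. The derivative count you obtain is also correct.

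There is, however, a genuine gap in the middle step. When you apply Proposition~\ref{elliptic}.1 with $\alpha=0$, the right-hand side is the \emph{global} energy
\[
\sum_{j=0}^{m+1}\int_{\Sigma_{t^*}}J^{N}_\mu(\partial_{t^*}^{j}\Phi)\,n^\mu_{\Sigma_{t^*}},
\]
integrated over all of $\Sigma_{t^*}$, not over $\{r\leq\gamma t^*\}$. Corollary~\ref{DRDecay}.2 only gives $\tau^{-2+\eta}$ decay for the \emph{localized} quantity $\int_{\Sigma_{t^*}\cap\{r\leq\gamma t^*\}}J^{N}_\mu(\cdot)n^\mu$; the global energy is merely bounded. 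Your remark that $\{|r-3M|\leq M/8\}\subset\{r\leq\gamma t^*\}$ concerns the domain on the \emph{left} of the elliptic estimate, not the right, so it does not close the gap.

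The fix is exactly what the paper does: use a localized elliptic estimate so that the right-hand side is itself restricted to a compact $r$-region (the paper lands on $\{r^-_Y\leq r\leq \frac{25M}{8}\}$). This is precisely the content of Proposition~\ref{elliptic}.2, or equivalently of applying Proposition~\ref{elliptic}.1 to $\chi(r)\partial_{t^*}^m\Phi$ with a fixed cutoff $\chi$ supported in $\{r^-_Y\leq r\leq \frac{25M}{8}\}$ and equal to $1$ on $\{|r-3M|\leq M/8\}$. With that modification your argument goes through; in fact, since $r$ is bounded on this strip, you never need to invoke $\tilde\Omega$ or Proposition~\ref{improveomega} at all, and the $\tilde\Omega^j$ terms in the statement are superfluous for this particular estimate.
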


\begin{proof}
\begin{equation*}
\begin{split}
&\sup_{t^*\in [\tau'-1,\tau+1]}\sum_{m=0}^\ell\int_{\Sigma_{t^*}\cap\{|r-3M|\leq\frac{M}{8}\}} (\partial_{t^*}^mG)^2\\
\leq&C \sup_{t^*\in [\tau'-1,\tau+1]}\sum_{m=0}^\ell\int_{\Sigma_{t^*}\cap\{r^-_Y\leq r\leq \frac{25M}{8}\}}\left(\left(D^2\partial_{t^*}^m\Phi\right)^2+\left(D\partial_{t^*}^m\Phi\right)^2+\left(r\lapp\partial_{t^*}^m\Phi\right)^2\right)\\
\leq&C\sup_{t^*\in [\tau'-1,\tau+1]}\left(\sum_{m=0}^{\ell+1}\int_{\Sigma_{t^*}\cap\{r^-_Y\leq r\leq \frac{25M}{8}\}}J^N_\mu\left(\partial_{t^*}^m\Phi\right)n^\mu_{\Sigma_t^*}+\sum_{m=0}^\ell\int_{\Sigma_{t^*}\cap\{r^-_Y\leq r\leq \frac{25M}{8}\}}J^N_\mu\left(\tilde{\Omega}\partial_{t^*}^m\Phi\right)n^\mu_{\Sigma_t^*}\right)\\
&\quad\quad\mbox{by Proposition \ref{elliptic} and \ref{improveomega}}\\
\leq&C\tau^{-2+\eta}\sum_{m+j\leq \ell+3}\left(\int_{\Sigma_{\tau_0}} J^{Z,w^Z}_\mu\left(\partial_{t^*}^m\tilde{\Omega}^j\Phi\right)n^\mu_{\Sigma_{{\tau_0}}}+C\int_{\Sigma_{\tau_0}} J^{N}_\mu\left(\partial_{t^*}^m\tilde{\Omega}^j\Phi\right)n^\mu_{\Sigma_{{\tau_0}}}\right),\\
&\quad\quad\mbox{using Corollary \ref{DRDecay} and Proposition \ref{bddomega}}\\
\end{split}
\end{equation*}
\end{proof}

Finally, we estimate the third norm:

\begin{proposition}\label{Sinho3}
\begin{equation*}
\begin{split}
&\sum_{m=0}^\ell\left(\int_{\tau_0}^{\tau}\left(\int_{\Sigma_{t^*}\cap\{r\geq \frac{t^*}{2}\}}r^2 (\partial_{t^*}^mG)^2 \right)^{\frac{1}{2}}dt^*\right)^2
\leq C\tau^{\eta}\sum_{m+j\leq \ell+3}\int_{\Sigma_{\tau_0}} J^N_\mu\left(\partial_{t^*}^m\tilde{\Omega}^j\Phi\right)n^\mu_{\Sigma_{\tau_0}}. \\
\end{split}
\end{equation*}
\end{proposition}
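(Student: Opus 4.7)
The plan is to show the pointwise-in-$t^*$ bound
\begin{equation*}
\int_{\Sigma_{t^*}\cap\{r\geq t^*/2\}} r^2 (\partial_{t^*}^m G)^2 \leq C(t^*)^{-2+2\delta}E_0,
\end{equation*}
where $E_0$ is the sum of initial non-degenerate energies of $\partial_{t^*}^{m'}\tilde{\Omega}^{j'}\Phi$ with $m'+j'\leq \ell+3$; a straightforward $t^*$-integration then finishes. Starting from the commutator formula for $G=[\Box_{g_K},S]\Phi$ recorded in Section~\ref{commS}, and using that $\partial_{t^*}$ is Killing while the commutator coefficients depend only on $r$, one has, in the asymptotic region $r\geq t^*/2$ (where for $t^*$ large the interior cutoff in $[\Box_{g_K},S]$ plays no role),
\begin{equation*}
|\partial_{t^*}^m G|\leq Cr^{-2+\delta}|\partial\partial_{t^*}^m\Phi|+Cr^{-1+\delta}|\lapp\partial_{t^*}^m\Phi|+C\epsilon r^{-2}\sum_{k=1}^{2}|\partial^k\partial_{t^*}^m\Phi|.
\end{equation*}
Consequently $r^2(\partial_{t^*}^m G)^2$ splits into three pieces to estimate.

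For the first and third pieces, the coefficients $r^{-2+2\delta}$ and $r^{-2}$ are bounded by $(t^*/2)^{-2+2\delta}$ and $(t^*/2)^{-2}$ respectively on $\{r\geq t^*/2\}$, and the remaining $L^2$ norms of $\partial^k\partial_{t^*}^m\Phi$ for $k=1,2$ are controlled by $\sum_{j\leq 1}\int_{\Sigma_{t^*}}J^N_\mu(\partial_{t^*}^{m+j}\Phi)n^\mu_{\Sigma_{t^*}}$ via the elliptic estimate of Proposition~\ref{elliptic}. The delicate middle piece $r^{2\delta}|\lapp\partial_{t^*}^m\Phi|^2$ is handled by writing $r^{2\delta}\leq (t^*/2)^{2\delta-2}r^2$ on $\{r\geq t^*/2\}$ (valid for $\delta<1$) and invoking the first statement of Proposition~\ref{improveomega} applied to $\partial_{t^*}^m\Phi$, which bounds $\int r^2|\nabb^2\partial_{t^*}^m\Phi|^2$ by $\int_{\Sigma_{t^*}} J^N_\mu(\partial_{t^*}^m\Phi,\partial_{t^*}^{m+1}\Phi,\partial_{t^*}^m\tilde{\Omega}\Phi)n^\mu_{\Sigma_{t^*}}$.

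Combining the three pieces and converting the non-degenerate energies at time $t^*$ into initial ones via Corollary~\ref{DRDecay}(1) (for $\partial_{t^*}$-commuted $\Phi$) and Proposition~\ref{bddomega} (for $\tilde{\Omega}$-commuted $\Phi$) yields
\begin{equation*}
\int_{\Sigma_{t^*}\cap\{r\geq t^*/2\}}r^2(\partial_{t^*}^m G)^2 \leq C(t^*)^{-2+2\delta}\sum_{m'+j'\leq \ell+3}\int_{\Sigma_{\tau_0}}J^N_\mu(\partial_{t^*}^{m'}\tilde{\Omega}^{j'}\Phi)n^\mu_{\Sigma_{\tau_0}}.
\end{equation*}
Taking the square root, integrating in $t^*\in[\tau_0,\tau]$ using $\int_{\tau_0}^{\tau}(t^*)^{-1+\delta}dt^*\leq C\tau^\delta$, squaring, and summing over $m=0,\dots,\ell$ gives $C\tau^{2\delta}$ times the sum on the right; choosing $2\delta\leq\eta$ then proves the claim.

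The main obstacle is precisely the angular piece $r^{2\delta}|\lapp\partial_{t^*}^m\Phi|^2$, because the coefficient in front of $\lapp\Phi$ in $[\Box_{g_K},S]$ decays only like $r^{-1}$ (up to logarithms), one power short of beating the $r^2$-weight by the obvious Hardy estimate alone. This is exactly why the angular commutator $\tilde{\Omega}$ of Section~\ref{commOmega} and the associated improved angular estimate of Proposition~\ref{improveomega} are needed: they trade the missing power of $r$ for an extra $\tilde{\Omega}$-derivative of $\Phi$ on the right-hand side, which accounts for the $j'\leq 2$ indices appearing in the initial-data energy.
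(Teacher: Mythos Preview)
Your argument is correct and follows essentially the same route as the paper: both use the pointwise decay $r^{-2+2\delta}\le C(t^*)^{-2+2\delta}$ on $\{r\ge t^*/2\}$ for the first-order commutator pieces, the trade $r^{2\delta}\le C(t^*)^{-2+2\delta}r^2$ combined with Proposition~\ref{improveomega} for the angular piece, the elliptic estimate of Proposition~\ref{elliptic} for $D^2\Phi$, and then Corollary~\ref{DRDecay}(1) and Proposition~\ref{bddomega} to pass to initial data before integrating $(t^*)^{-1+\delta}$. Your identification of the $r^{2\delta}|\lapp\partial_{t^*}^m\Phi|^2$ term as the only nontrivial step, and of Proposition~\ref{improveomega} as the mechanism that trades the missing $r$-power for an $\tilde\Omega$-derivative, is exactly the point of the paper's argument.
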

\begin{proof}
\begin{equation*}
\begin{split}
&\sum_{m=0}^\ell\left(\int_{\tau_0}^{\tau}\left(\int_{\Sigma_{t^*}\cap\{r\geq \frac{t^*}{2}\}}r^2 (\partial_{t^*}^mG)^2 \right)^{\frac{1}{2}}dt^*\right)^2\\
\leq &C\sum_{m=0}^\ell\left(\int_{\tau_0}^{\tau}(t^*)^{-1+\delta}\left(\int_{\Sigma_{t^*}\cap\{r\geq \frac{t^*}{2}\}} \left(\left(D^2\Phi\right)^2+\left(D\partial_{t^*}^m\Phi\right)^2+\left(r\lapp\partial_{t^*}^m\Phi\right)^2\right) \right)^{\frac{1}{2}}dt^*\right)^2\\
\leq &C\left(\int_{\tau_0}^{\tau}(t^*)^{-1+\delta}\left(\sum_{m=0}^{\ell+1}\int_{\Sigma_{t^*}} J^N_\mu\left(\partial_{t^*}^m\Phi\right)n^\mu_{\Sigma_{t^*}} +\sum_{m=0}^\ell\int_{\Sigma_{t^*}} J^N_\mu\left(\tilde{\Omega}\partial_{t^*}^m\Phi\right)n^\mu_{\Sigma_{t^*}}\right)^{\frac{1}{2}}dt^*\right)^2\\
\leq &C\tau^{\eta}\sum_{m+j\leq \ell+3}\int_{\Sigma_{\tau_0}} J^N_\mu\left(\partial_{t^*}^m\tilde{\Omega}^j\Phi\right)n^\mu_{\Sigma_{\tau_0}}. \\
\end{split}
\end{equation*}
\end{proof}

Now that we have control of the inhomogeneous terms in the equation $\Box_{g_K}\Phi=G$, we can prove the decay of $S\Phi$. To this end, we will introduce the bootstrap assumptions:
\begin{equation}\label{bootstrapS1}
\begin{split}
&c\int_{\Sigma_{\tau}} J^{Z,w^Z}_\mu\left(\partial_{t^*}S\Phi\right)n^\mu_{\Sigma_{{\tau}}}+\tau^2\int_{\Sigma_{\tau}\cap\{r\leq \gamma\tau\}} J^{N}_\mu\left(\partial_{t^*}S\Phi\right)n^\mu_{\Sigma_{\tau}}\\
\leq &A\tau\sum_{m=1}^2\int_{\Sigma_{\tau_0}} J^{Z+CN,w^Z}_\mu\left(\partial_{t^*}^mS\Phi\right)n^\mu_{\Sigma_{{\tau_0}}}+A\tau^{1+\eta}\sum_{m+k+j\leq 5} \int_{\Sigma_{\tau_0}} J^{Z+CN,w^Z}_\mu\left(\partial_{t^*}^m\hat{Y}^k\tilde{\Omega}^j\Phi\right)n^\mu_{\Sigma_{{\tau_0}}}.\\
\end{split}
\end{equation}
\begin{equation}\label{bootstrapS2}
\begin{split}
&c\int_{\Sigma_{\tau}} J^{Z,w^Z}_\mu\left(S\Phi\right)n^\mu_{\Sigma_{{\tau}}}+\tau^2\int_{\Sigma_{\tau}\cap\{r\leq \gamma\tau\}} J^{N}_\mu\left(S\Phi\right)n^\mu_{\Sigma_{\tau}}\\
\leq &A^2\tau^\eta\left(\sum_{m=0}^2\int_{\Sigma_{\tau_0}} J^{Z+CN,w^Z}_\mu\left(\partial_{t^*}^mS\Phi\right)n^\mu_{\Sigma_{{\tau_0}}}+\sum_{m+k+j\leq 5} \int_{\Sigma_{\tau_0}} J^{Z+CN,w^Z}_\mu\left(\partial_{t^*}^m\hat{Y}^k\tilde{\Omega}^j\Phi\right)n^\mu_{\Sigma_{{\tau_0}}}\right).\\
\end{split}
\end{equation}
We think of $A$ as some large constant to be chosen. We will improve the constants $A$ and $A^2$ in the above assumptions. Under these two assumptions, we will get the following three estimates for the bulk terms:
\begin{proposition}
 \begin{equation*}
 \begin{split}
 &\iint_{\mathcal R(\tau_0,\tau)}K^{X_1}\left(\partial_{t^*} S\Phi\right)\\
\leq& C\left(\sum_{m=1}^2\int_{\Sigma_{\tau_0}} J^{N}_\mu\left(\partial_{t^*}^mS\Phi\right)n^\mu_{\Sigma_{\tau_0}}+\sum_{m+k+j\leq 5} \int_{\Sigma_{\tau_0}} J^{Z+CN,w^Z}_\mu\left(\partial_{t^*}^m\hat{Y}\tilde{\Omega}^j\Phi\right)n^\mu_{\Sigma_{{\tau_0}}}\right).\\
 \end{split}
\end{equation*}
\end{proposition}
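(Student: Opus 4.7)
The plan is to apply Proposition \ref{bddcom1} to $\partial_{t^*}S\Phi$, which satisfies the inhomogeneous wave equation $\Box_{g_K}(\partial_{t^*}S\Phi)=\partial_{t^*}G$ with $G:=[\Box_{g_K},S]\Phi$ (using $\Box_{g_K}\Phi=0$ and that $\partial_{t^*}$ is Killing). We use the trivial splitting $G_1=0$, $G_2=\partial_{t^*}G$, which is natural because the commutator $[\Box_{g_K},S]\Phi$ exhibits favorable $r$-decay (see Section \ref{commS}): every term can be fit into the $r^{1+\delta}$-weighted bulk norm.

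With this splitting, Proposition \ref{bddcom1} applied on $[\tau_0,\tau]$ produces three types of right-hand-side terms: (i) the initial-data contribution $\sum_{m=1}^{2}\int_{\Sigma_{\tau_0}} J^N_\mu(\partial_{t^*}^m S\Phi) n^\mu_{\Sigma_{\tau_0}}$, which appears directly in the claimed inequality; (ii) the bulk error $\sum_{m=0}^{2}\iint_{\mathcal R(\tau_0-1,\tau+1)} r^{1+\delta}(\partial_{t^*}^{m+1}G)^2$; and (iii) the near-trapping supremum $\sup_{t^*\in[\tau_0-1,\tau+1]}\sum_{m=0}^{1}\int_{\Sigma_{t^*}\cap\{|r-3M|\le M/8\}}(\partial_{t^*}^{m+1}G)^2$. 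To control (ii) and (iii), partition $[\tau_0-1,\tau+1]$ dyadically into subintervals $[\tau_i,\tau_{i+1}]$ with $\tau_{i+1}\le (1.1)\tau_i$, and invoke Propositions \ref{Sinho1} and \ref{Sinho2} respectively on each such piece. These yield the decay factors $\tau_i^{-1+\eta}$ and $\tau_i^{-2+\eta}$, both multiplying the fixed initial-data quantity
$$D=\sum_{m+k+j\le 5}\int_{\Sigma_{\tau_0}} J^{Z+CN,w^Z}_\mu(\partial_{t^*}^m \hat Y^k \tilde\Omega^j \Phi) n^\mu_{\Sigma_{\tau_0}}.$$

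Summing the bulk contributions in $i$ yields a geometric series $\sum_i \tau_i^{-1+\eta}$ with ratio $(1.1)^{-(1-\eta)}<1$; for $\eta$ sufficiently small this converges to an absolute constant (times $\tau_0^{-1+\eta}\le 1$), so that (ii) is bounded by $CD$. The supremum in (iii) is a maximum rather than a sum over the dyadic pieces, and is bounded by its largest value, which occurs at the smallest $\tau_i$, giving $C\tau_0^{-2+\eta}D\le CD$. Combining these with the initial-data term yields the stated inequality.

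The main obstacle is the derivative accounting: one must verify that Propositions \ref{Sinho1} and \ref{Sinho2}, which convert derivatives of the commutator $G$ into initial-data quantities by means of the structural formula for $[\Box_{g_K},S]\Phi$ together with the elliptic estimates (Propositions \ref{elliptic}, \ref{elliptichorizon}), the $\hat Y$-commutator decay (Corollary \ref{YDecay}), and the $\tilde\Omega$-commutator boundedness (Proposition \ref{bddomega}) which absorbs the dangerous $r|\nabla\hspace{-2mm}/\,|$ term at infinity, yield initial data involving at most five derivatives combining $\partial_{t^*}$, $\hat Y^k$, and $\tilde\Omega^j$. No new energy identity is needed beyond what has been established; the argument is a direct combination of Propositions \ref{bddcom1}, \ref{Sinho1}, and \ref{Sinho2} with a dyadic summation in time.
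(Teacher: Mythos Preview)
Your approach is essentially the same as the paper's: apply Proposition \ref{bddcom1} to $\partial_{t^*}S\Phi$ on $[\tau_0,\tau]$ with the splitting $G_1=0$, $G_2=\partial_{t^*}G$, and then control the resulting $G$-terms via Propositions \ref{Sinho1} and \ref{Sinho2}. Your explicit dyadic decomposition and geometric-series summation are exactly the mechanism by which those two propositions (stated only for $\tau'\le\tau\le(1.1)\tau'$) are applied over the full interval $[\tau_0,\tau]$; the paper leaves this step implicit.
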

\begin{proof}
By Proposition \ref{bddcom1} for the equation $\Box_{g_K}\left(\partial_{t^*}S\Phi\right)=\partial_{t^*}G$, taking $\tau'=\tau_0$ and $G_1=0$, $G_2=\partial_{t^*}G$. Then use Propositions \ref{Sinho1} and \ref{Sinho2} to estimate the terms with $G$.
\end{proof}

\begin{proposition}
For $\tau'\leq\tau\leq (1.1)\tau'$,
 \begin{equation*}
 \begin{split}
 &\iint_{\mathcal R(\tau',\tau)\cap\{r\leq r^-_Y\}}K^{N}\left(\partial_{t^*}S\Phi\right)+\iint_{\mathcal R(\tau',\tau)\cap\{r\leq\frac{t^*}{2}\}}K^{X_0}\left(\partial_{t^*}S\Phi\right)+\iint_{\mathcal R(\tau_0,\tau)}r^{-1+\delta}K^{X_1}\left( S\Phi\right)\\
\leq& CA\left(\tau^{-2}\int_{\Sigma_{\tau'}} J^{Z,w^Z}_\mu\left(S\Phi\right)n^\mu_{\Sigma_{\tau'}}+ C\int_{\Sigma_{\tau'}\cap\{r\leq r^-_Y\}} J^{N}_\mu\left(S\Phi\right)n^\mu_{\Sigma_{\tau'}}\right)\\
&+CA\tau^{-1+\eta}\sum_{m+k+j\leq 5} \int_{\Sigma_{\tau_0}} J^{Z+CN,w^Z}_\mu\left(\partial_{t^*}^m\hat{Y}\tilde{\Omega}^j\Phi\right)n^\mu_{\Sigma_{{\tau_0}}}.\\
 \end{split}
\end{equation*}
\end{proposition}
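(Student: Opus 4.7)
My plan is to handle the three terms on the left-hand side separately, treating the first two terms (which concern $\partial_{t^*}S\Phi$ on the dyadic interval $[\tau',\tau]$) through one mechanism and the third term (which concerns $S\Phi$ on the full interval $[\tau_0,\tau]$) through another.

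For the first two terms, the natural tool is the localized decay estimate of Proposition \ref{localization}.2 applied to $\partial_{t^*}S\Phi$, which satisfies $\Box_{g_K}(\partial_{t^*}S\Phi)=\partial_{t^*}G$. Taking $G_1=0$ and $G_2=\partial_{t^*}G$ in that proposition bounds the relevant piece of the left-hand side by a combination of initial-data terms at $\tau'$, namely $C\tau^{-2}\int_{\Sigma_{\tau'}}J^{Z+N,w^Z}(\partial_{t^*}S\Phi)$ and $C\int_{\Sigma_{\tau'}\cap\{r\leq r^-_Y\}}J^N(\partial_{t^*}S\Phi)$, plus spacetime-norm error terms in $\partial_{t^*}G$. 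The error terms are controlled by Propositions \ref{Sinho1} and \ref{Sinho2}, which already produce the desired $\tau^{-2+\eta}$-type decay in terms of the $\partial_{t^*}^m\hat{Y}\tilde{\Omega}^j\Phi$ initial data appearing on the right-hand side of the conclusion. The boundary contributions at $\tau'$ involving $\partial_{t^*}S\Phi$ are where the bootstrap assumption (\ref{bootstrapS1}) enters: invoking it directly produces the $CA$ factor and the $\tau^{-1+\eta}$ decay rate, while re-expressing the $\partial_{t^*}^m S\Phi$-type initial data (that one would naively get out of the bootstrap) as the boundary conformal-energy and non-degenerate-energy quantities $\int_{\Sigma_{\tau'}}J^{Z,w^Z}(S\Phi)$ and $\int_{\Sigma_{\tau'}\cap\{r\leq r^-_Y\}}J^N(S\Phi)$ featured in the first line of the right-hand side.

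For the third term, I would decompose $[\tau_0,\tau]$ dyadically into $[\tau_i,\tau_{i+1}]$ with $\tau_{i+1}\leq (1.1)\tau_i$, and split each slice into $\{r\leq t^*/2\}$ and $\{r\geq t^*/2\}$. In the inner region the weight $r^{-1+\delta}$ is uniformly bounded, so I would apply Proposition \ref{localizationT}.2 to $S\Phi$ (with $\Box_{g_K}(S\Phi)=G$) on each dyadic piece, producing a $\tau_i^{-2}$ decay factor together with error terms in $G$, $\partial_{t^*}G$ and $\partial_{t^*}^2 G$ that are handled by Propositions \ref{Sinho1} (both versions), \ref{Sinho2} and \ref{Sinho3}. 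In the outer region the weight $r^{-1+\delta}\leq C(t^*)^{-1+\delta}$ itself supplies the decay, and I would combine this with the (non-decaying) $K^{X_1}$-boundedness of Proposition \ref{bddcom1}. The contribution $\tau_i^{-2}\int_{\Sigma_{\tau_i}}J^{Z,w^Z}(S\Phi)$ from each dyadic piece sums to a geometric-type series whose tail is dominated by the final endpoint; combined with the bootstrap (\ref{bootstrapS2}) when propagating the $S\Phi$-conformal-energy between time steps, this yields the stated form of the right-hand side.

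The main obstacle is the careful bookkeeping of the various initial-data quantities: the bootstrap hypothesis (\ref{bootstrapS1}) for $\partial_{t^*}S\Phi$ carries $\partial_{t^*}^m S\Phi|_{\tau_0}$-initial data on its right-hand side, whereas the conclusion of this proposition suppresses such terms, replacing them by the conformal and non-degenerate energies of $S\Phi$ evaluated at $\tau'$. Ensuring that the $\partial_{t^*}^m S\Phi|_{\tau_0}$ contributions that arise after invoking the bootstrap can indeed be absorbed into $CA\tau^{-2}\int_{\Sigma_{\tau'}}J^{Z,w^Z}(S\Phi)$ and $C^2A\int_{\Sigma_{\tau'}\cap\{r\leq r^-_Y\}}J^N(S\Phi)$---rather than leaking into the $CA\tau^{-1+\eta}$ portion with the wrong initial data---is the most delicate step, and it is what forces the particular mixed form of the right-hand side of the conclusion.
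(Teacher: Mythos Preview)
Your overall skeleton---apply Proposition~\ref{localization}.2 (to $\partial_{t^*}S\Phi$) and Proposition~\ref{localizationT}.2 (to $S\Phi$) on the dyadic slab, estimate the $G$-contributions via Propositions~\ref{Sinho1} and~\ref{Sinho2}, and for the outer region $\{r\geq t^*/2\}$ trade the weight $r^{-1+\delta}$ for a $\tau$-power and invoke Proposition~\ref{bddcom1}---is exactly what the paper does.

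Two points of over-complication. First, for the third term you propose a full dyadic decomposition of $[\tau_0,\tau]$ with Proposition~\ref{localizationT}.2 on each piece and a geometric sum. The paper does not do this: it applies Proposition~\ref{localizationT}.2 once, on the single slab $[\tau',\tau]$, obtaining boundary data $\sum_{m=0}^1\tau^{-2}\int_{\Sigma_{\tau'}}J^{Z+N,w^Z}(\partial_{t^*}^m S\Phi)$ and $\sum_{m=0}^1\int_{\Sigma_{\tau'}\cap\{r\leq r_Y^-\}}J^N(\partial_{t^*}^m S\Phi)$ at the single time $\tau'$. (The appearance of $\tau_0$ rather than $\tau'$ in the domain of the third left-hand integral is a typographical inconsistency in the paper; on the full interval the outer-region inequality $r^{-1+\delta}\leq C\tau^{-1+\delta}$ would be false.) Your dyadic sum would also force you to invoke (\ref{bootstrapS2}) at every $\tau_i$, producing an $A^2$ rather than the stated $A$.

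Second, your ``bookkeeping'' paragraph has the flow of information reversed. You are not meant to absorb $\partial_{t^*}^mS\Phi|_{\tau_0}$-data coming out of the bootstrap into the $S\Phi|_{\tau'}$-energies on the right-hand side. Rather: the $m=0$ boundary terms from Proposition~\ref{localizationT}.2, namely $\tau^{-2}\int_{\Sigma_{\tau'}}J^{Z,w^Z}(S\Phi)$ and $\int_{\Sigma_{\tau'}\cap\{r\leq r_Y^-\}}J^N(S\Phi)$, are kept \emph{verbatim}---they are exactly the first line of the conclusion. It is only the $m=1$ boundary terms (the $\partial_{t^*}S\Phi$-energies at $\tau'$, which also appear from Proposition~\ref{localization}.2) that are estimated via the bootstrap (\ref{bootstrapS1}), and it is this step alone that introduces the factor $A$ and feeds into the $\tau^{-1+\eta}$ line.
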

\begin{proof}
By Propositions \ref{localization} and \ref{localizationT}, taking $G_1=0$ and $G_2=G$, and using Propositions \ref{Sinho1} and \ref{Sinho2} to estimate the terms with $G$, we have
\begin{equation*}
 \begin{split}
 &\iint_{\mathcal R(\tau',\tau)\cap\{r\leq r^-_Y\}}K^{N}\left(\partial_{t^*}S\Phi\right)+\iint_{\mathcal R(\tau',\tau)\cap\{r\leq\frac{t^*}{2}\}}K^{X_0}\left(\partial_{t^*}S\Phi\right)+\iint_{\mathcal R(\tau_0,\tau)\cap\{r\leq\frac{t^*}{2}\}}K^{X_1}\left( S\Phi\right)\\
\leq& CA\left(\tau^{-2}\int_{\Sigma_{\tau'}} J^{Z,w^Z}_\mu\left(S\Phi\right)n^\mu_{\Sigma_{\tau'}}+ C\int_{\Sigma_{\tau'}\cap\{r\leq r^-_Y\}} J^{N}_\mu\left(S\Phi\right)n^\mu_{\Sigma_{\tau'}}\right)\\
&+CA\tau^{-1+\eta}\sum_{m+k+j\leq 5} \int_{\Sigma_{\tau_0}} J^{Z+CN,w^Z}_\mu\left(\partial_{t^*}^m\hat{Y}\tilde{\Omega}^j\Phi\right)n^\mu_{\Sigma_{{\tau_0}}}.\\
 \end{split}
\end{equation*}
It now remains to estimate $r^{-1+\delta}K^{X_1}$ in the region $r\geq\frac{t^*}{2}$. Here, we will use crucially the decay in $r$. Clearly,
$$ \iint_{\mathcal R(\tau_0,\tau)\cap\{r\geq\frac{t^*}{2}\}}r^{-1+\delta}K^{X_1}\left( S\Phi\right)\leq C\tau^{-1+\delta}\iint_{\mathcal R(\tau_0,\tau)}K^{X_1}\left( S\Phi\right).$$
Then we can estimate the right hand side by Proposition \ref{bddcom1}, taking $\tau'=\tau_0$ and $G_1=0$, $G_2=\partial_{t^*}G$. Then use Propositions \ref{Sinho1} and \ref{Sinho2} to estimate the terms with $G$.
\end{proof}

\begin{proposition}\label{Sint}
For $\tau'\leq\tau\leq (1.1)\tau'$,
 \begin{equation*}
 \begin{split}
 &\iint_{\mathcal R(\tau',\tau)\cap\{r\leq r^-_Y\}}K^{N}\left(S\Phi\right)+\iint_{\mathcal R(\tau',\tau)\cap\{r\leq\frac{t^*}{2}\}}K^{X_0}\left(S\Phi\right)\\
\leq& CA^2\left(\tau^{-2}\int_{\Sigma_{\tau'}} J^{Z,w^Z}_\mu\left(S\Phi\right)n^\mu_{\Sigma_{\tau'}}+ C\int_{\Sigma_{\tau'}\cap\{r\leq r^-_Y\}} J^{N}_\mu\left(S\Phi\right)n^\mu_{\Sigma_{\tau'}}\right)\\
&+CA^2\tau^{-2+\eta}\sum_{m+k+j\leq 4} \int_{\Sigma_{\tau_0}} J^{Z+CN,w^Z}_\mu\left(\partial_{t^*}^m\hat{Y}\tilde{\Omega}^j\Phi\right)n^\mu_{\Sigma_{{\tau_0}}}.\\
 \end{split}
\end{equation*}

\end{proposition}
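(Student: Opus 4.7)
The plan is to feed the wave equation $\Box_{g_K}(S\Phi)=G$ satisfied by $S\Phi$ into the localized decay estimate of Proposition \ref{localization}.2, choosing the trivial decomposition $G_1=0$, $G_2=G$. This step is exactly analogous to the proof of the preceding proposition for $\partial_{t^*}S\Phi$, but applied one derivative lower. It will directly reproduce the left-hand side of (\ref{Sint}) together with the two slice contributions
\[
\tau^{-2}\int_{\Sigma_{\tau'}} J^{Z+N,w^Z}_\mu(S\Phi)n^\mu_{\Sigma_{\tau'}}
+\int_{\Sigma_{\tau'}\cap\{r\leq r^-_Y\}} J^{N}_\mu(S\Phi)n^\mu_{\Sigma_{\tau'}},
\]
which match the first group on the right-hand side of (\ref{Sint}) (the $CA^2$ factor out front is generous bookkeeping that leaves room later to absorb the bootstrap (\ref{bootstrapS2}) when this estimate is iterated). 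All that remains is to handle the two inhomogeneous contributions coming from $G$: the bulk quantity $\sum_{m=0}^{1}\iint_{\mathcal R(\tau'-1,\tau+1)\cap\{r\leq 9t^*/10\}} r^{1+\delta}(\partial_{t^*}^{m}G)^2$ and the sup-in-$t^*$ quantity $\sup_{t^*}\int_{\Sigma_{t^*}\cap\{|r-3M|\leq M/8\}} G^2$.

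For the bulk term I would invoke the localized version of Proposition \ref{Sinho1} with $\ell=1$, which gives exactly the decay rate $\tau^{-2+\eta}$ in terms of the initial-data energies of $\partial_{t^*}^m\hat{Y}^k\tilde{\Omega}^j\Phi$. This is the crucial gain of working with the localized bulk integral (as opposed to the full-region version that only delivers $\tau^{-1+\eta}$): confining to $\{r\leq 9t^*/10\}$ lets us trade a factor of $r^{-3+\delta}$ coming from the commutator structure $[\Box_{g_K},S]\Phi=O(r^{-2+\delta})(D^2\Phi+D\Phi+rD\nabb\Phi)$ against the extra $t^*$-decay afforded by Corollary \ref{YDecay} and Proposition \ref{bddomega}, after using Propositions \ref{elliptic}, \ref{elliptichorizon}, \ref{improveomega} to convert $rD\nabb\Phi$ into $D\tilde{\Omega}\Phi$. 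For the sup term I would invoke Proposition \ref{Sinho2} with $\ell=0$, which produces the same rate $\tau^{-2+\eta}$ in terms of a comparable initial-data sum.

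Assembling these three ingredients — Proposition \ref{localization}.2 applied to $S\Phi$, plus Propositions \ref{Sinho1} (localized, $\ell=1$) and \ref{Sinho2} ($\ell=0$) for the commutator — yields (\ref{Sint}) after grouping. The derivative budget $m+k+j\leq 4$ on the right of (\ref{Sint}) arises from the $\ell+4$ (respectively $\ell+3$) counts built into Propositions \ref{Sinho1}, \ref{Sinho2} applied at $\ell\in\{0,1\}$, together with the two $t^*$-derivative losses from Proposition \ref{localization}.2.

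The main obstacle has already been absorbed upstream: it was not the present assembly step but rather the proofs of Propositions \ref{Sinho1}--\ref{Sinho3}, which had to exploit the two structural features of $[\Box_{g_K},S]\Phi$ — namely its independence of $t^*$ (from Killing-ness of $\partial_{t^*}$) and its asymptotic $r$-decay (from asymptotic flatness) — and in particular had to split the support of $G$ into $\{r\leq t^*/2\}$ (handled by the local integrated decay $\tau^{-2+\eta}$) and $\{r\geq t^*/2\}$ (handled by the almost-boundedness estimate combined with the extra factor $r^{-3+\delta}\leq (t^*)^{-2}r^{-1+\delta}$). Modulo those inputs, the present proposition is a purely mechanical composition and no new analytic difficulty arises.
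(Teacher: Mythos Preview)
Your proposal is correct and follows essentially the same approach as the paper: apply Proposition~\ref{localization}.2 to $S\Phi$ with the trivial decomposition $G_1=0$, $G_2=G$, then control the resulting inhomogeneous terms via Propositions~\ref{Sinho1} (localized version) and~\ref{Sinho2}. Your remark that the $CA^2$ prefactor is only generous bookkeeping is also right --- none of the cited inputs actually introduces a bootstrap constant here, so the estimate holds with a plain $C$; and your derivative accounting is slightly over-explained (the $\ell+4$ in Proposition~\ref{Sinho1} with $\ell=1$ already encodes all the losses), but the substance is correct.
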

\begin{proof}
This follows from using Proposition \ref{localization}, taking $G_1=0$ and $G_2=G$, and using Propositions \ref{Sinho1} and \ref{Sinho2} to estimate the terms with $G$.
\end{proof}
We are now ready to retrieve the bootstrap assumptions. First, we retrieve the bootstrap assumption \ref{bootstrapS1}:
\begin{proposition}
\begin{equation*}
\begin{split}
&c\int_{\Sigma_{\tau}} J^{Z,w^Z}_\mu\left(\partial_{t^*}S\Phi\right)n^\mu_{\Sigma_{{\tau}}}+\tau^2\int_{\Sigma_{\tau}\cap\{r\leq \gamma\tau\}} J^{N}_\mu\left(\partial_{t^*}S\Phi\right)n^\mu_{\Sigma_{\tau}}\\
\leq &\frac{A}{2}\tau\sum_{m=1}^2\int_{\Sigma_{\tau_0}} J^{Z+CN,w^Z}_\mu\left(\partial_{t^*}^mS\Phi\right)n^\mu_{\Sigma_{{\tau_0}}}+\frac{A}{2}\tau^{1+\eta}\sum_{m+k+j\leq 5} \int_{\Sigma_{\tau_0}} J^{Z+CN,w^Z}_\mu\left(\partial_{t^*}^m\hat{Y}^k\tilde{\Omega}^j\Phi\right)n^\mu_{\Sigma_{{\tau_0}}}.\\
\end{split}
\end{equation*}
\end{proposition}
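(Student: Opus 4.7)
The plan is to apply Proposition \ref{energydecay} to $\partial_{t^*}S\Phi$, which satisfies the inhomogeneous wave equation $\Box_{g_K}(\partial_{t^*}S\Phi) = \partial_{t^*}G$, and then to dyadically partition the interval $[\tau_0,\tau]$ in order to sum the resulting spacetime bulk terms, in the same fashion as the proof of Proposition \ref{0energydecay2}. The difference from that proof is that here we may insert the bootstrap assumptions (\ref{bootstrapS1})--(\ref{bootstrapS2}) into the three estimates immediately preceding this statement, which control $K^{X_1}(\partial_{t^*}S\Phi)$, $K^{X_0}(\partial_{t^*}S\Phi)$ and $K^N(\partial_{t^*}S\Phi)$ in terms of initial data plus the bootstrap quantities. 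We note that the estimate for $K^{X_1}(\partial_{t^*}S\Phi)$ is already \emph{bounded} (no $A$ on the right hand side), whereas the local integrated decay controls $K^{X_0}$ and $K^{N}$ come with a prefactor of $A$ (for $\partial_{t^*}S\Phi$) together with the already-proved bound on $r^{-1+\delta}K^{X_1}(S\Phi)$.

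First I would invoke Proposition \ref{energydecay} for $\partial_{t^*}S\Phi$, with $G$ replaced by $\partial_{t^*}G$, to obtain
\begin{equation*}
\begin{split}
&c\int_{\Sigma_{\tau}} J^{Z,w^Z}_\mu(\partial_{t^*}S\Phi)n^\mu_{\Sigma_\tau}+\tau^2\int_{\Sigma_{\tau}\cap\{r\leq \gamma\tau\}} J^{N}_\mu(\partial_{t^*}S\Phi)n^\mu_{\Sigma_\tau} \\
\leq\;& C\int_{\Sigma_{\tau_0}} J^{Z+CN,w^Z}_\mu(\partial_{t^*}S\Phi)n^\mu+C\iint_{\mathcal R(\tau_0,\tau)} t^*r^{-1+\delta}K^{X_1}(\partial_{t^*}S\Phi) \\
&+C\delta'\iint_{\mathcal R(\tau_0,\tau)\cap\{r\leq \tfrac{t^*}{2}\}} (t^*)^2 K^{X_0}(\partial_{t^*}S\Phi)+C(\delta'+\epsilon)\iint_{\mathcal R(\tau_0,\tau)\cap\{r\leq r^-_Y\}}(t^*)^2 K^N(\partial_{t^*}S\Phi) \\
&+\text{(inhomogeneous $\partial_{t^*}G$--terms)}.
\end{split}
\end{equation*}
The inhomogeneous terms are estimated directly by Propositions \ref{Sinho1}, \ref{Sinho2} and \ref{Sinho3} applied with $\ell=1$, and yield a term decaying at least as $\tau^{1+\eta}$ against the weighted initial data of $\partial_{t^*}^m\hat Y^k\tilde\Omega^j\Phi$ (with $m+k+j\le 5$). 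The first bulk term $t^*r^{-1+\delta}K^{X_1}(\partial_{t^*}S\Phi)$ is handled by decomposing dyadically, $\tau_0\leq\tau_1\leq\cdots\leq\tau_n=\tau$ with $\tau_{i+1}\leq 1.1\tau_i$, and on each interval applying the \textbf{first} bulk bound (no $A$), which is $\lesssim \sum_{m=1}^2\int_{\Sigma_{\tau_0}} J^N_\mu(\partial_{t^*}^m S\Phi)n^\mu+\sum_{m+k+j\le 5}\int_{\Sigma_{\tau_0}}J^{Z+CN,w^Z}_\mu(\partial_{t^*}^m\hat Y^k\tilde\Omega^j\Phi)n^\mu$. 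Summing over the $n\sim\log\tau$ dyadic intervals against the weight $t^*\lesssim \tau_i$ produces the total factor $\sum_i\tau_i\sim\tau$, which is the source of the $\tau$ weight in the first sum of the desired inequality.

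For the $K^{X_0}$ and $K^N$ bulk terms I would apply the \textbf{second} (bootstrap--loaded) estimate from the preceding propositions dyadically, giving on each subinterval $[\tau_{i-1},\tau_i]$ a bound of $CA\tau_i^{-1+\eta}$ times the initial data plus the decaying contribution $CA(\tau_i^{-2}\int_{\Sigma_{\tau_{i-1}}} J^{Z,w^Z}_\mu(S\Phi)n^\mu+\int_{\Sigma_{\tau_{i-1}}\cap\{r\le r^-_Y\}}J^N_\mu(S\Phi)n^\mu)$. Using bootstrap (\ref{bootstrapS2}) the latter is at most $CA^3\tau_i^{-2+\eta}$ times the initial data on the full list of commuted quantities. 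Multiplying by the weight $(t^*)^2\lesssim\tau_i^2$ inside the spacetime integral and summing the resulting geometric series $\sum_i\tau_i^{1+\eta}\lesssim\eta^{-1}\tau^{1+\eta}$, the total contribution is bounded by $C\eta^{-1}(A+\epsilon A^3)\tau^{1+\eta}$ times the initial data sum in $\partial_{t^*}^m\hat Y^k\tilde\Omega^j\Phi$ (with $m+k+j\le 5$), which matches exactly the structure of the second term in the right hand side.

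Combining everything one obtains an estimate of the form
\begin{equation*}
\text{LHS}\leq\bigl(C+C(\tau+\tau^{1+\eta})\bigr)\sum_{m=1}^2\!\int_{\Sigma_{\tau_0}}J^{Z+CN,w^Z}_\mu(\partial_{t^*}^mS\Phi)n^\mu+C\eta^{-1}(1+A+(\delta'+\epsilon)A^3)\tau^{1+\eta}\!\!\!\sum_{m+k+j\leq 5}\!\!\int_{\Sigma_{\tau_0}}J^{Z+CN,w^Z}_\mu(\partial_{t^*}^m\hat Y^k\tilde\Omega^j\Phi)n^\mu.
\end{equation*}
The main obstacle, as always in these bootstrap arguments, is to guarantee closure, i.e.\ to absorb the $A$-dependent prefactor into $\tfrac{A}{2}$. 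This forces a choice of constants analogous to the one in Proposition \ref{0energydecay2}: first fix $\epsilon=\eta/(CA^2)$ and $\delta'$ of the same order, then choose $A$ larger than all universal constants appearing without an $A$. Since the only place where $A$ appears multiplied by a nondecaying factor is in the $\epsilon A^3$ combination (coming from the large--$r$ red-shift loss), the smallness of $\epsilon$ against $A^{-2}$ is exactly what is needed; this constraint is compatible with the global smallness assumption $\epsilon\ll\eta$ used throughout. With $A$ so chosen the bootstrap constants in (\ref{bootstrapS1}) are improved to $A/2$, completing the proof.
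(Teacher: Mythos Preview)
Your plan is exactly the paper's: apply Proposition~\ref{energydecay} to $\partial_{t^*}S\Phi$, estimate the three inhomogeneous $G$--terms by Propositions~\ref{Sinho1}--\ref{Sinho3}, and control the $K^{X_1}$, $K^{X_0}$, $K^N$ bulk terms via the three propositions immediately preceding this one, summed dyadically as in Proposition~\ref{0energydecay2}. The paper's own proof is the one--liner ``It suffices to check that by Propositions~\ref{Sinho1}, \ref{Sinho2} and \ref{Sinho3}, all terms are acceptable''; what you have written is the intended expansion.

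One bookkeeping correction. For the $(t^*)^2K^{X_0}(\partial_{t^*}S\Phi)$ and $(t^*)^2K^N(\partial_{t^*}S\Phi)$ terms you feed the $J^{Z,w^Z}_\mu(S\Phi)$ and $J^N_\mu(S\Phi)$ that appear on the right of the second preceding proposition into bootstrap~(\ref{bootstrapS2}). That route drags in the $m=0$ data term $\int_{\Sigma_{\tau_0}}J^{Z+CN,w^Z}_\mu(S\Phi)$, which is \emph{not} present in the right--hand side you are trying to reach (the first sum runs only over $m=1,2$); your displayed final bound silently drops it. The clean way---and what the paper actually has in mind---is to apply Proposition~\ref{localization}.2 directly to $\partial_{t^*}S\Phi$, obtaining $C\tau_i^{-2}\int J^{Z+N,w^Z}_\mu(\partial_{t^*}S\Phi)$ and $C\int_{\{r\le r_Y^-\}}J^N_\mu(\partial_{t^*}S\Phi)$ on the right, and then to close with bootstrap~(\ref{bootstrapS1}) alone. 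This gives on each dyadic piece a factor $CA\tau_i^{-1}$ (resp.\ $CA\tau_i^{-1+\eta}$) in front of the first (resp.\ second) initial--data sum, and after multiplying by $(\delta'+\epsilon)\tau_i^2$ and summing one lands on exactly $C(\delta'+\epsilon)A\,\tau$ and $C(\delta'+\epsilon)A\,\eta^{-1}\tau^{1+\eta}$, which are absorbed into $A/2$ for $\delta'$ small. Relatedly, your closure condition ``$\epsilon=\eta/(CA^2)$'' should be imposed on $\delta'$, which is the free parameter in Proposition~\ref{energydecay}; $\epsilon$ is the fixed Kerr smallness and cannot be tuned against $A$.
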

\begin{proof}
By Proposition \ref{energydecay},
\begin{equation*}
\begin{split}
&c\int_{\Sigma_{\tau}} J^{Z,w^Z}_\mu\left(\partial_{t^*}S\Phi\right)n^\mu_{\Sigma_{{\tau}}}+\tau^2\int_{\Sigma_{\tau}\cap\{r\leq \gamma\tau\}} J^{N}_\mu\left(\partial_{t^*}S\Phi\right)n^\mu_{\Sigma_{\tau}}\\
\leq &C\int_{\Sigma_{\tau_0}} J^{Z+CN,w^Z}_\mu\left(\partial_{t^*}S\Phi\right)n^\mu_{\Sigma_{{\tau_0}}}+C\iint_{\mathcal R(\tau_0,\tau)} t^*r^{-1+\delta}K^{X_1}\left(\partial_{t^*}S\Phi\right)\\
& +C\delta'\iint_{\mathcal R(\tau_0,\tau)\cap\{r\leq \frac{t^*}{2}\}} (t^*)^2K^{X_0}\left(\partial_{t^*}S\Phi\right)+C\left(\delta'+\epsilon\right)\iint_{\mathcal R(\tau_0,\tau)\cap\{r\leq r^-_Y\}}(t^*)^2K^N\left(\partial_{t^*}S\Phi\right)\\
&+C(\delta')^{-1}\left(\int_{\tau_0}^{\tau}\left(\int_{\Sigma_{t^*}\cap\{r\geq \frac{t^*}{2}\}}r^2 (\partial_{t^*}G)^2 \right)^{\frac{1}{2}}dt^*\right)^2+C(\delta')^{-1}\sum_{m=1}^2\iint_{\mathcal R(\tau_0,\tau)\cap\{r\leq\frac{9t^*}{10}\}} (t^*)^2r^{1+\delta}\left(\partial_{t^*}^m G\right)^2\\
&+C(\delta')^{-1}\sup_{t^*\in [\tau_0,\tau]}\int_{\Sigma_{t^*}\cap\{r^-_Y\leq r\leq \frac{25M}{8}\}} (t^*)^2 (\partial_{t^*}G)^2.
\end{split}
\end{equation*}
It suffices to check that by Propositions \ref{Sinho1}, \ref{Sinho2} and \ref{Sinho3}, all terms are acceptable.
\end{proof}
We can now retrieve the bootstrap assumption (\ref{bootstrapS2}).
\begin{proposition}
\begin{equation*}
\begin{split}
&c\int_{\Sigma_{\tau}} J^{Z,w^Z}_\mu\left(S\Phi\right)n^\mu_{\Sigma_{{\tau}}}+\tau^2\int_{\Sigma_{\tau}\cap\{r\leq \gamma\tau\}} J^{N}_\mu\left(S\Phi\right)n^\mu_{\Sigma_{\tau}}\\
\leq &A^2\tau^\eta\left(\sum_{m=0}^2\int_{\Sigma_{\tau_0}} J^{Z+CN,w^Z}_\mu\left(\partial_{t^*}^mS\Phi\right)n^\mu_{\Sigma_{{\tau_0}}}+\sum_{m+k+j\leq 5} \int_{\Sigma_{\tau_0}} J^{Z+CN,w^Z}_\mu\left(\partial_{t^*}^m\hat{Y}^k\tilde{\Omega}^j\Phi\right)n^\mu_{\Sigma_{{\tau_0}}}\right).\\
\end{split}
\end{equation*}
\end{proposition}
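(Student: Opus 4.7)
The plan is to apply Proposition \ref{energydecay} to $S\Phi$ with $G=[\Box_{g_K},S]\Phi$ as the inhomogeneous source, and to close the bootstrap assumption (\ref{bootstrapS2}) in direct parallel with the retrieval of (\ref{bootstrapS1}) in the preceding proposition and with Proposition \ref{0energydecay2}. The right-hand side of Proposition \ref{energydecay} decomposes into the data term $C\int_{\Sigma_{\tau_0}}J^{Z+CN,w^Z}_\mu(S\Phi)n^\mu$, the three bulk integrals $\iint t^*r^{-1+\delta}K^{X_1}(S\Phi)$, $\iint_{\{r\leq t^*/2\}}(t^*)^2 K^{X_0}(S\Phi)$, $\iint_{\{r\leq r^-_Y\}}(t^*)^2 K^N(S\Phi)$, and three norms of $G$ multiplied by $C(\delta')^{-1}$. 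The three $G$-norms are directly controlled by Propositions \ref{Sinho1}, \ref{Sinho2}, and \ref{Sinho3}; combined with dyadic summation of the localized $\tau^{-2+\eta}$ decay, each contributes at most $C\eta^{-1}\tau^\eta$ times the initial-data quantities appearing on the right of the target inequality.

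The two $(t^*)^2$-weighted bulk integrals are handled by dyadically decomposing $[\tau_0,\tau]$ into $[\tau_{i-1},\tau_i]$ with $\tau_i\leq(1.1)\tau_{i-1}$ (so $n\sim\log\tau$) and applying Proposition \ref{Sint} to $S\Phi$ on each subinterval. The bootstrap (\ref{bootstrapS2}) feeds $\tau_i^{-2}\int_{\Sigma_{\tau_{i-1}}}J^{Z,w^Z}_\mu(S\Phi)n^\mu\leq A^2\tau_i^{-2+\eta}\cdot(\mathrm{data})$ and, once $\tau_i$ is large enough that $r^-_Y\leq\gamma\tau_i$, $\int_{\Sigma_{\tau_{i-1}}\cap\{r\leq r^-_Y\}}J^N_\mu(S\Phi)n^\mu\leq A^2\tau_i^{-2+\eta}\cdot(\mathrm{data})$ into Proposition \ref{Sint}; multiplying by $\tau_i^2$ and summing dyadically gives a total contribution of order $CA^4\eta^{-1}\tau^\eta\cdot(\mathrm{data})$. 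Since these bulk integrals enter Proposition \ref{energydecay} with the small prefactor $C(\delta'+\epsilon)$, the choice $\delta'+\epsilon\lesssim \eta/A^2$ absorbs them into a small fraction of $A^2\tau^\eta\cdot(\mathrm{data})$. A simultaneous lower bound $\delta'\gtrsim C/(A^2\eta)$ is needed so that the $(\delta')^{-1}$-amplified $G$-contribution is kept below a comparable fraction of $A^2\tau^\eta\cdot(\mathrm{data})$; both constraints are compatible provided $A$ is chosen sufficiently large relative to $C/\eta$.

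The main obstacle is the $\iint t^* r^{-1+\delta}K^{X_1}(S\Phi)$ integral, because $K^{X_1}$ is non-degenerate at the trapped set $r=3M$ and therefore requires a $\partial_{t^*}$-commutation, tying the estimate to the conformal energy of $\partial_{t^*}S\Phi$ which was only just controlled in the preceding proposition. I split at $r=t^*/2$. On $\{r\geq t^*/2\}$ the weight satisfies $t^*r^{-1+\delta}\leq Cr^\delta$, which is absorbed into the $r^{-1-\delta}$ weight built into $K^{X_1}$ (at the cost of a slightly different $\delta$); then Proposition \ref{bddcom1} applied to $\Box_{g_K}(S\Phi)=G$ over $[\tau_0,\tau]$, together with Propositions \ref{Sinho1} and \ref{Sinho2}, bounds $\iint K^{X_1}(S\Phi)$ by $C\eta^{-1}\tau^\eta\cdot(\mathrm{data})$. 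On $\{r\leq t^*/2\}$ the weight is at most $Ct^*$, and the $K^{X_1}$-localized analog of Proposition \ref{localizationT} applied to $S\Phi$ on each dyadic interval bounds $\iint_{\mathcal R(\tau_{i-1},\tau_i)\cap\{r\leq t^*/2\}}K^{X_1}(S\Phi)$ by $C\tau_i^{-2}\sum_{m=0}^{1}\int_{\Sigma_{\tau_{i-1}}}J^{Z+N,w^Z}_\mu(\partial_{t^*}^{m}S\Phi)n^\mu$ plus controlled $G$-terms. The $m=1$ piece is dominant and, by the already-established estimate (\ref{bootstrapS1}) for $\partial_{t^*}S\Phi$, is bounded by $CA\tau_i^{-1+\eta}\cdot(\mathrm{data})$; multiplication by $\tau_i$ and dyadic summation produces $CA\eta^{-1}\tau^\eta\cdot(\mathrm{data})$. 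Collecting all the pieces and taking $A$ large relative to $C\eta^{-1}$ together with the balanced choice of $\delta',\epsilon$ above improves the bootstrap constant in (\ref{bootstrapS2}) from $A^2$ to $\tfrac12 A^2$, completing the argument.
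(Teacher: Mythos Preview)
Your proof is correct and follows essentially the same route as the paper's: apply Proposition \ref{energydecay} to $S\Phi$, control the three $G$-norms via Propositions \ref{Sinho1}--\ref{Sinho3}, and handle the weighted $K^{X_1}$, $K^{X_0}$, $K^N$ bulk terms by dyadic decomposition together with the three auxiliary propositions proved just before (including Proposition \ref{Sint}) and the bootstrap assumptions (\ref{bootstrapS1})--(\ref{bootstrapS2}). The paper's own proof is a two-line sketch that simply asserts ``all terms are acceptable,'' so you have in fact supplied the missing detail rather than taken a different path.

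One small remark: your contribution of order $CA^4\eta^{-1}\tau^\eta$ from the $(t^*)^2$-weighted bulk terms overcounts the $A$-dependence. Proposition \ref{Sint} as written carries a factor $CA^2$, but its proof (Proposition \ref{localization} plus Propositions \ref{Sinho1}--\ref{Sinho2}) introduces no bootstrap constant, so the correct bound after inserting (\ref{bootstrapS2}) is $CA^2\eta^{-1}\tau^\eta$, not $CA^4$. This only loosens your compatibility window for $\delta'$ and does not affect the argument. Your treatment of the $t^*r^{-1+\delta}K^{X_1}(S\Phi)$ term---splitting at $r=t^*/2$, invoking Proposition \ref{bddcom1} on the exterior and Proposition \ref{localizationT} with (\ref{bootstrapS1}) on the interior---is exactly the mechanism encoded in the second of the three auxiliary propositions, so you could equally well cite that result directly.
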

\begin{proof}
By Proposition \ref{energydecay},
\begin{equation*}
\begin{split}
&c\int_{\Sigma_{\tau}} J^{Z,w^Z}_\mu\left(S\Phi\right)n^\mu_{\Sigma_{{\tau}}}+\tau^2\int_{\Sigma_{\tau}\cap\{r\leq \gamma\tau\}} J^{N}_\mu\left(S\Phi\right)n^\mu_{\Sigma_{\tau}}\\
\leq &C\int_{\Sigma_{\tau_0}} J^{Z+CN,w^Z}_\mu\left(S\Phi\right)n^\mu_{\Sigma_{{\tau_0}}}+C\iint_{\mathcal R(\tau_0,\tau)} t^*r^{-1+\delta}K^{X_1}\left(S\Phi\right)\\
& +C\delta'\iint_{\mathcal R(\tau_0,\tau)\cap\{r\leq \frac{t^*}{2}\}} (t^*)^2K^{X_0}\left(S\Phi\right)+C\left(\delta'+\epsilon\right)\iint_{\mathcal R(\tau_0,\tau)\cap\{r\leq r^-_Y\}}(t^*)^2K^N\left(S\Phi\right)\\
&+C(\delta')^{-1}\left(\int_{\tau_0}^{\tau}\left(\int_{\Sigma_{t^*}\cap\{r\geq \frac{t^*}{2}\}}r^2 G^2 \right)^{\frac{1}{2}}dt^*\right)^2+C(\delta')^{-1}\sum_{m=0}^1\iint_{\mathcal R(\tau_0,\tau)\cap\{r\leq\frac{9t^*}{10}\}} (t^*)^2r^{1+\delta}\left(\partial_{t^*}^m G\right)^2\\
&+C(\delta')^{-1}\sup_{t^*\in [\tau_0,\tau]}\int_{\Sigma_{t^*}\cap\{r^-_Y\leq r\leq \frac{25M}{8}\}} (t^*)^2 G^2.
\end{split}
\end{equation*}
It suffices to check that by Propositions \ref{Sinho1}, \ref{Sinho2} and \ref{Sinho3}, all terms are acceptable.
\end{proof}

We have thus showed the following:
\begin{proposition}\label{decayS}
For all $\eta>0$, there exists $\epsilon>0$ small enough such that for Kerr spacetimes satisfying (\ref{smallness}), the following estimates hold:
\begin{equation*}
\begin{split}
&c\int_{\Sigma_{\tau}} J^{Z,w^Z}_\mu\left(S\Phi\right)n^\mu_{\Sigma_{{\tau}}}+\tau^2\int_{\Sigma_{\tau}\cap\{r\leq \gamma\tau\}} J^{N}_\mu\left(S\Phi\right)n^\mu_{\Sigma_{\tau}}\\
\leq &C\tau^\eta\sum_{m=0}^2\int_{\Sigma_{\tau_0}} J^{Z+CN,w^Z}_\mu\left(\partial_{t^*}^mS\Phi\right)n^\mu_{\Sigma_{{\tau_0}}}+C\tau^\eta\sum_{m+k+j\leq 5} \int_{\Sigma_{\tau_0}} J^{Z+CN,w^Z}_\mu\left(\partial_{t^*}^m\hat{Y}\tilde{\Omega}^j\Phi\right)n^\mu_{\Sigma_{{\tau_0}}}.\\
\end{split}
\end{equation*}
Moreover, for $\tau'\leq \tau\leq (1.1)\tau'$,
\begin{equation*}
\begin{split}
&\iint_{\mathcal R(\tau',\tau)\cap\{r\leq r^-_Y\}}K^{N}\left(S\Phi\right)+\iint_{\mathcal R(\tau',\tau)\cap\{r\leq\frac{t^*}{2}\}}K^{X_0}\left(S\Phi\right)\\
\leq &C\tau^{-2+\eta}\sum_{m=0}^2\int_{\Sigma_{\tau_0}} J^{Z+CN,w^Z}_\mu\left(\partial_{t^*}^mS\Phi\right)n^\mu_{\Sigma_{{\tau_0}}}+C\tau^{-2+\eta}\sum_{m+k+j\leq 5}\int_{\Sigma_{\tau_0}} J^{Z+CN,w^Z}_\mu\left(\partial_{t^*}^m\hat{Y}\tilde{\Omega}^j\Phi\right)n^\mu_{\Sigma_{{\tau_0}}}.\\
\end{split}
\end{equation*}
and
\begin{equation*}
\begin{split}
&\iint_{\mathcal R(\tau',\tau)\cap\{r\leq\frac{t^*}{2}\}}K^{X_1}\left(S\Phi\right)\\
\leq &C\tau^{-2+\eta}\sum_{m=0}^3\int_{\Sigma_{\tau_0}} J^{Z+CN,w^Z}_\mu\left(\partial_{t^*}^mS\Phi\right)n^\mu_{\Sigma_{{\tau_0}}}+C\tau^{-2+\eta}\sum_{m+k+j\leq 6}\int_{\Sigma_{\tau_0}} J^{Z+CN,w^Z}_\mu\left(\partial_{t^*}^m\hat{Y}\tilde{\Omega}^j\Phi\right)n^\mu_{\Sigma_{{\tau_0}}}.\\
\end{split}
\end{equation*}
\end{proposition}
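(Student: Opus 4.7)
The plan is to close the bootstrap loop initiated by the assumptions (\ref{bootstrapS1}) and (\ref{bootstrapS2}) using the two retrieval propositions immediately preceding the statement, and then to extract the bulk estimates by a further application of the localization results.

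First I would run the bootstrap. The two preceding propositions show that, granting (\ref{bootstrapS1}) and (\ref{bootstrapS2}) on some interval $[\tau_0,\tau]$, one recovers (\ref{bootstrapS1}) with $A$ replaced by $A/2$ and (\ref{bootstrapS2}) with $A^2$ replaced by a fixed constant $C$ (which is universal once one fixes $\delta'\ll \eta$ and $\epsilon\ll\delta'$ as dictated by Proposition \ref{energydecay}). Choosing $A$ large compared to $C$, and then $\epsilon, \delta'$ small in terms of $A$ and $\eta$, the improved constants strictly beat the postulated ones. At $\tau=\tau_0$ both assumptions hold trivially (taking $A$ large enough to absorb the initial data norms), and a standard continuity argument propagates the estimates to all $\tau\ge\tau_0$. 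Relabeling the resulting numerical constant as $C$ yields the first displayed inequality of the proposition.

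With the first inequality in hand, I would obtain the second and third inequalities by applying Propositions \ref{localization} and \ref{localizationT} to $\Box_{g_K}(S\Phi)=G$ on the short interval $[\tau',\tau]$ with $\tau'\le\tau\le(1.1)\tau'$, choosing the decomposition $G_1=0$, $G_2=G$. The conformal and localized non-degenerate energies of $S\Phi$ on $\Sigma_{\tau'}$ that appear on the right-hand side are now controlled by the first inequality applied at time $\tau'$, producing a factor $(\tau')^{-2+\eta}$; the $G$-terms are dispatched by the localized bound in Proposition \ref{Sinho1} (which already carries a $\tau^{-2+\eta}$ factor in the region $r\le 9t^*/10$) together with Proposition \ref{Sinho2}. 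Passing from $K^{X_0}$ to $K^{X_1}$ costs one further derivative in $\partial_{t^*}$, which explains the shift from the $m\le 2$, $m+k+j\le 5$ summations to $m\le 3$, $m+k+j\le 6$ on the initial data side.

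The main technical obstacle is the \emph{asymmetry} between the two bootstrap hypotheses: (\ref{bootstrapS1}) for $\partial_{t^*}S\Phi$ must tolerate linear growth in $\tau$, while (\ref{bootstrapS2}) for $S\Phi$ only allows $\tau^{\eta}$ growth. This asymmetry is forced by the $t^*$-weight intrinsic to $S$, which feeds into the commutator $G=[\Box_{g_K},S]\Phi$ and makes the $r^{-1+\delta}K^{X_1}(S\Phi)$ contribution in Proposition \ref{energydecay} impossible to absorb using the $S\Phi$ bootstrap by itself. One extra $\partial_{t^*}$ trades the full $t^*$-weight of $S$ for a single factor of $\tau$, and this is precisely the structural fact exploited by Propositions \ref{Sinho1}--\ref{Sinho3}; a single bootstrap on $S\Phi$ alone would fail at exactly this step, which is why the paired bootstrap is essential and why it must be closed in the order $\partial_{t^*}S\Phi$ first, $S\Phi$ second.
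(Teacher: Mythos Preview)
Your proposal is correct and follows essentially the same approach as the paper: close the paired bootstrap via the two retrieval propositions, then feed the first inequality back into the localized decay estimate (the paper packages this step as Proposition~\ref{Sint}, which is precisely Proposition~\ref{localization} with $G_1=0$, $G_2=G$ and the $G$-terms handled by Propositions~\ref{Sinho1}--\ref{Sinho2}) to obtain the $K^{X_0}$ bound, with the $K^{X_1}$ bound costing one extra $\partial_{t^*}$. One minor imprecision: the recovered constant for (\ref{bootstrapS2}) is not literally a universal $C$ independent of $A$---it still carries a $CA\eta^{-1}$ piece from the $t^*r^{-1+\delta}K^{X_1}(S\Phi)$ term---but your subsequent sentence about choosing $A$ large and then $\delta',\epsilon$ small correctly describes how the bootstrap nonetheless closes.
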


\begin{proof}
The first statement is proved by the bootstrap above. Since the bootstrap assumptions are true, the conclusion in Proposition \ref{Sint} is also true, hence the second statement is true. The third statement makes use of the fact that $K^{X_1}$ can be estimated in the same way as $K^{X_0}$ with an extra derivative.
\end{proof}

\section{Improved Decay for the Linear Homogeneous Wave Equation}\label{sectionproof}
To use the estimates for $S\Phi$, we need to integrate along integral curves of $S$. We first find the integral curves by solving the ordinary differential equation
$$\frac{dr_S}{dt^*_S}=\frac{h(r_S)}{t^*_S}$$
where $h(r_S)$ is as in the definition of $S$. Hence the integral curves are given by
$$\frac{\exp\left(\int_{(r_S)_0}^{r_S} \frac{dr'_S}{h(r'_S)}\right)}{t^*_S}=\mbox{constant},$$
where $r_0>2M$ can be chosen arbitrarily.
Let $\sigma=t^*$, $\rho=\frac{\exp\left(\int_{(r_S)_0}^{r_S} \frac{dr_S'}{h(r_S')}\right)}{t_S^*}$ and consider $(\sigma,\rho, x^A, x^B)$ as a new system of coordinates. Notice that
$$\partial_\sigma=\frac{h(r_S)}{t^*}\partial_{r_S}+\partial_{t_S^*}=\frac{1}{t^*}S.$$
Now for each fixed $\rho$, we have
$$\Phi^2(\tau)\leq \Phi^2(\tau')+|\int_{\tau'}^\tau \frac{1}{\sigma}S(\Phi^2) d\sigma|.$$
Integrating along a finite region of $\rho$, we get:
$$\int_{\rho_1}^{\rho_2} \Phi^2(\tau) d\rho\leq \int_{\rho_1}^{\rho_2} \Phi^2(\tau') d\rho+\int_{\rho_1}^{\rho_2} \int_{\tau'}^\tau |\frac{2}{\sigma}\Phi S\Phi |d\sigma d\rho.$$
We would like to change coordinates back to $(t^*_S,r_S, x_S^A, x_S^B)$. Notice that since $h(r_S)$ is everywhere positive, $(\rho,\tau)$ would correspond to a point with a larger value of $r$ than $(\rho,\tau')$. Therefore,
\begin{equation*}
\begin{split}
&\int_{r_+}^{r_2} \Phi^2(\tau)\frac{\exp\left(\int_{(r_S)_0}^{r_S} \frac{dr_S'}{h(r_S')}\right)}{\tau h(r_S)}dr\\
\leq & \int_{r_+}^{r_2} \Phi^2(\tau')\frac{\exp\left(\int_{(r_S)_0}^{r_S} \frac{dr_S'}{h(r_S')}\right)}{\tau' h(r_S)}dr+\int_{\tau'}^\tau \int_{r_+}^{r_2} |\frac{2}{\sigma}\Phi S\Phi |\frac{\exp\left(\int_{(r_S)_0}^{r_S} \frac{dr_S'}{h(r_S')}\right)}{t_S^* h(r_S)}dr dt^*.
\end{split}
\end{equation*}
We have to compare $\frac{\exp\left(\int_{(r_S)_0}^{r_S} \frac{dr'_S}{h(r'_S)}\right)}{h(r_S)}$ with the volume form. Very close to the horizon, $h(r_S)=r_S-2M$. Hence 
$$\frac{\exp\left(\int_{(r_S)_0}^{r_S} \frac{dr'_S}{h(r'_S)}\right)}{h(r_S)}=e^{\int_{(r_S)_0}^{r_S}\frac{dr'_S}{h(r'_S)}}\left(\frac{1}{r_S-2M}\right)\sim 1.$$
The corresponding expression on the compact set $[r^-_Y,R]$ is obviously bounded. Hence we have
\begin{equation}\label{mainlemma}
\begin{split}
&\int_{\Sigma_{\tau}\cap\{r <r_2\}} \frac{\Phi^2(\tau)}{\tau}\leq C\left(\int_{\Sigma_{\tau'}\cap\{r <r_2\}} \frac{\Phi^2(\tau')}{\tau'}+\iint_{\mathcal R(\tau',\tau)\cap\{r <r_2\}} |\frac{2}{(t^*)^2}\Phi S\Phi |\right).
\end{split}
\end{equation}
This easily implies the following improved decay for the non-degenerate energy:
\begin{proposition}\label{mainprop}
\begin{equation*}
\begin{split}
&\int_{\Sigma_{\tau}\cap\{r <R  \}} \Phi^2\leq C_R\tau^{-1}\left(\iint_{\mathcal R((1.1)^{-1}\tau,\tau)\cap\{r < R\}} \Phi^2 +\iint_{\mathcal R((1.1)^{-1}\tau,\tau)\cap\{r < R \}}  \left(S\Phi\right)^2\right).
\end{split}
\end{equation*}
\end{proposition}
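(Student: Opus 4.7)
The plan is to apply the pointwise-in-$\tau'$ inequality (\ref{mainlemma}) on the interval $[(1.1)^{-1}\tau,\tau]$ with $r_2 = R$, and then convert the boundary term at $\tau'$ into a spacetime integral by either averaging in $\tau'$ or invoking pigeonholing, exploiting the fact that the interval has length comparable to $\tau$.

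More concretely, I would first note that for any $\tau' \in [(1.1)^{-1}\tau,\tau]$, inequality (\ref{mainlemma}) gives
\begin{equation*}
\frac{1}{\tau}\int_{\Sigma_{\tau}\cap\{r<R\}}\Phi^2 \leq C\left(\frac{1}{\tau'}\int_{\Sigma_{\tau'}\cap\{r<R\}}\Phi^2 + \iint_{\mathcal R(\tau',\tau)\cap\{r<R\}}\frac{2}{(t^*)^2}|\Phi\, S\Phi|\right).
\end{equation*}
By the mean value theorem (or pigeonholing on the interval $[(1.1)^{-1}\tau,\tau]$, which has length comparable to $\tau$), there exists $\tilde\tau \in [(1.1)^{-1}\tau,\tau]$ such that
\begin{equation*}
\int_{\Sigma_{\tilde\tau}\cap\{r<R\}}\Phi^2 \leq C\tau^{-1}\iint_{\mathcal R((1.1)^{-1}\tau,\tau)\cap\{r<R\}}\Phi^2.
\end{equation*}
Applying the displayed inequality with $\tau' = \tilde\tau$ and noting that $\tilde\tau \sim \tau$, the first term on the right becomes $C\tau^{-2}\iint_{\mathcal R((1.1)^{-1}\tau,\tau)\cap\{r<R\}}\Phi^2$.

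For the second term, I use that $t^* \sim \tau$ throughout the domain of integration, so $(t^*)^{-2} \leq C\tau^{-2}$, and then apply Cauchy-Schwarz in the form $|\Phi\, S\Phi| \leq \tfrac12(\Phi^2 + (S\Phi)^2)$ to obtain
\begin{equation*}
\iint_{\mathcal R(\tilde\tau,\tau)\cap\{r<R\}}\frac{2}{(t^*)^2}|\Phi\, S\Phi| \leq C\tau^{-2}\iint_{\mathcal R((1.1)^{-1}\tau,\tau)\cap\{r<R\}}\left(\Phi^2 + (S\Phi)^2\right).
\end{equation*}
Combining these bounds, multiplying through by $\tau$, and absorbing constants yields the desired estimate. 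There is no significant obstacle here — the content of the proposition is essentially a book-keeping consequence of (\ref{mainlemma}); the nontrivial analytic work (the change of coordinates along integral curves of $S$ and the comparison of $\exp(\int dr'_S/h)/h$ with the volume element near the horizon) has already been carried out in deriving (\ref{mainlemma}). The only mild point to verify is that the volume form in $(\sigma,\rho,x^A,x^B)$ coordinates is compatible with integration against the measure on $\Sigma_\tau$ uniformly on the compact $r$-range $[r_+,R]$, which is clear since the Jacobian is smooth and bounded away from zero on that set.
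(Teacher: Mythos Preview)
Your proposal is correct and follows essentially the same argument as the paper: pigeonhole on $[(1.1)^{-1}\tau,\tau]$ to find a good slice $\tilde\tau$, apply (\ref{mainlemma}) with $\tau'=\tilde\tau$, and use $t^*\sim\tau$ together with Cauchy--Schwarz on the $\Phi\,S\Phi$ term. There is nothing to add.
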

\begin{proof}
By choosing an appropriate $\tilde{\tau}\in[(1.1)^{-1}\tau,\tau]$, we have
$$\int_{\Sigma_{\tilde{\tau}}\cap\{r < R\}} \Phi^2\leq C\tau^{-1}\iint_{\mathcal R((1.1)^{-1}\tau,\tau)\cap\{r < R \}} \Phi^2.$$
Now, apply (\ref{mainlemma}) with $\tau'=\tilde{\tau}$, we have 
\begin{equation*}
\begin{split}
&\int_{\Sigma_{\tau}\cap\{r < R\}} \Phi^2\\
\leq &C\tau\left(\int_{\Sigma_{\tilde{\tau}}\cap\{r < R \}} \frac{\Phi^2}{\tilde{\tau}}+\iint_{\mathcal R(\tilde{\tau},\tau)\cap\{r < R \}} |\frac{2}{(t^*)^2}\Phi S\Phi |\right)\\
\leq &C\tau^{-1}\left(\iint_{\mathcal R((1.1)^{-1}\tau,\tau)\cap\{r < R \}} \Phi^2 +\iint_{\mathcal R((1.1)^{-1}\tau,\tau)\cap\{r < R \}}  \left(S\Phi\right)^2\right),
\end{split}
\end{equation*}
using Cauchy-Schwarz for the second term.
\end{proof}
We can now conclude with the improved decay for solutions to the homogeneous wave equation.\\
\begin{proof}[Proof of Main Theorem \ref{mainmaintheorem}]

By Proposition \ref{DRDecay}, \ref{mainprop} and \ref{decayS}, we have
\begin{equation*}
\begin{split}
&\int_{\Sigma_{\tau}\cap\{r <R  \}} \Phi^2\\
\leq &C_R\tau^{-1}\left(\iint_{\mathcal R((1.1)^{-1}\tau,\tau)\cap\{r < R\}} \Phi^2 +\iint_{\mathcal R((1.1)^{-1}\tau,\tau)\cap\{r < R \}}  \left(S\Phi\right)^2\right)\\
\leq &C_R\tau^{-1}\iint_{\mathcal R((1.1)^{-1}\tau,\tau)\cap\{r < R\}}\left( K^{X_0}\left(\Phi\right)+K^{X_0}\left(S\Phi\right)\right)\\
\leq&C_R\tau^{-3+\eta}\left(\sum_{m=0}^2\int_{\Sigma_{\tau_0}} J^{Z+CN,w^Z}_\mu\left(\partial_{t^*}^mS\Phi\right)n^\mu_{\Sigma_{{\tau_0}}}+\sum_{m+k+j\leq 5}\int_{\Sigma_{\tau_0}} J^{Z+CN,w^Z}_\mu\left(\partial_{t^*}^m\hat{Y}\tilde{\Omega}^j\Phi\right)n^\mu_{\Sigma_{{\tau_0}}}\right)
\end{split}
\end{equation*}
Similarly we can use Proposition \ref{mainprop} for the derivatives of $\Phi$. By Proposition \ref{DRDecay}, \ref{mainprop} and \ref{decayS}, we have
\begin{equation*}
\begin{split}
&\int_{\Sigma_{\tau}\cap\{r <R  \}} \left(D\Phi\right)^2\\
\leq &C_R\tau^{-1}\left(\iint_{\mathcal R((1.1)^{-1}\tau,\tau)\cap\{r < R\}} (D\Phi)^2 +\iint_{\mathcal R((1.1)^{-1}\tau,\tau)\cap\{r < R \}}  \left(SD\Phi\right)^2\right)\\
\leq &C_R\tau^{-1}\iint_{\mathcal R((1.1)^{-1}\tau,\tau)\cap\{r < R\}}\left( K^{X_1}\left(\Phi\right)+K^{X_1}\left(S\Phi\right)\right)\\
&\quad\quad\mbox{since we have the commutation } [D,S]=D\\
\leq&C_R\tau^{-3+\eta}\left(\sum_{m=0}^3\int_{\Sigma_{\tau_0}} J^{Z+CN,w^Z}_\mu\left(\partial_{t^*}^mS\Phi\right)n^\mu_{\Sigma_{{\tau_0}}}+\sum_{m+k+j\leq 6}\int_{\Sigma_{\tau_0}} J^{Z+CN,w^Z}_\mu\left(\partial_{t^*}^m\hat{Y}\tilde{\Omega}^j\Phi\right)n^\mu_{\Sigma_{{\tau_0}}}\right)
\end{split}
\end{equation*}
By commuting with $\partial_{t^*}$, we get
\begin{equation*}
\begin{split}
&\int_{\Sigma_{\tau}\cap\{r <R  \}} \left(D\partial_{t^*}^\ell\Phi\right)^2\\
\leq&C_R\tau^{-3+\eta}\left(\sum_{m=0}^{\ell+3}\int_{\Sigma_{\tau_0}} J^{Z+CN,w^Z}_\mu\left(\partial_{t^*}^mS\Phi\right)n^\mu_{\Sigma_{{\tau_0}}}+\sum_{m+k+j\leq \ell+6}\int_{\Sigma_{\tau_0}} J^{Z+CN,w^Z}_\mu\left(\partial_{t^*}^m\hat{Y}\tilde{\Omega}^j\Phi\right)n^\mu_{\Sigma_{{\tau_0}}}\right).
\end{split}
\end{equation*}
Without loss of generality, we can take $R>\frac{23M}{8}$. Then, by Proposition \ref{Yhomo},
\begin{equation*}
\begin{split}
&\sum_{j+m\leq \ell}\int_{\Sigma_\tau\cap\{r\leq r^+_Y\}} J^{N}_\mu\left(\partial_{t^*}^j\hat{Y}^{m}\Phi\right)n^\mu_{\Sigma_\tau}\\
\leq &C_R\tau^{-3+\eta}\left(\sum_{m=0}^{\ell+3}\int_{\Sigma_{\tau_0}} J^{Z+CN,w^Z}_\mu\left(\partial_{t^*}^mS\Phi\right)n^\mu_{\Sigma_{{\tau_0}}}+\sum_{m+k+j\leq \ell+6}\int_{\Sigma_{\tau_0}} J^{Z+CN,w^Z}_\mu\left(\partial_{t^*}^m\hat{Y}\tilde{\Omega}^j\Phi\right)n^\mu_{\Sigma_{{\tau_0}}}\right).\\
\end{split}
\end{equation*}
Hence, by Proposition \ref{elliptic} and \ref{elliptichorizon},
\begin{equation*}
\begin{split}
&\sum_{j=0}^\ell\int_{\Sigma_\tau\cap\{r\leq R\}} \left(D^j\Phi\right)^2\\
\leq &C_R\tau^{-3+\eta}\left(\sum_{m=0}^{\ell+2}\int_{\Sigma_{\tau_0}} J^{Z+CN,w^Z}_\mu\left(\partial_{t^*}^mS\Phi\right)n^\mu_{\Sigma_{{\tau_0}}}+\sum_{m+k+j\leq \ell+5}\int_{\Sigma_{\tau_0}} J^{Z+CN,w^Z}_\mu\left(\partial_{t^*}^m\hat{Y}\tilde{\Omega}^j\Phi\right)n^\mu_{\Sigma_{{\tau_0}}}\right).\\
\end{split}
\end{equation*}
The pointwise decay statement follows from standard Sobolev Embedding.
\end{proof}

\section{Discussion}
The Theorem that we proved in this paper holds in the set $\{r_+\leq r\leq R\}$ for any fixed $R$. It is however interesting also to derive the same estimates, for example, in the set $\{r_+\leq r\leq \frac{t^*}{2}\}$. This can be achieved by proving the full decay result when we commuted the equation with $\tilde{\Omega}^\ell$. Using this we can prove (with more loss in derivatives) that 
$$|\Phi|\leq CE(t^*)^{-\frac{3}{2}+\eta}r^{\eta},\quad |D\Phi|\leq CE(t^*)^{-\frac{3}{2}+\eta}r^{-\frac{1}{2}+\eta},$$
for $r\leq \frac{t^*}{2}$. This will be useful in studying nonlinear problems. This decay rate will be proved as a corollary in our forthcoming paper on the null condition.
\section{Acknowledgments}
The author thanks his advisor Igor Rodnianski for his continual support and encouragement and for many enlightening discussions. He thanks Gustav Holzegel for very helpful comments on the manuscript. He thanks also an anonymous referee for many suggestions to improve the manuscript.
\bibliographystyle{hplain}
\bibliography{Kerr}

\begin{thebibliography}{10}

\bibitem{A}
S.~Alinhac.
\newblock {Energy multipliers for perturbations of Schwarzschild metric}.
\newblock {\em Comm. Math. Phys.}, 288(1):{199--224}, 2009.

\bibitem{AB}
L.~Andersson and P.~Blue.
\newblock Hidden symmetries and decay for the wave equation on the {K}err
  spacetime.
\newblock 2009, 0908.2265.

\bibitem{BSo}
P.~Blue and A.~Soffer.
\newblock {Improved decay rates with small regularity loss for the wave
  equation about a Schwarzschild black hole}.
\newblock 2006, arXiv:math.AP/0612168.

\bibitem{BSt}
P.~Blue and J.~Sterbenz.
\newblock {Uniform decay of local energy and the semi-linear wave equation on
  Schwarzschild space}.
\newblock {\em Comm. Math. Phys.}, 268(2):{481--504}, 2006,
  arXiv:math.AP/0510315.

\bibitem{C}
B.~Carter.
\newblock Global structure of the {K}err family of gravitational fields.
\newblock {\em Phy. Rev.}, 174:1559--1571, 1968.

\bibitem{CK}
D.~Christodoulou and S.~Klainerman.
\newblock {\em {The global nonlinear stability of the Minkowski space}}.
\newblock Princeton UP, 1993.

\bibitem{DRK}
M.~Dafermos and I.~Rodnianski.
\newblock {A proof of the uniform boundedness of solutions to the wave equation
  on slowly rotating Kerr backgrounds}.
\newblock 2008, arXiv:gr-qc/0805.4309.

\bibitem{DRL}
M.~Dafermos and I.~Rodnianski.
\newblock {Lectures on black holes and linear waves}.
\newblock 2008, arXiv:gr-qc/0811.0354.

\bibitem{DRNPS}
M.~Dafermos and I.~Rodnianski.
\newblock {A new physical-space approach to decay for the wave equation with
  applications to black hole spacetimes}.
\newblock 2009, 0910.4957.

\bibitem{DRS}
M.~Dafermos and I.~Rodnianski.
\newblock {The red-shift effect and radiation decay on black hole spacetimes}.
\newblock {\em {Comm. Pure Appl. Math.}}, 2009, arXiv:gr-qc/0512119.

\bibitem{DSS}
R.~Donninger, W.~Schlag, and A.~Soffer.
\newblock {A proof of Price's Law on Schwarzschild black hole manifolds for all
  angular momenta}.
\newblock 2009, gr-qc/0908.4292.

\bibitem{FKSY}
F.~Finster, N.~Kamran, J.~Smoller, and S.~T. Yau.
\newblock Decay of solutions of the wave equation in the {K}err geometry.
\newblock {\em Comm. Math. Phys.}, 264(2):465--503, 2006.

\bibitem{FKSY2}
F.~Finster, N.~Kamran, J.~Smoller, and S.~T. Yau.
\newblock Erratum: ``{D}ecay of solutions of the wave equation in the {K}err
  geometry'' [{C}omm. {M}ath. {P}hys. {\bf 264} (2006), no. 2, 465--503; ].
\newblock {\em Comm. Math. Phys.}, 280(2):563--573, 2008.

\bibitem{HW}
J.~B. Hartle and D.~C. Wilkins.
\newblock Analytic properties of the {T}eukolsky equation.
\newblock {\em Comm. Math. Phys.}, 38:47--63, 1974.

\bibitem{KW}
B.~S. Kay and R.~M. Wald.
\newblock Linear stability of {S}chwarzschild under perturbations which are
  nonvanishing on the bifurcation {$2$}-sphere.
\newblock {\em Classical Quantum Gravity}, 4(4):893--898, 1987.

\bibitem{KS}
S.~Klainerman and T.~Sideris.
\newblock On almost global existence for nonrelativistic wave equations in
  {$3$}{D}.
\newblock {\em Comm. Pure Appl. Math.}, 49(3):307--321, 1996.

\bibitem{Kr}
J.~Kronthaler.
\newblock {Decay Rates for Spherical Scalar Waves in the Schwarzschild
  Geometry}.
\newblock 2007, gr-qc/0709.3703.

\bibitem{LR}
H.~Lindblad and I.~Rodnianski.
\newblock {Global existence for the Einstein vacuum equations in wave
  coordinates}.
\newblock {\em Comm. Math. Phys.}, 256(1):{43--110}, 2005,
  arXiv:math.AP/0312479v2.

\bibitem{L}
J.~Luk.
\newblock {Improved decay for solutions to the linear wave equation on a
  Schwarzschild black hole}.
\newblock {\em Annales Henri Poincare}, (11):805--880, 2010, 0906.5588.

\bibitem{MS}
M.~Machedon and J.~Stalker.
\newblock Decay of solutions to the wave equation on a spherically symmetric
  background.
\newblock {\em preprint}.

\bibitem{Morawetz}
C.~S. Morawetz.
\newblock {\em Notes on time decay and scattering for some hyperbolic
  problems}.
\newblock Society for Industrial and Applied Mathematics, Philadelphia, Pa.,
  1975.
\newblock Regional Conference Series in Applied Mathematics, No. 19.

\bibitem{PT}
W.~H. Press and S.~A. Teukolsky.
\newblock Pertubations of a rotating black hole ii.
\newblock {\em The Astrophysical Journal}, 185:649--673, 1973.

\bibitem{StR}
J.~Sterbenz.
\newblock Angular regularity and {S}trichartz estimates for the wave equation.
\newblock {\em Int. Math. Res. Not.}, (4):187--231, 2005.
\newblock With an appendix by Igor Rodnianski.

\bibitem{Ta}
D.~Tataru.
\newblock Local decay of waves on asymptotically flat stationary space-times.
\newblock 2009, 0910.5290.

\bibitem{TT}
D.~Tataru and M.~Tohaneanu.
\newblock Local energy estimate on kerr black hole backgrounds.
\newblock 2008, 0810.5766.

\bibitem{To}
M.~Tohaneanu.
\newblock Strichartz estimates on {K}err black hole backgrounds.
\newblock 2009, 0910.1545.

\bibitem{W}
R.~M. Wald.
\newblock Note on the stability of the {S}chwarzschild metric.
\newblock {\em J. Math. Phys.}, 20(6):1056--1058, 1979.

\bibitem{Wh}
B.~F. Whiting.
\newblock Mode stability of the {K}err black hole.
\newblock {\em J. Math. Phys.}, 30(6):1301--1305, 1989.

\end{thebibliography}
\end{document}